\newcommand{\myparskip}{3pt}
\newcommand{\approxfactor}{2^{O(\sqrt{\log n}\cdot \log\log n)}}
\newcommand{\ceil}[1]{\ensuremath{\left\lceil#1\right\rceil}}
\newcommand{\floor}[1]{\ensuremath{\left\lfloor#1\right\rfloor}}
\newcommand{\event}{{\cal{E}}}
\newcommand{\HSC}{{\sf HSC}\xspace}
\newcommand{\NDP}{{\sf NDP}\xspace}
\newcommand{\EDP}{{\sf EDP}\xspace}
\newcommand{\rNDPgrid}{{\sf Restricted NDP-Grid}\xspace}
\newcommand{\NDPgrid}{{\sf NDP-Grid}\xspace}
\newcommand{\EDPwall}{{\sf EDP-Wall}\xspace}
\newcommand{\NDPwall}{{\sf NDP-Wall}\xspace}
\newcommand{\NDPplanar}{{\sf NDP-Planar}\xspace}
\newcommand{\RNDP}{{\sf Restricted NDP-Grid}\xspace}
\newcommand{\restrictedNDP}{{\sf Restricted NDP-Grid}\xspace}
\newcommand{\row}{\operatorname{row}}
\newcommand{\col}{\operatorname{col}}
\newcommand{\NP}{\mbox{\sf NP}}
\newcommand{\BPTIME}{\mbox{\sf BPTIME}}
\newcommand{\ZPTIME}{\mbox{\sf ZPTIME}}
\newcommand{\DTIME}{\mbox{\sf DTIME}}
\newcommand{\opt}{\mathsf{OPT}}
\newcommand{\optLP}{\mathsf{OPT}_{\mathsf{LP}}}
\newcommand{\optNDP}{\mathsf{OPT}_{\mathsf{NDP}}}
\newcommand{\optEDP}{\mathsf{OPT}_{\mathsf{EDP}}}
\newcommand{\thset}{\tilde{\mathcal{H}}}
\newcommand{\set}[1]{\left\{ #1 \right\}}
\newcommand{\tset}{{\mathcal T}}
\newcommand{\iset}{{\mathcal{I}}}
\newcommand{\ifamily}{{\mathfrak{I}}}
\newcommand{\pset}{{\mathcal{P}}}
\newcommand{\qset}{{\mathcal{Q}}}
\newcommand{\aset}{{\mathcal{A}}}
\newcommand{\cset}{{\mathcal{C}}}
\newcommand{\fset}{{\mathcal{F}}}
\newcommand{\mset}{{\mathcal M}}
\newcommand{\hmset}{\hat{\mathcal M}}
\newcommand{\tmset}{\tilde{\mathcal M}}
\newcommand{\tG}{\tilde{G}}
\newcommand{\nset}{{\mathcal N}}
\newcommand{\jset}{{\mathcal{J}}}
\newcommand{\wset}{{\mathcal{W}}}
\newcommand{\yset}{{\mathcal{Y}}}
\newcommand{\rset}{{\mathcal{R}}}
\newcommand{\hset}{{\mathcal{H}}}
\newcommand{\sset}{{\mathcal{S}}}
\newcommand{\dset}{{\mathcal{D}}}
\newcommand{\Y}{\Upsilon}
\newcommand{\zset}{{\mathcal{Z}}}
\newcommand{\be}{\begin{enumerate}}
\newcommand{\ee}{\end{enumerate}}
\newcommand{\bd}{\begin{description}}
\newcommand{\ed}{\end{description}}
\newcommand{\bi}{\begin{itemize}}
\newcommand{\ei}{\end{itemize}}
\newtheorem{theorem}{Theorem}[section]
\newtheorem{lemma}[theorem]{Lemma}
\newtheorem{observation}[theorem]{Observation}
\newtheorem{corollary}[theorem]{Corollary}
\newtheorem{claim}[theorem]{Claim}
\newtheorem*{definition}{Definition.}
\def\stopproof{\square}
\def\square{\vbox{\hrule height.2pt\hbox{\vrule width.2pt height5pt \kern5pt
\vrule width.2pt} \hrule height.2pt}}
\newenvironment{proof}{\par \smallskip{\bf Proof:}}{\hfill\stopproof}
\newenvironment{proofof}[1]{\noindent{\bf Proof of #1.}}%
        {\hfill\stopproof}
\renewcommand{\phi}{\varphi}
\newcommand{\eps}{\epsilon}
\newcommand{\poly}{\operatorname{poly}}
\newcommand{\expect}[2][]{\text{\bf E}_{#1}\left [#2\right]}
\newcommand{\prob}[2][]{\text{\bf Pr}_{#1}\left [#2\right]}
\newenvironment{properties}[2][0]
{
\begin{enumerate} \setcounter{enumi}{#1}}{\end{enumerate}}
\begin{document}

\title{Improved Approximation Algorithm for Node-Disjoint Paths in Grid Graphs with Sources on Grid Boundary}
\author{Julia Chuzhoy\thanks{Toyota Technological Institute at Chicago. Email: {\tt cjulia@ttic.edu}. Supported in part by NSF grants CCF-1318242 and CCF-1616584.}\and David H. K. Kim\thanks{Computer Science Department, University of Chicago. Email: {\tt hongk@cs.uchicago.edu}. Supported in part by NSF grants CCF-1318242 and CCF-1616584.} \and Rachit Nimavat\thanks{Toyota Technological Institute at Chicago. Email: {\tt nimavat@ttic.edu}. Supported in part by NSF grant CCF-1318242.}}

\begin{titlepage}

\maketitle
\thispagestyle{empty}

\maketitle
\begin{abstract}
We study the classical Node-Disjoint Paths (\NDP) problem: given an undirected $n$-vertex graph $G$, together with a set $\set{(s_1,t_1),\ldots,(s_k,t_k)}$ of pairs of its vertices,  called source-destination, or demand pairs, find a maximum-cardinality set $\pset$ of mutually node-disjoint paths that connect the demand pairs. The best current approximation for the problem is achieved by a simple greedy $O(\sqrt{n})$-approximation algorithm. Until recently, the best negative result was an $\Omega(\log^{1/2-\eps}n)$-hardness of approximation, for any fixed $\eps$, under standard complexity assumptions. 
A special case of the problem, where the underlying graph is a grid, has been studied extensively. The best current approximation algorithm for this special case achieves an $\tilde{O}(n^{1/4})$-approximation factor. On the negative side, a recent result by the authors shows that \NDP is hard to approximate to within factor  $2^{\Omega(\sqrt{\log n})}$, even if the underlying graph is a subgraph of a grid, and all source vertices lie on the grid boundary. In a very recent follow-up work, the authors further show that \NDP in grid graphs is hard to approximate to within factor $\Omega(2^{\log^{1-\eps}n})$  for any constant $\eps$ under standard complexity assumptions, and to within factor $n^{\Omega(1/(\log\log n)^2)}$ under randomized ETH.

 In this paper we study the \NDP problem in grid graphs, where all source vertices $\set{s_1,\ldots,s_k}$ appear on the grid boundary. Our main result is an efficient randomized  $\approxfactor$-approximation algorithm for this problem. Our result in a sense complements the  $2^{\Omega(\sqrt{\log n})}$-hardness of approximation for sub-graphs of grids with sources lying on the grid boundary, and should be contrasted with the above-mentioned almost polynomial hardness of approximation of \NDP in grid graphs (where the sources and the destinations may lie anywhere in the grid).
Much of the work on approximation algorithms for \NDP relies on the multicommodity flow relaxation of the problem, which is known to have an $\Omega(\sqrt n)$ integrality gap, even in grid graphs, with all source and destination vertices lying on the grid boundary. Our work departs from this paradigm, and uses a (completely different) linear program only to select the pairs to be routed, while the routing itself is computed by other methods.  We generalize this result to instances where the source vertices lie within a prescribed distance from the grid boundary.

 \end{abstract}
 \end{titlepage}

\label{-------------------------------------sec: intro-----------------------------------}
\section{Introduction}\label{sec:intro}

We study the classical Node-Disjoint Paths (\NDP) problem, where the input consists of an undirected $n$-vertex graph $G$ and a collection $\mset=\set{(s_1,t_1),\ldots,(s_k,t_k)}$ of pairs of its vertices, called \emph{source-destination} or \emph{demand} pairs. We say that a path $P$ \emph{routes} a demand pair $(s_i,t_i)$ iff the endpoints of $P$ are $s_i$ and $t_i$. The goal is to compute a maximum-cardinality set $\pset$ of node-disjoint paths, where each path $P\in \pset$ routes a distinct demand pair in $\mset$. We denote by \NDPplanar the special case of the problem when the underlying graph $G$ is planar, and by \NDPgrid the special case where $G$ is a square grid\footnote{We use the standard  convention of denoting $n=|V(G)|$, and so the grid has dimensions $(\sqrt{n}\times \sqrt{n})$; we assume that $\sqrt{n}$ is an integer.}. We refer to the vertices in  set $S=\set{s_1,\ldots,s_k}$ as \emph{source vertices};  to the vertices in set $T=\set{t_1,\ldots,t_k}$ as   \emph{destination vertices}, and to the vertices in set $S\cup T$ as \emph{terminals}.

\NDP is a fundamental graph routing problem that has been studied extensively in both graph theory and theoretical computer science communities. Robertson and Seymour~\cite{RobertsonS,flat-wall-RS} explored the problem in their Graph Minor series, providing an efficient algorithm for \NDP when  the number $k$ of the demand pairs is bounded by a constant. But when $k$ is a part of input, the problem becomes $\NP$-hard~\cite{Karp-NDP-hardness,EDP-hardness}, even in planar graphs~\cite{npc_planar}, and even in grid graphs~\cite{npc_grid}. The best current approximation factor of $O(\sqrt{n})$ for \NDP is achieved by a simple greedy algorithm \cite{KolliopoulosS}. Until recently, this was also the best approximation algorithm for \NDPplanar and \NDPgrid.
A natural way to design approximation algorithms for \NDP is via the multicommodity flow relaxation: instead of connecting each routed demand pair with a path, send maximum possible amount of (possibly fractional) flow between them. The optimal solution to this relaxation can be computed via a standard linear program. The $O(\sqrt{n})$-approximation algorithm of~\cite{KolliopoulosS}  can be cast as an LP-rounding algorithm of this relaxation. Unfortunately, it is well-known that the  integrality gap of this relaxation is $\Omega(\sqrt{n})$, even when the underlying graph is a grid, with all terminals lying on its boundary. In a recent work, Chuzhoy and Kim~\cite{NDP-grids} designed an  $\tilde{O}(n^{1/4})$-approximation for \NDPgrid, thus bypassing this integrality gap barrier. Their main observation is that, if all terminals lie close to the grid boundary (say within distance $O(n^{1/4})$), then a simple dynamic programming-based algorithm yields an  $O(n^{1/4})$-approximation. On the other hand, if, for every demand pair, either the source or the destination lies at a distance at least $\Omega(n^{1/4})$ from the grid boundary, then the integrality gap of the multicommodity flow relaxation improves, and one can obtain an $\tilde{O}(n^{1/4})$-approximation  via LP-rounding. A natural question is whether the integrality gap improves even further, if all terminals lie further away from the grid boundary. Unfortunately, the authors show in~\cite{NDP-grids} that the integrality gap remains at least $\Omega(n^{1/8})$, even if all terminals lie within distance $\Omega(\sqrt n)$ from the grid boundary.
The $\tilde{O}(n^{1/4})$-approximation algorithm for \NDPgrid was later extended and generalized to an $\tilde O(n^{9/19})$-approximation algorithm for \NDPplanar~\cite{NDP-planar}.

On the negative side, until recently, only an $\Omega(\log^{1/2-\eps}n)$-hardness of approximation was known for the general version of \NDP, for any constant $\eps$, unless $\NP \subseteq \ZPTIME(n^{\poly \log n})$~\cite{AZ-undir-EDP,ACGKTZ},  and only APX-hardness was known for \NDPplanar and \NDPgrid~\cite{NDP-grids}.
In a recent work~\cite{NDP-hardness}, the authors have shown that \NDP is hard to approximate to within a $2^{\Omega(\sqrt{\log n})}$ factor  unless $\NP\subseteq \DTIME(n^{O(\log n)})$, even if the underlying graph is a planar graph with maximum vertex degree at most $3$, and all source vertices lie on the boundary of a single face. The result holds even when the input graph $G$ is a vertex-induced subgraph of a grid, with all sources lying on the grid boundary. In a very recent work~\cite{NDP-hardness-grid}, the authors show that \NDPgrid is $2^{\Omega(\log^{1-\eps}n)}$-hard to approximate for any constant $\eps$ assuming $\NP\nsubseteq \BPTIME(n^{\poly\log n})$, and moreover, assuming randomized ETH, the hardness of approximation factor becomes $n^{\Omega(1/(\log\log n)^2)}$.  We note that the instances constructed in these latter hardness proofs require all terminals to lie far from the grid boundary.

In this paper we explore \NDPgrid. This important special case of \NDP was initially motivated by applications in VLSI design, and has received a lot of attention since the 1960's. 
We focus on a restricted version of \NDPgrid, that we call \rNDPgrid: here, in addition to the graph $G$ being a square grid, we also require that all source vertices $\set{s_1,\ldots,s_k}$ lie on the grid boundary. We do not make any assumptions about the locations of the destination vertices, that may appear anywhere in the grid. The best current approximation algorithm for \rNDPgrid is the same as that for the general \NDPgrid, and achieves a $\tilde{O}(n^{1/4})$-approximation~\cite{NDP-grids}. Our main result is summarized in the following theorem.

\begin{theorem}\label{thm: main}
There is an efficient randomized $\approxfactor$-approximation algorithm for \restrictedNDP.
\end{theorem}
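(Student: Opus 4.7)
The target approximation factor $\approxfactor$ strongly suggests a recursive divide-and-conquer algorithm with roughly $\sqrt{\log n}$ levels of recursion, each losing a polylogarithmic factor $2^{O(\log\log n)}$. My plan is to design such a scheme that exploits the boundary-source restriction at every level — exactly the property whose removal yields the near-polynomial hardness mentioned in the introduction.

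\textbf{Preprocessing.} I would first guess the value of $\opt$ up to a factor of $2$ (losing $O(\log n)$) and scale the goal accordingly. Next, partition the destinations $t_1,\ldots,t_k$ into $O(\log n)$ groups according to their distance from the grid boundary: group $D_j$ collects destinations at distance in the interval $[2^{j},2^{j+1})$, and I handle each group independently, losing another $O(\log n)$ factor. After this, I am left with a well-structured sub-instance in which all sources lie on the boundary and all destinations lie in a narrow annulus at depth $\approx 2^j$. Since there is a natural cyclic order on the sources along the boundary, I can also guess the leftmost and rightmost sources that $\opt$ routes within each invocation, which is crucial for the recursion to terminate cleanly.

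\textbf{Recursive scheme.} For each group, I apply divide-and-conquer. At each level, I use a straight vertical/horizontal line that balances the selected sources along the boundary into two roughly equal halves. By planarity, every demand pair is either \emph{internal} (both endpoints in the same side) or \emph{crossing}. Internal pairs are routed recursively inside each half, each of which is itself a grid with sources on its boundary — so the inductive structure is preserved. Crossing pairs must be routed across the separator in the current level. With recursion depth $O(\sqrt{\log n})$ and a per-level loss of $2^{O(\log\log n)}$, the total loss is $\approxfactor$ as desired.

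\textbf{Routing the crossing pairs.} This is where I would depart from the classical multicommodity flow relaxation, which is known to have $\Omega(\sqrt n)$ integrality gap even under our assumptions. Instead, I would set up a new LP whose variables represent only the \emph{selection} of pairs to be routed, subject to combinatorial capacity constraints that are tailored to the boundary-source structure (e.g.\ for every axis-parallel cut or for every ``corridor'' toward the separator, the number of chosen pairs that must cross it does not exceed its width). I would then round the LP — either via randomized or iterative rounding — and route the chosen pairs combinatorially: from each selected source, a canonical path first travels along the boundary to a designated ``entry point'' of the annulus containing its destination, then inside the annulus to the destination. The annulus is thin compared to the grid, which makes these routings interact in a controlled way.

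\textbf{Main obstacle.} The crux — and the step that I expect to require the most care — is producing an LP whose optimum is $\Omega(\opt)$ up to polylog factors while admitting a rounding that yields an \emph{actually routable} (node-disjoint) set of crossing pairs, rather than just a fractional flow. The $\Omega(\sqrt n)$ integrality gap of the natural relaxation means that the LP must encode structural information beyond flow conservation; the boundary location of the sources and the annular location of the destinations are the only leverage available, and so the feasibility constraints and the rounding argument must both be built around these. Ensuring that the per-level loss is only $2^{O(\log\log n)}$, so that $\sqrt{\log n}$ levels compose to the claimed bound, is the final quantitative hurdle.
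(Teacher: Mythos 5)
Your plan captures two correct high-level intuitions from the paper -- the LP should only \emph{select} the pairs to be routed (not route them), and the quantitative shape $2^{O(\sqrt{\log n}\cdot\log\log n)}$ comes from $O(\sqrt{\log n})$ levels each losing $\operatorname{poly}\log n$ -- but the proof has a genuine gap exactly where you flag it, and the recursion you propose does not by itself close it. The heart of the paper is a concrete, checkable \emph{sufficient condition for routability}: a hierarchical system of $\rho=O(\sqrt{\log n})$ levels of well-separated squares covering the destinations, a matching hierarchical partition of the source row into intervals, and a level-consistent coloring of squares by intervals in which at most $d_{h+1}/16$ selected pairs share any level-$h$ color; any set of pairs satisfying this can be routed by an explicit snake-like construction. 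Your ``combinatorial capacity constraints tailored to the boundary-source structure'' is a placeholder for precisely this condition, and without it there is no LP to write down and no rounding to analyze. Equally important, you have nothing corresponding to the paper's existence argument (the \emph{shadow property}): one must prove that the optimal solution contains a $1/2^{O(\sqrt{\log n}\log\log n)}$ fraction of pairs satisfying the condition. This is a nontrivial structural theorem about optimal routings, proved via a nested family of canonical discs bounded by solution paths and a charging argument; cut/width constraints of the kind you sketch are not known to be approximately tight against $\opt$ (indeed, naive cut-based relaxations inherit the $\Omega(\sqrt n)$-type gaps you are trying to avoid).

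Beyond the missing crux, two concrete steps in your scheme would fail as stated. First, the balanced-separator recursion with ``internal'' and ``crossing'' pairs does not preserve the problem cleanly: the paths of internal pairs in one half must coexist node-disjointly with the crossing-pair routes threading through both halves, and the sources of a half need not remain on that half's boundary in a way that bounds the interaction; the paper instead recurses on nested squares around the \emph{destinations} and reserves explicit margins (the $Q^+$ corridors) so that paths destined for other squares can bypass each square, which is what makes the per-level loss only polylogarithmic. Second, the ``route along the boundary to an entry point of the annulus, then inside the annulus'' template is not feasible in general: the boundary row has width one, so only a bounded number of paths can travel along it, and the annulus $[2^j,2^{j+1})$ has width up to $\Theta(\sqrt n)$, so it is not thin; the paper handles destinations close to the boundary by a separate partitioning theorem (into $d$-interesting pairs of intervals found by dynamic programming), each piece being remapped to a sub-grid instance with destinations far from the boundary. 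In short, the proposal is a reasonable research plan but omits the three load-bearing components -- the routability condition, the shadow-property existence theorem, and the hierarchical-coloring LP with its rounding -- so it does not constitute a proof of the theorem.
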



This result in a sense complements the $2^{\Omega(\sqrt{\log n})}$-hardness of approximation of \NDP on sub-graphs of grids with all sources lying on the grid boundary of~\cite{NDP-hardness}\footnote{Note that the two results are not strictly complementary: our algorithm only applies to grid graphs, while the hardness result is only valid for sub-graphs of grids.}, and should be contrasted with the recent almost polynomial hardness of approximation of~\cite{NDP-hardness-grid} for \NDPgrid mentioned above. Our algorithm departs from previous work on NDP in that it does not use the multicommodity flow relaxation. Instead, we define sufficient conditions that allow us to route a subset $\mset'$ of demand pairs via disjoint paths, and show that there exists a subset of demand pairs satisfying these conditions, whose cardinality is at least $\opt/\approxfactor$, where $\opt$ is the value of the optimal solution. It is then enough to compute a maximum-cardinality subset of the demand pairs satisfying these conditions. We write an LP-relaxation for this problem and design a $\approxfactor$-approximation LP-rounding algorithm for it. We emphasize that the linear program is only used to select the demand pairs to be routed, and not to compute the routing itself.

We then generalize the result to instances where the source vertices lie within a prescribed distance from the grid boundary.

\begin{theorem} \label{thm: sources at dist d}
For every integer $\delta \geq 1$, there is an efficient randomized $\left(\delta \cdot \approxfactor\right )$-approximation algorithm for the special case of \NDPgrid where all source vertices lie within distance at most $\delta$ from the grid boundary.
\end{theorem}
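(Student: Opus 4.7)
The plan is to reduce the problem to Theorem~\ref{thm: main} via a pigeonhole argument on the distance of the source vertices from the grid boundary, incurring a multiplicative factor $O(\delta)$ in the approximation ratio.

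First, I partition the demand pairs into $\delta+1$ classes according to the exact distance $d_i := \dist(s_i,\partial G) \in \{0,1,\ldots,\delta\}$. By averaging applied to the optimal routing, at least $\opt/(\delta+1)$ of the optimal paths belong to a single class, say the one with $d_i = d^*$. I enumerate over the $\delta+1$ possible values of $d^*$ and solve each restricted sub-instance separately, returning the best routing found; this accounts for the $O(\delta)$ overhead. If $d^* = 0$, I apply Theorem~\ref{thm: main} directly, so assume $d^* \geq 1$. Let $G^*$ denote the $(\sqrt{n} - 2d^*) \times (\sqrt{n} - 2d^*)$ subgrid consisting of vertices at distance at least $d^*$ from $\partial G$. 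Every source in the sub-instance now lies on $\partial G^*$, and the setting is almost an \rNDPgrid instance on $G^*$, the only caveat being that some destinations $t_i$ may lie in the outer strip $G \setminus G^*$.

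Second, I construct a bona fide \rNDPgrid instance on $G^*$: for each demand pair with $t_i \in G^*$, keep it unchanged; for each pair with $t_i$ in the outer strip, replace $t_i$ by a proxy destination $\tilde t_i \in \partial G^*$ obtained by projecting $t_i$ orthogonally onto the nearest side of $\partial G^*$. I apply Theorem~\ref{thm: main} to this \rNDPgrid instance on $G^*$, obtaining a routing $\pset^*$ of at least $\opt^*/\approxfactor$ modified demand pairs in $G^*$, where $\opt^*$ denotes the optimum of the modified instance. To lift $\pset^*$ back to a routing in $G$, I append, for each routed pair $(s_i,\tilde t_i)$ with $\tilde t_i \ne t_i$, the orthogonal segment from $\tilde t_i$ to $t_i$ inside the outer strip. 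After an additional constant-factor pigeonhole on which of the four sides of $\partial G^*$ the proxy lies, these extension segments occupy distinct rows or columns of a single strip; since the main paths of $\pset^*$ lie entirely in $G^*$ and touch $\partial G^*$ only at their endpoints, vertex-disjointness is preserved throughout.

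The main obstacle is showing that the modified optimum $\opt^*$ is at least $\Omega(\opt/\delta)$ for the pigeonholed sub-instance. Two issues arise: (a) the original routing in $G$ may use the outer strip extensively for shortcuts, whereas any routing on $G^*$ is confined to its interior; and (b) two demand pairs whose destinations share a column (or row) of the outer strip project to the same proxy and so cannot both appear in the modified instance. I would address (a) by a surgery argument on the optimal routing: for each path $P_i$, replace every excursion through the outer strip by a detour along $\partial G^*$, which has $\Theta(\sqrt n)$ vertices acting as a re-routing reservoir. To address (b), I would introduce a secondary pigeonhole on the depth of each destination within the strip, or choose the projection more carefully (e.g., perturbed along $\partial G^*$) to ensure injectivity. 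The delicate point is keeping the combined loss at $O(\delta)$ rather than $O(\delta^2)$, which is the main technical challenge of the reduction.
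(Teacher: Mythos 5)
Your high-level plan (pigeonhole on the source distance, shrink the grid to $G^*$, apply Theorem~\ref{thm: main} as a black box, lift back through the outer strip) is genuinely different from the paper's, but it has a real gap exactly where you flag it: you never establish that the optimum of the proxied instance on $G^*$ is $\Omega(\opt_{d^*})$ (equivalently $\Omega(\opt/\delta)$ overall), and the two fixes you sketch do not close it. The surgery in (a) fails as stated: an excursion of an optimal path into the outer strip crosses $\partial G^*$ at two points, and up to $d^*\leq \delta$ different paths can have excursions over the same stretch of the strip (a vertical segment of the strip has only $d^*$ vertices, but that still allows $\delta$ nested excursions), whereas $\partial G^*$ is a width-one path, so it can absorb at most one detour per vertex; the ``$\Theta(\sqrt n)$ reservoir'' is a global count and does nothing about this local congestion, and rerouting inside a band of $G^*$ near its boundary is blocked by the other optimal paths already there. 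The standard fallback -- discard pairs whose paths conflict with the strip connectors via an independent-set argument -- works, but costs another factor $O(\delta)$ on top of your first pigeonhole, and issue (b) (several destinations in one column of the strip sharing a proxy) costs yet another $O(\delta)$ if handled by a depth pigeonhole. So the approach, as far as you have carried it, yields $\delta^2\cdot\approxfactor$, which is weaker than the theorem's $\delta\cdot\approxfactor$; you correctly identify this as the main challenge, but it is precisely the content of the theorem, so the proof is incomplete.

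For comparison, the paper pays the factor $\delta$ only once and avoids any surgery on the optimal solution: after the usual grouping (single boundary edges, destination scale $d$, an $O(\log n)$ loss absorbed into $\approxfactor$), it handles $d\leq \delta\cdot 2^{c\sqrt{\log n}\log\log n}$ by moving \emph{both} endpoints to $\Gamma(G)$ and solving the all-boundary instance by dynamic programming (a conflict-graph argument bounds the loss by $O(d+\delta)$), and for $d\gg\delta$ it moves only the sources to $\Gamma(G)$, losing one $O(\delta)$ factor via the same conflict-graph argument, and then shows that its \emph{own} algorithm for \restrictedNDP can be made to output \emph{canonical} paths -- paths containing the short connector $U_s$ from $\tilde s$ down to $s$ -- by exploiting the algorithm's internal structure (the snake routing and the sub-instances stay at distance $\gg\delta$ from the top boundary since $\delta\ll d$), after which truncating at $s$ routes the original pairs with no further loss. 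In other words, the paper opens up the black box instead of shrinking the grid, which is what lets it avoid the second $\delta$ factor that your reduction currently incurs.
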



We note that for instances of \NDPgrid where both the sources and the destinations are within distance at most $\delta$ from the grid boundary, it is easy to obtain an efficient $O(\delta)$-approximation algorithm (see, e.g.~\cite{NDP-grids}).

A problem closely related to \NDP is the Edge-Disjoint Paths (\EDP) problem. It is defined similarly, except that now the paths chosen to route the demand pairs may share vertices, and are only required to be edge-disjoint. 
 The approximability status of \EDP is very similar to that of \NDP: there is an $O(\sqrt n)$-approximation algorithm~\cite{EDP-alg}, and an $\Omega(\log^{1/2-\eps}n)$-hardness of approximation for any constant $\eps$, unless $\NP \subseteq \ZPTIME(n^{\poly \log n})$~\cite{AZ-undir-EDP,ACGKTZ}.
As in the \NDP problem, we can use the standard multicommodity flow LP-relaxation of the problem, in order to obtain the $O(\sqrt n)$-approximation algorithm, and the integrality gap of the LP-relaxation is $\Omega(\sqrt n)$ even in planar graphs. 
Recently, Fleszar et al.~\cite{fleszar_et_al} designed an  $O(\sqrt{r}\cdot \log (kr))$-approximation algorithm for \EDP, where $r$ is the feedback vertex set number of the input graph $G=(V,E)$ --- the smallest number of vertices that need to be deleted from $G$ in order to turn it into a forest. 

Several special cases of \EDP have better approximation algorithms: an $O(\log^2n)$-approximation is known for even-degree planar graphs \cite{CKS,CKS-planar1,Kleinberg-planar}, and an $O(\log n)$-approximation is known for nearly-Eulerian uniformly high-diameter planar graphs, and nearly-Eulerian densely embedded graphs, including grid graphs~\cite{grids1,grids3,grids4}. Furthermore, an $O(\log n)$-approximation algorithm is known for \EDP on 4-edge-connected planar, and Eulerian planar graphs~\cite{KK-planar}. It appears that the restriction of the graph $G$ to be Eulerian, or near-Eulerian, makes the \EDP problem on planar graphs significantly simpler, and in particular improves the integrality gap of the standard multicommodity flow LP-relaxation.

The analogue of the grid graph for the \EDP problem is the wall graph (see Figure~\ref{fig: wall}): the integrality gap of the multicommodity flow relaxation for \EDP on wall graphs is $\Omega(\sqrt n)$. The $\tilde O(n^{1/4})$-approximation algorithm of~\cite{NDP-grids} for \NDPgrid extends to \EDP on wall graphs, and the $2^{\Omega(\sqrt{\log n})}$-hardness of approximation of~\cite{NDP-hardness} for \NDPplanar also extends to \EDP  on sub-graphs of walls, with all sources lying on the top boundary of the wall. The recent hardness result of~\cite{NDP-hardness-grid} for \NDPgrid also extends to an $2^{\Omega(\log^{1-\eps}n)}$-hardness of \EDP on wall graphs, assuming $\NP\nsubseteq \BPTIME(n^{\poly\log n})$, and to $n^{\Omega(1/(\log\log n)^2)}$-hardness assuming randomized ETH. We extend our results to \EDP and \NDP on wall graphs: 

\begin{theorem}\label{thm: main edp}
There is an efficient randomized $\approxfactor$-approximation algorithm for \EDP and for \NDP on wall graphs, when all source vertices lie on the wall boundary.
\end{theorem}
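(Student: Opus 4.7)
The plan is to derive Theorem~\ref{thm: main edp} from Theorem~\ref{thm: main} via two successive reductions: first, \EDP on wall graphs with boundary sources to \NDP on wall graphs with boundary sources, and second, the latter to \rNDPgrid.

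For the \EDP-to-\NDP reduction, I would exploit the fact that wall graphs have maximum degree $3$. For each non-terminal vertex $v$ of degree $d\in\{2,3\}$ in the wall $W$, replace $v$ with a small gadget consisting of $d$ copies $v_1,\ldots,v_d$ arranged on a short path, where each $v_i$ inherits exactly one of the edges originally incident to $v$. Since $d\leq 3$, at most one path can traverse $v$ as an internal vertex in any edge-disjoint routing of $W$, so this gadget correctly encodes edge-disjointness in $W$ as node-disjointness in the transformed graph $W'$. Terminal vertices are handled by an analogous gadget. The resulting $W'$ is, up to a constant-factor blow-up, still a wall-like graph with sources on the boundary, so an $\alpha$-approximation for \NDP on walls with boundary sources immediately yields an $O(\alpha)$-approximation for \EDP.

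For the \NDP-on-walls to \rNDPgrid reduction, I would embed the wall $W$ into a grid $G$ of comparable size by adding back the vertical edges missing from the brick pattern, padded by a constant-thickness grid strip if needed so that the boundary sources of $W$ lie on the boundary of $G$. Since $W\subseteq G$, trivially $\optNDP(G,\mset)\geq \optNDP(W,\mset)$. Conversely, given an approximate routing in $G$ produced by Theorem~\ref{thm: main}, every path traversing an edge of $G\setminus W$ can be rerouted through a short bypass in $W$, since each missing brick edge admits a constant-length detour through neighboring wall vertices. By carefully assigning detours to paths (for example, via a matching argument or randomized selection, exploiting the well-separated structure of missing edges), one can extract a node-disjoint routing in $W$ of cardinality $\Omega(N)$ from any node-disjoint routing of $N$ pairs in $G$. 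Composing with Theorem~\ref{thm: main} yields the desired $\approxfactor$-approximation.

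The main obstacle is the rerouting step: ensuring that detours assigned to different paths do not collide, while still preserving a constant fraction of the paths. A cleaner alternative would be to adapt the proof of Theorem~\ref{thm: main} directly to walls, since its sufficient conditions for routing demand pairs and its LP-rounding step rely only on structural properties (planarity, all sources lying on a single face, and a natural row-column decomposition) that walls share with grids. Either approach absorbs the additional constant factors into the $\approxfactor$ term.
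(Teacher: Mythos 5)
There is a genuine gap, and it sits exactly where you flag ``the main obstacle'': the transfer of a node-disjoint routing from the grid $G$ back to the wall $\hat G$. Your claim that each path crossing a missing vertical edge ``can be rerouted through a short bypass in $W$'' does not survive the node-disjointness constraint: in a dense routing the vertices of any constant-length detour are typically occupied by other paths, and no matching or random-selection argument is given (or obvious) that preserves an $\Omega(1)$-fraction of the pairs when detours must be resolved globally. The paper does not attempt to convert an \emph{arbitrary} node-disjoint routing of $G$ into one of $\hat G$; instead it exploits the specific structure of the routing produced by its own algorithm. In Theorem~\ref{thm: far edp} (and in the general case built on Section~\ref{sec:destinations anywhere}) each routed pair lives inside its own snake of width at least $3$, the snakes are pairwise disjoint, and the key observation is that the intersection of a width-$3$ snake with $\hat G$ is a connected subgraph of the wall; hence each pair can simply be re-routed inside its snake, with no interaction between pairs. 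Your parenthetical alternative (``adapt the proof of Theorem~\ref{thm: main} directly to walls'') is in spirit what the paper does, but as stated it carries no argument. A second, smaller problem: padding the grid so that the wall-boundary sources ``lie on the boundary of $G$'' goes the wrong way -- padding pushes them into the interior; the paper instead moves each source to its nearest vertex of $\Gamma(G)$ (distance at most $1$) and shows via a bounded-conflict argument (Observation~\ref{obs: perturbed edp}) that only a constant factor is lost.

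Your \EDP-to-\NDP step is also not the paper's route and has a real flaw: the vertex-splitting gadget produces a graph that is no longer a wall, only ``wall-like,'' and the approximation guarantee you want to invoke holds for walls/grids specifically -- indeed the paper emphasizes that \NDP on mere \emph{subgraphs} of grids or walls with boundary sources is $2^{\Omega(\sqrt{\log n})}$-hard, so ``up to a constant-factor blow-up still wall-like'' is not a legitimate reduction target. The paper avoids gadgets entirely: since $\hat G$ has maximum degree $3$, two edge-disjoint paths can share a vertex only if it is an endpoint of one of them, and a bounded in-degree conflict digraph yields a node-disjoint subfamily of size $\Omega(|\pset|)$, giving $\optNDP\le\optEDP\le O(\optNDP)$ (Observation~\ref{obs: NDP vs EDP in walls}); conversely any node-disjoint solution is automatically edge-disjoint, so a single \NDP algorithm on the wall itself settles both problems.
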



\begin{figure}[h]
\centering
\scalebox{0.4}{\includegraphics{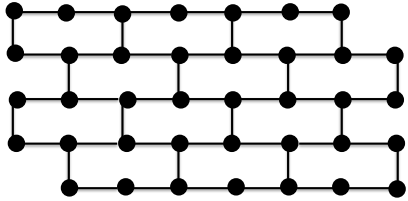}}
\caption{A wall graph.\label{fig: wall}}
\end{figure}


\paragraph*{Other related work.} Cutler and Shiloach~\cite{Cutler-Shiloach} studied an even more restricted version of \NDPgrid, where all source vertices lie on the top row $R^*$ of the grid, and all destination vertices lie on a single row  $R'$ of the grid, far enough from its top and bottom boundaries. They considered three different settings of this special case. In the packed-packed setting, all sources appear consecutively on $R^*$, and all destinations appear consecutively on $R'$ (but both sets may appear in an arbitrary order). They show a necessary and a sufficient condition for all demand pairs to be routable via node-disjoint paths in this setting. The second setting is the packed-spaced setting. Here, the sources again appear consecutively on $R^*$, but all destinations are at a distance at least $d$ from each other. For this setting, the authors show that if $d\geq k$, then all demand pairs can be routed. We note that \cite{NDP-grids} extended their algorithm to a more general setting, where the destination vertices may appear anywhere in the grid, as long as the distance between any pair of the destination vertices, and any destination vertex and the boundary of the grid, is at least $\Omega(k)$. 
Robertson and Seymour~\cite{NDP-surface} provided sufficient conditions for the existence of node-disjoint routing of a given set of demand pairs in the more general setting of graphs drawn on surfaces, and they designed an algorithm whose running time is $\poly(n)\cdot f(k)$ for finding the routing, where $f(k)$ is at least exponential in $k$. Their result implies the existence of the routing in grids, when the destination vertices are sufficiently far from each other and from the grid boundaries, but it does not provide an efficient algorithm to compute such a routing.
The third setting studied by Cutler and Shiloach is the spaced-spaced setting, where the distances between every pair of source vertices, and every pair of destination vertices are at least $d$. The authors note that they could not come up with a better algorithm for this setting, than the one provided for the packed-spaced case.
Aggarwal, Kleinberg, and Williamson~\cite{AKW} considered a special case of \NDPgrid, where the set of  the demand pairs is a permutation: that is, every vertex of the grid participates in exactly one demand pair. They show that $\Omega(\sqrt{n}/\log n)$ demand pairs are routable in this case via node-disjoint paths. They further show that if all terminals are at a distance at least $d$ from each other, then at least $\Omega(\sqrt{nd}/\log n)$ pairs are routable. 

A variation of the NPD and EDP problems, where small congestion is allowed, has been a subject of extensive study, starting with the classical paper of Raghavan and Thompson~\cite{RaghavanT} that introduced the randomized rounding technique. We say that a set $\pset$ of paths causes congestion $c$, if at most $c$ paths share the same vertex or the same edge, for the \NDP and the \EDP settings respectively. A recent line of work~\cite{CKS,Raecke,Andrews,RaoZhou,Chuzhoy11,  ChuzhoyL12,ChekuriE13,NDPwC2} has lead to an $O(\poly\log k)$-approximation for both \NDP and \EDP problems with congestion $2$. For planar graphs, a constant-factor approximation with congestion 2 is known~\cite{EDP-planar-c2}. 

\paragraph*{Organization.} We start with a high-level intuitive overview of our algorithm in Section~\ref{sec: alg-high-level}. We then provide Preliminaries in Section~\ref{sec:prelims} and the algorithm for \rNDPgrid in Section~\ref{sec:alg-overview}, with parts of the proof being deferred to Sections~\ref{sec: shadow property}--\ref{sec:destinations anywhere}. We extend our algorithm to \EDP and \NDP on wall graphs in Section~\ref{sec: ndp to edp}. We generalize our algorithm to the setting where the sources are within some prescribed distance from the grid boundary in Section~\ref{sec: sources at dist d}.

\label{-------------------------------------sec: alg-overview -----------------------------------}
\section{High-Level Overview of the Algorithm}\label{sec: alg-high-level}

The goal of this section is to provide an informal high-level overview of the main result of the paper -- the proof of Theorem~\ref{thm: main}. 
With this goal in mind, the values of various parameters  are given imprecisely in this section, in a way that best conveys the intuition. 
The following sections contain a formal description of the algorithm and the precise settings of all parameters.

We first consider an even more restricted special case of \NDPgrid, where all source vertices appear on the top boundary of the grid, and all destination vertices appear far enough from the grid boundary, and design an efficient randomized  $\approxfactor$-approximation algorithm $\aset$ for this problem. We later show how to reduce \restrictedNDP to this special case of the problem; we focus on the description of the algorithm $\aset$ for now.

We assume that our input graph $G$ is the $(\ell\times \ell)$-grid, and we denote by $n=\ell^2$ the number of its vertices. We further assume that the set of the demand pairs is $\mset=\set{(s_1,t_1),\ldots,(s_k,t_k)}$, with the vertices in set $S=\set{s_1,\ldots,s_k}$ called source vertices; the vertices in set $T=\set{t_1,\ldots,t_k}$ called destination vertices; and the vertices in $S\cup T$ called terminals. Let $\opt$ denote the value of the optimal solution to the \NDP instance $(G,\mset)$. We assume that the vertices of $S$ lie on the top boundary of the grid, that we denote by $R^*$, and the vertices of $T$ lie sufficiently far from the grid boundary -- say, at a distance at least $\opt$ from it. For a subset $\mset'\subseteq \mset$ of the demand pairs, we denote by $S(\mset')$ and $T(\mset')$ the sets of the source and the destination vertices of the demand pairs in $\mset'$, respectively. 
As our starting point, we consider a simple observation of Chuzhoy and Kim~\cite{NDP-grids}, that generalizes the results of Cutler and Shiloach~\cite{Cutler-Shiloach}.
Suppose we are given an instance of \NDPgrid with $k$ demand pairs, where the sources lie on the top boundary of the grid, and the destination vertices may appear anywhere in the grid, but the distance between every pair of the destination vertices, and every destination vertex and the boundary of the grid, is at least $(8k+8)$ -- we call such instances \emph{spaced-out instances}. In this case, all demand pairs in $\mset$ can be efficiently routed via node-disjoint paths, as follows. Consider, for every destination vertex $t_i\in T$, a square sub-grid $B_i$ of $G$, of size $(2k\times 2k)$, such that $t_i$ lies roughly at the center of $B_i$. We construct a set $\pset$ of $k$ node-disjoint paths, that originate at the vertices of $S$, and traverse the sub-grids $B_i$ one-by-one in a snake-like fashion (see a schematic view on Figure~\ref{fig: global-routing}). We call this part of the routing \emph{global routing}. The \emph{local routing} needs to specify how the paths in $\pset$ traverse each box $B_i$. This is done in a straightforward manner, while ensuring  that the unique path originating at vertex $s_i$ visits the vertex $t_i$ (see Figure~\ref{fig: local-routing}). By suitably truncating the final set $\pset$ of paths, we obtain a routing of all demand pairs in $\mset$ via node-disjoint paths.

\begin{figure}[h]
\centering
\subfigure[Global routing. In this figure, the sub-grids $B_i$ are aligned vertically and horizontally. A similar (but somewhat more complicated) routing can be performed even if they are not aligned. For convenience we did not include all source vertices and all paths.]{\scalebox{0.4}{\includegraphics{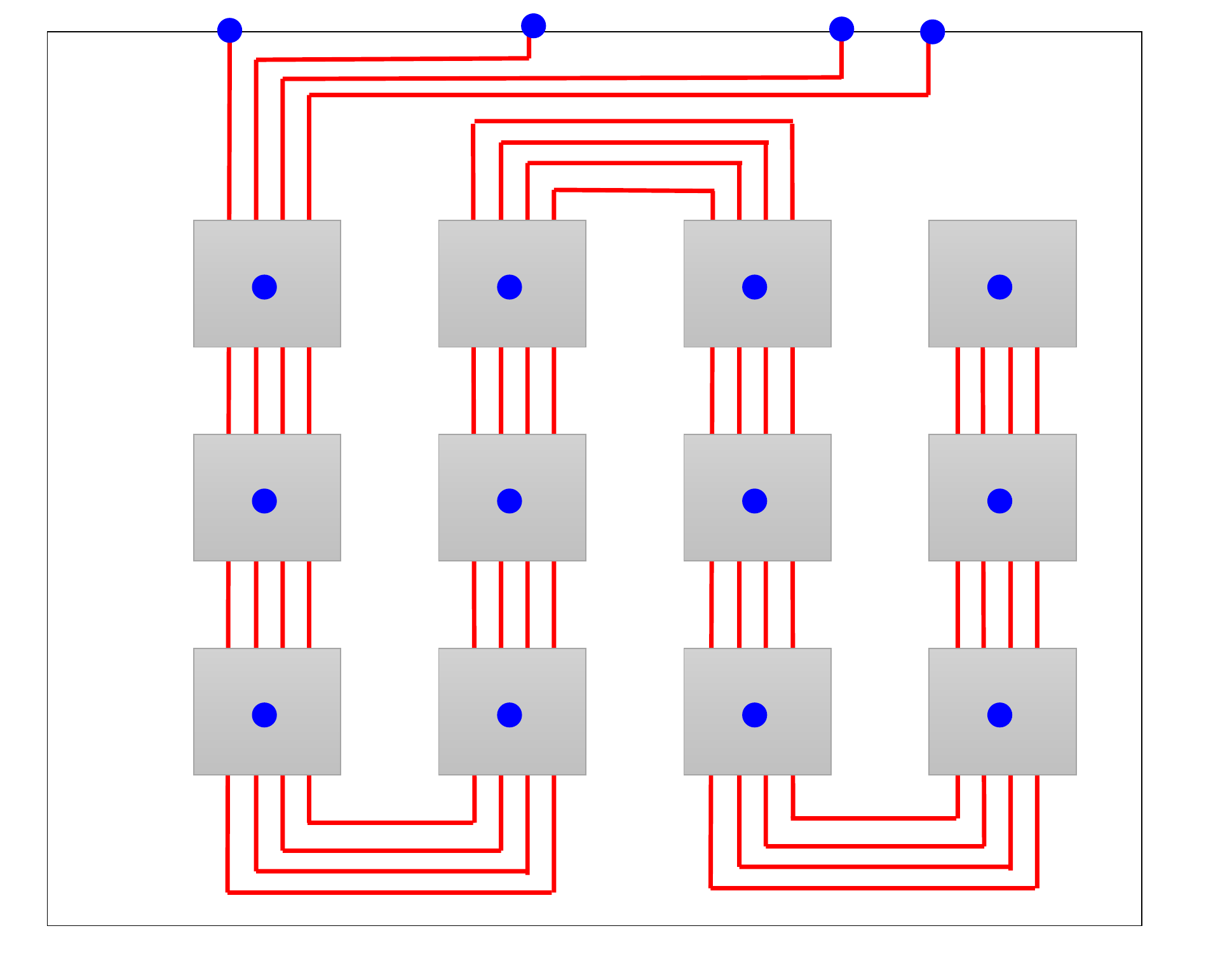}}\label{fig: global-routing}}
\hspace{1cm}
\subfigure[Local routing inside $B_i$]{
\scalebox{0.4}{\includegraphics{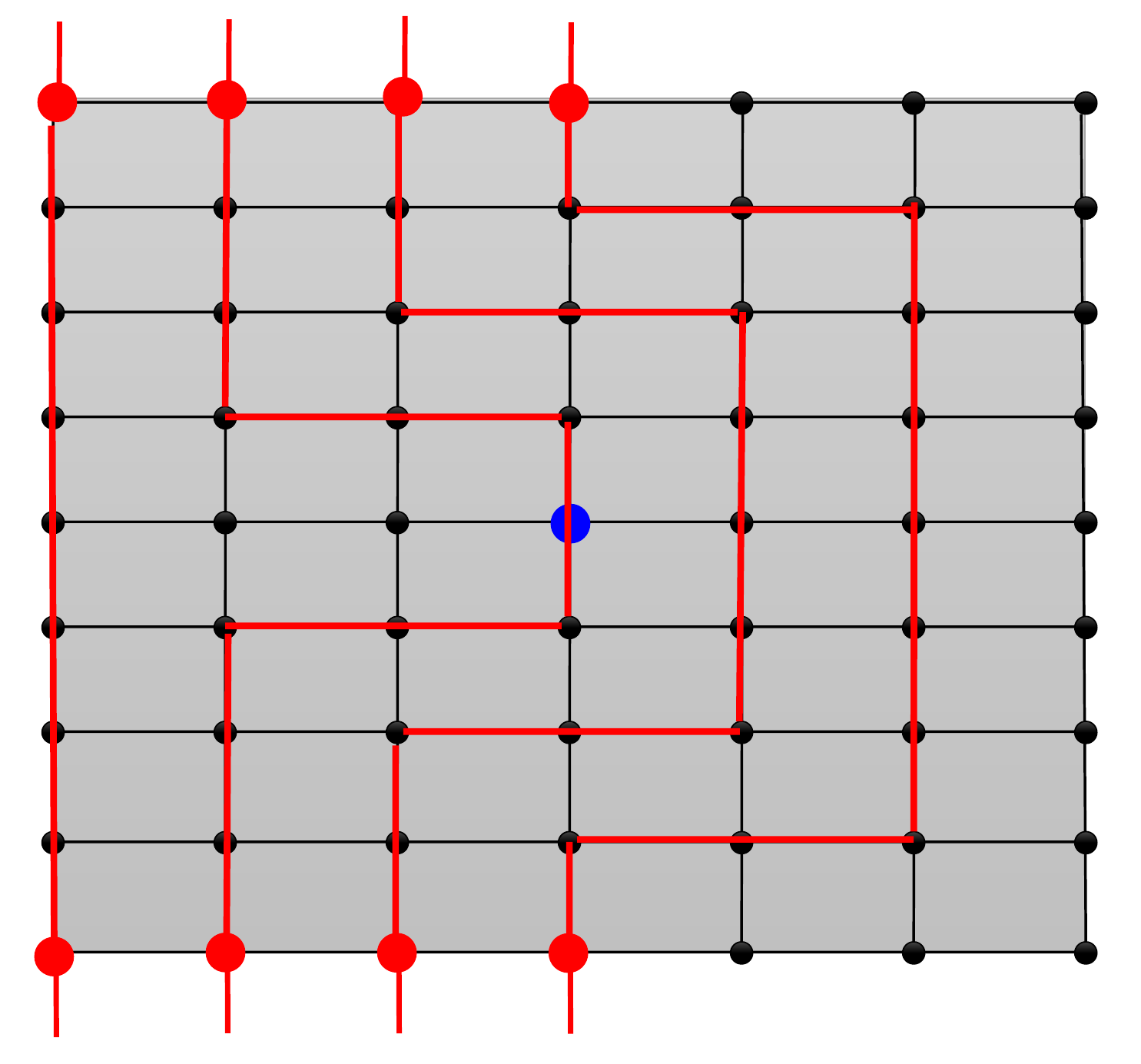}}\label{fig: local-routing}}
\caption{Schematic view of routing of spaced-out instances.\label{fig: routing-spaced}}
\end{figure}

Unfortunately, in our input instance $(G,\mset)$, the destination vertices may not be located sufficiently far from each other. We can try to select a large subset $\mset'\subseteq \mset$ of the demand pairs, so that every pair of destination vertices in $T(\mset')$ appear at a distance at least $\Omega(|\mset'|)$ from each other; but in some cases the largest such set $\mset'$ may only contain $O(\opt/\sqrt{k})$ demand pairs (for example, suppose all destination vertices lie consecutively on a single row of the grid). One of our main ideas is to generalize this simple algorithm to a number of recursive levels.

For simplicity, let us first describe the algorithm with just two recursive levels. Suppose we partition the top row of the grid into $z$ disjoint intervals, $I_1,\ldots,I_z$. Let $\mset'\subseteq \mset$ be a set of demand pairs that we would like to route. Denote $|\mset'|=k'$, and assume that we are given a collection $\qset$ of square sub-grids of $G$, of size $(4k'\times 4k')$ each (that we call \emph{squares}), such that every pair $Q,Q'\in \qset$ of distinct squares is at a distance at least $4k'$ from each other. Assume further that each such sub-grid $Q\in \qset$ is assigned a color $\chi(Q)\in \set{c_1,\ldots,c_z}$, such that, if $Q$ is assigned the color $c_j$, then all demand pairs $(s,t)\in \mset'$ whose destination $t$ lies in $Q$ have their source $s\in I_j$  (so intuitively, each color $c_j$ represents an interval $I_j$). Let $\mset'_j\subseteq \mset'$ be the set of all demand pairs $(s,t)\in \mset'$ with $s\in I_j$. We would like to ensure that $|\mset'_j|$ is roughly $k'/z$, and that all destination vertices of $T(\mset'_j)$ are at a distance at least $|\mset'_j|$ from each other. We claim that if we could find the collection $\set{I_1,\ldots,I_z}$ of the intervals of the first row, a collection $\qset$ of sub-grids of $G$,  a coloring $\chi:\qset\rightarrow \set{c_1,\ldots,c_z}$, and a subset $\mset'\subseteq\mset$ of the demand pairs with these properties, then we would be able to route all demand pairs in $\mset'$. 

In order to do so, for each square $Q\in \qset$, we construct an augmented square $Q^+$, by adding a margin of $k'$ rows and columns around $Q$. Our goal is to construct a collection $\pset$ of node-disjoint paths routing the demand pairs in $\mset'$. We start by constructing a global routing, where all paths in $\pset$ originate from the vertices of $S(\mset')$ and then visit the squares in $\set{Q^+\mid Q\in \qset}$ in a snake-like fashion, just like we did for the spaced-out instances described above (see Figure~\ref{fig: global-routing}). Consider now some square $Q\in \qset$ and the corresponding augmented square $Q^+$. Assume that $\chi(Q)=c_j$, and let $\pset_j\subseteq\pset$ be the set of paths originating at the source vertices that lie in $I_j$. While traversing the square $Q^+$, we ensure that only the paths in $\pset_j$ enter the square $Q$; the remaining paths use the margins on the left and on the right of $Q$ in order to traverse $Q^+$. This can be done because the sources of the paths in $\pset_j$ appear consecutively on $R^*$, relatively to the sources of all paths in $\pset$. In order to complete the local routing inside the square $Q$, observe that the destination vertices appear far enough from each other, and so we can employ the simple algorithm for spaced-out instances inside $Q$.

In order to optimize the approximation factor that we achieve, we extend this approach to $\rho=O(\sqrt{\log n})$ recursive levels.  Let $\eta=2^{\ceil{\sqrt{\log n}}}$.
We define auxiliary parameters $d_1>d_2>\cdots>d_{\rho}>d_{\rho+1}$. Roughly speaking, we can think of $d_{\rho+1}$ as being a constant (say $16$), of $d_1$ as being comparable to $\opt$, and for all $1\leq h\leq \rho$, $d_{h+1}=d_{h}/\eta$. The setup for the algorithm consists of three ingredients: (i) a hierarchical decomposition $\thset$ of the grid into square sub-grids (that we refer to as squares); (ii) a hierarchical partition $\ifamily$ of the first row $R^*$ of the grid into intervals; and (iii) a hierarchical coloring $f$ of the squares in $\thset$ with colors that correspond to the intervals of $\ifamily$, together with a selection of a subset $\mset'\subseteq \mset$ of the demand pairs to route. We define sufficient conditions on the hierarchical system $\thset$ of squares, the hierarchical partition $\ifamily$ of $R^*$ into intervals, the coloring $f$ and the subset $\mset'$ of the demand pairs, under which a routing of all pairs in  $\mset'$ exists and can be found efficiently. For a fixed hierarchical system $\thset$ of squares, a triple $(\ifamily, f, \mset')$  satisfying these conditions is called a \emph{good ensemble}. We show that a good ensemble with a large enough set $\mset'$ of demand pairs exists, and then design an approximation algorithm for computing a good ensemble maximizing $|\mset'|$. We now describe each of these ingredients in turn.


\subsection{A Hierarchical System of Squares}
A hierarchical system $\thset$ of squares consists of a sequence $\qset_1,\qset_2,\ldots,\qset_{\rho}$ of sets of sub-grids of $G$. For each $1\leq h\leq \rho$, 
$\qset_h$ is a collection of disjoint sub-grids of $G$ (that we refer to as \emph{level-$h$ squares}); every such square $Q\in \qset_h$ has size $(d_h\times d_h)$, and every pair of distinct squares $Q,Q'\in \qset_h$ are within distance at least $d_h$ from each other (see Figure~\ref{fig: squares}). We require that for each $1< h\leq \rho$, for every square $Q\in \qset_h$, there is a unique square $Q'\in \qset_{h-1}$ (called the \emph{parent-square} of $Q$) that contains $Q$. We say that a demand pair $(s,t)$ \emph{belongs} to the hierarchical system $\thset=(\qset_1,\qset_2,\ldots,\qset_{\rho})$ of squares iff $t\in \bigcup_{Q\in \qset_{\rho}}Q$. We show a simple efficient algorithm to construct $2^{O(\sqrt{\log n})}$ such hierarchical systems of squares, so that every demand pair belongs to at least one of them. Each such system $\thset$ of squares induces an instance of \NDP ---  the instance is defined over the same graph $G$, and the set $\tmset\subseteq \mset$ of demand pairs that belong to the system $\thset$.   It is then enough to obtain a factor $\approxfactor$-approximation algorithm for each resulting instance $(G,\tmset)$ separately. From now on we fix one such hierarchical system $\thset=(\qset_1,\qset_2,\ldots,\qset_{\rho})$ of squares, together with the set $\tmset\subseteq \mset$ of demand pairs, containing all pairs $(s,t)$ that belong to $\thset$, and focus on designing an $\approxfactor$-approximation algorithm for instance $(G,\tmset)$.

\begin{figure}[h]
\centering
\scalebox{0.3}{\includegraphics{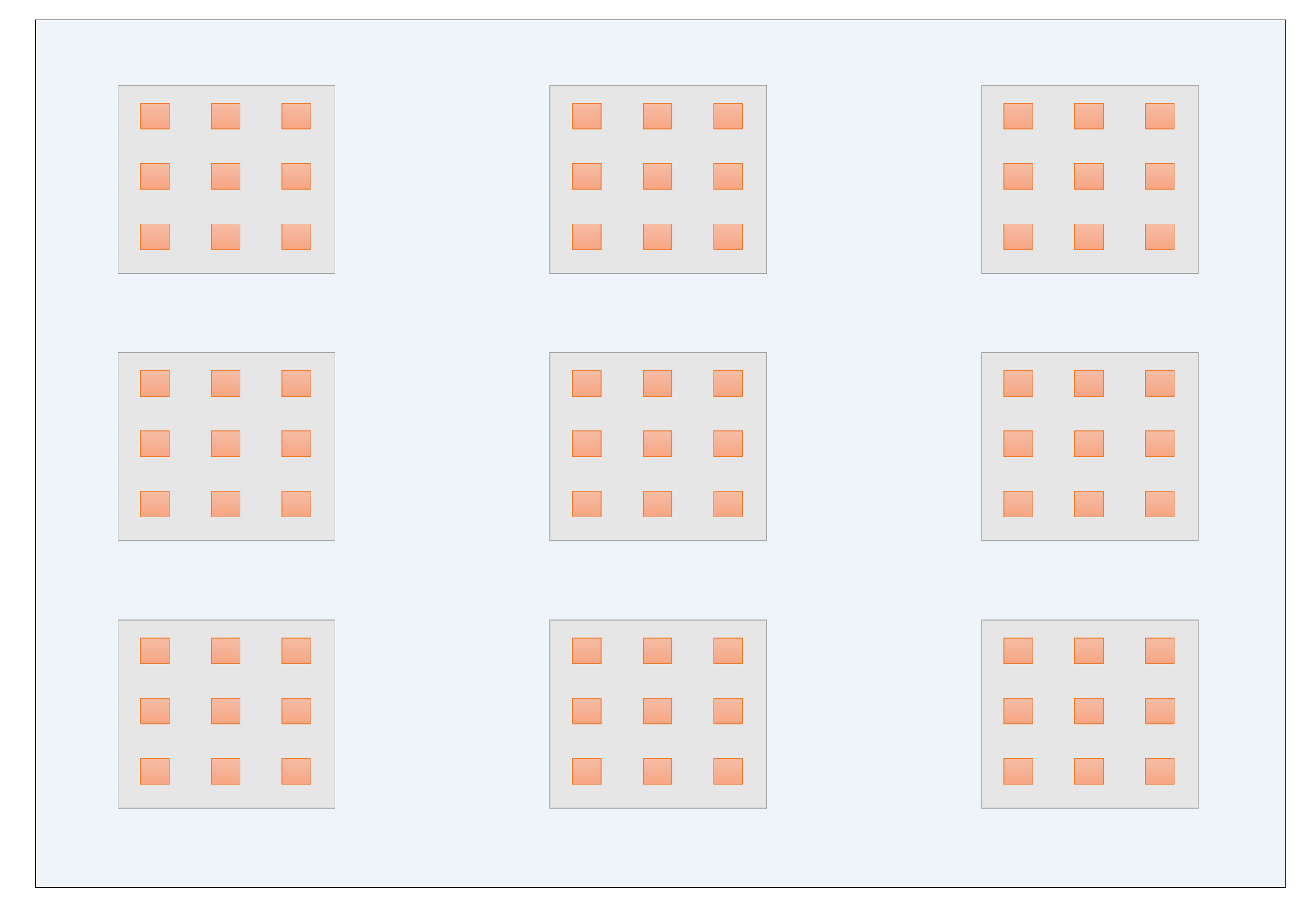}}
\caption{A schematic view of a hierarchical system of squares with 2 levels.\label{fig: squares}}
\end{figure}


\subsection{A Hierarchical Partition of the Top Grid Boundary} Recall that $R^*$ denotes the first row of the grid. A hierarchical partition $\ifamily$ of $R^*$ is a sequence $\iset_1,\iset_2,\ldots,\iset_{\rho}$ of sets of sub-paths of $R^*$, such that for each $1\leq h\leq \rho$, the paths in $\iset_h$ (that we refer to as \emph{level-$h$ intervals}) partition the vertices of $R^*$. We also require that for all $1<h\leq \rho$, every level-$h$ interval $I\in \iset_h$ is contained in a unique level-$(h-1)$ interval $I'\in \iset_{h-1}$, that we refer to as the \emph{parent-interval} of $I$. For every level $1\leq h\leq \rho$, we define a collection $\chi_h$ of colors, containing one color $c_h(I)$ for each level-$h$ interval $I\in \iset_h$. If $I'\in \iset_h$ is a parent-interval of $I\in \iset_{h+1}$, then we say that color $c_h(I')$ is a \emph{parent-color} of $c_{h+1}(I)$.


\subsection{Coloring the Squares and Selecting Demand Pairs to Route}
The third ingredient of our algorithm is an assignment $f$ of colors to the squares, and a selection of a subset of the demand pairs to be routed. For every level $1\leq h\leq \rho$, for every level-$h$ square $Q\in \qset_h$, we would like to assign a single level-$h$ color $c_h(I)\in \chi_h$ to $Q$, denoting $f(Q)=c_h(I)$. Intuitively, if color $c_h(I)$ is assigned to $Q$, then the only demand pairs $(s,t)$ with $t\in Q$ that we may route are those whose source vertex $s$ lies on the level-$h$ interval $I$. We require that the coloring is consistent across levels: that is, for all $1<h\leq \rho$, if a level-$h$ square is assigned a level-$h$ color $c_h$, and its parent-square is assigned a level-$(h-1)$ color $c_{h-1}$, then $c_{h-1}$ must be a parent-color of $c_h$. We call such a coloring $f$ a \emph{valid coloring} of $\thset$ with respect to $\ifamily$.

Finally, we would like to select a subset $\mset'\subseteq \tmset$ of the demand pairs to route. Consider some demand pair $(s,t)$ and some level $1\leq h\leq \rho$. Let $I_h$ be the level-$h$ interval to which $s$ belongs. Then we say that $s$ has the level-$h$ color $c_h(I_h)$. Therefore, for each level $1\leq h\leq \rho$, 
vertex $s$ is assigned the unique level-$h$ color $c_h(I_h)$, and for $1\leq h<\rho$, $c_{h}(I_{h})$ is the parent-color of $c_{h+1}(I_{h+1})$.
 Let $Q_{\rho}\in \qset_{\rho}$ be the level-$\rho$ square to which $t$ belongs. We may only add $(s,t)$ to $\mset'$ if the level-$\rho$ color of $Q_{\rho}$ is $c_{\rho}(I_{\rho})$ (that is, it is the same as the level-$\rho$ color of $s$). Notice that in particular, this means that for every level $1\leq h\leq\rho$, if $Q_h$ is the level-$h$ square containing $t$, and it is assigned the color $c_h(I_h)$, then $s$ is assigned  the same level-$h$ color, and so $s\in I_h$. Finally, we require that for all $1\leq h\leq \rho$, for every level-$h$ color $c_h$, the total number of all demand pairs $(s,t)\in \mset'$, such that the level-$h$ color of $s$ is $c_h$, is no more than $d_{h+1}/16$ (if $h=\rho$, then the number is no more than $1$). If $\mset'$ has all these properties, then we say that it \emph{respects the coloring $f$}. We say that $(\ifamily, f,\mset')$ is a \emph{good ensemble} iff $\ifamily$ is a hierarchical partition of $R^*$ into intervals; $f$ is a valid coloring of the squares in $\thset$ with respect to $\ifamily$; and $\mset'\subseteq \tmset$ is a subset of the demand pairs that respects the coloring $f$. The \emph{size} of the ensemble is $|\mset'|$.

\subsection{The Routing}  We show that, if we are given a good ensemble $(\ifamily, f,\mset')$, then we can route all demand pairs in $\mset'$. The routing itself follows the high-level idea outlined above. We gradually construct a collection $\pset$ of node-disjoint paths routing the demand pairs in $\mset'$. At the highest level, all these paths depart from their sources and then visit the level-$1$ squares one-by-one, in a snake-like fashion, as in Figure~\ref{fig: global-routing}. Consider now some level-$1$ square $Q$, and assume that its level-$1$ color is $c_1(I)$, where $I\in \iset_1$ is some level-1 interval of $R^*$. Then only the paths $P\in \pset$ that originate at the vertices of $I$ will enter the square $Q$; the remaining paths will exploit the spacing between the level-$1$ squares in order to bypass it; the spacing between the level-$1$ squares is sufficient to allow this. Once we have defined this global routing, we need to specify how the routing is carried out inside each square. We employ the same procedure recursively. Consider some level-$1$ square $Q$, and let $\pset'\subseteq \pset$ be the set of all paths that visit $Q$. Assume further that the level-$1$ color of $Q$ is $c_1(I)$. Since we are only allowed to have at most $d_2/16$ demand pairs in $\mset'$ whose level-1 color is $c_1(I)$, $|\pset'|\leq d_2/16$. Let $\qset'\subseteq \qset_2$ be the set of all level-$2$ squares contained in $Q$. The paths in $\pset'$ will visit the squares of $\qset'$ one-by-one in a snake-like fashion (but this part of the routing is performed inside $Q$). As before, for every level-2 square $Q'\subseteq Q$, if the level-$2$ color of $Q'$ is $c_2(I')$, then only those paths of $\pset'$ that originate at the vertices of $I'$ will enter $Q'$; the remaining paths will use the spacing between the level-$2$ squares to bypass $Q'$. Since $|\pset'|\leq d_2/16$, and all level-$2$ squares are at distance at least $d_2$ from each other, there is a sufficient spacing to allow this routing. We continue this process recursively, until, at the last level of the recursion, we route at most one path per color, to its destination vertex.

In order to complete the proof of the theorem, we need to show that there exists a good ensemble $(\ifamily, f,\mset')$ of size $|\mset'|\geq |\opt|/\approxfactor$, and that we can find such an ensemble efficiently.

\subsection{The Existence of the Ensemble}
The key notion that we use in order to show that a large good ensemble  $(\ifamily, f,\mset')$ exists is that of a \emph{shadow property}. Suppose $Q$ is some $(d\times d)$ sub-grid of $G$, and let $\hmset\subseteq \mset$ be some subset of the demand pairs. Among all demand pairs $(s,t)\in \hmset$ with $t\in Q$, let $(s_1,t_1)$ be the one with $s_1$ appearing earliest on the first row $R^*$ of $G$, and let $(s_2,t_2)$ be the one with $s_2$ appearing latest on $R^*$. The \emph{shadow of $Q$ with respect to $\hmset$} is the sub-path of $R^*$ between $s_1$ and $s_2$. Let $N_{\hmset}(Q)$ be the number of all demand pairs $(s,t)\in \hmset$ with $s$ lying in the shadow of $Q$ (that is, $s$ lies between $s_1$ and $s_2$ on $R^*$). We say that $\hmset$ has the \emph{shadow property with respect to $Q$} iff $N_{\hmset}(Q)\leq d$. We say that $\hmset$ has the \emph{shadow property with respect to the hierarchical system $\thset=(\qset_1,\ldots,\qset_{\rho})$ of squares}, iff $\hmset$ has the shadow property with respect to every square in $\bigcup_{h=1}^{\rho}\qset_h$. Let $\pset^*$ be the optimal solution to the instance $(G,\tmset)$ of \NDP, where $\tmset$ only includes the demand pairs that belong to $\thset$. 
Let $\mset^*\subseteq \tmset$ be the set of the demand pairs routed by $\pset^*$. For every demand pair $(s,t)\in \mset^*$, let $P(s,t)\in \pset^*$ be the path routing this demand pair. Intuitively, it feels like $\mset^*$ should have the shadow property. Indeed, let $Q\in \bigcup_{h=1}^{\rho}\qset_h$ be some square of size $(d_h\times d_h)$, and let $(s_1,t_1),(s_2,t_2)\in \mset^*$ be defined for $Q$ as before, so that the shadow of $Q$ with respect to $\mset^*$ is the sub-path of $R^*$ between $s_1$ and $s_2$. Let $P$ be any path of length at most $2d_h$ connecting $t_1$ to $t_2$ in $Q$, and let $\gamma$ be the closed curve consisting of the union of $P(s_1,t_1)$, $P$, $P(s_2,t_2)$, and the shadow of $Q$.  Consider the disc $D$ whose boundary is $\gamma$. The intuition is that, if $(s,t)\in \mset^*$ is a demand pair whose source lies in the shadow of $Q$, and destination lies outside of $D$, then $P(s,t)$ must cross the path $P$, as it needs to escape the disc $D$. Since path $P$ is relatively short, only a small number of such demand pairs may exist. The main difficulty with this argument is that we may have a large number of demand pairs $(s,t)$, whose source lies in the shadow of $Q$, and the destination lies in the disc $D$. Intuitively, this can only happen if $P(s_1,t_1)$ and $P(s_2,t_2)$ ``capture'' a large area of the grid. We show that, in a sense, this cannot happen too often, and that there is a subset $\mset^{**}\subseteq \mset^*$ of at least $|\mset^*|/\approxfactor$ demand pairs, such that $\mset^{**}$ has the shadow property with respect to $\thset$.

Finally, we show that there exists a good ensemble $(\ifamily,f,\mset')$ with $|\mset'|\geq |\mset^{**}|/\approxfactor$. 
We construct the ensemble over the course of $\rho$ iterations, starting with $\mset'=\mset^{**}$. In the $h$th iteration we construct the set $\iset_h$ of the level-$h$ intervals of $R^*$, assign level-$h$ colors to all level-$h$ squares of $\thset$, and discard some demand pairs from $\mset'$. Recall that  $\eta=2^{\ceil{\sqrt{\log n}}}$. In the first iteration, we let $\iset_1$ be a partition of the row $R^*$ into intervals, each of which contains roughly $\frac{d_1}{16\eta}=\frac{d_2}{16} \leq \frac{|\mset^*|}{\eta}$ vertices of $S(\mset')$. Assume that these intervals are $I_1,\ldots,I_r$, and that they appear in this left-to-right order on $R^*$. We call all intervals $I_j$ where $j$ is odd \emph{interesting intervals}, and the remaining intervals $I_j$ \emph{uninteresting intervals}.  We discard from $\mset'$ all demand pairs $(s,t)$, where $s$ lies on an uninteresting interval.  Consider now some level-$1$ square $Q$, and let $\mset(Q)\subseteq \mset'$ be the set of all demand pairs whose destinations lie in $Q$. Since the original set $\mset^{**}$ of demand pairs had the shadow property with respect to $Q$, it is easy to verify that all source vertices of the demand pairs in $\mset(Q)$ must belong to a single interesting interval of $\iset_1$. Let $I$ be that interval. Then we color the square $Q$ with the level-$1$ color $c_1(I)$ corresponding to the interval $I$. This completes the first iteration. Notice that for each level-1 color $c_1(I)$, at most $d_2/16$ demand pairs $(s,t)\in \mset'$ have $s\in I$. In the following iteration, we similarly partition every interesting level-$1$ interval into level-$2$ intervals that contain roughly $d_3/16\leq |\mset^{*}|/\eta^2$ source vertices of $\mset'$ each, and then define a coloring of all level-$2$ squares similarly, while suitably updating the set $\mset'$ of the demand pairs. We continue this process for $\rho$ iterations, eventually obtaining a good ensemble $(\ifamily, f, \mset')$. Since we only discard a constant fraction of the demand pairs of $\mset'$ in every iteration, at the end, $|\mset'|\geq |\mset^{**}|/2^{O(\rho)}=|\mset^{**}|/2^{O(\sqrt{\log n})}\geq |\mset^*|/\approxfactor$.

\subsection{Finding the Good Ensemble}
In our final step, our goal is to find a good ensemble $(\ifamily, f, \mset')$ maximizing $|\mset'|$. We show an efficient randomized $\approxfactor$-approximation algorithm for this problem. First, we show that, at the cost of losing a small factor in the approximation ratio, we can restrict our attention to a small collection $\ifamily_1,\ifamily_2,\ldots,\ifamily_z$ of hierarchical partitions of $R^*$ into intervals, and that it is enough to obtain a $\approxfactor$-approximate solution for the problem of finding the largest ensemble $(\ifamily_j,f,\mset')$  for each such partition $\ifamily_j$ separately.

We then fix one such hierarchical partition $\ifamily_j$, and design an LP-relaxation for the problem of computing a coloring $f$ of $\thset$ and a collection $\mset'$ of demand pairs, such that $(\ifamily_j,f,\mset')$ is a good ensemble, while maximizing $|\mset'|$. Finally, we design an efficient randomized LP-rounding $\approxfactor$-approximation algorithm for the problem.


\subsection{Completing the Proof of Theorem~\ref{thm: main}}
So far we have assumed that all source vertices lie on the top boundary of the grid, and all destination vertices are at a distance at least $\Omega(\opt)$ from the grid boundary. Let $\aset$ be the randomized efficient $\approxfactor$-approximation algorithm for this special case. We now extend it to the general \rNDPgrid problem. 
For every destination vertex $t$, we identify the closest vertex $\tilde{t}$ that lies on the grid boundary. Using standard grouping techniques, and at the cost of losing an additional $O(\log n)$ factor in the approximation ratio, we can assume that all source vertices lie on the top boundary of the grid, all vertices in $\set{\tilde t\mid t\in T(\mset)}$ lie on a single boundary edge of the grid (assume for simplicity that it is the bottom boundary), and that there is some integer $d$, such that for every destination vertex $t\in T(\mset)$, $d\leq d(t,\tilde t)<2d$. We show that we can define a collection $\zset=\set{Z_1,\ldots,Z_r}$ of disjoint square sub-grids of $G$, and a collection $\iset=\set{I_1,\ldots,I_r}$ of disjoint sub-intervals of $R^*$, such that the bottom boundary of each sub-grid $Z_i$ is contained in the bottom boundary of $G$, the top boundary of $Z_i$ is within distance at least $\opt$ from $R^*$, $Z_1,\ldots,Z_r$ appear in this left-to-right order in $G$, and $I_1,\ldots,I_r$ appear in this left-to-right order on $R^*$. For each $1\leq j\leq r$, we let $\mset_j$ denote the set of all demand pairs with the sources lying on $I_j$ and the destinations lying in $Z_j$. For each $1\leq j\leq r$, we then obtain a new instance $(G,\mset_j)$ of the \NDP problem. We show that there exist a collection $\zset$ of squares and a collection $\iset$ of intervals, such that the value of the optimal solution to each instance $(G,\mset_j)$, that we denote by $\opt_j$, is at most $d$, while $\sum_{j=1}^r\opt_j\geq \opt/\approxfactor$. Moreover, it is not hard to show that, if we can compute, for each $1\leq j\leq r$, a routing of some subset $\mset'_j\subseteq \mset_j$ of demand pairs in $G$, then we can also route all demand pairs in $\bigcup_{j=1}^r\mset'_j$ simultaneously in $G$.

There are two problems with this approach. First, we do not know the set $\zset$ of sub-grids of $G$ and the set $\iset$ of intervals of $R^*$. Second, it is not clear how to solve each resulting problem $(G,\mset_j)$. To address the latter problem, we define a simple mapping of all source vertices in $S(\mset_j)$ to the top boundary of grid $Z_j$, obtaining an instance of \rNDPgrid, where all source vertices lie on the top boundary of the grid $Z_j$, and all destination vertices lie at a distance at least $\opt_j\leq d$ from its boundary. We can then use algorithm $\aset$ in order to solve this problem efficiently. It is easy to see that, if we can route some subset $\mset'_j$ of the demand pairs via node-disjoint paths in $Z_j$, then we can extend this routing to the corresponding set of original demand pairs, whose sources lie on $R^*$.

Finally, we employ dynamic programming in order to find the set $\zset$ of sub-grids of $G$ and the set $\iset$ of intervals of $I$. For each such potential sub-grid $Z$ and interval $I$, we use algorithm $\aset$ in order to find a routing of a large set of demand pairs of the corresponding instance defined inside $Z$, and then exploit the resulting solution values for each such pair $(I,Z)$ in a simple dynamic program, that allows us to compute the set $\zset$ of sub-grids of $G$, the set $\iset$ of intervals of $I$, and the final routing.

\label{-------------------------------------sec: prelims -----------------------------------}
\section{Preliminaries}\label{sec:prelims}

All logarithms in this paper are to the base of $2$.

For a pair  $h,\ell> 0$ of integers, we let $G^{h,\ell}$ denote the grid of height $h$ and length $\ell$.
The set of its vertices is $V(G^{h,\ell})=\set{v(i,j)\mid 1\leq i\leq h, 1\leq j\leq \ell}$, and the set of its edges is the union of two subsets: the set $E^H=\set{(v_{i,j},v_{i,j+1})\mid 1\leq i\leq h, 1\leq j<\ell}$ of horizontal edges  and the set $E^{V}=\set{(v_{i,j},v_{i+1,j})\mid 1\leq i< h, 1\leq j\leq\ell}$ of vertical edges. The subgraph of $G^{h,\ell}$ induced by the edges of $E^H$ consists of $h$ paths, that we call the \emph{rows} of the grid; for $1\leq i\leq h$, the $i$th row $R_i$ is the row containing the vertex $v(i,1)$. Similarly, the subgraph induced by the edges of $E^V$ consists of $\ell$ paths that we call the \emph{columns} of the grid, and for $1\leq j\leq \ell$, the $j$th column $W_j$ is the column containing $v(1,j)$. We think of the rows  as ordered from top to bottom and the columns as ordered from left to right. Given two vertices $u=v(i,j)$ and $u'=v(i',j')$ of the grid, the shortest-path distance between them in $G$ is denoted by $d(u,u')$. Given two vertex subsets $X,Y\subseteq V(G^{\ell,h})$, the distance between them is $d(X,Y)=\min_{u\in X,u'\in Y}\set{d(u,u')}$. Given a vertex $v=v(i,j)$ of the grid, we denote by $\row(v)$ and $\col(v)$ the row and the column, respectively, that contain $v$. The \emph{boundary of the grid} is $\Gamma(G^{h,\ell})=R_1\cup R_h\cup W_1\cup W_{\ell}$. We sometimes refer to $R_1$ and $R_h$ as the top and the bottom boundaries of the grid respectively, and to $W_1$ and $W_{\ell}$ as the left and the right boundaries of the grid. We say that $G^{h,\ell}$ is a \emph{square grid} iff $h=\ell$.

Given a set $\rset$ of consecutive rows of a grid $G=G^{\ell,h}$ and a set $\wset$ of consecutive columns of $G$, we let $\Y_G(\rset,\wset)$ be the subgraph of $G$ induced by the set $\set{v(j,j')\mid R_j\in \rset, W_{j'}\in \wset}$ of vertices. We say that $\Y=\Y_G(\rset,\wset)$ is the \emph{sub-grid of $G$ spanned by the set $\rset$ of rows and the set $\wset$ of columns}. A sub-graph $G'\subseteq G$ is called a \emph{sub-grid of $G$} iff there is a set $\rset$ of consecutive rows and a set $\wset$ of consecutive columns of $G$, such that $G'=\Y_G(\rset,\wset)$. If additionally $|\rset|=|\wset|=d$ for some integer $d$, then we say that $G'$ is \emph{a square of size $(d\times d)$}.

In the \NDPgrid problem, the input is a grid $G=G^{\ell,\ell}$ and a set $\mset=\set{(s_1,t_1),\ldots,(s_k,t_k)}$ of pairs of its vertices, called demand pairs. We refer to the vertices in set $S(\mset)=\set{s_1,\ldots,s_k}$ as source vertices, to the vertices in set $T(\mset)=\set{t_1,\ldots,t_k}$ as destination vertices, and to the vertices of $S(\mset)\cup T(\mset)$ as terminals. The solution to the problem is a set $\pset$ of node-disjoint paths in $G$, where each path $P\in \pset$ connects some demand pair in $\mset$. The goal is to maximize the number of the demand pairs routed, that is, $|\pset|$. In this paper we consider a special case of \NDPgrid, that we call \restrictedNDP, where all source vertices appear on the boundary of the grid. 
We denote by $n=\ell^2$ the number of vertices in the input graph $G$.

Given a subset $\mset'\subseteq \mset$ of the demand pairs, we denote by $S(\mset')$ and $T(\mset')$ the sets of all source and all destination vertices participating in the pairs in $\mset'$, respectively. 


\label{-------------------------------------sec: alg-overview -----------------------------------}
\section{The Algorithm}\label{sec:alg-overview}
Throughout, we assume that we know the value $\opt$ of the optimal solution; in order to do so, we simply guess the value $\opt$ (that is, we go over all such possible values), and run our approximation algorithm for each such guessed value. It is enough to show that the algorithm returns a factor-$\approxfactor$-approximate solution whenever the value $\opt$ has been guessed correctly. We use a parameter $\eta=2^{\ceil{\sqrt{\log n}}}$.
We first further restrict the problem and consider a special case where the destination vertices appear far from the grid boundary. 
We show an efficient randomized algorithm for approximately solving this special case of the problem, by proving the following theorem, which is one of our main technical contributions. 

\begin{theorem}\label{thm: main w destinations far from boundary}
There is an efficient randomized algorithm $\aset$, that, given an instance $(G,\mset)$ of \restrictedNDP and an integer $\opt>0$, such that the value of the optimal solution to $(G,\mset)$ is at least $\opt$, and every destination vertex $t\in T(\mset)$ lies at a distance at least $\opt/\eta$ from $\Gamma(G)$,  returns a solution 
that routes at least $\opt/2^{O(\sqrt{\log n}\cdot\log\log n)}$ demand pairs, with high probability.
\end{theorem}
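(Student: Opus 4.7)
The plan is to execute the blueprint sketched in Section~\ref{sec: alg-high-level}. First, by enumerating over the four sides of the grid and losing a constant factor, I would reduce to the case where all source vertices lie on the top row $R^*$. Setting $\rho=\Theta(\sqrt{\log n})$ and $d_h=\eta^{\rho-h+3}$ for $1\le h\le \rho+1$, I would construct, by choosing $2^{O(\sqrt{\log n})}$ grid-aligned offsets per level, a family of $2^{O(\sqrt{\log n})}$ hierarchical systems of squares $\thset=(\qset_1,\ldots,\qset_{\rho})$ such that every destination of $\mset$ belongs to a level-$\rho$ square in at least one system; the hypothesis that every destination is at distance at least $\opt/\eta\ge d_1$ from $\Gamma(G)$ is precisely what allows the level-1 squares of size $d_1\times d_1$ to contain the destinations. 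After fixing one such $\thset$ and its induced instance $(G,\tmset)$, losing a $2^{O(\sqrt{\log n})}$ factor, the remaining task is an $\approxfactor$-approximation for $(G,\tmset)$.

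The technical core is then divided into three steps. Step~1 is the shadow-property lemma: starting from an optimal routing $\pset^*$ of a set $\mset^*\subseteq\tmset$ with $|\mset^*|\ge\opt$, I would extract $\mset^{**}\subseteq\mset^*$ of size at least $|\mset^*|/\approxfactor$ having the shadow property with respect to every square in $\thset$. For a single square $Q$ of side $d_h$, a closed curve $\gamma$ made of $P(s_1,t_1)$, a short path through $Q$, $P(s_2,t_2)$, and the shadow of $Q$ bounds a disc $D$; demand pairs with source in the shadow and destination outside $D$ must cross the short path, bounding them by $O(d_h)$. The main obstacle I expect is handling pairs whose destinations lie inside $D$ (the ``large capture'' case), which I would control via a global charging scheme across the $\rho$ levels, iteratively deleting a $1-1/2^{O(\log\log n)}$ fraction of pairs to break large captures and losing $\approxfactor$ in total. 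Step~2 builds the hierarchical partition $\ifamily$ and the coloring $f$ from $\mset^{**}$ iteratively: at round $h$, split each parent interval into level-$h$ intervals of roughly $d_{h+1}/16$ sources each, mark alternate intervals as uninteresting and discard their pairs (constant factor loss per round), and invoke the shadow property to conclude that the surviving destinations in any $Q\in\qset_h$ have their sources in a single level-$h$ interval, which is then used to define $f(Q)$. Step~3 proves by induction on $h$ that any good ensemble is routable: the global snake-like traversal of the augmented level-$h$ squares, combined with the $d_h$ inter-square spacing and color-consistency, allows only the correctly colored subset of paths to enter each $Q\in\qset_h$ while the others bypass via the margin, and the routing inside $Q$ is completed by recursion on $\qset_{h+1}|_Q$. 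The capacity bound $d_{h+1}/16$ per color at level $h$ is exactly what makes the bypassing feasible.

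For the algorithmic side, since $\pset^*$ is unavailable, I would need to find a good ensemble directly. Guessing the level interval sizes as powers of $\eta$ leaves $(\sqrt{\log n})^{\sqrt{\log n}}=2^{O(\sqrt{\log n}\log\log n)}$ candidate hierarchical partitions $\ifamily$, which can be enumerated. For each $(\thset,\ifamily)$ I would write an LP with variables $x_{(s,t)}$ for demand pairs and $y_{Q,c}$ for color assignments, enforcing single-color-per-square, parent-color consistency between levels, the per-color capacity $d_{h+1}/16$, and the constraint that $(s,t)$ may be selected only if the level-$\rho$ square containing $t$ carries the level-$\rho$ color of $s$; Step~2 applied to $\mset^{**}$ certifies an LP value of $\opt/\approxfactor$. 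I would round the $y$-variables top-down level by level by independent randomized selection conditioned on the parent color, then sample demand pairs and apply filtering to restore capacities; a Chernoff-plus-union-bound analysis yields an $\approxfactor$ rounding loss. Combining Steps~1--3 with the rounding yields a good ensemble $(\ifamily,f,\mset')$ of size $\opt/\approxfactor$, and Step~3 then turns it into a node-disjoint routing, completing the proof.
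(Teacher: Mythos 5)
Your plan mirrors the paper's architecture step for step --- the family of hierarchical systems of squares, the shadow property, the guessed interval lengths and hierarchical partition of $R^*$, the color-consistent snake routing, and an LP-based selection of the demand pairs with level-by-level randomized rounding --- so the route is the same; the issue is that the one place where you explicitly flag an obstacle, Step~1, is exactly where the paper's main technical work lies, and your proposed resolution does not contain an argument. You correctly observe that the closed-curve argument only bounds pairs whose destinations lie \emph{outside} the disc $D$, and that the problem is the pairs captured inside $D$; but ``a global charging scheme across the $\rho$ levels, iteratively deleting a $1-1/2^{O(\log\log n)}$ fraction of pairs to break large captures'' is a restatement of the target loss, not a mechanism. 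A single pair of routed paths can capture almost all surviving destinations, deleting pairs to break one capture can create new violating shadows for other squares, and it is unclear what the deletions are charged to or why the process terminates with only a polylogarithmic per-level loss. Note also that the paper does not charge across levels: Theorem~\ref{thm: shadow property for one set of squares} is proved for a single $d$-canonical family and applied once per level in Corollary~\ref{cor: shadow property of all squares}.

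The missing ideas are those of Section~\ref{sec: shadow property}. The paper first discards \emph{heavy} pairs (those covering many other pairs), then repeatedly cuts the plane along ``candidate pairs'' with destinations in a common square, producing a nested family of canonical discs; each cut discards only $O(d\log n)$ pairs while both resulting parts retain $\Omega(d\log^3 n)$ pairs, and a budget/potential argument (Claim~\ref{claim: charging}) bounds the total discarded by a small fraction of $|\hmset|$. It then removes \emph{boundary} pairs, organizes the discs into a forest, applies Claim~\ref{claim: partition the forest} to split the forest into $O(\log n)$ collections of directed paths, keeps the heaviest collection and, within it, only the ``left'' (or ``right'') pairs; only after all this is every square's shadow confined to a single disc's demand set of size $O(d\log^4 n)$, giving an $O(\log^4 n)$-shadow property that is boosted to the $1$-shadow property by subsampling (Observation~\ref{obs: boosting shadow}). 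Without something playing the role of the nested-disc decomposition, the budget charging, and the forest path decomposition, your claimed per-level loss of $2^{O(\log\log n)}$ is unsubstantiated, and the downstream count $|\tmset^{**}|\geq \opt/\approxfactor$ has no proof. The remaining steps of your plan (existence of the compatible interval partition and perfect coloring, the inductive snake routing with the $d_{h+1}/16$ capacity, and the LP relaxation with top-down conditional rounding, Chernoff bounds and final filtering) track Sections~\ref{sec: existence of coloring}--\ref{sec: coloring the squares} closely and raise no substantive concerns at this level of detail.
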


In Section~\ref{sec:destinations anywhere} we complete the proof of Theorem~\ref{thm: main} using algorithm $\aset$  from Theorem~\ref{thm: main w destinations far from boundary} as a subroutine, by reducing the given instance of \restrictedNDP to a number of instances of the form required by Theorem~\ref{thm: main w destinations far from boundary}, and then applying algorithm $\aset$ to each of them. In this section, we focus on the proof of Theorem~\ref{thm: main w destinations far from boundary}.

Recall that our input grid $G$ has dimensions $(\ell\times \ell)$.  Throughout, for an integer $r>0$, we denote $[r]=\set{1,\ldots,r}$.  Notice that by losing a factor $4$ in the approximation guarantee, we can assume w.l.o.g. that all sources lie on the top row of $G$, and that $\opt\leq \ell$.

\paragraph*{Parameters.}

The following parameters are used throughout the algorithm. Let $c^*\geq 2$ be a large constant, whose value will be set later. 
Recall that $\eta=2^{\ceil{\sqrt{\log n}}}$. Let $\rho$ be the largest integer, for which $\eta^{\rho+2}\leq \opt/2^{c^*\sqrt{\log n}\log\log n}$. 
Intuitively, we will route no more than $\eta^{\rho+2}$ demand pairs, and it is helpful that this number is significantly smaller than $\opt$. The parameter $\rho$ will be used as the number of recursive levels in the hierarchical system of squares, and in the algorithm overall.
Clearly, $\rho\leq \sqrt{\log n}$, and $\opt/(\eta\cdot 2^{c^*\sqrt{\log n}\log\log n})<\eta^{\rho+2}\leq \opt/2^{c^*\sqrt{\log n}\log\log n}$. In particular, $\eta^{\rho+2}<\ell$, where $\ell$ is the size of the side of the grid $G$.

It would be convenient for us if we could assume that $\ell$ is an integral multiple of $\eta^{\rho+2}$. Since this is not true in general, below we define a sub-grid $\tG$ of $G$ of dimensions $(\ell'\times \ell')$, where $\ell'$ is a large enough integral multiple of $\eta^{\rho+2}$. We show that $\tG$ can be defined so that, in a sense, we can restrict our attention to the sub-problem induced by $\tG$, without paying much in the approximation factor.

To this end, we let $\ell'$ be the largest integral multiple of $\eta^{\rho+2}$ with $\ell'< \ell$, so $\ell-\ell'\leq \eta^{\rho+2}\leq\opt/\eta$. Let $G'$ be the sub-grid of $G$ spanned by its first $\ell'$ columns and all its rows, and let $G''$ be the sub-grid of $G$ spanned by its last $\ell'$ columns  and all its rows. Finally, let $\mset'$ and $\mset''$ be the subsets of the demand pairs contained in $G'$ and $G''$, respectively. Notice that, since we have assumed that all destination vertices lie at a distance at least $\opt/\eta$ from $\Gamma(G)$, all vertices of $T(\mset)$ are contained in both $G'$ and $G''$.
Consider two instances of the \NDP problem, both of which are defined on the original graph $G$; the first one uses the set $\mset'$ of the demand pairs, and the second one uses the set $\mset''$ of demand pairs. Clearly, one of these instances has a solution of value at least $\opt/2$, as for each demand pair $(s,t)\in \mset$, both $s$ and $t$ must belong to either $\mset'$ or to $\mset''$ (or both). We assume without loss of generality that the problem induced by the set $\mset'$ of demand pairs has solution of value at least $\opt/2$. In particular, we assume that all source vertices lie on the first row of $G'$, that we denote by $R^*$ from now on. Notice however that we are only guaranteed that a routing of  at least $\opt/2$ demand pairs of $\mset'$ exists in the original graph $G$. For convenience, abusing the notation, from now on we  will denote $\mset'$ by $\mset$ and $\opt/2$ by $\opt$.

For each $1\leq h\leq \rho$, we let $d_h=\eta^{\rho-h+3}$, so that $d_1=\eta^{\rho+2}$ and $\opt/(\eta\cdot 2^{c^*\sqrt{\log n}\log\log n})<d_1\leq  \opt/2^{c^*\sqrt{\log n}\log\log n}$; $d_{\rho}=\eta^3$; and $d_h=d_{h-1}/\eta$ for all $h>1$.
Throughout, we assume that $n$ is large enough, so that, for example, $\log n>2^{27}$.

One of the main concepts that we use is that of a hierarchical decomposition of the grid into squares, that we define next.
Recall that $G'$ is the sub-grid of $G$ spanned by its first $\ell'$ columns and all its rows. We let $\tG\subseteq G'$ be the $(\ell'\times \ell')$-sub-grid of $G'$, that is spanned by all its columns and its $\ell'$ bottommost rows. Notice that since $\ell'<\ell$, the top row $R^*$ of $G'$, where the source vertices reside, is disjoint from $\tG$.

\label{-----------------------------------subsec: hierarchical system of squares-----------------------------}
\subsection{Hierarchical Systems of Squares}\label{subsec: hierarchical system of squares}

A subset $I\subseteq [\ell']$  of consecutive integers is called an \emph{interval}. We say that two intervals $I,I'$ are \emph{disjoint} iff $I\cap I'=\emptyset$, and we say that they are $d$-separated iff for every pair of integers $i\in I, j\in I'$, $|i-j|>d$. A collection $\iset$ of intervals of $[\ell']$ is called $d$-canonical, iff for each interval $I\in \iset$, $|I|= d$, and every pair of intervals in $\iset$ is $d$-separated.


Consider now the $(\ell'\times \ell')$-grid $\tG$.
Given two intervals $I,I'$ of $[\ell']$, we denote by $Q(I,I')$ the sub-graph of $\tG$ induced by all vertices in $\set{v(i,j)\mid i\in I, j\in I'}$. Given two sets $\iset,\iset'$ of intervals of $[\ell']$, we let $\qset(\iset,\iset')$ be the corresponding set of sub-graphs of $G'$, that is, $\qset(\iset,\iset')=\set{Q(I,I')\mid I\in \iset,I'\in \iset'}$.

\begin{definition}
Let $\qset$ be any collection of sub-graphs of $\tG$, and let $d$ be an integer. We say that $\qset$ is a \emph{$d$-canonical family of squares} iff there are two $d$-canonical sets $\iset,\iset'$ of intervals, such that $\qset=\qset(\iset,\iset')$. In particular, each sub-graph in $\qset$ is a square of size $(d\times d)$, and  for every pair $Q,Q'\in \qset$ of distinct squares, for every pair $v\in V(Q),v'\in V(Q')$ of vertices, $d(v,v')\geq d$.
\end{definition}

We are now ready to define a hierarchical system of squares. 
We use the integral parameters $\eta$ and $\rho$ defined above, as well as the parameters $\set{d_h}_{1\leq h\leq \rho}$.

\begin{definition} A hierarchical system of squares is a sequence $\thset=(\qset_1,\qset_2,\ldots,\qset_{\rho})$ of sets of squares, such that:

\begin{itemize}
\item for all $1\leq h\leq \rho$, $\qset_h$ is a $d_h$-canonical family of squares; and

\item for all $1<h\leq \rho$, for every square $Q\in \qset_h$, there is a square $Q'\in \qset_{h-1}$, such that $Q\subseteq Q'$.
\end{itemize}
\end{definition}

Given a hierarchical  system $\thset=(\qset_1,\qset_2,\ldots,\qset_{\rho})$, of squares  we denote by $V(\thset)$ the set of all vertices lying in the squares of $\qset_{\rho}$, that is, $V(\thset)=\bigcup_{Q\in \qset_{\rho}}V(Q)$. We say that the vertices of $V(\thset)$ \emph{belong} to $\thset$. We need the following simple claim, whose proof appears in the Appendix.

\begin{claim}\label{claim: hierarchical system of squares}
There is an efficient algorithm, that  constructs $4^{\rho}$ hierarchical systems $\thset_1,\ldots,\thset_{4^{\rho}}$ of squares of $\tG$, such that every vertex of $\tG$ belongs to exactly one system.
\end{claim}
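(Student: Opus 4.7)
The plan is to construct the $4^{\rho}$ systems by a recursive ``odd-vs-even shift'' scheme: at each of the $\rho$ levels, we make two independent binary choices (one for rows, one for columns), yielding $4$ options per level and $4^{\rho}$ systems in total.

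At level $1$, since $\ell'$ is an integral multiple of $d_1$, I would partition $[\ell']$ into the $\ell'/d_1$ consecutive length-$d_1$ intervals $I_1^{(1)}, I_2^{(1)}, \ldots$ For row-shift $a_1 \in \{0,1\}$, let the chosen family consist of those $I_k^{(1)}$ with $k \equiv a_1 \pmod{2}$; symmetrically for columns with $b_1 \in \{0,1\}$. Between any two consecutive chosen intervals lies a single unchosen length-$d_1$ interval, so each such family is $d_1$-canonical. Taking products yields a $d_1$-canonical collection $\qset_1$ of level-$1$ squares. Moreover, any vertex $v(i,j) \in \tG$ lies in exactly one of the $4$ level-$1$ shifts, determined by the parities of $\lceil i/d_1 \rceil$ and $\lceil j/d_1 \rceil$.

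For $h \geq 2$, I would recursively refine each level-$(h-1)$ row-interval $I$ (of length $d_{h-1} = \eta d_h$) into $\eta$ consecutive length-$d_h$ sub-intervals, and apply the \emph{same} ``odd-vs-even'' shift $a_h \in \{0,1\}$ across all such $I$ to produce a global family of row-intervals; symmetrically for columns. The key point to verify is that this family is \emph{globally} $d_h$-canonical: within a single level-$(h-1)$ row-interval, chosen sub-intervals are separated by an unchosen sub-interval of length $d_h$, while chosen sub-intervals living in distinct level-$(h-1)$ row-intervals are separated by the full $d_{h-1}$-gap between their parents, and $d_{h-1} = \eta d_h > d_h$. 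The resulting $\qset_h$ is thus a $d_h$-canonical family, and by construction each level-$h$ square is contained in a level-$(h-1)$ square.

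A hierarchical system is then fully specified by a sequence of $\rho$ shift-pairs, yielding $4^{\rho}$ systems; the construction is clearly efficient. To verify that each $v \in V(\tG)$ belongs to exactly one system, I would induct on the level: once the shifts for levels $1, \ldots, h-1$ are fixed, $v$ lies in a unique level-$(h-1)$ square, and the parities of its row- and column-sub-interval indices inside that square uniquely determine the level-$h$ shift-pair for which $v$ lies inside a level-$h$ square. Hence $v$ determines a unique shift sequence and lies in exactly one system. The only substantive step is the global $d_h$-canonical verification, which is a minor obstacle resolved by the strict inequality $d_{h-1} > d_h$.
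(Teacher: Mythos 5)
Your construction is correct and is essentially the paper's own proof: the paper builds $2^{\rho}$ one-dimensional hierarchical interval systems by exactly your recursive odd/even subdivision (with the same within-parent and across-parent separation argument, using $d_{h-1}=\eta d_h>d_h$) and then takes products over rows and columns, which is the same $4^{\rho}$-family you obtain by choosing a shift-pair per level. Your exactly-one-system argument by induction on levels matches the paper's as well.
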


Given a hierarchical system $\thset=(\qset_1,\ldots,\qset_{\rho})$  of squares and an integer $1\leq h\leq \rho$, we call the squares in $\qset_h$ \emph{level-$h$ squares}. If $h>1$, then for each level-$h$ square $Q$, we call the unique level-$(h-1)$ square $Q'$ with $Q\subseteq Q'$ the \emph{parent-square of $Q$}, and we say that $Q$ is a \emph{child-square} of $Q'$.

From now on, we fix the collection $\fset=\set{\thset_1,\ldots,\thset_{4^{\rho}}}$ of the hierarchical systems of squares of $\tG$ that is given by Claim~\ref{claim: hierarchical system of squares}. 
Consider an optimal solution $\pset^*$ to the \NDP instance $(G,\mset)$, and let $\mset^*\subseteq \mset$ be the set of the demand pairs routed by this solution. For each $1\leq i\leq 4^{\rho}$, we let $\mset_i^*\subseteq\mset^*$ be the set of all demand pairs whose destinations belong to the system $\thset_i$. Clearly, there is an index $i^*$, such that $|\mset^*_{i^*}|\geq |\mset^*|/4^{\rho}$. Even though we do not know the index $i^*$, we can run our approximation algorithm for each possible value of $i^*$. It is enough to show that our algorithm finds a $\approxfactor$-approximate solution to instance $(G,\mset)$ of \NDP if the index $i^*$ is guessed correctly. Therefore, we assume from now on that the index $i^*$ is known to us. For convenience, we denote $\thset_{i^*}$ by $\thset$ from now on. We also denote $\mset^*_{i^*}$ by $\mset^{**}$. Let $\tmset\subseteq \mset$ denote the set  of all demand pairs whose destination vertices lie in the system $\thset$. Then we are guaranteed that there is a solution to \NDP instance $(G,\tmset)$ of value at least $\opt/4^{\rho}\geq \opt/2^{2\sqrt{\log n}}$.

\label{-----------------------------------subsec: shadow property-----------------------------}
\subsection{The Shadow Property} \label{subsec: shadow property}

\begin{definition}
Let $Q\subseteq \tG$ be a square sub-grid of $\tG$ of size $(d\times d)$, and let $\hmset\subseteq \mset$ be any subset of the demand pairs, such that all vertices in $S(\hmset)$ are distinct. The \emph{shadow of $Q$ with respect to $\hmset$}, $J_{\hmset}(Q)$, is the shortest sub-path of row $R^*$ of $G'$ that contains the source vertices of all demand pairs $(s,t)\in \hmset$ with $t\in Q$. If no demand pairs in $\hmset$ have destinations in $Q$, then $J_{\hmset}(Q)=\emptyset$. The length of the shadow, $L_{\hmset}(Q)$, is the total number of vertices $s\in S(\hmset)$, that belong to $J_{\hmset}(Q)$. Given a parameter $\beta>0$, we say that $Q$ has the \emph{$\beta$-shadow property with respect to $\hmset$}, iff $L_{\hmset}(Q)\leq \beta \cdot d$. 
\end{definition}

We use the above definition in both the regimes where $\beta\leq 1$ and $\beta>1$.
We need the following simple observation, whose proof is deferred to the Appendix.

\begin{observation}\label{obs: boosting shadow}
Let $\qset$ be any collection of disjoint squares in $\tG$, let $0<\beta_1<\beta_2$ be parameters, and let $\hmset$ be any set of demand pairs, such that all vertices in $S(\hmset)$ are distinct, and every square in $\qset$ has the $\beta_2$-shadow property with respect to $\hmset$. Then there is an efficient algorithm to find a subset $\hmset'\subseteq \hmset$ of at least $\floor{\frac{\beta_1|\hmset|}{4\beta_2}}$ demand pairs, such that every square in $\qset$ has the $\beta_1$-shadow property with respect to $\hmset'$.
\end{observation}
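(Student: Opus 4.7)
I would prove this by a deterministic stride-sampling argument based on a monotonicity property of shadows. The key observation is: if $\hmset'\sse\hmset$, then for every square $Q$ we have $J_{\hmset'}(Q)\sse J_{\hmset}(Q)$, because the leftmost (resp.\ rightmost) source of a pair in $\hmset'$ with destination in $Q$ lies weakly to the right (resp.\ left) of the corresponding source in $\hmset$. Consequently, $L_{\hmset'}(Q)$ is bounded above by the number of sources of $\hmset'$ that lie on $J_{\hmset}(Q)$, and it suffices to subsample $\hmset$ so that every interval of $R^*$ which previously contained at most $\beta_2 d_Q$ sources of $\hmset$ now contains at most $\beta_1 d_Q$ sources of $\hmset'$.

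\textbf{Construction.} Using the hypothesis that the sources in $S(\hmset)$ are distinct, I would order them as $s_1,\ldots,s_m$ from left to right along $R^*$ and set $k=\lfloor 4\beta_2/\beta_1\rfloor$, which is at least $4$ since $\beta_1<\beta_2$. For each offset $r\in\{0,\ldots,k-1\}$, form the arithmetic progression $S_r=\{s_j:\ j\equiv r+1\pmod k\}$. These sets partition $S(\hmset)$, so some $S_{r^\ast}$ has size at least $m/k\geq \beta_1 m/(4\beta_2)$, and I take $\hmset'$ to be the subset of demand pairs whose sources lie in $S_{r^\ast}$. The algorithm simply enumerates all $k$ shifts and outputs the first $\hmset'$ meeting both the size and the shadow requirements; by integrality of $|\hmset'|$, the size bound yields $|\hmset'|\geq\lfloor\beta_1|\hmset|/(4\beta_2)\rfloor$.

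\textbf{Shadow bound.} Fix any $Q\in\qset$ of side $d_Q$. By the $\beta_2$-shadow property, $J_{\hmset}(Q)$ contains at most $\beta_2 d_Q$ consecutive indices of $s_1,\ldots,s_m$; any window of $L$ consecutive indices intersects the stride-$k$ progression $S_{r^\ast}$ in at most $\lceil L/k\rceil$ indices, so the number of sources of $\hmset'$ on $J_{\hmset}(Q)$ is at most $\lceil\beta_2 d_Q/k\rceil$. From $\beta_1<\beta_2$ we have $k\geq 4\beta_2/\beta_1-1\geq 3\beta_2/\beta_1$, hence this quantity is at most $\lceil\beta_1 d_Q/3\rceil\leq\lfloor\beta_1 d_Q\rfloor$ whenever $\beta_1 d_Q\geq 1$, giving $L_{\hmset'}(Q)\leq\beta_1 d_Q$ as required.

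\textbf{Main obstacle.} The only technicality is the degenerate case $\beta_1 d_Q<1$, where the $\beta_1$-shadow property forces $L_{\hmset'}(Q)=0$ even though $J_{\hmset}(Q)$ may contain one or two selected sources. This is easily handled by discarding from $\hmset'$ any pair whose destination lies in such a small $Q$; in the invocations of this observation inside the paper, squares have side $d_h\geq d_\rho=\eta^3$ and the relevant $\beta_1$ is at least of order $1/\eta^2$, so $\beta_1 d_Q$ is comfortably large and this corner case does not arise. The conceptually key step is the monotonicity of shadows under restriction to sub-families, which reduces the two-dimensional problem of simultaneously thinning shadows of many potentially overlapping squares to a one-dimensional uniform-sampling problem on $R^*$.
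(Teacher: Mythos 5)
Your proof is correct and takes essentially the same route as the paper's: the paper likewise subsamples the left-to-right ordering of the sources by a fixed arithmetic progression (stride $2\ceil{\beta_2/\beta_1}$, offset $1$) and uses that $J_{\hmset'}(Q)\subseteq J_{\hmset}(Q)$ together with the fact that the sources lying in a shadow occupy consecutive indices, so your stride/best-offset variant differs only in constants. The degenerate regime $\beta_1 d<1$ that you flag is also silently assumed away in the paper's final inequality, so your explicit remark about it (and that it never arises in the paper's applications) makes your write-up, if anything, slightly more careful.
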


The proof of the following theorem appears in Section~\ref{sec: shadow property}.

\begin{theorem}\label{thm: shadow property for one set of squares}
Let $\qset$ be a $d$-canonical set of squares, and let $\hmset\subseteq \mset$ be a subset of demand pairs, such that:
\begin{itemize}
\item there is a set $\pset$ of node-disjoint paths in $G$ routing all demand pairs in $\hmset$; and
\item for each demand pair $(s,t)\in \hmset$, there is a square $Q\in \qset$ with $t\in Q$.
\end{itemize}

Then there is a subset $\hmset'\subseteq \hmset$ of at least $|\hmset|/\log^6 n$ demand pairs, such that every square $Q\in \qset$ has the $1$-shadow property with respect to $\hmset'$.
\end{theorem}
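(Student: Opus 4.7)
The plan is to prove this by combining a planar topological analysis of the routing $\pset$ with a randomized selection argument. First I would fix a planar embedding of $G$ with $R^*$ on the top boundary. For every square $Q \in \qset$ containing at least one destination of a pair in $\hmset$, identify the two pairs $(s_1^Q, t_1^Q), (s_2^Q, t_2^Q) \in \hmset$ whose destinations lie in $Q$ and whose sources are, respectively, the leftmost and rightmost on $R^*$, and choose a simple path $P_Q \subseteq Q$ of length at most $2d$ from $t_1^Q$ to $t_2^Q$. Then the union of $P_Q$, the routing paths $P(s_1^Q, t_1^Q), P(s_2^Q, t_2^Q) \in \pset$, and the shadow $J_{\hmset}(Q) \subseteq R^*$ forms a closed Jordan curve $\gamma_Q$ bounding a topological disc $D_Q$.

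Next, I would classify every pair $(s,t) \in \hmset$ with $s \in J_{\hmset}(Q)$ by the location of $t$: (a) $t \in Q$; (b) $t \in D_Q \setminus Q$, i.e., ``captured'' by $Q$; or (c) $t \notin D_Q$. Category (c) is easy to bound: the path $P(s,t) \in \pset$ must escape $D_Q$, and by node-disjointness of $\pset$ it cannot cross either of $P(s_1^Q, t_1^Q), P(s_2^Q, t_2^Q)$, so it must cross the short path $P_Q$; since $|V(P_Q)| \le 2d$, at most $2d$ pairs fall into category (c) for any given $Q$. The main difficulty lies in controlling the total contributions of categories (a) and (b) across all squares, since a single disc $D_Q$ can in principle contain many destinations, and each such destination's source lying in $J_{\hmset}(Q)$ inflates $L_{\hmset}(Q)$.

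To handle (a) and (b), I plan to exploit planarity and node-disjointness to establish a non-crossing (laminar-like) structure on the family $\{D_Q\}_{Q \in \qset}$: the path portions of any two boundary curves $\gamma_Q, \gamma_{Q'}$ are sub-paths of node-disjoint paths in $\pset$ and hence cannot properly cross, while the shadow portions on $R^*$ are intervals that either nest or are disjoint. Via an iterative peeling argument that processes innermost discs first and charges each captured destination to the small set of ``witness'' pairs (namely the boundary pairs of the enclosing discs in the nesting chain), I would bound the number of squares that capture any single pair by $\polylog{n}$, yielding an aggregate bound of the form $\sum_{Q \in \qset} L_{\hmset}(Q) \le O(|\hmset| \cdot \polylog{n} + d \cdot |\qset|)$.

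Finally, I would retain each pair in $\hmset$ independently with probability $p \approx 1/\log^{c} n$ for a suitable constant $c$, forming $\hmset'$. Since $J_{\hmset'}(Q) \subseteq J_{\hmset}(Q)$, we have $\mathbb{E}[L_{\hmset'}(Q)] \le p \cdot L_{\hmset}(Q)$; after a preprocessing step that caps the maximum shadow length by removing pairs whose sources accumulate in overloaded intervals, a Chernoff-type concentration bound combined with a union bound over $Q \in \qset$ yields that with positive probability every square satisfies $L_{\hmset'}(Q) \le d$, while $|\hmset'| \ge |\hmset|/\log^6 n$. The main obstacle is the laminarity/depth bound for the capturing discs: although node-disjointness of $\pset$ intuitively precludes excessive overlaps between the $D_Q$, turning this intuition into a rigorous polylogarithmic bound on the capture multiplicity of any single pair requires careful planar-topological bookkeeping and appears to be the technical heart of the argument.
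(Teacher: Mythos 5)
Your setup (the Jordan curve $\gamma_Q$ built from the two extreme routing paths, a short path of length at most $2d$ inside $Q$, and the shadow interval on $R^*$, and the bound of roughly $2d$ on the pairs whose destinations lie outside the disc) coincides with the paper's starting point and is fine. The gap is in your treatment of the ``captured'' pairs, which you yourself identify as the technical heart. Your key claim --- that each demand pair can be captured by at most $\mathrm{polylog}(n)$ squares, via a laminar structure on $\set{D_Q}$ and a charging to the witness pairs of enclosing discs --- is false. Consider squares $Q_1,\ldots,Q_k\in\qset$ placed along a descending diagonal, where for each $i$ the two extreme pairs of $Q_i$ are routed as a ``U'' nested strictly inside the disc of $Q_{i-1}$, with their sources nested accordingly on $R^*$, and add up to $\Theta(d)$ further pairs routed entirely inside the innermost disc with sources in the innermost shadow interval. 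Every inner pair, and every boundary pair of $Q_{i'}$ with $i'>i$, is captured by $Q_i$; the capture multiplicity is $\Theta(k)$, and $k$ can be polynomial in $n$. Since each square contributes only two witness pairs, no charging can give a polylogarithmic multiplicity bound; moreover $\sum_{Q}L_{\hmset}(Q)\geq \Omega(k^2)$ in this instance while $|\hmset|=\Theta(k+d)$ and the number of non-empty squares is $\Theta(k)$, so your claimed aggregate bound $O(|\hmset|\cdot\mathrm{polylog}(n)+d|\qset|)$ is itself violated once $k\gg d\cdot\mathrm{polylog}(n)$. (Also, the family $\set{D_Q}$ taken over all squares simultaneously need not be laminar: the shadow intervals of two squares can partially overlap, so laminarity has to be built adaptively, as the paper does, rather than assumed.)

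Independent of the aggregate bound, the final sampling step does not deliver the theorem. For $\expect{L_{\hmset'}(Q)}\leq p\cdot L_{\hmset}(Q)$ to be useful with $p\approx 1/\log^{c}n$ you need every square to already satisfy an $O(\log^{c}n)$-shadow property before sampling; establishing exactly that (after discarding only a $1-1/\mathrm{polylog}(n)$ fraction of pairs) is the entire content of the theorem, and it is what your unspecified ``preprocessing step that caps the maximum shadow length'' would have to do. Independent sampling cannot substitute for it: in the nested example, a square $Q_i$ keeps both of its defining pairs with probability $p^{2}$, and conditioned on that its shadow with respect to the sample still spans the whole nested interval and contains about $p\cdot\Theta(k)\gg d$ surviving sources, so a $p^{2}$ fraction of the $k$ squares is violated and no union bound helps unless $p$ is polynomially small, which destroys the size guarantee. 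Note also that once an $r$-shadow property with $r=\log^{O(1)}n$ is in hand, randomness is unnecessary: Observation~\ref{obs: boosting shadow} thins deterministically by taking every $\Theta(r)$-th source. The paper's actual route is quite different from yours at this point: it removes ``heavy'' pairs, repeatedly cuts the plane along canonical discs at ``candidate pairs'' (paying for the $O(d\log n)$ pairs discarded per cut with a budget argument), removes boundary pairs, decomposes the resulting forest of nested discs into $O(\log n)$ classes of directed paths, keeps the majority side among left/right pairs within one class, proves the $2^{23}\log^{4}n$-shadow property for the survivors, and only then thins. Your proposal as written does not contain a workable replacement for these steps.
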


Recall that $\fset$ is the family of hierarchical systems of squares of $\tG$ given by Claim~\ref{claim: hierarchical system of squares}, $\thset\in \fset$ is the hierarchical system that we have guessed, and $\tmset\subseteq \mset$ is the set of all demand pairs whose destinations belong to $\thset$. Recall also that we have assumed that the value of the optimal solution for instance $(G,\tmset)$ of \NDPgrid is at least $\opt/2^{2\sqrt{\log n}}$, and we have denoted by $\mset^{**}$ the set of the demand pairs routed by such a solution.

\begin{corollary}\label{cor: shadow property of all squares}
Denote $\thset=(\qset_1,\ldots,\qset_{\rho})$.
Then there is a set $\tmset^*\subseteq \tmset$ of at least $\frac{\opt}{2^{8\sqrt{\log n}\cdot\log\log n}}$ demand pairs, such that: (i)
 all demand pairs in $\tmset^*$ can be simultaneously routed via node-disjoint paths in $G$; and
(ii) for each $1\leq h\leq \rho$, all squares in set $\qset_h$ have the $1/\eta^2$-shadow property with respect to $\tmset^*$.
\end{corollary}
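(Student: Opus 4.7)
The plan is to construct $\tmset^*$ in two phases: first iterate Theorem~\ref{thm: shadow property for one set of squares} down the $\rho$ levels of $\thset$ to obtain the $1$-shadow property at every level, and then boost all shadows simultaneously from $1$ to $1/\eta^2$ via a single probabilistic sampling step.

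For Phase~1 I would set $\hmset_0=\mset^{**}$ and, for $h=1,\ldots,\rho$ in turn, apply Theorem~\ref{thm: shadow property for one set of squares} to the $d_h$-canonical family $\qset_h$ together with the current set $\hmset_{h-1}$. The preconditions hold because $\hmset_{h-1}\subseteq\mset^{**}$ inherits routability from the optimal paths, and because every destination of $\hmset_{h-1}\subseteq\tmset$ lies in $V(\thset)=\bigcup_{Q\in\qset_\rho}V(Q)$, which by the hierarchical nesting of $\thset$ is contained in $\bigcup_{Q\in\qset_h}V(Q)$. The theorem returns $\hmset_h\subseteq\hmset_{h-1}$ with $|\hmset_h|\geq |\hmset_{h-1}|/\log^6 n$ having the $1$-shadow property for $\qset_h$. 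Since shadow lengths can only shrink under taking subsets, $\hmset_\rho$ simultaneously enjoys the $1$-shadow property with respect to every $\qset_h$, while $|\hmset_\rho|\geq |\mset^{**}|/(\log^6 n)^\rho$.

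For Phase~2, naively iterating Observation~\ref{obs: boosting shadow} once per level would cost $(4\eta^2)^\rho=2^{\Omega(\log n)}$, which is unaffordable. Instead I would form $\tmset^*$ by keeping each pair of $\hmset_\rho$ independently with probability $p=1/(8\eta^2)$. For any $Q\in\qset_h$ the length $L_{\tmset^*}(Q)$ is stochastically dominated by a binomial random variable with $L_{\hmset_\rho}(Q)\leq d_h$ trials and success probability $p$, so its mean is at most $d_h/(8\eta^2)$. A standard Chernoff bound then makes the event $L_{\tmset^*}(Q)>d_h/\eta^2$ have probability at most $\exp(-\Omega(d_h/\eta^2))$; since $d_h/\eta^2\geq d_\rho/\eta^2=\eta=2^{\lceil\sqrt{\log n}\rceil}$ at every level, this probability is doubly exponentially small in $\sqrt{\log n}$, which easily absorbs a union bound over the $\poly(n)$ squares in $\bigcup_h\qset_h$. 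A second Chernoff bound applied to $|\tmset^*|$ gives $|\tmset^*|\geq |\hmset_\rho|/(16\eta^2)$ with high probability (when $|\hmset_\rho|/\eta^2=O(1)$ the statement of the corollary is trivial, since the target $\opt/2^{8\sqrt{\log n}\log\log n}$ is then also $O(1)$ and a single routed pair suffices).

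Combining the two phases and using $\rho\leq\sqrt{\log n}$ together with the standing assumption $\log n>2^{27}$ to absorb lower-order terms,
\[
|\tmset^*|\;\geq\; \frac{|\mset^{**}|}{(\log^6 n)^\rho\cdot 16\eta^2}\;\geq\; \frac{\opt}{2^{6\sqrt{\log n}\log\log n+4\sqrt{\log n}+O(1)}}\;\geq\; \frac{\opt}{2^{8\sqrt{\log n}\log\log n}}.
\]
Routability of $\tmset^*\subseteq\mset^{**}$ is immediate by restricting the paths $\pset^*$. The main obstacle is Phase~2: avoiding the $(\eta^2)^\rho$ blow-up that iterating Observation~\ref{obs: boosting shadow} would incur. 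The key insight is that once every level carries the $1$-shadow bound, the target length $d_h/\eta^2$ is still as large as $\eta=2^{\sqrt{\log n}}$, so Chernoff concentration is strong enough for a single global sampling step to deliver the $1/\eta^2$-shadow property at every level simultaneously.
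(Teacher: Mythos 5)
Your proof is correct, and Phase~1 is exactly the paper's argument: iterate Theorem~\ref{thm: shadow property for one set of squares} level by level, losing a $\log^6 n$ factor per level, to get the $1$-shadow property for every $\qset_h$ simultaneously. Where you diverge is the boosting step, and the divergence stems from a misreading of the scope of Observation~\ref{obs: boosting shadow}: you assume it must be invoked once per level (hence the feared $(4\eta^2)^{\rho}$ blow-up), but its proof is a single deterministic subsampling of the sources in their left-to-right order on $R^*$ (keep every $2\ceil{\beta_2/\beta_1}$-th pair), and this argument never uses disjointness of the squares -- each square's shadow count is reduced individually. The paper therefore applies the observation \emph{once} to $\qset=\bigcup_{h=1}^{\rho}\qset_h$ with $\beta_2=1$, $\beta_1=1/\eta^2$, paying only a single $4\eta^2=2^{O(\sqrt{\log n})}$ factor, which is exactly the budget you also pay. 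Your substitute -- independent sampling with probability $1/(8\eta^2)$, stochastic domination of $L_{\tmset^*}(Q)$ by a binomial with at most $d_h$ trials, and a Chernoff bound at threshold $d_h/\eta^2\geq\eta=2^{\ceil{\sqrt{\log n}}}$ union-bounded over the polynomially many squares -- is sound, and your handling of the degenerate case where $|\hmset_\rho|$ is too small for concentration is acceptable since the corollary's target is then below $1$. What the paper's route buys is simplicity and determinism (an explicit subset with no failure probability); what yours buys is nothing extra here, though it is a reasonable fallback and still suffices because the corollary is purely an existence statement. The arithmetic in your final accounting (including the $4^{\rho}$ loss from restricting to $\tmset$ and the $O(\eta^2)$ sampling loss, absorbed into $2^{8\sqrt{\log n}\log\log n}$ using $\log\log n>27$) checks out.
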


\begin{proof}
Notice that, since all demand pairs in $\mset^{**}$ are routable by node-disjoint paths, all vertices of $S(\mset^{**})$ are distinct.  

We perform $\rho$ iterations. The input to the $j$th iteration is a subset $\mset_{j-1}\subseteq \mset^{**}$ of demand pairs, where the input to the first iteration is $\mset_0= \mset^{**}$. In order to execute the $j$th iteration, we apply Theorem~\ref{thm: shadow property for one set of squares} to the set $\qset_j$ of squares and the set $\mset_{j-1}$ of demand pairs, to obtain a subset $\mset_j\subseteq \mset_{j-1}$ of demand pairs of cardinality at least $|\mset_{j-1}|/\log^6 n$, such that all squares in $\qset_j$ have the $1$-shadow property with respect to $\mset_j$. Consider the final set $\mset_{\rho}$ of demand pairs. Clearly, 

\[|\mset_{\rho}|\geq \frac{|\mset_0|}{\log^{6\rho}n}\geq\frac{\opt}{2^{2\sqrt{\log n}}\cdot 2^{6\sqrt{\log n}\cdot\log\log n}}\geq \frac{\opt}{2^{7\sqrt{\log n}\cdot\log\log n}},\]

 and for every $1\leq h\leq \rho$, every square in $\qset_h$ has the $1$-shadow property with respect to $\mset_{\rho}$. Applying Observation~\ref{obs: boosting shadow} to $\mset_{\rho}$ and the set $\qset=\bigcup_{h=1}^\rho\qset_h$, we obtain a subset $\tmset^*\subseteq \mset_{\rho}$ of at least $\floor{\frac{|\mset_{\rho}|}{4\eta^2}}\geq \frac{|\opt|}{8\eta^2\cdot 2^{7\sqrt{\log n}\cdot\log\log n}}\geq \frac{|\opt|}{2^{8\sqrt{\log n}\cdot\log\log n}}$ demand pairs, that has the required properties. 
%
\end{proof}

\subsection{Hierarchical Decomposition of Row $R^*$}\label{subsec: hierarchical partition of R1}
Recall that $R^*$ is the first row of the grid $G'$.
We start with intuition. Recall that $\tmset^*$ is the set of the demand pairs from Corollary~\ref{cor: shadow property of all squares}. In general, we would like to define a hierarchical partition $(\jset_1,\jset_2,\ldots,\jset_{\rho})$ of $R^*$, that has the following property: for each level $1\leq h\leq \rho$, every interval $I\in \jset_h$ contains either $0$, or roughly $d_h$ source vertices of the demand pairs in $\tmset^*$. Being able to find such a partition is crucial to our algorithm. If we knew the set $\tmset^*$ of demand pairs, finding such a partition would have been trivial. But unfortunately, set $\tmset^*$ depends on the optimal solution and is not known to us. We show instead that, at the cost of losing an additional $2^{O(\sqrt{\log n}\log\log n)}$-factor in the approximation ratio, we can define a small collection of such hierarchical partitions of $R^*$, one of which is guaranteed to have the above property. In each such hierarchical partition that we construct, for each level $1\leq h\leq \rho$, all intervals in $\jset_h$ will have the same length, that we denote by $\ell_h$. The lengths $\ell_h$ are not known to us a-priori, and differ from partition to partition, but we will be able to guess them from a small set of possibilities.

Suppose we are given a sequence $L=(\ell_1,\ell_2,\ldots,\ell_r)$ of integers, with $\ell_1>\ell_2>\ldots>\ell_r$, where $r>0$ is some integer, and for each $1\leq h\leq r$, $\ell_h$ is an integral power of $\eta$. A hierarchical $L$-decomposition of the first  row $R^*$ of the grid $G'$ is defined as follows. First, we partition $R^*$ into a collection $\jset_1$ of intervals, each of which contains  exactly $\ell_1$ consecutive vertices (recall that $R^*$ contains $\ell'$ vertices, and $\ell'$ is an integral multiple of $d_1=\eta^{\rho+2}$, which in turn is an integral power of $\eta$). We call the intervals in $\jset_1$ \emph{level-$1$ intervals}. Assume now that we are given, for some $1\leq h<r$, a partition $\jset_h$ of $R^*$ into level-$h$ intervals. We now define a partition $\jset_{h+1}$ of $R^*$ into level-$(h+1)$ intervals as follows. Start with $\jset_{h+1}=\emptyset$. For each level-$h$ interval $I\in \jset_h$, partition $I$ into consecutive intervals containing exactly $\ell_{h+1}$ vertices each. Add all resulting intervals to $\jset_{h+1}$. Note that for a fixed sequence $L=(\ell_1,\ell_2,\ldots,\ell_r)$ of integers, the corresponding hierarchical $L$-decomposition $(\jset_1,\ldots,\jset_r)$ of $R^*$ is unique.

Assume now that we are also given a collection $\mset'\subseteq \mset$ of demand pairs. We say that $\mset'$ is \emph{compatible} with the sequence $L=(\ell_1,\ldots,\ell_{\rho})$ of integers, iff for each $1\leq h\leq \rho$, for every interval $I\in \jset_h$, either $I\cap S(\mset')=\emptyset$, or:

\[\frac{d_h}{16\eta}\leq |I\cap S(\mset')|\leq \frac{d_h}{4}.\]

The following theorem uses the hierarchical system $\thset\in \fset$ of squares we have defined above and the set $\tmset^*\subseteq \tmset$ of demand pairs from Corollary~\ref{cor: shadow property of all squares}. Recall that all demand pairs in $\tmset^*$ have their destinations in $\thset$, and they can all be routed via node-disjoint pairs in $G$. Moreover, all squares that lie in the system $\thset$ have the $1/\eta^2$-shadow property with respect to $\tmset^*$.

\begin{theorem}\label{thm: partition of R1}
There is a sequence $L=(\ell_1,\ell_2,\ldots,\ell_{\rho})$ of integers, with $\ell_1>\ell_2>\ldots>\ell_{\rho}$, where each $\ell_i$ is an integral power of $\eta$, and a subset $\tmset^{**}\subseteq \tmset^*$ of demand pairs, such that:

\begin{itemize}
\item $|\tmset^{**}|\geq \opt/2^{O(\sqrt{\log n}\log \log n)}$;
\item the set $\tmset^{**}$ of demand pairs is compatible with the sequence $L$.
\end{itemize}
\end{theorem}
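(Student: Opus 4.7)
The plan is to construct $L$ and $\tmset^{**}$ by iteratively processing levels $h = 1, 2, \ldots, \rho$. I maintain a nested chain $\mset^{(0)} = \tmset^* \supseteq \mset^{(1)} \supseteq \cdots \supseteq \mset^{(\rho)} = \tmset^{**}$ of demand-pair sets. At level $h$, I commit to a scale $\ell_h$---an integral power of $\eta$ strictly smaller than $\ell_{h-1}$---which induces the unique partition $\jset_h$ of $R^*$ into intervals of length $\ell_h$; I then obtain $\mset^{(h)}$ from $\mset^{(h-1)}$ by removing every source lying in an unbalanced level-$h$ interval (one with count outside $[d_h/(16\eta), d_h/4]$). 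The main existential claim is that at each level there is a choice of $\ell_h$ for which this update retains at least a $1/\polylog(n)$ fraction of the sources. Iterating over $\rho = O(\sqrt{\log n})$ levels compounds this shrinkage to $(\polylog n)^{O(\sqrt{\log n})} = 2^{O(\sqrt{\log n}\log\log n)}$; combining with the bound $|\tmset^*|\geq \opt/2^{O(\sqrt{\log n}\log\log n)}$ from Corollary~\ref{cor: shadow property of all squares} delivers the desired lower bound on $|\tmset^{**}|$.

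For the per-level argument, I exploit that there are only $O(\log_\eta \ell') = O(\sqrt{\log n})$ candidate scales for $\ell_h$. At the finest scale $\ell_h = 1$, every nonempty subinterval is small (a single source, below the threshold $d_h/(16\eta)$ whenever that threshold exceeds $1$); at the coarsest admissible scale $\ell_h = \ell_{h-1}/\eta$, each level-$(h-1)$ interval---whose source count lies in $[d_h/16,\eta d_h/4]$ by inductive compatibility---is cut into exactly $\eta$ subintervals whose \emph{average} count falls squarely inside the balanced range $[d_h/(16\eta), d_h/4]$. The plan is to argue that, as $\ell_h$ multiplicatively sweeps between these two extremes, per-subinterval counts grow roughly by a factor of $\eta$ per step, so every surviving source's containing subinterval passes through the balanced range at some candidate scale. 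A double-counting of sources against scales then yields at least one scale for which $\Omega(|\mset^{(h-1)}|/\sqrt{\log n})$ of the sources lie in balanced subintervals, which is the per-level claim.

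The main obstacle I anticipate is the feedback between levels: discarding sources at level $h$ lowers counts inside already-committed level-$(h' < h)$ intervals, potentially pushing some of them below $d_{h'}/(16\eta)$ and breaking compatibility at a previously processed level. I plan to handle this either by building in a constant-factor slack between consecutive-level thresholds (exploiting the wide spread between the two endpoints of the balanced range, together with the factor-$\eta$ gap between $d_{h'}$ and $d_{h'+1}$), so that a bounded loss at finer levels cannot break the coarser levels, or by performing a cleanup pass at the end that prunes any coarse interval which has dropped below threshold (charging that loss against the $\polylog(n)$ per-level budget). The shadow property guaranteed by Corollary~\ref{cor: shadow property of all squares}---which bounds how tightly sources of $\tmset^*$ may cluster on $R^*$ relative to their destinations in the hierarchical system $\thset$---is the extra structural ingredient I expect to make both the per-level double-counting and the cross-level bookkeeping succeed.
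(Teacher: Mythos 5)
There is a genuine gap, and it sits exactly at the step your per-level argument leans on: the claim that ``per-subinterval counts grow roughly by a factor of $\eta$ per step, so every surviving source's containing subinterval passes through the balanced range at some candidate scale.'' Counts along the nested chain of intervals containing a fixed source are monotone in the scale, but they are not multiplicatively smooth: when an interval merges with its siblings, the count can jump from $1$ to something far above $d_h/4$ in a single step (an isolated source sitting next to a dense block of sources skips the window $[d_h/(16\eta),d_h/4]$ entirely, even though the window has width $4\eta$). Worse, even the weaker statement you actually need --- that \emph{some} candidate scale places a $1/\log^{O(1)}n$ fraction of the sources in balanced intervals --- is false. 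Write $A=d_h/(16\eta)$ and let a level-$(h-1)$ interval carry exactly its minimum admissible mass $C=d_{h-1}/(16\eta)=\eta A$. Give $\eta-1$ of its children at the coarsest candidate scale $\ell_{h-1}/\eta$ exactly $A-1$ sources each, and put the remaining $A+\eta-1$ sources in the last child, spread so that each of its grandchildren holds fewer than $A$ sources. Then at the coarsest candidate scale exactly one interval is balanced (holding $\approx A$ sources), and at every finer candidate scale every nonempty interval is light; so the best scale retains only an $O(1/\eta)=2^{-\Omega(\sqrt{\log n})}$ fraction of this interval's sources. Compounded over $\rho=\Theta(\sqrt{\log n})$ levels this gives $2^{-\Theta(\log n)}$, destroying the bound. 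Nothing in Corollary~\ref{cor: shadow property of all squares} rules such configurations out (the shadow property constrains sources relative to squares containing destinations, not the marginal distribution of $S(\tmset^*)$ on $R^*$), and indeed the paper's proof of this theorem never uses it. In addition, your coarse-to-fine order creates the cross-level feedback you yourself flag (removals at level $h$ can drive already-committed level-$h'$ intervals below $d_{h'}/(16\eta)$), and neither proposed fix (threshold slack, cleanup pass) is developed; a cleanup can cascade and there is no budget argument controlling it.

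The paper avoids both problems by decoupling the pruning from the target thresholds $d_h$. It takes the full ladder of power-of-$\eta$ scales at once (the $L'$-decomposition with $\ell'_h=\eta^{\rho'-h}$) and processes the scales from \emph{finest to coarsest}; at each scale it buckets the intervals by their source count into $\ceil{\log n}$ dyadic classes and keeps only the pairs in the most popular class, losing a factor $\ceil{\log n}$ per scale and recording the class value $\gamma_h$. Because later pruning happens at coarser scales, it either keeps or completely empties each finer interval, so the uniformity ``every nonempty level-$h$ interval has count in $(\gamma_h,2\gamma_h]$'' survives with no further loss --- this is what eliminates the feedback issue. Only afterwards is the sequence $L$ chosen: since $\gamma_1\geq\cdots\geq\gamma_{\rho'}$ and $\gamma_h\geq\gamma_{h-1}/(2\eta)$, for each $j$ one can pick the scale whose $\gamma$ lands in $[d_j/(16\eta),d_j/8]$, so the windows are matched to the empirical counts rather than forcing the counts into fixed windows, which is precisely where your construction gets stuck. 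If you want to salvage your outline, you would need to replace the fixed-window balancing by some such data-dependent choice of thresholds or scales; as written, the per-level claim and its justification do not hold.
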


\begin{proof}
Let $\rho'$ be the largest integer such that $\ell'\geq \eta^{\rho'}$, and note that $\rho'\leq \sqrt{\log n}$.
We next gradually modify the set $\tmset^*$ of demand pairs, by starting with $\mset_{\rho'}=\tmset^*$, and then gradually obtaining smaller and smaller subsets $\mset_{\rho'-1},\mset_{\rho'-2},\ldots,\mset_0$ of $\mset_{\rho'}$. Along the way, we will also define a sequence $\gamma_{\rho'},\ldots,\gamma_1$ of integers that will be helpful for us later.

Consider first the sequence $L'=(\ell_1',\ell_2',\ldots,\ell_{\rho'})$ of integers, where for $1\leq h\leq \rho'$, $\ell_h'=\eta^{\rho'-h}$. Let $(\jset_1',\jset_2',\ldots,\jset'_{\rho'})$ be the corresponding hierarchical $L'$-decomposition of $R^*$. We start with $\mset_{\rho'}=\tmset^*$, and then perform $\rho'$ iterations. An input to iteration $j$ is some subset $\mset_{\rho'-j+1}$ of $\tmset^*$. In order to execute iteration $j$,  we consider the set $\jset_{\rho'-j+1}$ of all level-$(\rho'-j+1)$ intervals of $R^*$ in the $L'$-hierarchical decomposition of $R^*$. We partition these intervals into $\ceil{\log n}$ groups: an interval $I\in \jset_{\rho-j+1}$ belongs to group $U_z$ iff:

\[2^z<|I\cap S(\mset_{\rho'-j+1})| \leq 2^{z+1}.\]

We say that a demand pair $(s,t)$ belongs to group $U_z$ iff the interval $I\in \jset_{\rho'-j+1}$ containing $s$ belongs to group $U_z$. Then there is an index $z$, such that $U_z$ contains at least a $(1/\ceil{\log n})$-fraction of the demand pairs of $\mset_{\rho'-j+1}$. We then set $\gamma_{\rho'-j+1}=2^z$, and we let $\mset_{\rho'-j}$ contain the set of all demand pairs of  $\mset_{\rho'-j+1}$ that belong to $U_z$. Let $\mset_0$ be the set of demand pairs obtained after the last iteration of the algorithm. We then set $\tmset^{**}=\mset_0$. It is easy to verify that:

 \[|\tmset^{**}|\geq \frac{|\tmset^*|}{\ceil{\log n}^{\rho'}}\geq \frac{\opt}{2^{8\sqrt{\log n}\cdot\log\log n}\cdot \ceil{\log n}^{\sqrt{\log n}}}\geq \frac{\opt}{2^{10\sqrt{\log n}\cdot\log\log n}}.\]

Recall that we have defined integers $\gamma_1,\gamma_2,\ldots,\gamma_{\rho'}$, and for each $1\leq h\leq \rho'$, for every level-$h$ interval $I_h\in \jset'_h$, either $I_h$ contains no vertices of $S(\tmset^{**})$, or:

\[\gamma_h<|I\cap S(\tmset^{**})|\leq 2\gamma_h.\]

Since for each $1< h\leq \rho'$, a level-$h$ interval is a sub-interval of a level-$(h-1)$ interval, it is easy to verify that $\gamma_1\geq \gamma_2\geq \cdots\geq \gamma_{\rho'}$. Moreover, we claim that for each  $1< h\leq \rho'$, $\gamma_h\geq \gamma_{h-1}/(2\eta)$. This is since a level-$(h-1)$ interval $I$ is partitioned into exactly $\eta$ level-$h$ intervals. If $\gamma_h< \gamma_{h-1}/(2\eta)$, then each such interval contributes fewer than $\gamma_{h-1}/\eta$ source vertices to 
$I$, and so $I$ contains fewer than $\gamma_h$ vertices of $S(\tmset)$, a contradiction.
It is also easy to verify that $\gamma_{\rho'}=1$, while, since $|\iset'_1|\leq \eta^2$, $\gamma_1\geq |\tmset^*|/(\eta^2 \ceil{\log n})\geq \opt/(\ceil{\log n}\cdot 2^{9\sqrt{\log n}\cdot\log\log n})$. We set the parameter $c^*$, that was used in the definition of the parameter $\rho$, so that $\gamma_1\geq d_1$ holds. For example, $c^*\geq 11$ is sufficient.

We are now ready to define the sequence $L=(\ell_1,\ldots,\ell_{\rho})$ of integers. For each $1\leq j\leq \rho$, we let $\ell_j$ be the largest integer $\ell_h'\in L'$, for which $\gamma_h\leq d_j/8$. From the above discussion, such an integer exists for all $1\leq j\leq \rho$, and we have that $d_j/(16\eta)\leq \gamma_h\leq d_j/8$. In particular for each interval $I\in \jset_h$, if $I\cap S(\tmset^{**})\neq\emptyset$, then $d_j/(16\eta)\leq |I\cap S(\tmset^{**})|\leq d_j/4$. 

It is now easy to verify that $\tmset^{**}$ is compatible with the sequence $L$, since the sets  of intervals of the $L$-hierarchical partition $(\jset_1,\jset_2,\ldots,\jset_{\rho})$ all belong to the collection $(\jset'_1,\jset_2',\ldots,\jset_{\rho'}')$ corresponding to the $L'$-hierarchical partition, as $L\subseteq L'$.
\end{proof}

For our algorithm, we need to assume that we know the sequence $L$ of integers given by Theorem~\ref{thm: partition of R1}. As sequence $L$ consists of at most $\sqrt{\log n}$ integers, each of which is an integral power of $\eta$, there are at most $\sqrt{\log n}$ possible choices for each integer of $L$, and at most $(\sqrt{\log n})^{\sqrt{\log n}}$ possible choices for the sequence $L$. Therefore, we can go over all such choices, and apply our algorithm to each of them separately. It is now sufficient to show that our algorithm succeeds in producing the right output when the sequence $L$ is guessed correctly --- that is, it has the properties guaranteed by Theorem~\ref{thm: partition of R1}. We assume from now on that the correct sequence $L=(\ell_1,\ldots,\ell_{\rho})$ is given, and we denote by $(\jset_1,\ldots,\jset_{\rho})$ the corresponding $L$-hierarchical decomposition of $R^*$, that can be computed efficiently given $L$. Intervals lying in set $\jset_h$ are called \emph{level-$h$ intervals}. Recall that we have also assumed that we are given a lower bound $\opt$ on the value of the optimal solution, and that we have guessed  the hierarchical system $\thset\in \fset$ of squares correctly. We denote $\thset=(\qset_1,\ldots,\qset_{\rho})$. Recall also that $\tmset\subseteq \mset$ is the set of all demand pairs whose destinations lie in $\thset$.
\subsection{Hierarchical System of Colors}\label{subsec: hierarchical squares}

For every level $1\leq h\leq \rho$, for every level-$h$ interval $I\in \jset_h$, we introduce a distinct level-$h$ color $c_h(I)$, and we color all vertices of $I$ with this color.  We let $\chi_h$ be the set of all level-$h$ colors, so $\chi_h=\set{c_h(I)\mid I\in \jset_h}$. Every vertex of $R^*$ is now assigned $\rho$ colors - one color for every level. For convenience, we define a set $\chi_0$ of level-$0$ colors, that contains a unique color $c_0$.  This naturally defines a hierarchical structure of the colors. Consider any level $0\leq h<\rho$, and some level-$h$ interval $I\in \jset_h$, with the corresponding color $c_h(I)$. For every level-$(h+1)$ interval $I'$ contained in $I$, we say that the corresponding color $c_{h+1}(I')$ is the child-color of $c_h(I)$. This also naturally defines a descendant-ancestor relation between colors: for $1\leq h\leq h'\leq \rho$, we say that $c_h(I)$ is an ancestor-color of $c_{h'}(I')$ iff $I'\subseteq I$. In particular, each color is an ancestor-color of itself. If a color $c$ is an ancestor-color of color $c'$, then we say that $c'$ is a descendant-color of $c$. We let $\tilde{\chi}(c)\subseteq \chi$ denote the set of all descendant-colors of $c$, and by $\tilde{\chi}_h(c)$ the set of all descendants of $c$ that belong to level $h$. 

Given the hierarchical system of colors  $\cset=(\chi_0,\chi_1,\ldots,\chi_{\rho})$ as above, a coloring of $\thset$ by $\cset$ is an assignment of colors to squares, $f:\bigcup_{i=1}^{\rho}\qset_i\rightarrow\bigcup_{i=1}^{\rho}\chi_i$, such that:

\begin{itemize}
\item For each $1\leq h\leq \rho$, every level-$h$ square $Q\in \qset_h$ is assigned a single  level-$h$ color $c\in \chi_h$; and 

\item For all $1<h\leq \rho$, if $Q\in \qset_h$ is a level-$h$ square, that is assigned a level-$h$ color $c$, and its parent-square $Q'\in \qset_{h-1}$ is assigned a level-$(h-1)$ color $c'$, then $c$ is a child-color of $c'$.
\end{itemize}

Let $U$ denote the set of all vertices of $\tilde G$ that belong to the hierarchical system $\thset$ of squares, that is, $U=\bigcup_{Q\in \qset_{\rho}}V(Q)$.
Recall that $R^*\cap U=\emptyset$. Consider any valid assignment of colors, and let $v\in U$ be some vertex. For $1\leq h\leq \rho$, let $Q_h$ be the level-$h$ square containing $v$, and let $c_h$ be the color assigned to $Q_h$. Then we say that the level-$h$ color of $v$ is $c_h$. We also say that the level-$0$ color of $v$ is $c_0$. Therefore, every vertex $v\in U(\hset)\cup R^*$ is associated with a $(\rho+1)$-tuple of colors $(c_0(v),\ldots,c_{\rho}(v))$, where $c_h(v)$ is its level-$h$ color  (recall that we have already assigned colors to the vertices of $R^*$). Moreover, for $0\leq h<\rho$, $c_{h+1}(v)$ is a child-color of $c_h(v)$.

Assume now that we are given some coloring $f$ of $\thset$ by $\cset$, and some subset $\tmset'\subseteq \tmset$ of demand pairs. We say that $\tmset'$ is a \emph{perfect set of demand pairs for $f$}, iff:

\begin{itemize}

\item All sources and all destinations of the pairs in $\tmset'$ are distinct; 
\item For each demand pair $(s,t)\in \tmset'$, both $s$ and $t$ are assigned the same level-$\rho$ color (notice that this means that for every level $1\leq h\leq \rho$, the level-$h$ color assigned to $s$ and $t$ is the same); and

\item For every level $1\leq h\leq \rho$, for every level-$h$ color $c_h$, the number of demand pairs  $(s,t)\in \tmset'$, for which $s$ and $t$ are assigned the color $c_h$ is at most $d_h$.
\end{itemize}

The rest of the argument consists of three parts. First, we prove that there exists a coloring $f$ of $\thset$, and a perfect set $\tmset'\subseteq \tmset$ of demand pairs for $f$, so that $|\tmset'|$ is quite large compared to $\opt$; set $\tmset'$ is obtained by appropriately modifying the set $\tmset^{**}$ of demand pairs given by Theorem~\ref{thm: partition of R1}. Next, we show that, given any coloring $f$ of $\thset$ by $\cset$ and any perfect set $\tmset'$ of demand pairs for $f$, we can efficiently route a large fraction of the demand pairs in $\tmset'$ in graph $G$. Finally, we show an efficient  algorithm for (approximately) computing the largest set $\tmset'\subseteq \tmset$ of demand pairs, together with a coloring $f$ of $\thset$ by $\cset$, such that $\tmset'$ is perfect for $f$. The following three theorems summarize these three steps, and their proofs appear in the following sections.

\begin{theorem}\label{thm: existence of perfect set}
There is a coloring $f$ of $\thset$ by $\cset$, and a set $\tmset'\subseteq\tmset$ of demand pairs, such that $\tmset'$ is perfect for $f$, and $|\tmset'|\geq \opt/2^{O(\sqrt{\log n}\cdot \log\log n)}$.
\end{theorem}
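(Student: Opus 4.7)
The plan is to build $f$ and $\tmset'$ top-down, starting from $\mset_0 := \tmset^{**}$ (the set given by Theorem~\ref{thm: partition of R1}), and to process levels $h = 1, 2, \ldots, \rho$ in order. During step $h$, I will assign a level-$h$ color to every level-$h$ square in $\qset_h$, and obtain a subset $\mset_h \subseteq \mset_{h-1}$, maintaining two invariants:
(a) the partial coloring defined so far is valid (child-colors sit under parent-colors), and
(b) for every $(s,t) \in \mset_h$ and every level $h' \leq h$, the level-$h'$ color assigned to $s$ equals the color assigned to the level-$h'$ square of $\thset$ containing $t$.
Finally I will set $\tmset' := \mset_{\rho}$ and check the three perfectness conditions directly from these invariants plus the size bound $|I \cap S(\tmset^{**})| \leq d_h/4 < d_h$ inherited from the compatibility of $\tmset^{**}$ with $L$.

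The key local step at level $h$ is: fix a level-$h$ square $Q \in \qset_h$ with parent $Q' \in \qset_{h-1}$ already colored $c_{h-1}(I')$ (for $h=1$ there is only the trivial parent $c_0$). By invariant (b) applied at level $h-1$, every source of a pair in $\mset_{h-1}$ with destination in $Q \subseteq Q'$ already lies in $I'$, so the shadow $J_{\mset_{h-1}}(Q)$ is a sub-path of $I'$. Now I invoke the two quantitative ingredients together: Corollary~\ref{cor: shadow property of all squares} gives $L_{\tmset^{**}}(Q) \leq d_h/\eta^2$, while the compatibility of $\tmset^{**}$ with $L$ (Theorem~\ref{thm: partition of R1}) forces every non-empty level-$h$ interval to contain at least $d_h/(16\eta)$ sources of $\tmset^{**}$. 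Therefore $J_{\tmset^{**}}(Q)$ can fully contain at most $16/\eta < 1$ non-empty level-$h$ intervals, so $J_{\mset_{h-1}}(Q) \subseteq J_{\tmset^{**}}(Q)$ meets at most two non-empty level-$h$ intervals $I_1, I_2 \subseteq I'$ (one at each endpoint of the shadow). I set $f(Q) := c_h(I)$ for whichever of $I_1, I_2$ captures more sources of pairs in $\mset_{h-1}$ destined to $Q$, and discard from $\mset_{h-1}$ the pairs destined to $Q$ whose source is not in $I$. This keeps at least half of the pairs destined to $Q$, and doing this in parallel over all $Q \in \qset_h$ yields $|\mset_h| \geq |\mset_{h-1}|/2$, while restoring invariant (b) at level $h$ and invariant (a) because $I \subseteq I'$.

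Putting the $\rho$ steps together, $|\tmset'| = |\mset_\rho| \geq |\tmset^{**}|/2^\rho$, and since $\rho \leq \sqrt{\log n}$ and $|\tmset^{**}| \geq \opt/2^{O(\sqrt{\log n}\cdot\log\log n)}$, this gives the desired bound. Perfectness is then immediate: distinctness of sources and destinations is inherited from $\tmset^* \supseteq \tmset^{**} \supseteq \tmset'$, since $\tmset^*$ is routable by node-disjoint paths; the matching of the level-$\rho$ color of $s$ and $t$ in each kept pair is exactly invariant (b) with $h'=\rho$; and the per-color load bound follows because the pairs using level-$h$ color $c_h(I)$ have $s \in I$, so there are at most $|I \cap S(\tmset^{**})| \leq d_h/4 < d_h$ of them.

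The main obstacle is the interaction between the two parts of the local step: I need the shadow of $Q$ both to stay within the parent's interval $I'$ (so that the child color I pick is actually a legal child of $f(Q')$) and to cross very few level-$h$ intervals (so that losing ``all but one'' costs only a constant factor per level). The first property is guaranteed by invariant (b) carried over from the previous level, and the second is exactly the quantitative combination of the $1/\eta^2$-shadow property of Corollary~\ref{cor: shadow property of all squares} with the lower bound $d_h/(16\eta)$ on non-empty intervals from compatibility --- so the proof essentially hinges on showing that these two ingredients from the earlier sections fit together to give the ``at most two candidate child intervals'' conclusion, with the factor-$2$ loss per level being affordable because $2^\rho = 2^{O(\sqrt{\log n})}$ is absorbed into the target $2^{O(\sqrt{\log n} \log\log n)}$.
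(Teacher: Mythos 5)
Your proposal is correct and follows essentially the same route as the paper: an iterative level-by-level coloring starting from the compatible set $\tmset^{**}$ of Theorem~\ref{thm: partition of R1}, where the combination of the $1/\eta^2$-shadow property with the $d_h/(16\eta)$ lower bound on non-empty intervals yields the ``at most two candidate level-$h$ intervals'' claim (this is exactly Observation~\ref{obs: squares spanning few intervals} in the paper), and a majority choice per square loses only a factor $2$ per level, i.e.\ $2^{\rho}=2^{O(\sqrt{\log n})}$ overall. The invariants, the per-color load bound via $|I\cap S(\tmset^{**})|\leq d_h/4$, and the distinctness argument all match the paper's proof.
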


\begin{theorem}\label{thm: snake-like routing} There is an efficient algorithm, that, given a coloring $f$ of $\thset$ by $\cset$, and a set $\tmset'\subseteq \tmset$ of demand pairs with $|\tmset'|\leq d_1$, such that $\tmset'$ is perfect for $f$, routes at least $|\tmset'|/2^{O(\sqrt{\log n})}$ demand pairs of $\tmset'$ in $G$. 
\end{theorem}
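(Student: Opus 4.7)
I would prove Theorem~\ref{thm: snake-like routing} by induction, building the routing as a recursive snake-like traversal of the hierarchy $\thset$, mirroring the sketch in Section~\ref{sec: alg-high-level}. The convenient statement to strengthen and prove inductively is a \emph{routing lemma}: for every $1\le h\le \rho$, given a level-$(h-1)$ square $Q$ (for $h=1$ the ambient grid $\tG$, with sources entering along $R^*$), a set $\pset_0$ of at most $d_{h-1}$ paths entering $Q$ through a designated side of $\partial Q$ as a consecutive block in a prescribed left-to-right order, and the restriction of the perfect set $\tmset'$ and coloring $f$ to $Q$, one can route (all but a constant fraction of) the paths in $\pset_0$ through the level-$h$ squares inside $Q$, preserving the ordering invariant when leaving $Q$. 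Theorem~\ref{thm: snake-like routing} is the case $h=1$.

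At level $h$, the parent square has side $d_{h-1}=\eta d_h$ and contains up to $O(\eta^2)$ level-$h$ squares, mutually separated by $d_h$-wide margins. The global routing inside $Q$ visits these child squares in a snake order: for each level-$h$ square $Q'$ with color $c_h(I)$, exactly the paths of $\pset_0$ whose level-$h$ color is $c_h(I)$ enter $Q'$ (at most $d_h$ of them by perfectness), and the remaining paths bypass $Q'$ through the surrounding margins. The crucial fact is that sources sharing a level-$h$ color form a consecutive block on $R^*$ (and, by the ordering invariant, form a consecutive block within $\pset_0$), so the entering paths can be peeled off one end of the snake strip while the bypassing paths continue along the opposite end. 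Inside each $Q'$, the routing is continued recursively by invoking the routing lemma at level $h+1$. The base case $h=\rho$ reduces to routing at most $d_\rho$ paths entering in a prescribed order to their distinct destinations inside a $(d_\rho\times d_\rho)$ square, which is handled by the local routing of Figure~\ref{fig: local-routing}.

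The geometric accounting of the margins is the key check. At level $h$, up to $d_{h-1}=\eta d_h$ paths are active inside $Q$, but they are laid out so that at any horizontal or vertical cut of a single margin strip, only the bypassing paths of one child square cross it; since each child receives at most $d_h$ entering paths, at most $d_{h-1}-d_h=O(\eta d_h)$ paths bypass any fixed child, but these are distributed among the $\Theta(\eta)$ margin strips bordering that child as the snake spreads them out along one full row of child squares, giving $O(d_h)$ paths per margin, which fits into a margin of width $d_h$. A constant fraction of paths may have to be discarded per level to absorb parity issues at snake turns, the alternation of traversal direction, and the transitions between the outer margin of $Q$ and the child squares.

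The main obstacle is maintaining the ordering invariant and proving that the margin widths always suffice; this reduces to showing, by induction, that the paths entering a level-$h$ square do so in the same left-to-right order as their sources appear on $R^*$ (which in turn matches the child-color structure of $f$). Once the invariant is established and the counting of bypassing paths per margin is verified, the routings at different levels compose into a single collection of node-disjoint paths. Summing the constant-factor losses over the $\rho=O(\sqrt{\log n})$ levels yields a total multiplicative loss of $2^{O(\rho)}=2^{O(\sqrt{\log n})}$, matching the bound claimed in the theorem.
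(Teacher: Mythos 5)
There is a genuine gap, and it is quantitative: your scheme keeps $\Theta(d_h)$ paths per level-$h$ color (losing only a constant fraction per level), but the margins available at the next level have width only $d_{h+1}=d_h/\eta$. Inside a level-$h$ square of color $c_h$, the active bundle then has $\Theta(d_h)=\Theta(\eta d_{h+1})$ paths, which cannot squeeze through any single corridor between level-$(h+1)$ squares, so the ``visit the child squares one-by-one and peel off blocks'' traversal is infeasible as stated. Your fix --- that the bypassing paths are ``distributed among the $\Theta(\eta)$ margin strips bordering that child'' --- does not hold: a fixed child square is bordered by only $O(1)$ margin strips, and spreading the bundle into per-color lanes across a whole row of children is a different routing scheme whose disjointness and capacity you have not established (the lanes must still converge at the entry interface of the parent square and cross one another, and every horizontal cut through a row of children has margin capacity only about half the bundle size, for every row). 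A per-level constant-fraction discard does not repair this, because the width problem recurs at full strength inside every square at every level. The paper resolves exactly this issue differently: it takes the entire $2^{O(\sqrt{\log n})}$ loss \emph{once}, up front, by keeping only every $(2\eta^3)$-th demand pair of $\tmset'$, so that the number $N(c_h)$ of surviving pairs of each level-$h$ color is at most $\lceil d_h/(2\eta^3)\rceil < d_{h+1}/18$; it then routes \emph{all} surviving pairs with no further loss, by constructing, for every color at every level, its own snake of width $3N(c_h)$ (Lemma~\ref{lemma: building snakes}), and these fit disjointly precisely because the total width of all child-color snakes, $3N(c_h)$, is far below the inter-square separation $d_{h+1}/2$.

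A second, related failure is your base case. With only the perfectness bound, a level-$\rho$ square of side $d_\rho$ may contain up to $d_\rho$ destinations of its color with no spacing guarantee, and routing $d_\rho$ vertex-disjoint paths from the square's boundary to arbitrary distinct interior vertices is impossible in general (if the destinations form a $\sqrt{d_\rho}\times\sqrt{d_\rho}$ block, a cut of size $O(\sqrt{d_\rho})$ separates them from the boundary); the local routing of Figure~\ref{fig: local-routing} needs the destinations to be spread out, which is not available here. In the paper's proof the subsampling makes $N(c_\rho)\leq 1$, so the bottom level routes a single path inside each level-$\rho$ snake and this obstruction never arises. So while your claimed final bound $2^{O(\rho)}=2^{O(\sqrt{\log n})}$ matches the theorem, the mechanism that is supposed to deliver it --- constant loss per level plus a whole-bundle snake traversal --- does not go through; the sparsification has to happen globally before the routing, not gradually during it.
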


\begin{theorem}\label{thm: finding the coloring}
There is an efficient randomized algorithm, that, given:

\begin{itemize}
\item a hierarchical system $\thset=(\qset_1,\ldots,\qset_{\rho})\in \fset$ of squares, together with a subset $\tmset\subseteq \mset$ of demand pairs whose destinations belong to $\thset$; and
\item a sequence $L=(\ell_1,\ldots,\ell_{\rho})$ of integers, with $\ell_1\geq \ell_2\geq \cdots\geq \ell_{\rho}$, such that each integer $\ell_h$ is an integral power of $\eta$, together with the corresponding $L$-hierarchical decomposition of $R^*$ and the corresponding  hierarchical system  $\cset$ of colors,
\end{itemize}

computes a coloring $f$ of $\thset$ by $\cset$ and a set $\tmset'\subseteq \tmset$ of demand pairs that is perfect for $f$, such that with high probability, $|\tmset'|\geq \opt'/2^{O(\sqrt{\log n})}$, where $\opt'$ is the cardinality of the largest set $\tmset''\subseteq \tmset$ of demand pairs, such that there is some coloring $f'$ of $\thset$ by $\cset$, for which $\tmset''$ is a perfect set.
\end{theorem}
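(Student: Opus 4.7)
The plan is to formulate a linear programming relaxation that jointly captures the choice of the coloring $f$ and the selected set $\tmset'$, and then round its fractional optimum top-down through the $\rho$ levels of the color hierarchy, losing only a factor of $2^{O(\sqrt{\log n})}$ in the process. Introduce a variable $x_{Q,c}$ for every square $Q\in\bigcup_h \qset_h$ and every candidate color $c$ (so $c\in\chi_h$ when $Q\in\qset_h$), and a variable $y_{s,t}$ for each $(s,t)\in\tmset$. The LP maximizes $\sum_{(s,t)} y_{s,t}$ subject to: (a) $\sum_{c\in\chi_h} x_{Q,c}=1$ for each $Q\in\qset_h$ (exactly one color per square); (b) for every child-parent pair $Q\in\qset_h$, $Q'\in\qset_{h-1}$ and every $c\in\chi_{h-1}$, the consistency constraint $\sum_{c'\text{ child of } c} x_{Q,c'} = x_{Q',c}$; (c) $y_{s,t}\le x_{Q_\rho(t),\,c_\rho(s)}$, where $Q_\rho(t)$ is the level-$\rho$ square containing $t$ (a pair can be selected only if its destination's square receives the matching level-$\rho$ color, which by (b) forces matching colors at all levels); and (d) the capacity constraints $\sum_{(s,t):\,c_h(s)=c} y_{s,t}\le d_h$ for every level $h$ and every $c\in\chi_h$. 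Any pair $(f,\tmset')$ with $\tmset'$ perfect for $f$ yields a feasible integer solution, so the LP optimum $\optLP$ is at least $\opt'$.

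The rounding proceeds one level at a time. For $h=1,2,\ldots,\rho$, for each $Q\in\qset_h$ whose parent $Q'$ has already been assigned color $c'$, sample $f(Q)=c$ from the conditional distribution $\Pr[f(Q)=c]=x^*_{Q,c}/x^*_{Q',c'}$, independently across squares at the same level; these are well-defined probabilities summing to one by constraint (b). Once all squares are colored, for each pair $(s,t)$ whose destination's level-$\rho$ square is colored $c_\rho(s)$, tentatively include $(s,t)$ in $\tmset'$ with probability $y^*_{s,t}/x^*_{Q_\rho(t),\,c_\rho(s)}$. Telescoping the conditional color-sampling probabilities along the ancestor chain of $Q_\rho(t)$ shows that the inclusion probability of $(s,t)$ equals exactly $y^*_{s,t}$, so the expected tentative size is $\optLP\ge\opt'$. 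The final coloring $f$ is then valid by construction, and every tentatively included pair $(s,t)$ satisfies the perfect-set definition at every level, since the colors of $s$ and $Q_h(t)$ agree for all $h$.

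The main obstacle is enforcing the capacity constraints (d) in the rounded solution: the tentative inclusion events of distinct pairs sharing a level-$h$ color are \emph{not} independent, because their destinations may share ancestor squares whose colorings are common randomness. To handle this we scale all inclusion probabilities down by $\alpha=1/2^{c\sqrt{\log n}}$ for a sufficiently large constant $c$, and then, processing levels $h=1,\ldots,\rho$ in order, for each level-$h$ color $c$ whose realized load exceeds $d_h$ we delete just enough pairs to restore the bound. Since $d_h\ge d_\rho=\eta^3=2^{\Theta(\sqrt{\log n})}$, the expected load $\alpha\sum_{(s,t):\,c_h(s)=c} y^*_{s,t}\le\alpha d_h$ is still $\Omega(\log n)$ after scaling, so a Chernoff bound applied to the pairs whose randomness at levels $\ge h$ is independent once we condition on the colorings at levels $<h$ shows that the realized load is at most $2\alpha d_h\le d_h$ with probability $1-n^{-\omega(1)}$; a union bound over all $\poly(n)$ (level, color) pairs keeps the global failure probability polynomially small.

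Combining the three pieces, the algorithm outputs, with high probability, a coloring $f$ and a perfect set $\tmset'$ for $f$ of expected size at least $\alpha\cdot\optLP/O(1)\ge \opt'/2^{O(\sqrt{\log n})}$; repeating the whole procedure $\poly(n)$ times and keeping the best outcome boosts the success probability to $1-1/\poly(n)$, yielding the claimed efficient randomized $2^{O(\sqrt{\log n})}$-approximation.
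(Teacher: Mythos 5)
Your overall strategy (an LP coupling square colors with pair selection, followed by top-down conditional randomized rounding so that each pair is selected with probability equal to its LP value) is the same as the paper's, but your LP is missing the constraint that makes the concentration step work, and as written the capacity-repair argument fails. The paper's LP contains, in addition to your constraints (a)--(d), the ``conditional capacity'' constraints: for every pair of levels $h\leq h'$, every level-$h$ square $Q_h$, every level-$h$ color $c_h$ and every level-$h'$ descendant color $c_{h'}$ of $c_h$, the total fractional load of color $c_{h'}$ contributed by destinations inside $Q_h$ is at most $d_{h'}\cdot x(Q_h,c_h)$. These constraints are exactly what bound the per-square conditional loads after rounding. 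Your Chernoff argument is incorrect without them, for two reasons. First, conditioning on the colorings of levels $<h$ does \emph{not} make the inclusion events of pairs independent: two pairs whose destinations lie in the same level-$h$ (or deeper) square share the randomness of all their common ancestor squares, so you cannot apply Chernoff over pairs; the natural independent units are the level-$h$ squares, and then you need each square's contribution to a fixed level-$h$ color to be a bounded random variable. Second, your LP permits a level-$h$ color $c$ to have essentially all of its fractional load $d_h$ concentrated on pairs whose destinations lie in a single level-$h$ square $Q$ with $x(Q,c)=1/K$ for large $K$ (each $y_{s,t}=1/K$, about $Kd_h$ pairs). Then with probability $1/K$ square $Q$ receives color $c$ and the realized (even after your $\alpha$-scaling) load is about $\alpha K d_h\gg d_h$, so your deletion step keeps only $d_h$ of it; the expected number of retained pairs for this color is about $d_h/K$ against an LP contribution of $d_h$, i.e.\ a loss factor of $K$, which can be polynomial in $n$. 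So scaling by $2^{-c\sqrt{\log n}}$ plus greedy deletion does not recover a $2^{O(\sqrt{\log n})}$ guarantee; the paper instead proves, using the conditional capacity constraints, that conditioned on levels $<h'$ the per-square contributions are independent and bounded by $d_h$, obtains a $(1+1/\log n)$-factor-per-level martingale-style bound, declares a low-probability ``failure'' event when a color's load exceeds $d_h\cdot\mathrm{polylog}(n)$, and then prunes the almost-feasible solution down by a $\mathrm{polylog}(n)$ factor to restore feasibility.

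A smaller omission: a perfect set also requires all sources and all destinations in $\tmset'$ to be distinct, which your rounded set need not satisfy (a vertex can serve as destination, and a source as endpoint, of several demand pairs). The paper fixes this at the end by keeping at most one pair per level-$\rho$ color, which costs only a $d_{\rho}=\eta^3=2^{O(\sqrt{\log n})}$ factor; you would need an analogous step.
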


It is now easy to complete the proof of Theorem~\ref{thm: main w destinations far from boundary}. We assume that our algorithm is given a lower bound $\opt$ on the value of the optimal solution. It starts by constructing the family $\fset$ of hierarchical systems of squares using Claim~\ref{claim: hierarchical system of squares}. It then guesses one of the systems $\thset\in \fset$, and a sequence $L=(\ell_1,\ldots,\ell_{\rho})$ of integers as described above. We can now efficiently compute the $L$-hierarchical decomposition $(\jset_1,\ldots,\jset_{\rho})$ of row $R^*$ of $G'$, and the corresponding hierarchical system $\cset$ of colors. We apply Theorem~\ref{thm: finding the coloring} in order to compute a coloring $f$ of $\thset$ by $\cset$ and the corresponding set $\tmset'\subseteq \tmset$ of demand pairs that is perfect for $f$. From Theorems~\ref{thm: existence of perfect set} and~\ref{thm: finding the coloring}, if $\thset$ and $L$ were guessed correctly, then with high probability, $|\tmset'|\geq \opt/\approxfactor$. If $|\tmset'|>d_1$, then we discard demand pairs from $\tmset'$ until $|\tmset'|=d_1$ holds. As $d_1=\eta^{\rho+2}\geq \opt/\approxfactor$, we still have that $|\tmset'|\geq \opt/\approxfactor$.
Finally, we use Theorem~\ref{thm: snake-like routing} in order to route at least $|\tmset'|/2^{O(\sqrt{\log n})}\geq \opt/\approxfactor$ demand pairs of $\tmset'$ in $G$.  We now turn to prove Theorem~\ref{thm: shadow property for one set of squares} and Theorems~\ref{thm: existence of perfect set} --- \ref{thm: finding the coloring}.

\label{---------------------------sec: shadow property-------------------------------}
\section{The Shadow Property}\label{sec: shadow property}
This section is dedicated to proving Theorem~\ref{thm: shadow property for one set of squares}. 

  
For every demand pair $(s,t)\in \hmset$, we let $P(s,t)$ denote the path routing this pair in $\pset$. We assume that $G$ is embedded into the plane in the natural manner, and we construct a number of discs in the plane. Whenever we refer to a disc, we mean a simple disc, whose boundary is a simple cycle. Let $D^0$ denote the disc whose outer boundary is the boundary of the grid $G$. 

\begin{definition}
We say that a disc $D$ in the plane is \emph{canonical}, if either (i)  $D=D^0$, or (ii) we can partition the boundary $\sigma(D)$ of $D$ into four contiguous segments, $\sigma_1(D),\sigma_2(D),\sigma_3(D),\sigma_4(D)$, such that $\sigma_1(D)\subseteq R^*$; $\sigma_3(D)$ is contained in the boundary of some square $Q\in \qset$; $\sigma_2(D)$ is contained in some path $P(s,t)\in \pset$, and $\sigma_4(D)$ is contained in some path $P(s',t')\in \pset$ (where possibly $(s,t)=(s',t')$).
\end{definition}

\begin{definition}
We say that a collection $\dset$ of canonical discs is \emph{nested} iff for every pair $D,D'\in \dset$, either one of the discs is contained in the other, or they are disjoint. Moreover, if the boundaries of two distinct discs $D,D'\in \dset$ intersect, then $\sigma(D)\cap \sigma(D')\subseteq R^*$ must hold.
\end{definition}

 Assume now that we are given a nested collection $\dset$ of canonical discs. Consider some disc $D\in \dset$, and let $\dset'\subseteq \dset$ be the set of all discs $D'$ with $D'\subseteq D$. We define the region of the plane associated with $D$ as $R_{\dset}(D)=D\setminus (\bigcup_{D'\in \dset'}D')$.

The idea is to compute a collection $\dset=\set{D^0,D^1,\ldots,D^z}$ of nested canonical discs, and for every disc $D^i$ a set $\mset_i\subseteq \hmset$ of demand pairs, such that all demand pairs in $\mset_i$ are routed inside the region $R_{\dset}(D^i)$ by $\pset$. Clearly, for $i\neq i'$, $\mset_i\cap \mset_{i'}=\emptyset$ will hold. We denote by $\overline{\mset}=\hmset\setminus \bigcup_i\mset_i$ the set of the demand pairs we discard, and we will ensure that $|\overline{\mset}|$ is small enough compared to $|\hmset|$. 

Given a set $\overline{\mset}$ of discarded demand pairs, we say that a square $Q\in \qset$ is \emph{active} iff at least one demand pair in $\hmset\setminus\overline{\mset}$ has a destination vertex lying in $Q$. We say that the set $\dset$ of discs \emph{respects} the active squares, iff for every active square $Q\in \qset$, there is a unique disc $D^i\in \dset$ with $Q\subseteq R_{\dset}(D^i)$. We will ensure that the set $\dset$ of discs we construct respects all active squares.

\paragraph*{Heavy demand pairs.}
Consider any demand pair $(s,t)\in \hmset$ and the path $P(s,t)$ routing $(s,t)$ in $\pset$. We say that $(s,t)$ \emph{covers} the demand pair $(s',t')\in \hmset$ iff $P(s,t)$ intersects the unique square $Q\in \qset$ with $t'\in Q$. Notice that $(s',t')$ may be covered by at most $4d$ demand pairs. This is since $t'$ belongs to exactly one square of $\qset$; for every pair $(s,t)$ covering $(s',t')$, path $P(s,t)$ must contain a vertex from the boundary of that square; and the length of the boundary of each such square is at most $4d$. We say that demand pair $(s,t)$ is \emph{heavy} if it covers more than $2^{9}d\log n$ other pairs in $\hmset$.

\begin{observation}
The number of heavy demand pairs is at most $|\hmset|/(2^7\log n)$.
\end{observation}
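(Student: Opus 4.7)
The plan is to prove this by a straightforward double-counting argument on the covering relation. The key ingredient is already spelled out just before the observation: any fixed demand pair $(s',t')\in\hmset$ is covered by at most $4d$ other demand pairs, because the unique square $Q\in\qset$ containing $t'$ has boundary of length at most $4d$, and any path $P(s,t)$ covering $(s',t')$ must use a vertex of $\partial Q$, while the paths in $\pset$ are node-disjoint.

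With this in hand, I would set up the bipartite covering count
\[
\Sigma \;=\; \sum_{(s,t)\in \hmset} \#\{(s',t')\in\hmset : (s,t) \text{ covers } (s',t')\}
\;=\; \sum_{(s',t')\in \hmset} \#\{(s,t)\in\hmset : (s,t) \text{ covers } (s',t')\}.
\]
The second form, combined with the $4d$ bound per covered pair, gives $\Sigma \le 4d\cdot |\hmset|$. The first form, restricted to heavy pairs, is at least $2^{9} d \log n$ times the number of heavy pairs, by the definition of heavy.

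Combining the two yields
\[
(\text{number of heavy pairs}) \cdot 2^{9} d \log n \;\le\; 4d\cdot |\hmset|,
\]
so the number of heavy pairs is at most $\frac{4d\cdot |\hmset|}{2^{9} d \log n} = \frac{|\hmset|}{2^{7}\log n}$, as claimed. There is no real obstacle here: the proof is a one-line application of double counting once the per-pair covering bound is invoked, and the factor $4d$ from the square boundary length exactly cancels the $d$ factor in the heaviness threshold, which is presumably why the threshold was set at $2^{9} d \log n$ in the first place.
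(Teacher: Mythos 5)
Your proof is correct and is essentially the paper's own argument: the paper phrases it as a contradiction (if there were more than $|\hmset|/(2^7\log n)$ heavy pairs, the total covering count would exceed $4|\hmset|d$, contradicting the $4d$ bound per covered pair), which is the same double count you carry out directly. The arithmetic checks out, so nothing further is needed.
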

\begin{proof}
Assume otherwise. Then the total number of demand pairs covered by heavy pairs (counting multiplicities) is more than $4|\hmset|d$. But every demand pair can be covered by at most $4d$ demand pairs, a contradiction.
\end{proof}

\paragraph*{Partitioning algorithm.}
Initially, we let $\overline{\mset}$ contain all heavy demand pairs, so $|\overline{\mset}|\leq |\hmset|/(128\log n)$. We start with $\dset=\set{D^0}$, and let the corresponding set of the demand pairs be $\mset_0=\hmset\setminus \overline{\mset}$. We then iterate. Let $D^i\in \dset$ be any disc and let $\mset_i$ be the corresponding set of demand pairs. Consider any pair $(s,t),(s',t')\in \mset_i$ of demand pairs. Let $N_{\mset_i}(t,t')$ be the number of the demand pairs $(s'',t'')\in \mset_i$ whose source lies between $s$ and $s'$ on $R^*$. Let $N'_{\mset_i}(t,t')$ denote the number of the remaining demand pairs in $\mset_i$ -- those pairs whose sources do not lie between $s$ and $s'$ on $R^*$.

\begin{definition}
 We say that $(s,t),(s',t')$ is a \emph{candidate pair for $\mset_i$} iff:

\begin{itemize}
\item there is a square $Q\in \qset$ with $t,t'\in Q$; and

\item both $N_{\mset_i}(t,t'),N'_{\mset_i}(t,t')>2^{20}d\log^3n$.
\end{itemize}
\end{definition}

If a candidate pair $(s,t),(s',t')$ exists in $\mset_i$, then we do the following.
Let $Q\in \qset$ be the square that contains both $t$ and $t'$.  We define a new canonical disc $D'\subseteq D^i$. Let $v$ be the first vertex on path $P(s,t)$ that lies in $Q$ (we view the path as directed from $s$ to $t$), and let $P'(s,t)$ be the sub-path of $P(s,t)$ between $s$ and $v$. Similarly, let $v'$ be the first vertex on path $P(s',t')$ that lies in $Q$, and let $P'(s',t')$ be the sub-path of $P(s',t')$ between $s'$ and $v'$. We let $\sigma_3(D')$ be a path connecting $v$ to $v'$, such that $\sigma_3(D')$ is contained in the boundary of $Q$, and the length of $\sigma_3(D')$ is at most $2d$.  Let $\sigma_2(D')=P'(s,t)$, $\sigma_4(D')=P'(s',t')$, and let $\sigma_1(D')$ be the subpath of $R^*$ between $s$ and $s'$. Finally, let $\sigma(D')$ be the concatenation of these four curves, and let $D'$ be the disc whose boundary is $\sigma(D')$. Let $\overline{\mset}_i\subseteq\mset_i$ contain all demand pairs $(s'',t'')$, such that either (i) $P(s'',t'')$ crosses $\sigma(D')$; or (ii) $(s,t)$ covers $(s'',t'')$; or (iii) $(s',t')$ covers $(s'',t'')$. Recall that there are at most $2d$ pairs of the first type (as all their paths must cross $\sigma_3(D')$), and at most $2^{9}d\log n$ pairs of each of the remaining types (as we have discarded all heavy demand pairs).
 Altogether, $|\overline{\mset}_i|\leq 2d+2^{10}d\log n$.
  Notice that $\overline{\mset}_i$ contains the pairs $(s,t)$ and $(s',t')$, and also all pairs whose destinations lie in the squares of $\qset$ that $\sigma(D')$ visits, including the pairs whose destinations lie in $Q$.

We  create two new subsets of the demand pairs: $\mset'\subseteq \mset_i\setminus \overline{\mset}_i$ contains all demand pairs whose corresponding path in $\pset$ is contained in the new disc $D'$, and we update $\mset_i$ to contain all remaining demand pairs of the original set $\mset_i$ (excluding the pairs in $\mset'$ and $\overline{\mset}_i$). Therefore, for every demand pair in the new set $\mset_i$, the path routing the pair in $\pset$ is contained in $R_{\dset}(D^i)\setminus D'$. We then add $D'$ to $\dset$, with $\mset'$ as the corresponding set of the demand pairs, and we add all demand pairs in $\overline{\mset}_i$ to $\overline {\mset}$. It is easy to verify that the resulting set of discs remains nested and canonical. From the above discussion, it also respects the currently active squares.

We have discarded at most $2d+2^{10}d\log n\leq 2^{11}d\log n$ demand pairs in the current iteration, but both $\mset'$ and the new set $\mset_i$ contain at least $2^{20}d\log^3n-2^{10}d\log n -2d\geq 2^{19}d\log^3n$ demand pairs (after excluding the discarded demand pairs).

We continue this partitioning procedure, as long as there is some disc $D^i\in \dset$, whose corresponding set $\mset_i$ of demand pairs contains a candidate pair $(s,t),(s',t')$. Let $\dset=\set{D^0,\ldots,D^z}$ be the final set of discs, and for each $i$, let $\mset_i$ be the corresponding set of the demand pairs.

\begin{claim}\label{claim: charging}
At the end of the algorithm,  $|\overline{\mset}|< |\hmset|/(64\log n)$.
\end{claim}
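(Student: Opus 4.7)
The plan is to control $|\overline{\mset}|$ by separately bounding the pairs discarded during initialization (the heavy pairs) and those added during the iterations. The initialization already satisfies $|\overline{\mset}| \leq |\hmset|/(128\log n)$, so it suffices to show that the pairs added to $\overline{\mset}$ during the iterative phase total less than $|\hmset|/(128\log n)$. Since each iteration contributes at most $2^{11} d \log n$ new pairs to $\overline{\mset}$ (by the bound on $|\overline{\mset}_i|$ established just above the claim), the real task is to bound the total number of iterations $T$.

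To do this, I would organize the entire partitioning process as a rooted full binary tree $\tau$. The root is labeled with the pair set of $D^0$ immediately after the heavy pairs are removed. Whenever the algorithm splits some $\mset_i$ using a candidate pair, the corresponding node of $\tau$ becomes internal and acquires two children: the ``left'' child inheriting the updated $\mset_i$ (which remains in $D^i$), and the ``right'' child inheriting the freshly created $\mset'$ (which is placed inside $D'$). A node is a leaf exactly when its pair set is never split again; in that case the leaf's pair set coincides with the final content of some disc at the end of the algorithm. Consequently, the leaves of $\tau$ carry pairwise disjoint subsets of $\hmset$, and the total weight across all leaves is at most $|\hmset|$.

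The crucial structural observation is that the algorithm performs a split only when \emph{both} resulting subsets have size at least $2^{19}d\log^3 n$, as explicitly verified in the excerpt. Hence every non-root leaf of $\tau$ inherits this $2^{19} d \log^3 n$ lower bound on its weight (since it is the ``left'' or ``right'' child of some internal node). Combining this with the previous paragraph, the number of leaves of $\tau$ is at most $|\hmset|/(2^{19} d \log^3 n)$, and since $\tau$ is a full binary tree, the number of internal nodes equals the number of leaves minus one, so $T \leq |\hmset|/(2^{19} d \log^3 n)$.

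Plugging this into the per-iteration discard bound yields total iterative discards of at most $T \cdot 2^{11} d \log n \leq |\hmset|/(2^8 \log^2 n)$, which is strictly less than $|\hmset|/(128\log n)$ under the assumption $\log n \geq 1$ (and the excerpt already assumes $\log n > 2^{27}$). Adding the heavy-pair contribution of at most $|\hmset|/(128\log n)$ gives $|\overline{\mset}| < |\hmset|/(64\log n)$, as claimed. The only delicate point in the argument is the binary-tree setup and the verification that all non-root leaves inherit the $2^{19} d \log^3 n$ lower bound; with that structural fact in hand, the remainder is an elementary calculation.
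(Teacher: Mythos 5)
Your proof is correct, and it takes a genuinely different route from the paper's. The paper argues by amortization: each surviving pair in a set $\mset_i$ carries a budget of $\frac{\log|\mset_i|}{128\log^2 n}$, and one shows that the total budget plus $|\overline{\mset}|$ never increases, because in every split the smaller of the two resulting sets has at most half the pairs of its parent yet still at least $2^{19}d\log^3 n$ pairs, so the budget drop of at least $\frac{2^{19}d\log^3 n}{128\log^2 n}\geq 2^{12}d\log n$ pays for the at most $2^{11}d\log n$ newly discarded pairs. You instead bound the total number of splits $T$ globally via the recursion tree: since both children of every split have size at least $2^{19}d\log^3 n$, since a leaf's set is never modified after its creation (only the set currently being split loses pairs -- this is the one point worth stating explicitly), and since the final leaf sets are pairwise disjoint subsets of $\hmset$, you get $T\leq |\hmset|/(2^{19}d\log^3 n)$ and hence iterative discards at most $|\hmset|/(2^8\log^2 n)$, which together with the heavy pairs gives the claim. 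Your counting argument is more elementary and in fact yields a slightly stronger bound on the iterative discards ($|\hmset|/(256\log^2 n)$ versus the $|\hmset|/(128\log n)$ implicit in the paper's initial budget), while the paper's charging scheme is purely local and never needs the global tree structure or the disjointness of the final sets as an explicit step; both hinge on exactly the same two quantitative facts established just before the claim.
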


\begin{proof}
Recall that at the beginning of the algorithm, $|\overline{\mset}|\leq |\hmset|/(128\log n)$ held. Over the course of the algorithm, for every demand pair $(s,t)\in \hmset\setminus \overline{\mset}$, we define a budget of $(s,t)$ as follows. Assume that $(s,t)$ belongs to some set $\mset_i$, corresponding to some disc $D^i\in \dset$. We set the budget of $(s,t)$ to be $B(s,t)=\frac{\log(|\mset_i|)}{128\log^2 n}$. Notice that this budget may change as the algorithm progresses. At the beginning of the algorithm, the total budget of all demand pairs is at most $\frac{|\hmset|}{128\log n}$, while $|\overline{\mset}|\leq \frac{|\hmset|}{128\log n}$. It is now enough to prove that throughout the algorithm, the total budget of all demand pairs plus the cardinality of $\overline{\mset}$ do not increase.

Consider some iteration of the algorithm, where we processed some set $\mset_i$ of demand pairs as above, adding a new set $\overline{\mset}_i$ of demand pairs to $\overline{\mset}$, and creating two new sets of demand pairs: $\mset'$ and $\mset'_i$ (that replaces $\mset_i$). Recall that we have guaranteed that $|\overline{\mset}_i|\leq 2^{11}d\log n$, while $|\mset'|,|\mset'_i|\geq 2^{19}d\log^3n$. Assume without loss of generality that $|\mset'|\leq |\mset'_i|$. Then the budget of every demand pair in $\mset'$ decreases by at least $\frac{1}{128\log^2n}$ (as $|\mset'|\leq |\mset_i|/2$), and the total decrease in the budgets is therefore at least $\frac{|\mset'|}{128\log^2n}\geq \frac{2^{19}d\log^3n}{128\log^2n}\geq 2^{12}d\log n$, while $|\overline{\mset}|$ increases by at most $2^{11}d\log n$. Therefore, the total budget of all demand pairs plus $|\overline{\mset}|$ do not increase.
\end{proof}

From the above discussion, the final set $\dset$ of discs is canonical, nested, and respects all remaining active squares.

\paragraph*{Boundary Pairs.}
Consider some disc $D^i\in \dset$, and assume first that $|\mset_i|\geq 2^{21}d\log^3n$.
Let $a_i,b_i$ be the two endpoints of $\sigma_1(D^i)$, and let $\zset^i_0\subseteq \mset_i$ contain $2^{20}d\log^3n$ demand pairs whose sources lie closest to $a_i$ and $2^{20}d\log^3n$ demand pairs whose sources lie closest to $b_i$. If $|\mset_i|< 2^{21}d\log^3n$, then we set $\zset^i_0=\mset_i$.
In either case, $|\zset^i_0|\leq 2^{21}d\log^3n$. We call the demand pairs in $\zset^i_0$ \emph{boundary pairs}. If $|\mset_i|>2^{23}d\log^4n$, then we discard the boundary pairs from $\mset_i$, and add them to $\overline{\mset}$. After this procedure, we are still guaranteed that $|\overline{\mset}|\leq \frac{|\hmset|}{64\log n}+\frac{|\hmset|}{4\log n}\leq \frac{|\hmset|}{2\log n}$. So we assume from now on that whenever the boundary pairs are present in $\mset_i$, $|\mset_i|\leq 2^{23}d\log^4n$.

\paragraph*{Forest of Discs.}
It would be convenient for us to organize the discs in $\dset$ into a directed forest $F$, in a natural way. In each arborescence of the forest, the edges will be directed towards the root.
The set of vertices of $F$ is $v_0,v_1,\ldots,v_z$, where $v_i$ represents disc $D^i$. There is a directed edge $(v_i,v_{i'})$ iff $D^i\subsetneq D^{i'}$, and there is no disc $D^x\in \dset$ with $D^i\subsetneq D^x\subsetneq D^{i'}$. Since the discs in $\dset$ are nested, it is immediate to verify that $F$ is indeed a directed forest. For every arborescence $\tau$ of $F$, the root of $\tau$ is the vertex $v_i$ corresponding to a disc $D^i$ that has the largest area. For every vertex $v_j\in V(F)$, we let its weight $w_j$ be $|\mset_j|$.
We denote by $\mset'=\bigcup_{i=0}^z \mset_i$, so $|{\mset'}|\geq |\hmset|/2$.

We use the following well-known result. For completeness, its proof appears in Appendix.

\begin{claim}\label{claim: partition the forest} There is an efficient algorithm, that, given a directed forest $F$ with $n$ vertices, computes a partition $\yset=\set{Y_1,\ldots,Y_{\ceil{\log n}}}$ of $V(F)$ into subsets, such that for each $1\leq j\leq \ceil{\log n}$, $F[Y_j]$ is a collection of disjoint directed paths, that we denote by $\pset_j$. Moreover, for all $v,v'\in Y_j$, if there is a directed path from $v$ to $v'$ in $F$, then they both lie on the same path in $\pset_j$.
\end{claim}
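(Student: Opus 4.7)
The plan is to invoke the classical heavy-path decomposition of Sleator and Tarjan, adapted to the directed forest $F$ whose edges are oriented toward the root of each arborescence.

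First I would root each arborescence of $F$ at its unique sink, and compute, for every vertex $v\in V(F)$, the subtree size $s(v)$ (the number of vertices $u$ from which there is a directed $F$-path to $v$, including $v$ itself). For every non-leaf vertex $w$ (one with at least one in-neighbor), I would select a single in-neighbor $c$ maximizing $s(c)$, breaking ties arbitrarily, and declare the edge $(c,w)$ to be \emph{heavy}; all remaining in-edges of $w$ are declared \emph{light}. Since every vertex has at most one heavy incoming edge, the heavy edges automatically form a collection of vertex-disjoint directed paths, the \emph{heavy paths}.

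Next, for each vertex $v$, let $\ell(v)$ denote the number of light edges on the unique directed $F$-path from $v$ to the root of its arborescence. The standard doubling argument shows $\ell(v)<\lceil\log n\rceil$: whenever $(u,u')$ is a light edge, $u$ is not the heavy child of $u'$, so the heavy child of $u'$ has subtree size at least $s(u)$, whence $s(u')\geq 1+2s(u)>2s(u)$; iterating along a root-path gives $n\geq 2^{\ell(v)}$. I would then set $Y_j=\set{v\in V(F):\ell(v)=j-1}$ for $j=1,\ldots,\lceil\log n\rceil$, which yields the desired partition (collapsing the top level into its predecessor if the exponent equals $\lceil\log n\rceil$ exactly).

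To verify the claimed properties, note first that any edge $(u,u')\in E(F[Y_j])$ must be heavy, because crossing a light edge strictly decreases $\ell(\cdot)$ and would force $u,u'$ into different classes. Thus $F[Y_j]$ is a subgraph of the union of heavy paths, and is therefore itself a disjoint union of directed paths, which I take as $\pset_j$. Finally, for any two $v,v'\in Y_j$ connected by a directed $F$-path $P$, the function $\ell(\cdot)$ is non-increasing along $P$ (strictly decreasing across each light edge, constant across each heavy edge); since $\ell(v)=\ell(v')=j-1$, every vertex of $P$ lies in $Y_j$ and every edge of $P$ is heavy, so $v$ and $v'$ belong to the same heavy path and hence to the same path of $\pset_j$. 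There is no real obstacle beyond bookkeeping here, as the argument is essentially Sleator--Tarjan; the only care needed is to orient all definitions consistently for the ``edges point to the root'' convention and to derive the ``same path'' property from monotonicity of $\ell(\cdot)$.
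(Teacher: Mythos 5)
Your proof is correct, but it takes a genuinely different route from the paper. The paper proceeds iteratively: in each round it peels off, from every leaf, the maximal directed path consisting of degree-$1$ and degree-$2$ vertices (avoiding the root), assigns those vertices to the current class $Y_j$, deletes them, and bounds the number of rounds by $\ceil{\log n}$ via a contraction argument (after contracting degree-$2$ chains every internal vertex has degree at least $3$, so each round at least halves the vertex count). You instead use a static heavy-path (Sleator--Tarjan) decomposition and level each vertex by its light-depth $\ell(v)$, setting $Y_j=\set{v:\ell(v)=j-1}$. Your route buys a cleaner verification of the two required properties: all edges inside a class are automatically heavy, so $F[Y_j]$ sits inside the disjoint heavy paths, and the monotonicity of $\ell$ along root-directed paths immediately gives the ``same path'' condition; the paper's peeling argument establishes the same facts but has to reason about what the deletions leave behind in each round. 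Both constructions are linear-time and yield exactly $\ceil{\log n}$ classes.

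One small point: your fallback of ``collapsing the top level into its predecessor'' is both unnecessary and, as stated, not sound in general --- merging two adjacent light-depth levels can place a light edge inside a class, creating a vertex with two in-edges in $F[Y_j]$ and breaking the disjoint-path structure. Fortunately it never arises: the strict inequality you yourself derive, $s(u')\geq 1+2s(u)$, gives $n\geq 2^{\ell(v)+1}-1$, hence $\ell(v)\leq \ceil{\log n}-1$ in all cases (including $n$ a power of two), so the $\ceil{\log n}$ classes suffice without any merging.
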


Let $\set{Y_1,\ldots,Y_{\ceil{\log z}}}$ be the partition of $V(F)$ given by Claim~\ref{claim: partition the forest}. Then there is some index $1\leq x\leq \ceil{\log z}$, such that the total weight of all vertices in $Y_x$ is at least $|{\mset'}|/\ceil{\log z}\geq |\hmset|/(2\log n)$. 
Let $\dset'\subseteq \dset$ be the set of all discs $D^i$, whose corresponding vertex $v_i\in Y_x$, and let $\mset''\subseteq \mset'$ be the set of all demand pairs that belong to all such discs, that is, $\mset''=\bigcup_{D^i\in\dset'}\mset_i$. Consider now some vertex $v_i\in Y_x$. Notice that at most one child-vertex $v_{i'}$ of $v_i$ in the forest $F$ may belong to $Y_x$. If such a vertex $v_{i'}$ does not exist, then all demand pairs in $\mset_i$ are called \emph{right pairs}. Otherwise, for each demand pair $(s,t)\in \mset_i$, if $s$ appears to the left of $\sigma_1(\mset_{i'})$ on $R^*$, then we call it a \emph{left pair}, and otherwise we call it a right pair. Therefore, all demand pairs in $\mset''$ are now partitioned into right pairs and left pairs. We assume without loss of generality that at least half of the demand pairs in $\mset''$ are left pairs; the other case is symmetric. Let $\mset'''\subseteq \mset''$ be the set of all left pairs, so $|\mset'''|\geq |\mset''|/2\geq |\hmset|/(4\log n)$.

\begin{claim}\label{claim: the final shadow property}
The squares in $\qset$ have the $r$-shadow property with respect to $\mset'''$, for $r=2^{23}\log^4n$.
\end{claim}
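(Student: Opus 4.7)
The plan is to bound $L_{\mset'''}(Q)$ for every square $Q\in\qset$ by first localizing the relevant demand pairs to a single $\mset_i$ and then invoking the termination condition of the partitioning algorithm. Assume $Q$ is active with respect to $\mset'''$ (else $L_{\mset'''}(Q)=0$). Since $\dset$ respects the active squares, there is a unique disc $D^i\in\dset$ with $Q\subseteq R_{\dset}(D^i)$, and since each $\mset_j$ is routed within $R_{\dset}(D^j)$, the only pairs of $\mset'''$ with destinations in $Q$ must come from $\mset_i$. For this to be nonempty we must have $v_i\in Y_x$ with a child $v_{i'}\in Y_x$ in $F$ (otherwise $\mset_i$ contributes only right pairs to $\mset''$, hence nothing to $\mset'''$).

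Let $(s^L,t^L),(s^R,t^R)\in\mset'''$ be the pairs with destinations in $Q$ whose sources are leftmost and rightmost on $R^*$; then $J_{\mset'''}(Q)=[s^L,s^R]$. By the definition of left pairs, both sources lie on $\sigma_1(D^i)$ strictly to the left of $\sigma_1(D^{i'})$, so the entire interval $[s^L,s^R]$ is contained in $\sigma_1(D^i)$ and disjoint from $\sigma_1(D^{i'})$. A case analysis over the position of any other $v_j\in Y_x$ in the forest $F$ shows that no source of $\mset_j$ with $j\neq i$ can lie in $[s^L,s^R]$: if $v_j$ is an ancestor of $v_i$ in $F$ then the sources of $\mset_j$ lie in $\sigma_1(D^j)$ with $\sigma_1(D^i)\subseteq\sigma_1(D^j)$ removed from their source region; if $v_j$ is incomparable with $v_i$ in $F$ then $D^j$ and $D^i$ are disjoint, so $\sigma_1(D^j)\cap\sigma_1(D^i)=\emptyset$; and if $v_j$ is a descendant of $v_i$ in $Y_x$ then $v_j$ lies below $v_{i'}$ on the $Y_x$-path through $v_i$, so $\sigma_1(D^j)\subseteq\sigma_1(D^{i'})$ is disjoint from $[s^L,s^R]$. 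Therefore $L_{\mset'''}(Q)$ is at most the number of pairs of $\mset_i$ whose sources lie in $[s^L,s^R]$.

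To finish, let $\mset_i^{\mathrm{orig}}$ denote the version of $\mset_i$ before the (possible) discarding of boundary pairs; the no-candidate-pair condition from the partitioning algorithm applies to $\mset_i^{\mathrm{orig}}$. Since $(s^L,t^L),(s^R,t^R)\in\mset_i^{\mathrm{orig}}$ have destinations in the same square $Q$, either $N_{\mset_i^{\mathrm{orig}}}(t^L,t^R)\leq 2^{20}d\log^3 n$, in which case $L_{\mset'''}(Q)\leq 2^{20}d\log^3 n\leq rd$, or $N'_{\mset_i^{\mathrm{orig}}}(t^L,t^R)\leq 2^{20}d\log^3 n$. In the second case, if $|\mset_i^{\mathrm{orig}}|>2^{23}d\log^4 n$ then the boundary pairs were discarded, so neither $s^L$ nor $s^R$ lies among the $2^{20}d\log^3 n$ leftmost or rightmost sources of $\mset_i^{\mathrm{orig}}$; this forces $N'_{\mset_i^{\mathrm{orig}}}(t^L,t^R)\geq 2\cdot 2^{20}d\log^3 n$, a contradiction. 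Thus $|\mset_i^{\mathrm{orig}}|\leq 2^{23}d\log^4 n=rd$, and $L_{\mset'''}(Q)$ is trivially bounded by $|\mset_i^{\mathrm{orig}}|\leq rd$.

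The main obstacle is the careful bookkeeping around which version of $\mset_i$ the no-candidate-pair condition applies to, combined with ensuring that the geometric argument in the middle paragraph correctly excludes every kind of disc in $\dset$ via the nesting and forest structure; in particular, the second case ($N'$ small) is the main subtlety and is exactly what forces the seemingly post-hoc boundary-pair definition to pay off.
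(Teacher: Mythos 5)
Your proof is correct and takes essentially the same route as the paper's: you localize all sources in the shadow of $Q$ to the single set $\mset_i$ using the left-pair structure together with the $Y_x$-path decomposition of the disc forest, and then bound the count via the no-candidate-pair termination condition combined with the boundary-pair removal. The only difference is presentational — you argue directly, with an explicit case analysis over ancestor/incomparable/descendant discs and a dichotomy on $N$ versus $N'$, where the paper packages the same facts as a proof by contradiction — so the substance is unchanged.
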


Notice that, assuming the claim is correct, from Observation~\ref{obs: boosting shadow}, there is a subset $\hmset'\subseteq \mset'''$ of at least $\frac{|\mset'''|}{2^{25}\log^4n}\geq\frac{|\hmset|}{2^{27}\log^5n}\geq\frac{|\hmset|}{\log^6n}$ demand pairs, such that all squares in $\qset$ have the $1$-shadow property with respect to $\hmset'$ (we have used the assumption that $n$ is large enough, so $\log n\geq 2^{27}$). It now remains to prove Claim~\ref{claim: the final shadow property}.

\begin{proofof}{Claim~\ref{claim: the final shadow property}}
Assume otherwise, and let $Q\in \qset$ be some square that does not have the $r$-shadow property with respect to $\mset'''$. Since $Q$ must be an active square, there is some disc $D^i\in \dset'$, such that the square $Q$ belongs to $R_{\dset}(D^i)$. Consider the shadow $J_{\tmset'''}(Q)$ of $Q$, and let $s,s'$ be its left and right endpoints respectively. Then there are two demand pairs $(s,t),(s',t')\in \mset'''$, with $t,t'\in Q$, such that at least $rd$ vertices of $S(\mset''')$ lie between $s$ and $s'$ on $R^*$. Let $S'$ be the set of all these source vertices, and let $\nset\subseteq\mset'''$ be their corresponding demand pairs. We claim that $\nset\subseteq \mset_i$. Indeed, assume that there is some demand pair $(s^*,t^*)\in \nset$ that does not belong to $\mset_i$. Then it must belong to some set $\mset_j$, where $v_j$ is a descendant of $v_i$ in the forest $F$. This can only happen if $v_j\in Y_x$. Moreover, the segment $\sigma_1(D^j)$ must be contained in the shadow $L_{\tmset''}(Q)$. Let $v_{i'}$ be the child vertex of $v_i$ that lies in $Y_x$ (such a vertex must exist due to the vertex $v_j\in Y_x$). Then $v_j$ is also the descendant of $v_{i'}$, and $D^j\subseteq D^{i'}\subseteq D^i$. In particular, the segment $\sigma_1(D^{i'})$ must also be contained in the shadow $L_{\tmset''}(Q)$. But then $(s,t)$ is a left pair, while $(s',t')$ is a right pair, which is impossible, since we have discarded all the right pairs. We conclude that $\nset\subseteq \mset_i$.

Since $|\mset_i|\geq |\nset|>rd=2^{23}d\log^4n$, we have deleted the boundary pairs from the set $\mset_i$, and so in particular, $(s,t)$ and $(s',t')$ are not boundary pairs. It follows that $(s,t)$ and $(s',t')$ was a valid candidate pair, and we should have continued our partitioning algorithm.
\end{proofof}

\label{---------------------------------------sec: existence of coloring----------------------}
\section{The Existence of Coloring}\label{sec: existence of coloring}
In this section we prove Theorem~\ref{thm: existence of perfect set}. Recall that we assume that we are given a lower bound $\opt$ on the value of the optimal solution; a hierarchical system $\thset=(\qset_1,\ldots,\qset_{\rho})\in \fset$, and a sequence $L=(\ell_1,\ldots,\ell_{\rho})$ of integers as in Theorem~\ref{thm: partition of R1}. As before, we denote by $(\jset_1,\ldots,\jset_{\rho})$ the $L$-hierarchical decomposition of $R^*$. From Theorem~\ref{thm: partition of R1}, there is a set $\tmset^{**}$ of demand pairs, with $|\tmset^{**}|\geq \opt/2^{O(\sqrt{\log n}\log \log n)}$, such that for all demand pairs in $\tmset^{**}$, their destinations belong to $\thset$, so in particular $\tmset^{**}\subseteq \tmset$. Moreover, all demand pairs in $\tmset^{**}$ can be routed via node-disjoint paths in $G$,  
 all squares in $\bigcup_{h=1}^{\rho}\qset_h$ have the $1/\eta^2$-shadow property with respect to $\tmset^{**}$, and
 the set $\tmset^{**}$ of demand pairs is compatible with the sequence $L$. Recall that the latter means that for each $1\leq h\leq \rho$, for every level-$h$ interval $I\in \jset_h$ of $R^*$, either $I\cap S(\tmset^{**})=\emptyset$, or $d_h/(16\eta)\leq |S(\tmset^{**})\cap I|\leq d_h/4$.
 
 For each $1\leq h\leq \rho$, for every level-$h$ square $Q\in \qset_h$, let $\hmset(Q)\subseteq \tmset^{**}$ denote the set of all demand pairs whose destinations lie in $Q$.
 We use the following simple observation.
 
 \begin{observation}\label{obs: squares spanning few intervals}
 For each $1\leq h\leq \rho$, for every level-$h$ square $Q\in \qset_h$, the source vertices of the demand pairs in $\hmset(Q)$ belong to at most two level-$h$ intervals in $\jset_h$.
 \end{observation}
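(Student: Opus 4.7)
The plan is a short proof by contradiction that combines the $1/\eta^2$-shadow property of $Q$ with the compatibility of $\tmset^{**}$ with the sequence $L$. Concretely, I would fix a level-$h$ square $Q \in \qset_h$ and, toward contradiction, suppose that the sources of $\hmset(Q)$ lie in $m \geq 3$ distinct level-$h$ intervals; let $I_1, I_2, \ldots, I_m$ denote these intervals, listed in the left-to-right order in which they appear on $R^*$.

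The first key step is to observe that the shadow $J_{\tmset^{**}}(Q)$ must contain every interval $I_j$ with $2 \leq j \leq m-1$ entirely. Indeed, by definition $J_{\tmset^{**}}(Q)$ contains every source of every pair in $\hmset(Q) \subseteq \tmset^{**}$, so in particular it contains some source in $I_1$ and some source in $I_m$. Since the intervals of $\jset_h$ are a partition of $R^*$ into consecutive blocks, the sub-path of $R^*$ joining these two sources swallows all intervals strictly between $I_1$ and $I_m$.

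The second key step is to use compatibility: each $I_j$ contains at least one source of $\hmset(Q) \subseteq \tmset^{**}$, so $I_j \cap S(\tmset^{**}) \neq \emptyset$, and compatibility of $\tmset^{**}$ with $L$ yields $|I_j \cap S(\tmset^{**})| \geq d_h/(16\eta)$. Summing over $j = 2, \ldots, m-1$ (which are entirely inside the shadow) gives
\[
L_{\tmset^{**}}(Q) \;\geq\; (m-2)\cdot \frac{d_h}{16\eta}.
\]
Combining this with the $1/\eta^2$-shadow property from Corollary~\ref{cor: shadow property of all squares}, namely $L_{\tmset^{**}}(Q) \leq d_h/\eta^2$, yields $(m-2)/(16\eta) \leq 1/\eta^2$, i.e., $m - 2 \leq 16/\eta$. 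Since $\eta = 2^{\lceil\sqrt{\log n}\rceil} \gg 16$ under the standing assumption that $n$ is large enough, we obtain $m - 2 < 1$, forcing $m \leq 2$ and contradicting $m \geq 3$.

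I do not anticipate any real obstacle here: the statement is essentially a direct bookkeeping consequence of the $1/\eta^2$-shadow bound (which is already large enough to cover only $o(1)$ middle intervals of size $\Omega(d_h/\eta)$) together with the lower bound $d_h/(16\eta)$ on the number of $\tmset^{**}$-sources in any non-empty level-$h$ interval. The only small care point is the first step, to make sure the middle intervals are fully contained in the shadow, which is immediate because $\jset_h$ is a partition of $R^*$ into consecutive vertex intervals.
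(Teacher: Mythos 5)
Your proof is correct and follows essentially the same route as the paper: the middle interval(s) between the leftmost and rightmost source-containing intervals lie entirely inside the shadow of $Q$, compatibility with $L$ forces each such interval to contribute at least $d_h/(16\eta)$ sources, and this contradicts the $1/\eta^2$-shadow bound since $\eta>16$. The paper phrases it directly with three intervals rather than your general count $m$, but the argument is the same.
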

 
 \begin{proof}
 Assume otherwise, and let $I',I'',I'''\in \jset_h$ be three distinct level-$h$ intervals that  each contain at last one source vertex of $\hmset(Q)$. Assume that $I',I'',I'''$ lie on $R^*$ in this left-to-right order. Then $I''$ is contained in the shadow $L_{\hmset}(Q)$. From the $1/\eta^2$-shadow property of $Q$, the length of this shadow must be at most $d_h/\eta^2$. But from our construction, the length of the shadow is at least $|S(\tmset^{**})\cap I''|\geq d_h/(16\eta)> d_h/\eta^2$, a contradiction.
 \end{proof}
 
 We start with the set $\hmset_0=\tmset^{**}$ of demand pairs, and perform $\rho$ iterations. In iteration $h$, the input is some subset $\hmset_{h-1}\subseteq \hmset_0$ of demand pairs, and an assignment of level-$(h-1)$ colors to all level-$(h-1)$ squares, that has the following property: for each demand pair $(s,t)\in \hmset_{h-1}$, if $Q\in \qset_{h-1}$ is the level-$(h-1)$ square containing $t$, and $I\in \jset_{h-1}$ is the level-$(h-1)$ interval containing $s$, then $Q$ and $I$ are assigned the same level-$(h-1)$ color, or in other words, $Q$ is assigned the level-$(h-1)$ color $c_{h-1}(I)$.
 Our starting point is the set $\hmset_0$ of demand pairs. We assume that the level-$0$ color of all vertices is $c_0$, so, letting $\tilde G$ be a unique level-$0$ square, the invariant holds.
 
 Consider now some iteration $h$, together with the corresponding set $\hmset_{h-1}\subseteq \tmset^{**}$ of demand pairs. Consider some level-$h$ square $Q\in \qset_h$, and assume that its parent-square is assigned the level-$(h-1)$ color $c_{h-1}$. Let $\hmset'(Q)\subseteq \hmset_{h-1}$ be the set of all demand pairs, whose destinations lie in $Q$. From the above observation, there are at most two level-$h$ intervals of $R^*$ (that we denote by $I$ and $I'$, where possibly $I=I'$) that contain the source vertices of $\hmset'(Q)$. Assume w.l.o.g. that $I$ contains at least half of these source vertices. From our invariant, all vertices of the parent-interval of $I$ are also assigned the level-$(h-1)$ color $c_{h-1}$ - same as the color assigned to the parent-square of $Q$. We then assign to $Q$ the level-$h$ color $c_h(I)$ - that is, the color associated with the interval $I$. We also discard from $\hmset'(Q)$ all demand pairs except those whose sources are contained in $I$, obtaining a new set $\hmset''(Q)\subseteq \hmset'(Q)$ of demand pairs, containing at least $|\hmset''(Q)|/2$ demand pairs. We then set $\hmset_h=\bigcup_{Q\in \qset_h}\hmset''(Q)$.
 
 Let $\tmset'=\hmset_{\rho}$ be the set of demand pairs obtained after the last iteration. Then it is immediate to verify that:
 
 \[|\tmset'|\geq \frac{|\tmset^{**}|}{2^{\rho}}\geq \frac{|\tmset^{**}|}{2^{\sqrt{\log n}}}\geq \frac{\opt}{2^{O(\sqrt{\log n}\log \log n)}}.\]
 
 Clearly, for each demand pair $(s,t)\in \tmset'$, both $s$ and $t$ are assigned the same level-$\rho$ color.
 Finally, fix some level $1\leq h\leq \rho$, and some level-$h$ color $c_h(I)$, that is associated with some level-$h$ interval $I$. It remains to verify that at most $d_h$ demand pairs are assigned color $c_h(I)$. This is clearly true, since all such demand pairs have their source vertices lying on $I$, and, since $\tmset^{**}$ is compatible with $L$ from Theorem~\ref{thm: partition of R1}, we are guaranteed that the number of such demand pairs is at most $d_h/4$.

\label{------------------------------------sec: snakes--------------------------------------}
\section{Finding the Routing}\label{sec: snakes}

The goal of this section is to prove Theorem~\ref{thm: snake-like routing}. 
Recall that we are given as input a hierarchical system $\thset=(\qset_1,\ldots,\qset_{\rho})\in \fset$ of squares and a hierarchical L-partition $(\jset_1,\ldots,\jset_{\rho})$ of the row $R^*$ of $G'$.
We are also given a coloring $f$ of $\thset$ by the corresponding hierarchical system  $\cset$ of colors, and a set $\tmset'\subseteq \tmset$ of at most $d_1$ demand pairs that are perfect for $f$.
In particular, we are guaranteed that for each demand pair $(s,t)\in \tmset'$, both $s$ and $t$ are assigned the same level-$\rho$ color, and, moreover, for every $1 \leq h \leq \rho$, for every level-$h$ color $c_h$, at most $d_h$ demand pairs in $\tilde \mset'$ are assigned color $c_h$.
Recall that all destination vertices appear within distance at least $\opt/\eta>4d_1$ from the grid boundary. For consistency, we define a single level-$0$ square $Q_0$, as follows: let $\rset$ contain all but the $d_1$ top and the $d_1$ bottom rows of the grid $G$, and let $\wset$ contain all but the first $d_1$ and the last $d_1$ columns of the grid $G$. Then $Q_0$ is the sub-grid of $G$ spanned by the rows in $\rset$ and the columns in $\wset$. We discard from $\qset_1$ all level-$1$ squares that are not contained in $Q_0$: such squares do not contain any destination vertices. We also discard from all other sets $\qset_h$ all descendant-squares of the level-$1$ squares that were discarded. Abusing the notation, we denote the resulting hierarchical system of squares by $\thset$. For convenience, we define a unique level-$0$ color $c_0$, that serves as a parent of every level-$1$ color, and we assign this color to $Q_0$.

Assume w.l.o.g. that $\tmset'=\set{(s_1,t_1),\ldots,(s_{z},t_{z})}$, where $s_1,\ldots,s_{z}$ appear on $R^*$ in this left-to-right order.
We define a set $\hmset=\set{(s_i,t_i)\mid i\equiv 1 \mod 2 \eta^3}$, of $\Omega(|\tilde \mset'|/\eta^3)$ demand pairs. In the remainder of this section, we show an efficient algorithm for routing all demand pairs in $\hmset$.
Consider a level-$h$ color $c_h$, where $1 \leq h \leq \rho$.
Let $N(c_h)$ be the number of demand pairs $(s,t)\in \hmset$, such that $s$ and $t$ have level-$h$ color $c_h$. 
Since $d_h = \eta^{\rho-h+3}$, $N(c_h) \leq \ceil{d_h/(2\eta^3)} \leq \eta^{\rho-h}$, and $|\hmset|\leq d_1/(2\eta^3)$.
Notice that for every level-$\rho$ color $c_\rho$, $N(c_\rho) \leq 1$ holds.

We say that a level-$h$ square $Q\in \qset_h$ is \emph{empty} iff no demand pair in $\hmset$ has a destination in $Q$.
As before, for each square $Q$, we let $\Gamma(Q)$ denote its boundary. 
For each level $1 \leq h \leq \rho$ and for each level-$h$ square $Q_h \in \qset_{h}$, let $Q_h^+$ be an extended square, obtained by adding a margin of $d_{h}/\eta$ on all sides of $Q_h$; we also let $Q_0^+$ be obtained from $Q_0$ by adding a margin of $d_1/\eta$ around it. Notice that the distance between $\Gamma(Q_0^+)$ and $\Gamma(G)$ remains at least $d_1/2$. Notice also that for each $1\leq h\leq \rho$, for each pair $Q,\hat Q\in \qset_h$ of level $h$ squares, $d(Q^+,\hat Q^+)\geq d_h/2$.

\begin{figure}[h]
\center
\scalebox{0.28}{\includegraphics{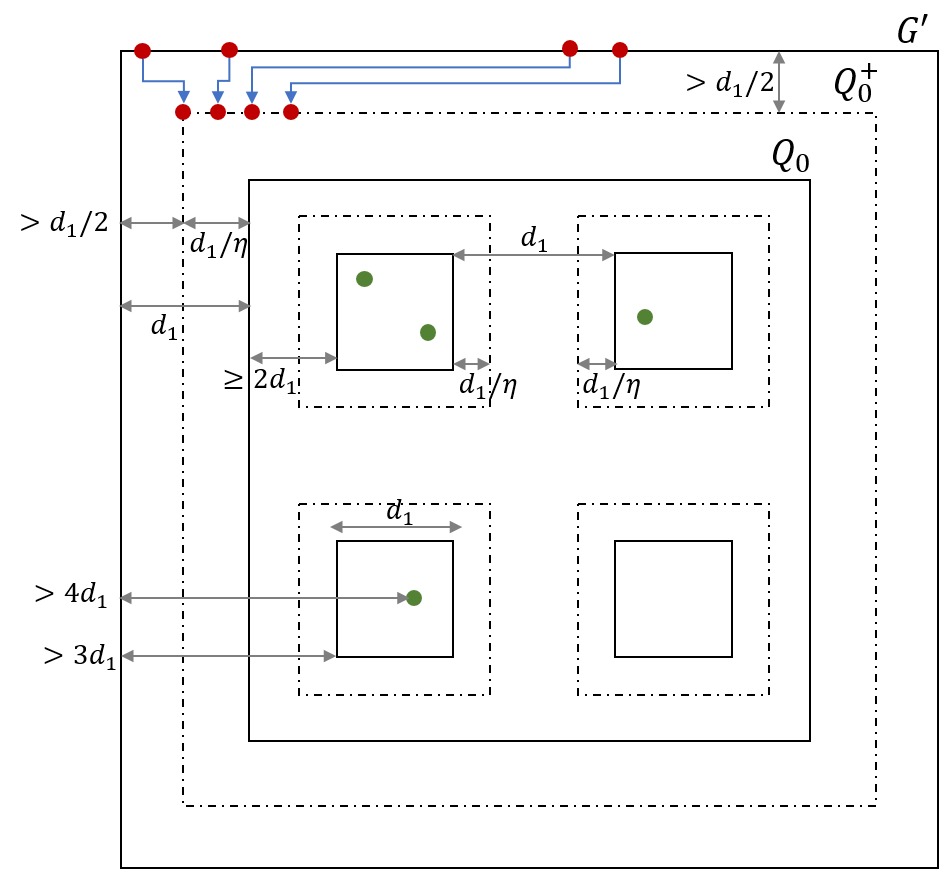}}
\caption{Square $Q_0$ inside $G$. Original squares $Q$ of $\hset$ are represented by solid lines and the corresponding augmented squares $Q^+$ are represented by dashed lines.
The sources $s_j$ of $\hat \mset$ are represented by red nodes on the top boundary of $G'$, and their copies $s'_j$ are on the top row of $Q_0^+$. The paths of $\pset_0$ are shown in blue.
\label{fig: q0 and g}}
\end{figure}

Let $I'_0$ be the set of the $3 \cdot N(c_0)$ leftmost vertices on the top row of $Q_0^+$.
We map the vertices of $S(\hmset)$ to the vertices of $I'_0$ as follows: if $s$ is the $j$th vertex from the left in $S(\hmset)$, then it is mapped to the $(3j)$th leftmost vertex of $I'_0$, that we denote by $s'$.
For every level $1\leq h\leq \rho$, for every level-$h$ interval $I_h\in \iset_h$ of $R^*$, we can now naturally define a corresponding interval $I'_h$ of $I'_0$ as the smallest interval containing all vertices in $\set{s'_j\mid s_j\in I_h}$. We then let $\iset'_h=\set{I'_h\mid I_h\in \iset_h}$, so that $(\iset'_1,\ldots,\iset'_{\rho})$ can be viewed as a hierarchical partition of $I'_0$.
The parent-child relationship between the new intervals is defined exactly as before. Let $\iset'_0=\set{I'_0}$.

The following observation is immediate (see Figure \ref{fig: q0 and g}):
\begin{observation} \label{obs: paths outside q0+}
  There is an efficient algorithm to compute a set $\pset_0$ of node-disjoint paths, connecting every source $s_j \in \sset(\hat \mset)$ to its corresponding copy $s'_j \in I'_0$, such that the paths in $\pset_0$ are internally disjoint from $Q_0^+$.
\end{observation}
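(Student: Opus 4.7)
The plan is to exploit the corridor of $d_1 - d_1/\eta$ rows strictly above $Q_0^+$ together with the $d_1 - d_1/\eta$ columns strictly to the left of $Q_0^+$. Setting $K := |\hmset|$, note that $K \leq d_1/(2\eta^3)$ is much smaller than $d_1 - d_1/\eta$, so there is ample room. Index the sources $s_1,\ldots,s_K$ in left-to-right order along $R^*$, so that the source columns satisfy $c_1 < c_2 < \cdots < c_K$ with $c_j \geq j$; by construction of the copies, the target columns $c'_1 < c'_2 < \cdots < c'_K$ are ordered identically, each lying in the interval $[d_1 - d_1/\eta + 1, d_1 - d_1/\eta + 3K]$ and in particular satisfying $c'_j > K$.

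For each $j \in \set{1,\ldots,K}$, I will build $P_j$ as the concatenation of five axis-parallel segments of $G$: (i) descend column $c_j$ from $s_j$ to row $2j$; (ii) traverse row $2j$ from column $c_j$ leftward to column $j$; (iii) descend column $j$ from row $2j$ to row $d_1 - d_1/\eta - j$; (iv) traverse row $d_1 - d_1/\eta - j$ from column $j$ rightward to column $c'_j$; and (v) descend column $c'_j$ to $s'_j$ on row $d_1 - d_1/\eta + 1$. Since $j \leq K < d_1 - d_1/\eta + 1$, the ``parking column'' $j$ lies strictly to the left of $Q_0^+$, and all intermediate rows lie strictly above the top row of $Q_0^+$. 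Hence $P_j$ is internally disjoint from $Q_0^+$, with its only vertex in $Q_0^+$ being the endpoint $s'_j$ itself.

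The main obstacle, and the step requiring the most care, is verifying that the $P_j$ are pairwise node-disjoint; the plan is to handle this by direct case analysis, comparing each of the five segments of $P_j$ against each of the five segments of $P_{j'}$ for $j < j'$. Horizontal segments of different paths lie on different rows (the ``upper'' tracks $2j$ and the ``lower'' tracks $d_1 - d_1/\eta - j$ are all distinct across $j$), and vertical descents use the columns $c_j$, $j$, and $c'_j$, no two of which coincide across different indices thanks to the strict orderings $c_1 < c_2 < \cdots$ and $c'_1 < c'_2 < \cdots$ combined with $c_{j'}, c'_{j'} > K \geq j$ for $j' > j$. The key quantitative input throughout is the inequality $2K < d_1 - d_1/\eta - K$ (which follows from $K \leq d_1/(2\eta^3)$ and $\eta \geq 2$), cleanly separating the ``upper'' rows from the ``lower'' rows and from the endpoint row, and reducing all intersection checks to routine bookkeeping with no fundamental difficulty.
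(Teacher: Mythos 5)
Your construction is correct: the explicit ``two-track plus parking column'' routing stays in the margin of width $d_1-d_1/\eta\gg 3K$ above and to the left of $Q_0^+$, touches $Q_0^+$ only at the endpoints $s'_j$, and the row/column bookkeeping (using $c_1<\cdots<c_K$, $c'_1<\cdots<c'_K$, $c_j\ge j$, and $3K<d_1-d_1/\eta$) does yield pairwise node-disjointness. This is essentially the same argument the paper has in mind — it declares the observation immediate and illustrates exactly this margin routing in Figure~\ref{fig: q0 and g} — so you have simply supplied the routine details.
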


In order to define the routing of all demand pairs in $\hat \mset$, we use special a structure called a \emph{snake}, which we define next.
Recall that,
given a set $\rset$ of consecutive rows of $G$ and a set $\wset$ of consecutive columns of $G$, we denoted by $\Y(\rset,\wset)$ the subgraph of $G$ spanned by the rows in $\rset$ and the columns in $\wset$; we refer to such a graph as a \emph{corridor}.

Let $\Y=\Y(\rset,\wset)$ be any corridor.
Let $R'$ and $R''$ be the first and the last row of $\rset$ respectively, and let $W'$ and $W''$ be the first and the last column of $\wset$ respectively.
Each of the four paths $\Y\cap R',\Y\cap R'',\Y\cap W'$ and $\Y\cap W''$ is called a \emph{boundary edge} of $\Y$, and their union is called the \emph{boundary} of $\Y$. The width of the corridor $\Y$ is $w(\Y)=\min\set{|\rset|,|\wset|}$.
We say that two corridors $\Y,\Y'$ are \emph{internally disjoint}, iff every vertex $v\in \Y\cap \Y'$ belongs to the boundaries of both corridors.
We say that two internally disjoint corridors $\Y,\Y'$ are \emph{neighbors} iff $\Y\cap \Y'\neq \emptyset$.
 
 We are now ready to define snakes. A snake $\yset$ of length $z$ is a sequence $(\Y_1,\Y_2,\ldots,\Y_{z})$ of $z$ corridors that are pairwise internally disjoint, such that for all $1\leq z',z'' \leq z$, $\Y_{z'}$ is a neighbor of $\Y_{z''}$ iff $|z'-z''|=1$. The width of the snake is defined to be the minimum of $\min_{1\leq z'<z}\set{|\Y_{z'}\cap \Y_{z'+1}|}$, and $\min_{1\leq z'\leq z}\set{w(\Y_{z'})}$.
 We say that a node $u \in \yset$ iff $u \in \bigcup_{z'=1}^z \Y_{z'}$.
 Notice that given a snake $\yset$, there is a unique simple cycle $\sigma(\yset)$ in $G$ such that all the nodes $u \in \yset$ lie on or inside $\sigma(\yset)$ and all other nodes of $G$ lie outside it.
We call $\sigma(\yset)$ the \emph{boundary} of $\yset$.
We say that a pair of snakes $\yset_1$ and $\yset_2$ are \emph{internally disjoint} iff there is a boundary edge $\Sigma_1$ of some corridor of $\yset_1$ and a boundary edge $\Sigma_2$ of some corridor of $\yset_2$ such that for every vertex $u$ with $u \in \yset_1$ and $u \in \yset_2$, $u$ must belong to $V(\Sigma_1) \cap V(\Sigma_2)$.
 We use the following simple claim, whose proof can be found, e.g. in~\cite{NDP-hardness}.

 \begin{claim}\label{claim: routing in a snake}
Let $\yset=(\Y_1,\ldots,\Y_{z})$ be a snake of width $w$, and let $A$ and $A'$ be two sets of vertices with $|A|=|A'|\leq w-2$, such that the vertices of $A$ lie on a single boundary edge of $\Y_1$ and the vertices of $A'$ lie on a single boundary edge of $\Y_{z}$. There is an efficient algorithm, that, given the snake $\yset$, and the sets $A$ and $A'$ of vertices as above, computes a set $\qset$ of node-disjoint paths contained in $\bigcup_{z'=1}^{z}\Y_{z'}$, that  connect every vertex of $A$ to a distinct vertex of $A'$. 
 \end{claim}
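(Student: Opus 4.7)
The plan is to prove Claim~\ref{claim: routing in a snake} by induction on the length $z$ of the snake, with the base case being a standard routing statement inside a single rectangular corridor.

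For the base case $z = 1$, we have a single corridor $\Y_1$, which is a rectangular sub-grid of dimensions at least $w \times w$, together with sets $A, A'$ of $k \leq w-2$ terminals lying on two (not necessarily distinct) boundary edges of $\Y_1$. The task is to find $k$ node-disjoint paths in $\Y_1$ joining $A$ to $A'$ under \emph{some} matching. A small case analysis on the relative position of the two boundary edges handles this. If they are parallel (opposite sides of the rectangle), sort $A$ and $A'$ in the same direction along their respective edges, match them by position, and route each path monotonically along a single column (or row), so the paths are trivially disjoint. If they are perpendicular, sort $A$ along its edge and $A'$ along its edge, match by position, and route each path as an $L$-shape whose horizontal leg lies in the row of its $A$-endpoint and whose vertical leg lies in the column of its $A'$-endpoint; a direct check shows that the resulting L-shapes are pairwise node-disjoint. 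The hypothesis $k \leq w-2$ provides more than enough distinct rows and columns for this construction (and avoids collisions with terminals sitting at corners of $\Y_1$).

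For the inductive step, let $\Sigma = \Y_1 \cap \Y_2$. By the definition of the width of $\yset$, $|V(\Sigma)| \geq w$ and $V(\Sigma)$ lies on a single boundary edge of each of $\Y_1$ and $\Y_2$. Pick any subset $B \subseteq V(\Sigma)$ with $|B| = k := |A| \leq w-2$. Apply the base case inside $\Y_1$ to obtain a set $\pset_1$ of $k$ node-disjoint paths connecting $A$ to $B$, and apply the induction hypothesis to the shorter snake $(\Y_2,\ldots,\Y_z)$ with entry set $B$ and exit set $A'$ to obtain a set $\pset_2$ of $k$ node-disjoint paths contained in $\bigcup_{z' \geq 2}\Y_{z'}$. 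Concatenating each path of $\pset_1$ with the path of $\pset_2$ that shares its endpoint in $B$ yields the required set $\qset$.

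Disjointness across corridors is immediate from the snake definition: non-consecutive corridors $\Y_i, \Y_j$ with $|i-j|\geq 2$ are disjoint, and consecutive corridors $\Y_i, \Y_{i+1}$ are internally disjoint, so they share only the interface on which we have placed our matching vertices. Since $\pset_1 \subseteq \Y_1$ and $\pset_2 \subseteq \bigcup_{z'\geq 2}\Y_{z'}$, the only common vertices between the two families are precisely the vertices of $B$, at which the paths are concatenated. The main obstacle is the single-corridor base case, which requires the $L$-shape construction to be verified with some care when the two boundary edges are perpendicular and the terminals are in arbitrary positions; the slack guaranteed by $|A| \leq w-2$ is what makes that construction go through. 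This is the standard grid-routing lemma cited in the excerpt via \cite{NDP-hardness}.
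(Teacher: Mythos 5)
The paper never proves Claim~\ref{claim: routing in a snake} itself (it is cited from~\cite{NDP-hardness}), and your corridor-by-corridor induction is the natural route. However, as written your argument has a real hole at the concatenation step. You assert that the only vertices shared by $\pset_1\subseteq \Y_1$ and $\pset_2\subseteq\bigcup_{z'\geq 2}\Y_{z'}$ are the vertices of $B$, but the two regions share the entire interface $\Sigma=\Y_1\cap\Y_2$, which has at least $w>|B|$ vertices. Nothing in your base case stops a path of $\pset_1$ from passing through a vertex of $\Sigma\setminus B$ on the way to its $B$-endpoint, and nothing in the induction hypothesis (which is just the claim itself) stops a path of $\pset_2$ from using that same vertex, since $\Sigma$ lies on the boundary of $\Y_2$ and paths in $\Y_2$ may touch it. To make the induction close, you must prove a strengthened statement, e.g.\ that every path meets the boundary edge of $\Y_1$ containing $A$ only in its $A$-endpoint and the boundary edge of $\Y_z$ containing $A'$ only in its $A'$-endpoint; then $\pset_1$ meets $\Sigma$ only in $B$, the strengthened hypothesis applied to $(\Y_2,\ldots,\Y_z)$ gives the same for $\pset_2$, and concatenation is legitimate. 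The base-case construction must then be checked to deliver this extra property, and you should also justify (it is true, but needs a line) that $\Sigma$ lies inside a single boundary edge of each of $\Y_1,\Y_2$, and handle the degenerate situation where $A$ overlaps $\Sigma$.

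The single-corridor lemma itself is also not right as described. For opposite edges, ``route each path monotonically along a single column'' only makes sense when matched terminals share a column; the standard fix (descend, jog across a private row, descend again) works but the jog rows must be assigned with care -- a naive assignment by index lets one path's vertical segment hit another path's horizontal jog (try $A$ in columns $1,2$ on the top edge, $A'$ in columns $10,11$ on the bottom edge, and its mirror image). For perpendicular edges, matching ``by position'' is actually the wrong matching: with $A$ sorted left-to-right and $A'$ top-to-bottom the chords interleave on the outer face, so no disjoint routing exists for that pairing; you need the reversed matching (leftmost source to bottommost destination), under which the L-shapes are disjoint. Finally, the case in which $A$ and $A'$ (or, inside the induction, $A$ and $\Sigma$) lie on the \emph{same} boundary edge is not treated at all; there ``sort and match by position'' can produce interleaved pairs such as $a_1<a_2<b_1<b_2$, which are unroutable, and one must instead take a non-crossing matching of the two sets along the edge (which always exists) and route the pair at nesting depth $j$ through the $j$-th row below that edge -- this is exactly where the slack $|A|\leq w-2$ is consumed. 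None of this is deep, but it is the actual content of the claim, so these constructions and checks need to be spelled out.
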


\begin{definition}
Let $0\leq h\leq \rho$ be a level, let $c_h\in \chi_h$ be a level-$h$ color, and let $\yset$ be a snake.  We say that a snake $\yset$ is a \emph{valid level-$h$ snake for color $c_h$} iff:

\begin{itemize}
\item $\yset\subseteq Q_0^+$;
\item $\yset$ has width at least $3 \cdot N(c_h)$;
\item $\yset$ contains the interval $I' \in \iset_h'$, corresponding to the interval $I\in \iset_h$ that was used in defining color $c_h$, as a part of the top boundary of its first corridor; and
\item for each non-empty level-$h$ square $Q\in \qset_h$, whose level-$h$ color is $c_h$, $Q^+$ is a corridor of $\yset$. 
\end{itemize}
\end{definition}

The following lemma is central to proving Theorem~\ref{thm: snake-like routing}.

\begin{lemma}\label{lemma: building snakes}
There is an efficient algorithm, that constructs, for each level $1\leq h\leq \rho$, for each level-$h$ color $c_h$, a valid level-$h$ snake $\yset(c_h)$ for this color, such that all snakes of the same level are disjoint.
\end{lemma}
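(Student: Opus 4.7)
The plan is to proceed by induction on $h$, constructing the level-$h$ snakes inside the level-$(h-1)$ snakes built at the previous stage. In the base case $h=0$, I would take $\yset(c_0)$ to be the single-corridor snake $(Q_0^+)$, whose width is $\ell'-2d_1/\eta$ and whose top boundary contains $I'_0$; both requirements of validity are trivially met and there is a single snake to worry about. Inductively, assume disjoint valid snakes $\yset(c_{h-1})$ have been built for every $c_{h-1}\in\chi_{h-1}$. Fix a parent color $c_{h-1}$ and let $c_h^1,\dots,c_h^r\in\chi_h$ be its children, whose intervals $I'_h(c_h^j)$ partition $I'_{h-1}(c_{h-1})$ from left to right. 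Child snakes with distinct parents live inside the disjoint parent snakes, so it suffices to construct $\yset(c_h^1),\dots,\yset(c_h^r)$ inside $\yset(c_{h-1})$ pairwise internally disjoint.

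I view $\yset(c_{h-1})$ as an alternating sequence of ``box'' corridors $Q'^+$, one for each non-empty level-$(h-1)$ square $Q'$ of color $c_{h-1}$, and ``connector'' corridors between consecutive boxes. In every connector I would split its width into $r$ parallel strips, with the $j$th strip of width $3N(c_h^j)$, and route the portion of $\yset(c_h^j)$ inside that connector through its strip. The total width required is $\sum_{j=1}^{r}3N(c_h^j)\le 3N(c_{h-1})\le w(\yset(c_{h-1}))$, so this fits. Every level-$h$ square $Q\in\qset_h$ of color $c_h^j$ lies inside a unique parent $Q'\in\qset_{h-1}$ of color $c_{h-1}$, and since $d_{h-1}/\eta=d_h$, we have $Q^+\subseteq Q'^+$, so the sub-squares that need to be visited are confined to the appropriate boxes.

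The remaining work is local to each box $Q'^+$: inside it every child snake must enter and leave along the strips reserved in the two incident connectors, must contain $Q^+$ as a corridor for every non-empty level-$h$ sub-square $Q\subseteq Q'$ of its own color, and must remain internally disjoint from its siblings. The spacing estimates are comfortable: the level-$h$ sub-squares inside $Q'$ belong to the $d_h$-canonical family $\qset_h$ and are pairwise at distance at least $d_h=\eta^{\rho-h+3}$; the margin of $Q'^+$ around $Q'$ has width $d_{h-1}/\eta=d_h$; while the total width consumed by all child snakes is at most $3N(c_{h-1})\le 3\eta^{\rho-h+1}\ll d_h$. Within each $Q'^+$ I would fix a canonical row-major ordering of its level-$h$ sub-squares and, for each child color $c_h^j$, thread a snake-like path that visits, in this order, exactly the sub-squares of color $c_h^j$ through $Q^+$, using the slack in $Q'^+$ to build the connecting corridors of width $3N(c_h^j)$ in the manner illustrated by the global routing in Figure~\ref{fig: global-routing}.

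The main obstacle will be verifying that these intra-box routings can be assembled with the strip routings in the connectors into a single internally disjoint family of level-$h$ snakes. Concretely, I will need to check that the strips are ordered consistently across the two connectors incident to each $Q'^+$ so that child $j$'s snake exits on the correct side, that the detours each child makes to pick up its own $Q^+$ corridors never force two siblings to cross, and that the junctions between a box $Q'^+$, a $Q^+$ corridor, and the adjacent strip actually realize a corridor of the required width $3N(c_h^j)$. All of these amount to a planar routing problem inside a $(d_{h-1}+2d_h)\times(d_{h-1}+2d_h)$ region whose total width budget is comfortably below the available slack, so feasibility will follow by a standard explicit construction once the planar layout (ordering of children along strips and the row-major sequencing of sub-squares inside each $Q'^+$) is fixed and shown to be compatible across adjacent boxes.
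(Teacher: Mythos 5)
Your proposal follows essentially the same route as the paper: induction on the level, reserving a width-$3N(c_h^j)$ strip for each child color inside the parent snake, interface intervals on the boundaries of the augmented squares, and a fixed snake-like traversal of the non-empty child squares inside each box, with feasibility resting on the same spacing bound $3N(c_{h-1})\ll d_h$. The assembly details you defer (consistent left-to-right ordering of strips and interfaces, non-crossing of sibling snakes, corridor widths at the junctions) are precisely what the paper carries out explicitly via the interface sub-paths $I_Q^i,J_Q^i$, the portals $\alpha_i(\hat Q),\beta_i(\hat Q)$, the five types of child-snakes, and the composable-snake concatenation, so your plan is correct.
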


We provide the proof of the lemma below, after completing the proof of Theorem~\ref{thm: snake-like routing} using it.
Fix some level-$\rho$ color $c_{\rho}$.
Recall that $\hmset$ contains at most one demand pair $(s,t)$, such that $s$ and $t$ are assigned the level-$\rho$ color $c_{\rho}$.
Recall that $s'\in I'_0$ is the vertex on the top boundary of $Q_0^+$ to which $s$ is mapped.
We route the pair $(s',t)$ inside the valid level-$\rho$ snake $\yset=\yset(c_{\rho})$ given by Lemma~\ref{lemma: building snakes}.
This snake is guaranteed to contain both the level-$\rho$ interval $I'\in \iset'(\rho)$ to which $s'$ belongs, and the non-empty level-$\rho$ square
$Q_{\rho}$ that contains $t$ (as the level-$\rho$ color of $Q_{\rho}$ must be $c_{\rho}$).
We select an arbitrary path $P$ connecting $s'$ to $t$ inside $\yset$.
In order to complete the routing, we combine these paths with the set $\pset_0$ of paths computed in Observation \ref{obs: paths outside q0+}.
As all level-$\rho$ snakes are disjoint, we obtain a valid routing of at least $|\hmset|\geq |\tmset'|/\eta^3 = |\tilde \mset'|/2^{O(\sqrt{\log n})}$ demand pairs. It now remains to prove Lemma~\ref{lemma: building snakes}.

 \begin{proofof}{Lemma~\ref{lemma: building snakes}}
The proof is by induction on $h$. For level $0$, we construct a single level-$0$ snake $\yset(c_0)$, consisting of a single corridor $Q_0^+$. Clearly, this is a valid level-$0$ snake for color $c_0$.
Assume now that the claim holds for some level $0\leq h<\rho$.
We prove that it holds for level $(h+1)$.

Fix a level-$h$ interval $I_h\in \iset_h$, and its corresponding level-$h$ color $c_h$.
Let $\yset=\yset(c_h)$ be the valid level-$h$ snake for $c_h$, given by the induction hypothesis.
Recall that for each non-empty level-$h$ square $Q$ of color $c_h$, $Q^+$ is a corridor of $\yset$.
The following claim completes the induction step and Lemma \ref{lemma: building snakes} follows.

\begin{claim} \label{clm: construct child snakes master}
 There is an efficient algorithm, that, given a valid level-$h$ snake $\yset(c_h)$ for a level-$h$ color $c_h$, constructs, for every child-color $c_{h+1}$ of $c_h$, a valid level-$(h+1)$ snake $\yset(c_{h+1})$, such that all resulting level-$(h+1)$ snakes are disjoint from each other and are contained in $\yset(c_h)$.
\end{claim}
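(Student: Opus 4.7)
The plan is to construct, inside $\yset(c_h) = (\Y_1, \ldots, \Y_z)$, one strictly disjoint sub-snake per child color of $c_h$, by mimicking simultaneously for all sub-snakes the global routing sketched in Figure~\ref{fig: global-routing}. Let $c^1, \ldots, c^k$ be the child colors of $c_h$ enumerated in the left-to-right order of their associated level-$(h+1)$ intervals $I^1, \ldots, I^k$ (which partition $I_h$), and let $\tilde I^1, \ldots, \tilde I^k \in \iset'_{h+1}$ be their copies partitioning $I'_h$, which by hypothesis sits on the top boundary of $\Y_1$. Sub-snake $\yset(c^j)$ will start with its first corridor having $\tilde I^j$ on its top boundary.

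\textbf{Parallel strip decomposition.} I partition each corridor $\Y_{j'}$ of $\yset(c_h)$ into $k$ column-disjoint parallel strips $T_{j'}^1, \ldots, T_{j'}^k$ of respective widths $3N(c^1), \ldots, 3N(c^k)$, aligned consistently with the partition of $I'_h$. Since $\sum_j N(c^j) = N(c_h)$, these strips exactly tile each corridor (total width $3N(c_h)$), and the alignment propagates across every corridor junction of $\yset(c_h)$ because the junction is a boundary edge of width at least $3N(c_h)$. Sub-snake $\yset(c^j)$ is assigned strip $T_{j'}^j$ inside each $\Y_{j'}$. Whenever $\Y_{j'}$ is a connector corridor, or a corridor $Q^+$ in which no non-empty child of $Q$ has color $c^j$, the sub-snake $\yset(c^j)$ uses $T_{j'}^j$ as a single corridor and simply passes through.

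\textbf{Visiting children inside $Q^+$.} For each corridor $\Y_{j'} = Q^+$ with $f(Q) = c_h$, let $Q^1, \ldots, Q^m$ be the non-empty level-$(h+1)$ children of $Q$, listed in some fixed snake-like traversal of $Q$ (say row-by-row). Any two $(Q^i)^+, (Q^{i'})^+$ lie at distance at least $d_{h+1}/2$, while the margin $d_h/\eta = d_{h+1}$ of $Q^+$ around $Q$ together with the modest total strip width $3N(c_h) \leq 3 d_{h+1}/\eta^2 \ll d_{h+1}$ leave ample room to weave. All sub-snakes follow this common traversal order of $Q^1, \ldots, Q^m$; when the traversal reaches $(Q^i)^+$, the unique sub-snake $\yset(c^j)$ with $c^j = f(Q^i)$ inserts $(Q^i)^+$ as one of its own corridors, while every other sub-snake bypasses $(Q^i)^+$ on a fixed side, preserving its strip position. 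Each bypass and each insertion is carried out by a straightforward application of Claim~\ref{claim: routing in a snake} to the spaced-out sub-instance inside $Q^+$, and is precisely the simultaneous analogue of the spaced-out routing depicted in Figures~\ref{fig: global-routing} and~\ref{fig: local-routing}.

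\textbf{Main obstacle.} The chief difficulty is keeping all $k$ sub-snakes strictly disjoint and in a consistent left-to-right order through every detour around a child square; without a common traversal order, two different sub-snakes could try to cross one another while bypassing the same $(Q^i)^+$. By forcing every sub-snake to obey the same global traversal and to bypass each visited child on the same fixed side, each sub-snake re-enters the main channel in the same relative position, so the column-disjoint strip decomposition survives across $Q^+$. Verifying the four defining conditions of a valid level-$(h+1)$ snake for $c^j$ then reduces to routine width and spacing inequalities — the width bound $3N(c^j) \leq 3N(c_h)$ holds by construction, $\tilde I^j$ sits on the top boundary of the first corridor of $\yset(c^j)$ by the initial partition, containment in $Q_0^+$ is inherited from $\yset(c_h)$, and every non-empty level-$(h+1)$ square of color $c^j$ appears as a corridor because it is visited during the common traversal.
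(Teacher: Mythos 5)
Your overall plan coincides with the paper's own proof: thread one sub-snake of width $3N(c_h^i)$ per child color through $\yset(c_h)$, keep the sub-snakes in the left-to-right order of the corresponding level-$(h+1)$ intervals, have all of them traverse the non-empty child squares of each non-empty level-$h$ square $Q$ in a common snake-like order, let the sub-snake whose color matches a child $\hat Q$ absorb $\hat Q^+$ as a corridor while the others detour, and rely on the fact that the total channel width $3N(c_h)\ll d_{h+1}$ is far below the pairwise spacing of the squares $\hat Q^+$. The paper formalizes exactly this via the interfaces $I_Q^i,J_Q^i$ on $Q^+$, the portal intervals $\alpha_i(\hat Q),\beta_i(\hat Q)$ around each child square, and the composition operation $\oplus$ (Claim~\ref{clm: construct child snakes inside Q}); your ``parallel strip decomposition'' is the same construction in different words.

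There is, however, one step that fails as written. You require that, at every visited child square $(Q^i)^+$, \emph{all} non-matching sub-snakes bypass it ``on the same fixed side,'' and you simultaneously assert that each sub-snake ``re-enters the main channel in the same relative position.'' These two demands are incompatible in the plane. Once the matching sub-snake (say the $j$-th strip) is routed from the channel into and through $(Q^i)^+$ (and, ultimately, from the top interface of $Q^+$ down to the bottom interface $J_Q$), it separates the interior of $Q^+$ into a left and a right part; a strip with index $j'>j$, which starts to the right of strip $j$ and must end to its right in order to meet its sub-interval of $J_Q$ (and the next portal) in the correct order, cannot detour around the left side of $(Q^i)^+$ without crossing strip $j$, and disjoint snakes cannot cross. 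So either the bypass rule is unimplementable for the strips on the wrong side, or the left-to-right order is destroyed and the later interfaces no longer match up. The fix is precisely what the paper encodes in its portals: strips with color index smaller than that of $\hat Q$ detour on one side and strips with larger index on the other, with $\alpha_i(\hat Q)$ coinciding with the top boundary of $\hat Q^+$ exactly for the matching index $i$; this is what preserves the order. (A minor additional slip: the strips do not ``exactly tile each corridor'' of $\yset(c_h)$ --- the corridors $Q^+$ have side roughly $d_h\gg 3N(c_h)$ --- they only occupy a width-$3N(c_h)$ sub-channel, which is all you need.)
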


\begin{proof}
 We will construct a number of \emph{sub-snakes} \footnote{Each of the sub-snake that we construct is a snake as defined earlier. We use the term `sub-snake' to distinguish them from the final snake $\yset(c_{h+1})$, that is obtained by concatenating a number of sub-snakes.}
  for each color $c_{h+1}$, and concatenate them together to construct a snake $\yset(c_{h+1})$ with claimed properties.
  To this end, we will construct sub-snakes of two types: Sub-snakes of type $1$ will be contained in $\yset$, but they will be internally disjoint from all non-empty level-$h$ squares of color $c_h$. Each sub-snake of type $2$ will be contained inside some non-empty level-$h$ square of color $c_h$.
In order to coordinate between these sub-snakes, we use \emph{interfaces} on the top and bottom boundaries of such squares, that we define next.

  For each non-empty level-$h$ square $Q$ of color $c_h$, let $I_Q$ be the shortest sub-path of the top boundary of $Q^+$ that contains the $3 \cdot N(c_h)$ leftmost  vertices of the top boundary of $Q^+$.
  Similarly, let $J_Q$ be the shortest sub-path of the bottom boundary of $Q^+$ that contains the  $3 \cdot N(c_h)$ rightmost vertices of the bottom boundary of $Q^+$. 
  Notice that $I_Q$ and $J_Q$ are well-defined since $3 \cdot N(c_h) \leq 3 \cdot \ceil {d_h/2\eta^3} \leq d_h$.

  Let $c_h^1, c_h^2, \ldots, c_h^z$ be the child-colors of $c_h$, indexed in the left-to-right order of their corresponding intervals on $R^*$.
 For a non-empty level-$h$ square $Q$,  let $\set{I^i_Q \subseteq I_Q \> | \> i \in \set{1, \ldots, z}}$ be a collection of disjoint sub-paths of $I_Q$ such that the following holds:
  (i) each sub-path $I^i_Q$ contains $3 \cdot N(c_h^i)$ vertices; and
  (ii) for all $1 \leq i < i' \leq z$, $I_Q^i$ appears to the left of $I_Q^{i'}$ on the top boundary of $Q^+$.
  Similarly, let $\set{J^i_Q \subseteq J_Q \> | \> i \in \set{1, \ldots, z}}$ be an analogous collection of disjoint sub-paths of $J_Q$.
  Notice that there is a unique set of such sub-paths, since $\sum_{i=1}^{z} 3 \cdot N(c_h^i) = 3 \cdot N(c_h)$.
  The sub-paths $I^i_Q$ and $J^i_Q$ will act as interfaces  between the sub-snakes corresponding to color $c_h^i$, and will be used to combine them together.
  Having described the interfaces, we now focus on constructing type-$2$ sub-snakes inside each non-empty square of level $h$.
  We will then construct type-$1$ sub-snakes, and finally combine them all together to obtain level-$(h+1)$ snakes as claimed. 

  \begin{definition}
    Fix a non-empty level-$h$ square $Q$ of color $c_h$, and let $c_h^i$ be a child-color of $c_h$. 
    Given a snake $\hat \yset$, we say that $\hat \yset$ is a \emph{valid type-$2$ sub-snake of color $c_h^i$ in square $Q^+$} iff
  \begin{itemize}
    \item $\hat \yset \subseteq Q^+$;
    \item  $\hat \yset$ has width at least $3 \cdot N(c_h^i)$;
    \item $\hat \yset$ contains $I_Q^i$ as the top boundary of its first corridor and $J_Q^i$ as the bottom boundary of its last corridor; and
    \item for each non-empty level-$(h+1)$ child square $\hat Q$ of $Q$ of color $c_h^i$, $\hat Q^+$ is a corridor of $\hat \yset$.
  \end{itemize}
  \end{definition}

  \begin{claim} \label{clm: construct child snakes inside Q}
    There is an efficient algorithm, that, given a non-empty level-$h$ square $Q$ of color $c_h$ constructs, for every child-color $c_h^i$ of $c_h$, a valid type-$2$ sub-snake $\yset_Q(c_h^i)$ of color $c_h^i$ in square $Q^+$, such that the resulting sub-snakes are pairwise disjoint.    
  \end{claim}

  We prove the above claim after completing the proof of Claim \ref{clm: construct child snakes master}, using the claim. We now turn to construct  type-$1$ sub-snakes.
  We starts with the following simple observation.

  \begin{observation}
    Let $\yset = (\Y_1, \Y_2, \ldots, \Y_r)$ be a snake of width $w$.
    Let $\Sigma_1$ be some boundary edge of $\Y_1$ that is contained in the boundary of $\yset$.
    Similarly, let $\Sigma_r \neq \Sigma_1$ be some boundary edge of $\Y_r$ that is contained in the boundary of $\yset$.
    Let $(A_1, A_2, \ldots A_{j})$ be disjoint sub-paths on $\Sigma_1$ and let $(B_1, B_2, \ldots, B_{j})$ be disjoint sub-paths on $\Sigma_r$ such that the sub-paths $(A_1, \ldots, A_j, B_j, B_{j-1}, \ldots, B_1)$ appear in this clockwise order on the boundary of $\yset$ and $\left| \bigcup_{i=1}^j V(A_i) \right|, \left| \bigcup_{i=1}^j V(B_i) \right| \leq w$.
    Then there is an efficient algorithm to construct pairwise disjoint snakes $\yset_1, \ldots, \yset_{j}$ such that for each $1 \leq i \leq j$, (i) $\yset_i$ is contained in $\yset$; 
    (ii) $A_i$ is contained as a part of the boundary of the first corridor of $\yset_i$;
    (iii) $B_i$ is contained as a part of the boundary of the last corridor of $\yset_i$; and
    (iv) the width of the snake $\yset_i$ is at least $\min{(|A_i|, |B_i|)}$.
  \end{observation}

  Recall that we are given a valid level-$h$ snake $\yset$ for a level-$h$ color $c_h$.
  Also, recall that $\yset$ is a sequence of corridors, containing the squares $Q^+$ corresponding to non-empty level-$h$ color $c_h$ squares $Q$ as corridors.
  Let $Q_1, \ldots, Q_r$ be such squares, in the order of their appearance in $\yset$.
  From the above observation, it is now immediate to find pairwise disjoint sub-snakes $\yset^1(c_h^i), \ldots, \yset^r(c_h^i)$ for each child-color $c_h^i$ of $c_h$ such that:
  \begin{itemize}
    \item the top boundary of the first corridor of $\yset^1(c_h^i)$ contains the interval $I'(c_h^i) \in \iset'_{h+1}$ corresponding to the interval $I(c_h^i) \in \iset_{h+1}$ that was used in defining color $c_h^i$;
    \item for each $2 \leq j \leq r$, the top boundary of the first corridor of $\yset^j(c_h^i)$ contains the interval $J_{Q_{j-1}}^i$;
    \item for each $1 \leq j \leq r$, the bottom boundary of the last corridor of $\yset^j(c_h^i)$ contains the interval $I_{Q_j}^i$;
    and
    \item all the sub-snakes are pairwise disjoint, and are internally disjoint from the squares $Q_1, \ldots, Q_r$.
  \end{itemize}

  For each child color $c_h^i$ of $c_h$, we say that the sub-snakes $\yset^1(c_h^i), \ldots, \yset^{r}(c_h^i)$ are sub-snakes of type $1$.
  Now we focus on combining the snakes of type $1$ and type $2$.

  \begin{definition}
    Let $\yset^1 = (\Y^1_1, \ldots, \Y_{\ell_1}^1)$ and $\yset^2 = (\Y^2_1, \ldots, \Y_{\ell_2}^2)$ be a pair of snakes of width $w_1$ and $w_2$ respectively.
    We say that $\yset_1$ and $\yset_2$ are \emph{composable} iff they are internally disjoint and there are boundary edges $\Sigma^1$ of $\Y_{\ell_1}^1$ and $\Sigma^2$ of $\Y_{1}^2$ such that $|V(\Sigma^1) \cap V(\Sigma^2)| \geq \min{(w_1, w_2)}$.
    We denote by $\yset_1 \oplus \yset_2$ the sequence $(\Y^1_1, \ldots, \Y_{\ell_1}^1, \Y^2_1, \ldots, \Y_{\ell_2}^2)$ of corridors.
  \end{definition}

  We begin with a simple observation that follows directly from the definition of snakes.

  \begin{observation}
    Let $\yset_1$ and $\yset_2$ be a pair of composable snakes, each of width at least $w$.
    Then $\hat \yset = \yset_1 \oplus \yset_2$ is also a snake of width at least $w$.      
  \end{observation}

  Notice that for each child-color $c_h^i$ of $c_h$ and for each $1 \leq j \leq r$, the pair of sub-snakes $\yset^j(c_h^i)$ and $\yset_{Q_j}(c_h^i)$ are composable.
  Similarly, for each $1 \leq j < r$, the pair of sub-snakes $\yset_{Q_j}(c_h^i)$ and $\yset^{j+1}(c_h^i)$ are composable.
  Claim \ref{clm: construct child snakes master} now follows by considering the snakes 
  \[ \yset(c_h^i) = \yset^1(c_h^i) \oplus \yset_{Q_1}(c_h^i) \oplus \yset^2(c_h^i) \oplus \ldots \oplus \yset^r(c_h^i) \oplus \yset_{Q_r}(c_h^i)\]
  for all child-colors $c_h^i$ of $c_h$.
  \end{proof}
  \endproofof

  In the remainder of the section, we prove Claim \ref{clm: construct child snakes inside Q}.

\proofof{Claim \ref{clm: construct child snakes inside Q}}

Recall that we are given a non-empty level-$h$ square $Q$ of color $c_h$.
Let $\dset$ be the set of all non-empty level-$(h+1)$ child squares of $Q$.
Recall that given a pair of intervals $I,I'$ of $[\ell']$, we denoted by $Q(I,I')$ the sub-graph induced by all vertices $\set{v(i,j) \> | \> i \in I, j \in I'}$.
We say that a pair of squares $Q_1, Q_2 \in \dset$ are \emph{vertically aligned} iff $Q_1 = Q(I, I_1)$ and $Q_2 = Q(I, I_2)$ for some intervals $I, I_1, I_2$ of $[\ell']$.
Let $\dset_1, \dset_2, \ldots, \dset_r$ be a partition of $\dset$ such that the following holds:
\begin{itemize}
  \item for each $1 \leq i \leq r$, the squares in $\dset_i$ are vertically aligned; and
  \item for each $1 \leq i < i' \leq r$, all squares in $\dset_i$ appear to the left of those in $\dset_{i'}$.
  \end{itemize}

For each $1 \leq i \leq r$, we denote $\dset_i = (Q_i^1, \ldots, Q_i^{k_i})$, where for each $1 \leq j < j' \leq k_i$, the square $Q_i^j$ appears above $Q_i^{j'}$.

Our goal, as before, is to construct a number of child-snakes\footnote{As in the case of sub-snakes, each of the child-snake that we construct, is a snake as defined earlier. We use the term `child-snake' to distinguish them from the final combined sub-snake $\yset_Q(c_h^i)$ that we construct.}
 for each child-color $c_h^i$ of $c_h$, which can be combined together to obtain the desired sub-snake $\yset_Q(c_h^i)$.
To this end, we need interfaces, which we define next.

For each child-square $\hat Q \in \dset$ of $Q$, let $R(\hat Q^+)$ and $R'(\hat Q^+)$ be the rows containing the top and bottom boundaries of $\hat Q^+$ respectively.
Let $\alpha(\hat Q)$ be the sub-path containing all nodes of the top boundary of $\hat Q^+$ along with $3 \cdot N(c_h)$ nodes of $R(\hat Q^+)$ lying immediately to the left and $3 \cdot N(c_h)$ nodes of $R(\hat Q^+)$ lying immediately to the right of the top boundary of $\hat Q^+$.
Similarly, define $\beta(\hat Q)$ analogously for the bottom boundary of $\hat Q^+$.
We say that the vertices in paths $\alpha(\hat Q)$ and $\beta(\hat Q)$ are the \emph{portals} of $\hat Q$.
Let $\set{\alpha_i(\hat Q) \subseteq \alpha(\hat Q) \> | \> i \in \set{1, \ldots, z}}$ be a set of disjoint intervals in the natural left-to-right ordering with following two properties:
(i) there are at least $3 \cdot N(c_i)$ nodes in the interval $\alpha_i(\hat Q)$; and
(ii) if the square $\hat Q$ is assigned level-$(h+1)$ color $c_h^i$, then $\alpha_i(\hat Q)$ is the top boundary of $\hat Q^+$.
We define the intervals $\beta_i(\hat Q) \subseteq \beta(\hat Q)$ corresponding to the bottom boundary of $\hat Q^+$ for each $1 \leq i \leq z$ in a similar fashion.
The sub-paths $\alpha_i(\hat Q)$ and $\beta_i(\hat Q)$ will act as interfaces to connect child-snakes corresponding to color $c_h^i$.
Now, we focus on constructing the child-snakes inside $Q^+$.

Notice that $N_h(c_h) \leq \ceil{d_h/2\eta^3} < d_h/\eta^3 \leq d_{h+1}/18$, since $\eta \geq 3$.
Recall that for each pair $\hat Q, \tilde Q \in \dset$ of distinct level-$(h+1)$ squares contained inside $Q$, $d(\hat Q^+,\tilde Q^+), d(\hat Q^+, \Gamma(Q^+)), d(\tilde Q^+, \Gamma(Q^+)) \geq d_{h+1}/2 > 9 \cdot N(c_h)$.
Hence, portals of any pair of distinct squares are separated by at least $3 \cdot N(c_h)$ nodes from each other.
Exploiting this spacing, it is now immediate to construct the child-snakes of the following five types (see Figure \ref{fig: q and children} for illustration).

\begin{itemize}
  \item The set of child-snakes of type $A$ contains, for each child-color $c_h^i$ of $c_h$, a single snake $\yset^A(c_h^i)$ of width $3 \cdot N(c_h^i)$ with $I_Q^i$ as the top boundary of its first corridor and $\alpha(Q^1_1)$ as the bottom boundary of its last corridor.

  \item The set of child-snakes of type $B$ contains, for each child-color $c_h^i$ of $c_h$, a single snake $\yset^B(c_h^i)$ of width $3 \cdot N(c_h^i)$ with $\beta_i(Q_r^{k_r})$ as the top boundary of its first corridor and $J_Q^i$ as the bottom boundary of its last corridor.
 
  \item The set of child-snakes of type $C$ contains, for each child-color $c_h^i$ of $c_h$,  for each $1 \leq j \leq r$ and $1 \leq j' < k_j$, a snake $\yset_j^{j'}(c_h^i)$ of width $3 \cdot N(c_h^i)$ with $\beta_i(Q_j^{j'})$ as the top boundary of its first corridor and $\alpha(Q_j^{j'+1})$ as the bottom boundary of its last corridor.
  \item The child-snakes of type $D$ consists of, for each child color $c_h^i$ of $c_h$ and $1 \leq j < r$, a snake $\yset_j^{k_j}(c_h^i)$ of width $3 \cdot N(c_h^i)$ with $\beta_i(Q_j^{k_j})$ as the top boundary of its first corridor and $\alpha_i(Q_{j+1}^{1})$ as the bottom boundary of its last corridor.

  \item The child-snakes of type $E$ consists of, for each child-color $c_h^i$ of $c_h$ and square $Q_j^{j'} \in \dset$, a snake $\hat \yset_j^{j'}(c_h^i)$ of width $3 \cdot N(c_h^i)$ with $\alpha_i(Q_j^{j'})$ as the top boundary of its first corridor and $\beta_i(Q_j^{j'})$ as the bottom boundary of its last corridor.
  Moreover, if the level-$(h+1)$ color of $Q_j^{j'}$ is $c_h^i$, then $\hat \yset_j^{j'}(c_h^i)$ is a snake contains the corresponding extended square $Q_j^{j'+}$ as a corridor.
\end{itemize}

Moreover, we also ensure that the child-snakes corresponding to different colors $c_h^i$ are disjoint, and for each child color $c_h^i$ the snakes are composable.
 Claim \ref{clm: construct child snakes inside Q} now follows by considering the snakes 
  \[ \yset_Q(c_h^i) = \yset^A(c_h^i) \oplus \hat \yset_1^1(c_h^i) \oplus \yset_1^1(c_h^i) \oplus \hat \yset_1^2(c_h^i) \oplus \yset_1^2(c_h^i) \oplus \ldots \oplus \yset_{r}^{k_r-1}(c_h^i) \oplus \hat \yset_r^{k_r}(c_h^i) \oplus \yset^B(c^i_{h})\]
  for all child-colors $c^i_h $ of $c_h$.
\end{proofof}

\begin{figure}[h]
\center
\scalebox{0.26}{\includegraphics{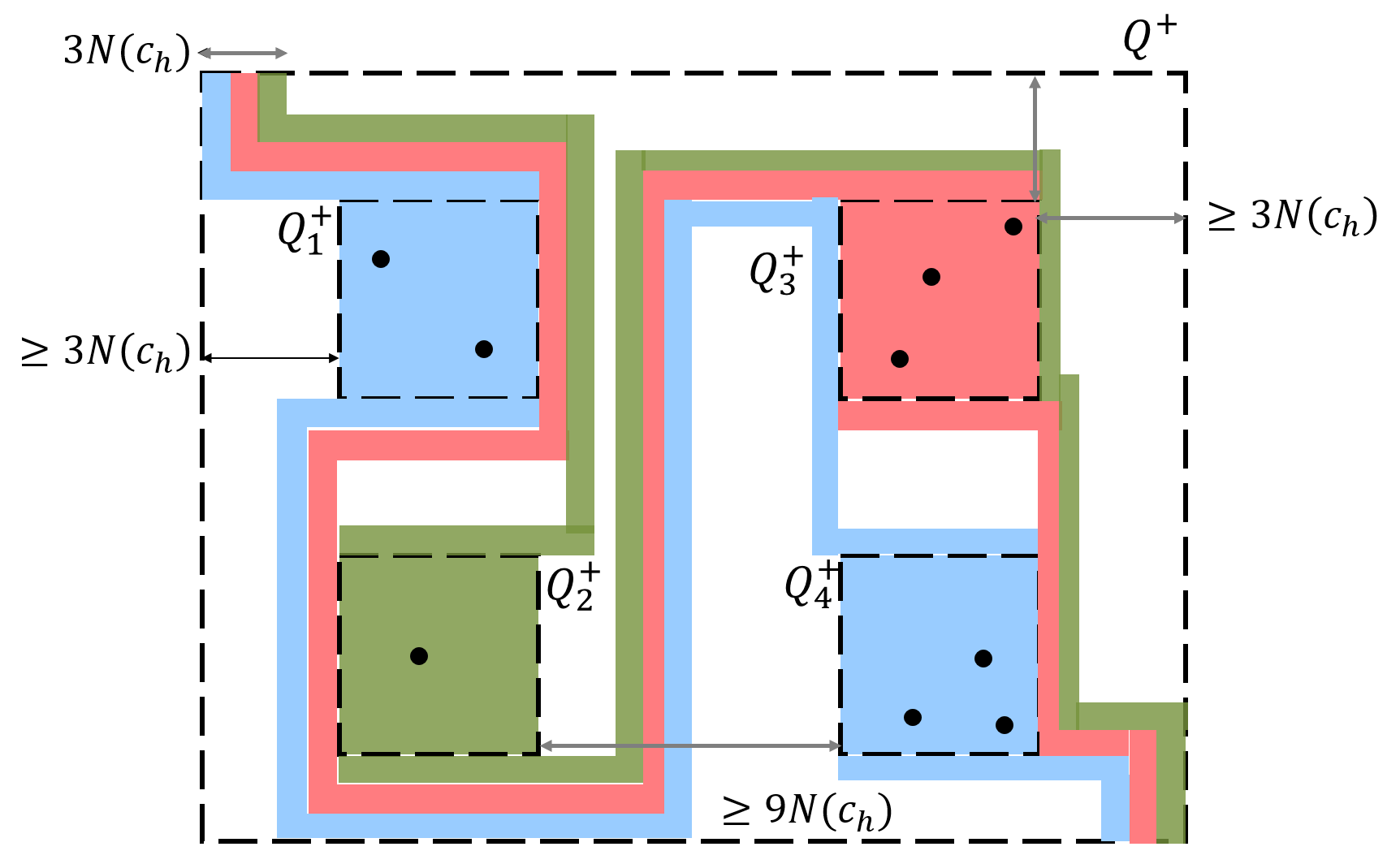}}
\caption{Squares $Q_1, Q_2, Q_3, Q_4$ are children of level $h$ square $Q$ of color $c_h$.
We label them in this way for simplicity. Here, $Q_1$ and $Q_2$, as well as $Q_3$ and $Q_4$ are vertically aligned.
The width of the snake responsible to route $N(c_h)$ demand pairs, containing $Q^+$ as a corridor, is $3N(c_h)$.
The figure shows the corridors of snakes corresponding to the child-colors of $c_h$ inside square $Q^+$.
\label{fig: q and children}}
\end{figure}

\label{-----------------------------sec: coloring the squares-------------------------------}
\section{Finding the Coloring of Squares and a Perfect Set of Demand Pairs}\label{sec: coloring the squares}
The goal of this section is to prove Theorem~\ref{thm: finding the coloring}. It would be convenient for us to reformulate the problem slightly differently. Recall that we are given a hierarchical system $\cset$ of colors, and that for all source vertices $s$ of the demand pairs in $\tmset$, their colorings are fixed: that is, each vertex $s\in S(\tmset)$ is assigned, for each level $1\leq h\leq \rho$, some level-$h$ color $c_h$. Consider now some demand pair $(s,t)\in \tmset$. For each level $1\leq h\leq \rho$, if $s$ is assigned the level-$h$ color $c_h$, then we will assign the same  level-$h$ color $c_h$ to $t$. Note that the same vertex of $G$ may serve as a destination of several demand pairs; in such a case, we view this vertex as having different copies, each of which is colored according to the coloring of the corresponding source vertex.

For now, we will ignore the set $S(\tmset)$ of the source vertices of the demand pairs in $\tmset$, and we will focus on the multi-set $U$ of the destination vertices in $T(\tmset)$. Suppose we compute some coloring $f$ of $\thset$ by $\cset$. Consider some level-$\rho$ square $Q$, and some destination vertex $t\in U\cap Q$. Suppose square $Q$ is assigned the level-$\rho$ color $c_{\rho}$. We say that $t$ \emph{agrees} with the coloring of $Q$ iff $t$ was also assigned the level-$\rho$ color $c_{\rho}$. We can now restate the problem slightly differently. We call the new problem Hierarchical Square Coloring (\HSC). 
In this problem, we are given a hierarchical system $\thset=(\qset_1,\ldots,\qset_{\rho})$ of squares, a hierarchical system $\cset$ of colors, and a multi-set $U$ of vertices that lie in $\bigcup_{Q\in \qset_{\rho}}Q$, together with an assignment of a level-$\rho$ color to each vertex in $U$. Notice that this assignment implicitly assigns, to each vertex $v\in U$, for each level $1\leq h<\rho$, a unique level-$h$ color $c_h$, which is the level-$h$ ancestor of the level-$\rho$ color assigned to $v$.
Our goal is to find a coloring $f$ of $\thset$ by $\cset$, and a subset $U'\subseteq U$ of the vertices, such that:

\begin{itemize}
\item For each vertex $t\in U'$, if $t$ belongs to a level-$\rho$ square $Q$, then the coloring of $t$ agrees with the coloring of $Q$; and
\item For each level $1\leq h\leq \rho$, for each level-$h$ color $c_h$, the total number of vertices in $U'$ whose level-$h$ color is $c_h$ is at most $d_h$.
\end{itemize}

The goal is to maximize $|U'|$.  The main result of this section is a proof of the following theorem.

\begin{theorem}\label{thm: main finding the coloring}
There is an efficient randomized algorithm, that, given as input an instance $(\thset,\hmset)$ of the \HSC problem, with high probability returns an $O(\log^4 n)$-approximate solution to it.
\end{theorem}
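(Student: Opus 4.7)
My plan is to formulate HSC as an LP, round it hierarchically with an appropriate scaling factor, and prune to enforce the capacity constraints.

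\textbf{LP relaxation.} For every square $Q$ at level $h$ and every color $c\in\chi_h$, introduce $y_{Q,c}\in[0,1]$, intended to indicate that $Q$ receives color $c$. For every $v\in U$, introduce $x_v\in[0,1]$; let $Q_\rho(v)$ denote the level-$\rho$ square containing $v$ and $c_\rho(v)$ the color prescribed to $v$. Maximize $\sum_{v\in U} x_v$ subject to: (i) $\sum_c y_{Q,c}\leq 1$ for each square $Q$; (ii) hierarchical consistency $\sum_{c\text{ a child of }c^p} y_{Q,c}\leq y_{Q^p,c^p}$ for every level-$h$ square $Q$ with parent $Q^p$ and every $c^p\in\chi_{h-1}$; (iii) linking $x_v\leq y_{Q_\rho(v),c_\rho(v)}$; and (iv) capacity $\sum_{v:\,c_h(v)=c_h} x_v\leq d_h$ for every level $h$ and every $c_h\in\chi_h$. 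This LP has polynomially many variables and constraints, can be solved optimally, and its value $\optLP$ upper-bounds $\opt$.

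\textbf{Hierarchical randomized rounding.} Let $C=\Theta(\log^2 n)$. Process squares top-down. To each level-$1$ square $Q$ assign color $c$ with probability $y_{Q,c}/C$; otherwise leave $Q$ unassigned. For $h>1$, if the parent $Q^p$ of $Q$ received color $c^p$, assign $Q$ a child color $c$ of $c^p$ with probability $y_{Q,c}/y_{Q^p,c^p}$; if $Q^p$ is unassigned, so is $Q$. A level-by-level induction using (ii) gives $\Pr[Q\text{ is assigned }c]=y_{Q,c}/C$. Then, for each $v\in U$ whose level-$\rho$ square was assigned exactly $c_\rho(v)$, independently include $v$ in a tentative set $U''$ with probability $x_v/y_{Q_\rho(v),c_\rho(v)}$. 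Thus $\Pr[v\in U'']=x_v/C$ and $\mathbb{E}[|U''|]=\optLP/C$.

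\textbf{Concentration, pruning, and the final output.} Constraint (iv) yields $\mathbb{E}[|U''\cap c_h|]\leq d_h/C$ for every color $c_h$. Conditional on the entire random square-coloring, the per-vertex selection coins are mutually independent Bernoulli trials, so Chernoff bounds apply to $|U''\cap c_h|$ conditionally. The remaining task is to bound the conditional expectations uniformly, which I do via a level-by-level analysis along the hierarchy combined with a union bound over the polynomially many colors. This will show that, with high probability, $|U''\cap c_h|=O(d_h)$ for every color simultaneously; any violating vertices are dropped to form the final set $U'$. A standard lower-tail bound on $|U''|$ then gives $|U'|=\Omega(\optLP/\log^4 n)$, proving the claim.

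\textbf{Main obstacle.} The principal difficulty is the positive correlation between selection events for vertices that share a color, induced by the hierarchical rounding of their common ancestor squares: vanilla Chernoff cannot be applied directly to $|U''\cap c_h|$. I plan to handle this by first conditioning on the complete random coloring so that the per-vertex selections become independent, and then separately controlling the conditional mean $\sum_{v:\,c_h(v)=c_h,\ v\text{ eligible}}x_v/y_{Q_\rho(v),c_\rho(v)}$ via a martingale or levelwise Chernoff argument along the $\rho=O(\sqrt{\log n})$ hierarchy levels. Together with the $\Theta(\log^2 n)$ scaling factor $C$ and a union-bound contribution of $O(\log n)$, this accounts for the claimed $O(\log^4 n)$ approximation.
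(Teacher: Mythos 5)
Your overall architecture -- an LP over square-color variables and vertex variables, top-down randomized rounding by conditional probabilities, per-color concentration, then pruning -- is the same as the paper's. But your LP is missing the constraints that make the concentration step possible, and this is a genuine gap, not a detail. Your only capacity constraints are the \emph{global} ones, $\sum_{v:\,c_h(v)=c_h}x_v\leq d_h$. The paper's LP additionally imposes, for every level-$h$ square $Q_h$, every level-$h$ color $c_h$ and every level-$h'$ descendant color $c_{h'}$ ($h\leq h'\leq\rho$), a \emph{local} constraint of the form $\sum_{Q_{\rho}\subseteq Q_h}\sum_{c_\rho\preceq c_{h'}}y(Q_{\rho},c_{\rho})\leq d_{h'}\cdot x(Q_h,c_h)$ (its Constraint~(8), together with Constraint~(1) bounding $y(Q_\rho,c_\rho)$ by $n(Q_\rho,c_\rho)x(Q_\rho,c_\rho)$). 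These say that the fractional vertex mass of a color \emph{inside a square} is at most $d_{h'}$ times the probability that the square is colored compatibly. That is exactly what bounds the conditional means after you condition on the coloring, and what makes the paper's levelwise Chernoff argument go through: at each level the conditional contribution of a single square is a random variable bounded by $d_h$, so Chernoff for $[0,d_h]$-valued independent variables applies.

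Without those local constraints, the plan you sketch in ``Main obstacle'' cannot be carried out, because the conditional mean $\sum_{v\ \mathrm{eligible}}x_v/y_{Q_\rho(v),c_\rho(v)}$ is itself badly concentrated. Concretely, take a single level-$1$ square $Q$ and a level-$1$ color $c_1$; put $d_1/\eps$ vertices of (descendant colors of) $c_1$ inside $Q$, set $y_{Q',c}=\eps$ along the entire relevant chain of squares and $x_v=\eps$ for each such vertex. Constraints (i)--(iii) hold, and (iv) holds since $\sum x_v=d_1$ (spreading the vertices over many level-$\rho$ colors handles the lower-level versions of (iv)). Your rounding colors $Q$ with $c_1$ with probability $\eps/C$, and conditioned on that event essentially all $d_1/\eps$ vertices are selected, so $|U''\cap c_1|\approx d_1/\eps\gg d_h\,\polylog(n)$ with probability $\eps/C$, which for $\eps=1/\polylog(n)$ or $\eps=1/\poly(n)$ is far larger than the $1/\poly(n)$ failure probability you need; and after pruning this color class down to $d_1$ the expected value recovered from it is only about $\eps d_1/C$, i.e.\ you lose a factor $1/\eps$ beyond $C$, which can be polynomial. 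Note the integral optimum of this instance is still about $d_1$, so the problem is not your LP's validity as a relaxation but the rounding analysis: nothing ties the per-square color mass to the square's color variable. To repair the proof you must add the per-square, per-level capacity constraints (and then your conditioning-plus-levelwise-Chernoff plan becomes essentially the paper's Stage-2 analysis); the final pruning also needs to be done in a structured way (the paper orders the vertices so that every color class is contiguous and keeps every $\Theta(\log^4 n)$-th vertex), since dropping ``violating vertices'' must restore feasibility at all $\rho$ levels simultaneously.
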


We first show that Theorem~\ref{thm: finding the coloring} follows from Theorem~\ref{thm: main finding the coloring}. Let $U'\subseteq U$ be the (multi)-set of the destination vertices computed by the algorithm from Theorem~\ref{thm: main finding the coloring}. As every destination vertex in $U'$ belongs to a unique demand pair in $\tmset$, set $U'$ naturally defines a subset $\tmset'\subseteq \tmset$ of demand pairs. It is immediate to verify that $|\tmset'|\geq \Omega(\opt'/\log^4 n)$. This is since the maximum-cardinality subset $\tmset''\subseteq \tmset$ of the demand pairs, for which a coloring $f'$ of $\thset$ by $\cset$ exists, with $\tmset''$ being perfect for $f'$, naturally defines a feasible solution to \HSC, of value $|\tmset''|$. The set $\tmset'$ of demand pairs has all properties of a perfect set, except that we are not guaranteed that the source and the destination vertices are all distinct. We rectify this as follows. Fix any level-$\rho$ color $c_{\rho}$. Recall that there are at most $d_{\rho}=\eta^3=2^{O(\sqrt{\log n})}$ demand pairs $(s,t)\in \tmset'$, such that the level-$\rho$ color of $s$ is $c_{\rho}$. We discard all such pairs but one from $\tmset'$. It is immediate to verify that $|\tmset'|\geq \opt'/2^{O(\sqrt{\log n})}$ still holds. We claim that all demand pairs in this final set $\tmset'$ have distinct sources and destinations. Indeed, consider any two such distinct pairs $(s,t)$, $(s',t')$. Assume first that $s=s'$. Let $c_{\rho}$ be the level-$\rho$ color of $s$. Then $\tmset'$ contains two demand pairs whose sources have the same level-$\rho$ color, a contradiction. Assume now that $s\neq s'$ but $t=t'$. Then the level-$\rho$ colors of $s$ and $s'$ must be different, and so are the level-$\rho$ colors $t$ and $t'$. However, both $t$ and $t'$ belong to the same level-$\rho$ square $Q_{\rho}$, and their color should agree with the level-$\rho$ color of $Q_{\rho}$, which is impossible.

The remainder of this section is dedicated to proving Theorem~\ref{thm: main finding the coloring}. We formulate an LP-relaxation of the problem, and then provide a randomized LP-rounding approximation algorithm for it.

For convenience, for every pair $1\leq h<h'\leq \rho$ of levels, for every level-$h$ square $Q$, we denote by $\dset_{h'}(Q)$ the set of all level-$h'$ squares that are contained in $Q$, and we think of them as level-$h'$ descendants of $Q$. Recall that for each level-$h$ color $c_h$,  we have also denoted by $\tilde \chi_{h'}(c)$ the set of all its descendant-colors that belong to level $h'$.
\subsection{The Linear Program}\label{subsec: the LP}

For every level-$\rho$ square $Q_{\rho}\in \qset_{\rho}$, and every level-$\rho$ color $c_{\rho}\in \chi_{\rho}$, we let $n(Q_{\rho},c_{\rho})$ denote the number of vertices of $U\cap V(Q_{\rho})$, whose level-$\rho$ color is $c_{\rho}$.

Our linear program has three types of variables. Fix some level $1\leq h\leq \rho$ and a level-$h$ color $c_{h}\in \chi_h$. For every level-$h$ square $Q_h\in \qset_h$, we have a variable $x(Q_h,c_h)$ indicating whether we choose color $c_h$ for square $Q_h$. We also have a global variable $Y_h(c_h)$, counting the total number of all vertices in the final solution $U'$, whose level-$h$ color is $c_h$. Finally, for every level-$\rho$ square $Q_{\rho}\in \qset_{\rho}$ and color $c_{\rho}\in \chi_{\rho}$, we have a variable $y_{\rho}(Q_{\rho},c_{\rho})$. Intuitively, if $Q_{\rho}$ is assigned the color $c_{\rho}$, then $y_{\rho}(Q_{\rho},c_{\rho})$ is the total number of vertices of $U'\cap V(Q_{\rho})$ in our solution $U'$ (note that their level-$\rho$ color must be $c_{\rho}$); otherwise, $y_{\rho}(Q_{\rho},c_{\rho})=0$. 

The objective function of the LP is:

\[\max\quad\quad\sum_{c_{\rho}\in \chi_{\rho}}Y_{\rho}(c_{\rho}).\]

We now define the constraints of the LP.
The first set of constraints requires that for each level-$\rho$ square $Q_{\rho}$, and each level-$\rho$ color $c_{\rho}$, if $Q_{\rho}$ is assigned the color $c_{\rho}$, then $y(Q_{\rho},c_{\rho})$ is bounded by $n(Q_{\rho},c_{\rho})$, and otherwise it is $0$.

\begin{equation}
y(Q_{\rho},c_{\rho})\leq n(Q_{\rho},c_{\rho})\cdot x(Q_{\rho},c_{\rho})\quad\quad \forall  c_{\rho}\in \chi_{\rho}, \forall  Q_{\rho}\in \qset_{\rho} \label{LP: bounding rho colored vertices in each square}
\end{equation}

The next set of constraints states that for each level-$\rho$ color $c_{\rho}$, the total number of vertices of $U'$ whose level-$\rho$ color is $c_{\rho}$
is equal to the summation, over all level-$\rho$ squares $Q_{\rho}$, of the number of vertices of $U'\cap V(Q)$, whose level-$\rho$ color is $c_{\rho}$.

\begin{equation}
Y_{\rho}(c_{\rho})=\sum_{Q_{\rho}\in \qset_{\rho}}y(Q_{\rho},c_{\rho}) \quad\quad \forall c_{\rho}\in \chi_{\rho}\label{LP: accounting for Yrho}
\end{equation}

The next set of constraints states that for every level $h$ and every level-$h$ color $c_h$, the total number of vertices of $U'$ whose level-$h$ color is $c_h$ is equal to the total number of vertices of $U'$ whose level-$\rho$ color is a descendant color of $c_h$. 

\begin{equation}
Y_h(c_h)=\sum_{c_{\rho}\in \tilde{\chi}_{\rho}(c_h)}Y_{\rho}(c_{\rho}) \quad\quad \forall 1\leq h< \rho, \forall c_h\in \chi_h \label{LP: Yh is sum of Yrhos}
\end{equation}

The following set of constraints ensures that for each level $h$, and each level-$h$ color $c_h$, no more than $d_h$ vertices of $U'$ whose level-$h$ color is $c_h$ are in the solution.
\begin{equation}
Y_h(c_h)\leq d_h\quad\quad \forall 1\leq h\leq \rho, \forall c_h\in \chi_h \label{LP: bound of Yh by kh}
\end{equation}

The next set of constraints ensures that every level-$1$ square is assigned some level-$1$ color.
\begin{equation}
\sum_{c_1\in \chi_1}x(Q_1,c_1)=1\quad\quad \forall Q_1\in \qset_1\label{LP: sum of colors 1 for level 1}
\end{equation}

It is immediate to verify that all above constraints are satisfied by any valid integral solution to \HSC, so the above LP is indeed a relaxation of \HSC.

We now add a few additional constraints to coordinate between different levels, that will be crucial for our LP-rounding algorithm. 
Consider some level $1\leq h<\rho$, and some level-$h$ square $Q_h$. Let $Q_{h+1}$ be a child-square of $Q_h$. If $Q_h$ is assigned the level-$h$ color $c_h$, then $Q_{h+1}$ must be assigned some child-color of $c_h$. The following constraint expresses this requirement.

\begin{equation}
\sum_{c_{h+1}\in \tilde{\chi}_{h+1}(c_h)}x(Q_{h+1},c_{h+1})=x(Q_h,c_h)\quad\quad \forall 1\leq h<\rho; \quad \forall Q_h\in \qset_h; \quad \forall c_h\in \chi_h; \quad \forall Q_{h+1}\in \dset_{h+1}(Q_h) \label{LP: color coordination across levels}
\end{equation}

The next constraint requires that for every pair $1\leq h\leq h'\leq \rho$ of levels, for every level-$h$ square $Q_h$ and every level-$h$ color $c_h$, if $c_{h'}$ is a level-$h'$ descendant color of $c_h$, then  the total number of vertices in $U'\cap V(Q_h)$ that are assigned  the level-$h'$  color $c_{h'}$, is at most $d_{h'}$, if square $Q_{h}$ is assigned color $c_h$, and it is $0$ otherwise (notice that we also require this for $h=h'$):

\begin{equation}
\sum_{Q_{\rho}\in \dset_{\rho}(Q_h)}\sum_{c_{\rho}\in \tilde{\chi}_{\rho}(c_{h'})}y(Q_{\rho},c_{\rho})\leq d_{h'}\cdot x(Q_h,c_h)\quad\quad \forall 1\leq h\leq h'\leq \rho; \quad \forall Q_h\in \qset_h; \quad \forall c_h\in \chi_h;  \forall c_{h'}\in \tilde{\chi}_{h'}(c_h)\label{LP: color coordination with kh across levels}
\end{equation}

(Note that indeed for an integral solution, $\sum_{Q_{\rho}\in \dset_{\rho}(Q_h)}\sum_{c_{\rho}\in \tilde{\chi}_{\rho}(\chi_{h'})}y(Q_{\rho},c_{\rho})$ is the total  number of vertices in $U'\cap V(Q_h)$ that are assigned the level-$h'$ color $c_{h'}$.)

Finally, we add non-negativity constraints:

\begin{equation}
x(Q_h,c_h)\geq 0\quad\quad \forall 1\leq h\leq \rho; \quad \forall c_h\in \chi_h \label{LP: non-neg1}
\end{equation}

\begin{equation}
y(Q_{\rho},c_{\rho})\geq 0\quad\quad \forall Q_{\rho}\in \qset_{\rho};\quad \forall c_{\rho}\in \chi_{\rho} \label{LP: non-neg2}
\end{equation}

This completes the description of the LP-relaxation. We now proceed to describe the LP-rounding algorithm.



\subsection{LP-Rounding}\label{subsec: LP-rounding}
For convenience, we denote $|U|=n$.
Our LP-rounding algorithm proceeds in three stages. In the first stage we perform a simple randomized rounding, level-by-level. We will show that the expected number of vertices that we select to our solution $U'$ is equal to the LP-solution value. However, it is possible that we select too many vertices of some color. In the second stage, we define, for every level $h$, and every level-$h$ color $c_h$, a bad event that too many vertices of color $c_h$ are added to $U'$, and show that with high probability none of these bad events happen. If any of the bad events happen, we discard all vertices and return an empty solution. Notice that in this case, we will discard at most $|U|=n$ vertices. Therefore, if we can prove that the probability of any bad event happening is less than $1/(2n)$, then the expected solution cost will remain close to the optimal one. The result of the second stage is a set $U'$, whose expected cardinality is close to the LP-solution cost, and this solution is almost feasible: namely, for every level $h$ and every level-$h$ color $c_h$, the number of vertices in $U'$ whose level-$h$ color is $c_h$ is at most $d_h\cdot \poly\log(n)$. In the third stage, we turn the solution $U'$ into a feasible one, at the cost of losing a polylogarithmic factor on its cardinality.

For convenience, we let $Q_0$ denote the whole grid $\tilde G$, and we define a unique level-$0$ color $c_0$; every level-1 color $c_1\in \chi_1$ is a child-color of $c_0$. We assume that $Q_0$ is assigned the level-$0$ color $c_0$.

\subsection*{Stage 1: Randomized Rounding}
We perform $\rho$ iterations, where in iteration $h$ we settle the colors of all level-$h$ squares, by suitably modifying the LP-solution. 
We will maintain the following invariant:

\begin{properties}{P}
\item For all $1\leq h\leq \rho$, at the beginning of iteration $h$, the current LP-solution ($x', y'$) has the following properties.  For every level-$(h-1)$ square $Q_{h-1}$, if $c_{h-1}$ is the level-$(h-1)$ color assigned to $Q_{h-1}$, then for every level $h-1<h'\leq \rho$, for every level-$h'$ descendant-square $Q_{h'}\in \dset_{h'}(Q_{h-1})$, and every level-$h'$ color $c_{h'}\in \chi_{h'}$:  \label{invariant}

\begin{itemize}
\item If $c_{h'}$ is not a descendant-color of $c_{h-1}$, then $x'(Q_{h'},c_{h'})=0$ (and if $h'=\rho$ then $y'(Q_{h'},c_{h'})=0$);

\item Otherwise, $x'(Q_{h'},c_{h'})=x(Q_{h'},c_{h'})/x(Q_{h-1},c_{h-1})$. Moreover, if $h'=\rho$, then $y'(Q_{h'},c_{h'})=y(Q_{h'},c_{h'})/x(Q_{h-1},c_{h-1})$.
\end{itemize}
\end{properties}

(Here, the $x$ and the $y$-variables are the ones from the original LP-solution).

In order to perform the first iteration, each level-$1$ square $Q_1\in \qset_1$ chooses one of the colors $c_1\in \chi_1$, where each color $c_1$ is chosen with probability $x(Q_1,c_1)$ (recall that by Constraint~(\ref{LP: sum of colors 1 for level 1}), $\sum_{c_1\in \chi_1}x(Q_1,c_1)=1$). Assume now that we have chosen color $c_1$ for square $Q_1$. We update the LP-solution. For every level $2\leq h\leq \rho$, every level-$h$ square $Q_h\in \dset_h(Q_1)$, and every level-$h$ color $c_h$, we do the following: if $c_h$ is not a descendant color of $c_1$, then we set $x'(Q_h,c_h)=0$ (if $h=\rho$, then we also set $y'(Q_h,c_h)=0$). Otherwise, we set $x'(Q_h,c_h)=x(Q_h,c_h)/x(Q_1,c_1)$ (similarly, if $h=\rho$, then we also set $y'(Q_h,c_h)=y(Q_h,c_h)/x(Q_1,c_1)$). We then continue to the next iteration.

The invariant clearly holds at the beginning of the second iteration. Assume now that it holds at the beginning of iteration $h$. The iteration is executed as follows. Consider some level-$h$ square $Q_h$. Suppose its parent square is $Q_{h-1}$, and it was assigned color $c_{h-1}$. Then Constraint~(\ref{LP: color coordination across levels}), together with  Invariant (\ref{invariant}) ensures that:

\[\sum_{c_h\in \tilde{\chi}_h(c_{h-1})}x'(Q_h,c_h)=1.\]

Square $Q_h$ chooses one of the child-colors $c_h$ of $c_{h-1}$, where color $c_h$ is chosen with probability $x'(Q_h,c_h)$.  Assume that we have assigned color $c_h$ to square $Q_h$. We now update the LP-solution. Consider some level $h<h'\leq \rho$, some level-$h'$ square $Q_{h'}\in \dset_{h'}(Q_{h})$ that is a descendant of $Q_{h}$, and some level-$h'$ color $c_{h'}\in \chi_{h'}$. Assume first that if $c_{h'}$ is not a descendant color of $c_{h}$. Then we set $x''(Q_{h'},c_{h'})=0$ (if $h'=\rho$, then we also set $y''(Q_{h'},c_{h'})=0$). Assume now that $c_{h'}$ is a descendant color of $c_{h}$. Let $Q_{h-1}$ be the parent-square of $Q_h$, and let $c_{h-1}$ be the parent-color of $c_h$. Then we set:

\[x''(Q_{h'},c_{h'})=\frac{x'(Q_{h'},c_{h'})}{x'(Q_h,c_h)}=\frac{x(Q_{h'},c_{h'})/x(Q_{h-1},c_{h-1})}{x(Q_h,c_h)/x(Q_{h-1},c_{h-1})}=\frac{x(Q_{h'},c_{h'})}{x(Q_h,c_h)}.\]

 (We have used the fact that  Invariant (\ref{invariant}) held at the beginning of the current iteration).
Similarly, if $h=\rho$, then we set:

\[y''(Q_{h'},c_{h'})=\frac{y'(Q_{h'},c_{h'})}{x'(Q_h,c_h)}=\frac{y(Q_{h'},c_{h'})/x(Q_{h-1},c_{h-1})}{x(Q_h,c_h)/x(Q_{h-1},c_{h-1})}=\frac{y(Q_{h'},c_{h'})}{x(Q_h,c_h)}.\]

We then replace solution $(x',y')$ with the new solution $(x'',y'')$.
This completes the description of iteration $h$. Observe that Invariant (\ref{invariant}) continues to hold at the end of the current iteration.

The final iteration $\rho$ is executed as follows. For every level-$\rho$ square $Q_{\rho}\in \qset_{\rho}$, we select a level-$\rho$ color $c_{\rho}$ for $Q_{\rho}$ exactly as before, but we do not update the LP-solution. If $\frac{y'(Q_{\rho},c_{\rho})}{x'(Q_{\rho},c_{\rho})}\geq 1$, then we add $\ceil{\frac{y'(Q_{\rho},c_{\rho})}{x'(Q_{\rho},c_{\rho})}}$ distinct  vertices of $Q_{\rho}\cap U$, whose level-$\rho$ color is $c_{\rho}$  to our solution  $U'$. This quantity is guaranteed to be bounded by $n(Q_{\rho},c_{\rho})$, 
due to constraint~(\ref{LP: bounding rho colored vertices in each square}) and Invariant (\ref{invariant}). Otherwise, with probability $\frac{y'(Q_{\rho},c_{\rho})}{x'(Q_{\rho},c_{\rho})}$, we add one vertex of $Q_{\rho}\cap U$ whose level-$\rho$ color is $c_{\rho}$ to our solution, and add $0$ such vertices with the remaining probability. (Recall that $\frac{y'(Q_{\rho},c_{\rho})}{x'(Q_{\rho},c_{\rho})}=\frac{y(Q_{\rho},c_{\rho})}{x(Q_{\rho},c_{\rho})}$ from Invariant (\ref{invariant}) --- we use this fact later.) The following theorem concludes the analysis of Stage 1.

\begin{theorem}\label{thm: stage 1 analysis}
The expected number of vertices added to $U'$ at Stage 1 is at least $\sum_{c_{\rho}\in \chi_{\rho}}Y_{\rho}(c_{\rho})$.
\end{theorem}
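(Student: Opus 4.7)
The plan is to prove this by showing, via induction on the level $h$, that the randomized rounding procedure preserves marginal probabilities in a very clean way: the probability that a level-$h$ square $Q_h$ is assigned a level-$h$ color $c_h$ equals exactly $x(Q_h,c_h)$ from the original LP solution. Once this is established, Stage 1's expected output value falls out directly from the LP accounting constraints.

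More concretely, I would proceed as follows. First, observe the telescoping structure of the update rule. At iteration $h$, conditional on the parent-square $Q_{h-1}$ being assigned color $c_{h-1}$, Invariant~(\ref{invariant}) guarantees that the distribution over child-colors of $c_{h-1}$ used for $Q_h$ is $x'(Q_h,c_h) = x(Q_h,c_h)/x(Q_{h-1},c_{h-1})$ (and zero for non-descendant colors). Multiplying the conditional probabilities along the ancestor-chain of $Q_\rho$, the $x(Q_{h-1},c_{h-1})$ terms cancel in a telescoping manner, so $\Pr[Q_h \text{ is assigned } c_h] = x(Q_h,c_h)$ for every level $h$ and every color $c_h\in \chi_h$ (with the convention that $x(Q_0,c_0)=1$). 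The degenerate case $x(Q_{h-1},c_{h-1})=0$ occurs on a zero-probability event and can be handled by convention.

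Next, I would analyze the final step. Conditional on $Q_\rho$ being assigned color $c_\rho$, the number of vertices inserted into $U'$ from $Q_\rho$ is deterministic and equal to $\lceil y'(Q_\rho,c_\rho)/x'(Q_\rho,c_\rho)\rceil$ when $y'/x'\geq 1$, and is Bernoulli with mean $y'(Q_\rho,c_\rho)/x'(Q_\rho,c_\rho)$ otherwise. By Invariant~(\ref{invariant}) the ratio $y'/x'$ equals the original ratio $y(Q_\rho,c_\rho)/x(Q_\rho,c_\rho)$, so in either case the conditional expectation is at least $y(Q_\rho,c_\rho)/x(Q_\rho,c_\rho)$. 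Multiplying by the marginal probability $x(Q_\rho,c_\rho)$ computed in the previous paragraph, the expected contribution of the pair $(Q_\rho,c_\rho)$ to $|U'|$ is at least $y(Q_\rho,c_\rho)$. Summing over all level-$\rho$ squares and colors, and invoking constraint~(\ref{LP: accounting for Yrho}), gives $\mathbb{E}[|U'|] \geq \sum_{Q_\rho}\sum_{c_\rho} y(Q_\rho,c_\rho) = \sum_{c_\rho\in \chi_\rho} Y_\rho(c_\rho)$, as desired.

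There is no serious obstacle here; the only delicate point is verifying that the invariant is propagated correctly across the updates so that the marginal-probability calculation actually telescopes, and that the final rounding step does not lose any expectation (hence the use of the ceiling in the $y'/x'\geq 1$ branch rather than a single Bernoulli trial). Both points are mechanical given the structure of the LP updates, so the bulk of the write-up will just be the clean inductive statement and the telescoping argument.
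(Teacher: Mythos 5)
Your proposal is correct and follows essentially the same route as the paper's proof: the telescoping product of conditional probabilities along the ancestor-chain showing $\Pr[Q_\rho \text{ gets } c_\rho]=x(Q_\rho,c_\rho)$, followed by the case analysis of the final step (ceiling when $y/x\geq 1$, Bernoulli otherwise) and summation via constraint~(\ref{LP: accounting for Yrho}). No gaps to report.
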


\begin{proof}
Consider some color $c_{\rho}\in \chi_{\rho}$. Recall that $Y_{\rho}(c_{\rho})=\sum_{Q_{\rho}\in \qset_{\rho}}y(Q_{\rho},c_{\rho})$ due to Constraint~(\ref{LP: accounting for Yrho}). For each level-$\rho$ square $Q_{\rho}$ we define a new random variable $z(Q_{\rho},c_{\rho})$ to be the number of vertices of $Q_{\rho}$, whose level-$\rho$ color is $c_{\rho}$, that are added to the solution $U'$ (this number may only be non-zero if $Q_{\rho}$ is assigned the color $c_{\rho}$). It is now enough to prove that $\expect{z(Q_{\rho},c_{\rho})}\geq y(Q_{\rho},c_{\rho})$. From now on we fix a square $Q_{\rho}$ and a color $c_{\rho}$ and prove the above inequality for them. We use the following claim:

\begin{claim}
The probability that square $Q_{\rho}$ is assigned color $c_{\rho}$ is $x(Q_{\rho},c_{\rho})$.
\end{claim}
\begin{proof}
For all $1\leq h<\rho$, let $Q_h$ be the level-$h$ ancestor-square of $Q_{\rho}$, and let $c_h$ the level-$h$ ancestor-color of $c_{\rho}$. For all $1\leq h\leq \rho$, let $\event_h$ be the event that square $Q_h$ is assigned color $c_h$. Then:

\[\prob{\event_1}=x(Q_1,c_1).\]

We now fix some $1<h\leq \rho$, and analyze the probability of event $\event_h$, assuming that event $\event_{h-1}$ has happened. Let $x',y'$ denote the LP-solution at the beginning of iteration $h$. Then:

\[\prob{\event_h\mid\event_{h-1}}=x'(Q_h,c_h)=\frac{x(Q_h,c_h)}{x(Q_{h-1},c_{h-1})}.\]

Therefore:

\[
\begin{split} 
\prob{\event_{\rho}}&=\prob{\event_{\rho}\mid \event_{\rho-1}}\cdot \prob{\event_{\rho-1}\mid \event_{\rho-2}} \cdots \prob{\event_2\mid \event_1}\cdot \prob{\event_1}\\
&=\frac{x(Q_{\rho},c_{\rho})}{x(Q_{\rho-1},c_{\rho-1})}\cdot \frac{x(Q_{\rho-1},c_{\rho-1})}{x(Q_{\rho-2},c_{\rho-2})}\cdots \frac{x(Q_{2},c_{2})}{x(Q_{1},c_{1})}\cdot x(Q_{1},c_{1})\\
&=x(Q_{\rho},c_{\rho}).\end{split}\]
\end{proof}

We are now ready to complete the proof of Theorem~\ref{thm: stage 1 analysis}.
Assume first that $\frac{y(Q_{\rho},c_{\rho})}{x(Q_{\rho},c_{\rho})}\geq 1$. Then, if $Q_{\rho}$ is assigned color $c_{\rho}$ (which happens with probability $x(Q_{\rho},c_{\rho})$), we select  $\ceil{\frac{y(Q_{\rho},c_{\rho})}{x(Q_{\rho},c_{\rho})}}$ vertices of $Q_{\rho}$, whose level-$\rho$ color is $c_{\rho}$ to the solution. Therefore, $\expect{z(Q_{\rho},c_{\rho})}\geq y(Q_{\rho},c_{\rho})$.

Assume now that $\frac{y(Q_{\rho},c_{\rho})}{x(Q_{\rho},c_{\rho})}< 1$. Then,  if $Q_{\rho}$ is assigned color $c_{\rho}$ (which again happens with probability $x(Q_{\rho},c_{\rho})$), we add one vertex of $Q_{\rho}$, whose level-$\rho$ color is $c_{\rho}$, with probability $\frac{y(Q_{\rho},c_{\rho})}{x(Q_{\rho},c_{\rho})}$. Clearly, in this case, $\expect{z(Q_{\rho},c_{\rho})}= y(Q_{\rho},c_{\rho})$.
\end{proof}

\subsection*{Stage 2: Ensuring that the Solution is Almost Feasible}
In this stage, for each level $1\leq h\leq \rho$ and for each level-$h$ color $c_h\in \chi_h$, we inspect the number of vertices in the solution $U'$ that we constructed at stage 1, whose level-$h$ color is $c_h$. If this number is greater than $64d_h\log^3n$, then we say that color $c_h$ has failed. In that case, we return an empty solution. We think of this as deleting all vertices from our solution, so overall at most $n$ vertices are deleted. 
We will use the following standard Chernoff bound (see e.g. \cite{measure-concentration}).

\begin{theorem}\label{thm: Chernoff}
Let $\set{X_1,\ldots,X_r}$ be a collection of independent random variables taking values in $[0,1]$, and let $X=\sum_iX_i$. Denote $\mu=\expect{X}=\sum_i\expect{X_i}$. Then for all  $0<\eps<1$:

\[\prob{X>(1+\eps)\mu }\leq e^{-\eps^2\mu /3},\]

and 
\[
\prob{X<(1-\eps)\mu}\leq e^{-\eps^2\mu/2}.\]
\end{theorem}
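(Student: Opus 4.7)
The plan is to prove the bound by the standard moment generating function (Bernstein/Chernoff) argument, treating the two tails separately but analogously. For the upper tail, I would begin with the exponential Markov inequality: for any parameter $t>0$,
\[
\prob{X > (1+\eps)\mu} = \prob{e^{tX} > e^{t(1+\eps)\mu}} \leq e^{-t(1+\eps)\mu}\,\expect{e^{tX}}.
\]
The point of this step is to reduce the problem to controlling $\expect{e^{tX}}$, which factorizes because the $X_i$ are independent: $\expect{e^{tX}} = \prod_i \expect{e^{tX_i}}$.

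The key analytic step is to bound each factor. Since $X_i\in[0,1]$ and $e^{tx}$ is convex in $x$, we have $e^{tX_i}\leq 1+(e^t-1)X_i$ pointwise on $[0,1]$. Taking expectations and using $1+y\leq e^y$ yields $\expect{e^{tX_i}}\leq e^{(e^t-1)\expect{X_i}}$, and multiplying across $i$ gives $\expect{e^{tX}}\leq e^{(e^t-1)\mu}$. Substituting back,
\[
\prob{X > (1+\eps)\mu} \leq \exp\bigl((e^t-1)\mu - t(1+\eps)\mu\bigr).
\]
Optimizing over $t>0$ by choosing $t=\ln(1+\eps)$ produces the classical form $\left(\frac{e^\eps}{(1+\eps)^{1+\eps}}\right)^\mu$. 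The final piece is the elementary calculus inequality $(1+\eps)\ln(1+\eps) - \eps \geq \eps^2/3$ for $0<\eps<1$, which one verifies by a Taylor expansion of $\ln(1+\eps)$ around $0$ and checking the remainder; this converts the ratio-of-exponentials bound into the claimed $e^{-\eps^2\mu/3}$.

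The lower tail is symmetric, with $t<0$: applying Markov to $e^{-tX}$ and using the same convexity bound gives $\expect{e^{-tX}}\leq e^{(e^{-t}-1)\mu}$, and optimizing with $t=-\ln(1-\eps)$ yields $\left(\frac{e^{-\eps}}{(1-\eps)^{1-\eps}}\right)^\mu$. The corresponding analytic inequality $(1-\eps)\ln(1-\eps)+\eps \geq \eps^2/2$ (again Taylor at $\eps=0$) gives the slightly tighter exponent $\eps^2/2$. The only mildly delicate step in the whole argument is verifying the two scalar inequalities that convert the optimized MGF bound into a clean Gaussian-type exponent; everything else is mechanical. As the paper notes, this is entirely standard, and the proof is included here only for completeness.
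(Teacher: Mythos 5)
Your proof is correct, and there is nothing in the paper to compare it against: the paper states this bound as a standard fact and cites a reference (``see e.g.~\cite{measure-concentration}'') rather than proving it, and your argument is exactly the canonical moment-generating-function proof one finds in those references. The two scalar inequalities you defer to Taylor expansion do hold cleanly: $(1+\eps)\ln(1+\eps)-\eps=\sum_{k\geq 2}\frac{(-1)^k\eps^k}{k(k-1)}\geq \frac{\eps^2}{2}-\frac{\eps^3}{6}\geq\frac{\eps^2}{3}$ for $0<\eps<1$ by the alternating-series bound, and $(1-\eps)\ln(1-\eps)+\eps=\sum_{k\geq 2}\frac{\eps^k}{k(k-1)}\geq\frac{\eps^2}{2}$ since all terms are nonnegative. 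The only blemish is a sign-convention slip in the lower tail (you announce $t<0$ but then take $t=-\ln(1-\eps)>0$ inside $e^{-tX}$); the substance is unaffected.
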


\begin{corollary}\label{cor: Chernoff upper bound}
Let $X_1,\ldots,X_r$ be a collection of independent random variables taking values in $[0,d]$ for some integer $d>0$, and let $X=\sum_iX_i$. Denote $\mu=\expect{X}=\sum_i\expect{X_i}$. Then for all  $0<\eps<1$:

\[\prob{X>(1+\eps)\mu }\leq e^{-\eps^2\mu /(3d)},\]

and 
\[
\prob{X<(1-\eps)\mu}\leq e^{-\eps^2\mu/(2d)}.\]
\end{corollary}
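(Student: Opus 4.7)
The plan is to reduce Corollary~\ref{cor: Chernoff upper bound} directly to Theorem~\ref{thm: Chernoff} by a simple rescaling. Since each $X_i$ takes values in $[0,d]$, I will define new random variables $Y_i = X_i/d$, so that $Y_i \in [0,1]$ for every $i$. Independence of the $Y_i$'s follows from independence of the $X_i$'s, so the rescaled family satisfies the hypothesis of Theorem~\ref{thm: Chernoff}.

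Setting $Y = \sum_{i=1}^r Y_i = X/d$, we have $\mu_Y := \expect{Y} = \mu/d$. For the upper tail, I will observe that the event $\set{X > (1+\eps)\mu}$ coincides with $\set{Y > (1+\eps)\mu_Y}$, and then apply Theorem~\ref{thm: Chernoff} to $Y$ to obtain
\[
\prob{X > (1+\eps)\mu} \;=\; \prob{Y > (1+\eps)\mu_Y} \;\leq\; e^{-\eps^2 \mu_Y/3} \;=\; e^{-\eps^2 \mu/(3d)}.
\]
The lower tail is handled identically: $\set{X < (1-\eps)\mu} = \set{Y < (1-\eps)\mu_Y}$, and Theorem~\ref{thm: Chernoff} yields $e^{-\eps^2 \mu_Y/2} = e^{-\eps^2\mu/(2d)}$.

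There is no real obstacle here; the only thing to check is that scaling is legitimate, which it is because Theorem~\ref{thm: Chernoff} only requires the summands to lie in $[0,1]$ and to be independent, and both properties are preserved (and in fact restored) by dividing each $X_i$ by the common upper bound $d$. Note that integrality of $d$ plays no role in the argument and is only included in the statement for convenience of later application.
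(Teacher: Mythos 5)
Your rescaling argument is correct and is exactly the paper's proof: replace each $X_i$ by $X_i/d$, apply Theorem~\ref{thm: Chernoff}, and translate the tail events and the mean back, which yields the stated bounds. Nothing further is needed.
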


(The Corollary is obtained by replacing each variable $X_i$ with a variable $X'_i=X_i/d$ and then applying Theorem~\ref{thm: Chernoff}).
Following is the main theorem for the analysis of Stage 2.

\begin{theorem}\label{thm: stage 2 bad events}
Fix some level $1\leq h\leq\rho$ and some level-$h$ color $c_h\in \chi_h$. The probability that color $c_h$ fails is at most $1/(2n^3)$.
\end{theorem}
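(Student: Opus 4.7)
Let $X$ denote the number of vertices added to $U'$ whose level-$h$ color is $c_h$; the goal is to show $\Pr[X > 64 d_h \log^3 n] \le 1/(2 n^3)$. I plan to proceed in two stages: bound $\expect{X}$ using the LP constraints, and then argue concentration via a layered Chernoff argument that exploits the conditional independence of the color choices at each level.

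For the mean, write $X = \sum_{Q_\rho \in \qset_\rho} Z_{Q_\rho}$, where $Z_{Q_\rho}$ is the number of vertices with level-$h$ color $c_h$ that $Q_\rho$ contributes to $U'$. Adapting the computation in the proof of Theorem~\ref{thm: stage 1 analysis} and using the elementary estimate $\lceil y/x\rceil \le 2y/x$ whenever $y/x \ge 1$, one obtains $\expect{Z_{Q_\rho}} \le 2 \sum_{c_\rho \in \tilde{\chi}_\rho(c_h)} y(Q_\rho, c_\rho)$. Summing over $Q_\rho$ and invoking constraints~(\ref{LP: Yh is sum of Yrhos}) and~(\ref{LP: bound of Yh by kh}) then yields $\expect{X} \le 2 d_h$. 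Moreover, constraint~(\ref{LP: color coordination with kh across levels}) with $h = h' = \rho$ gives $y(Q_\rho, c_\rho) \le d_\rho \cdot x(Q_\rho, c_\rho)$, so that each $Z_{Q_\rho}$ deterministically takes values in $[0, d_\rho]$.

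For concentration, the key structural fact is that the random color assignments at level $h'$ are mutually independent once the assignments at levels $1,\ldots,h'-1$ are fixed. I would exploit this by revealing the random choices layer by layer and applying Corollary~\ref{cor: Chernoff upper bound} conditionally at each layer. At the final layer, conditional on the color assignments at levels $1,\ldots,\rho-1$, the variables $\{Z_{Q_\rho}\}$ are independent and bounded by $d_\rho$, and Chernoff gives
\[
\Pr\bigl[\,X > 64 d_h \log^3 n \;\big|\; \text{levels } 1,\ldots,\rho{-}1\,\bigr] \;\le\; \exp\!\left(-\Omega\!\left(\frac{d_h \log^3 n}{d_\rho}\right)\right) \;\le\; \frac{1}{4 n^3},
\]
provided the conditional mean $\mu_T := \expect{X \mid \text{levels } 1,\ldots,\rho{-}1}$ is at most $O(d_h \log^2 n)$; here we crucially use that $d_h/d_\rho = \eta^{\rho-h} \ge 1$, so the exponent is at least $\Omega(\log^3 n) \gg \log n$.

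It remains to show $\Pr[\mu_T > O(d_h \log^2 n)] \le 1/(4 n^3)$. Observing that $\mu_0, \mu_1, \ldots, \mu_{\rho-1}$, where $\mu_{h'} := \expect{X \mid \text{levels } 1,\ldots,h'}$, is a Doob martingale with $\mu_0 = \expect{X} \le 2 d_h$, I plan to apply a martingale concentration inequality, such as Freedman's or a conditional Azuma bound, using per-step bounds derived from constraint~(\ref{LP: color coordination with kh across levels}) that control the conditional variance $\operatorname{Var}[\mu_{h'} \mid \mu_{h'-1}]$ in terms of $\mu_{h'-1}$ and $d_{h'}$. The main obstacle is calibrating the per-level slack so that, across the $\rho \le \sqrt{\log n}$ levels of the hierarchy, the total deviation of $\mu_T$ from $\mu_0$ remains only polylogarithmic in $n$: this requires ensuring the sum of per-level variance proxies is of order $d_h \cdot \operatorname{polylog}(n)$, which in turn follows from a careful accounting that combines constraints~(\ref{LP: color coordination across levels}) and~(\ref{LP: color coordination with kh across levels}).
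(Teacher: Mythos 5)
Your high-level plan coincides with the paper's proof: the paper's Lemma~\ref{lem: bounding event rho} is exactly a hands-on version of your stopped-martingale argument. It tracks the per-level quantity $N_{h'}$ (your conditional mean $\mu_{h'}$, up to the factor-$2$ loss from the final rounding), telescopes $\prob{\event_{\rho}}\leq \sum_{h'}\prob{\event_{h'}\mid\neg\event_{h'-1}}$ with a threshold growing by a factor $(1+1/\log n)$ per level, and applies the conditional Chernoff bound of Corollary~\ref{cor: Chernoff upper bound} at each level, using the conditional independence across squares that you also identify; your mean bound $\expect{X}\leq 2d_h$ and your last-layer step (contributions bounded by $d_{\rho}$ via constraint~(\ref{LP: color coordination with kh across levels}) at $(\rho,\rho)$ together with Invariant~(\ref{invariant}), giving exponent $\Omega(d_h\log^3 n/d_{\rho})=\Omega(\log^3 n)$) are both sound.

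The gap is in the one step you defer to ``careful accounting'', and it is precisely where the paper's technical content lies. What is needed is that, conditioned on the colors chosen at levels $1,\ldots,h'-1$, the contribution of each level-$h'$ square to $\mu_{h'}$ is an independent random variable bounded by $d_h$, \emph{uniformly in $h'$, including the levels $h'<h$}. For $h'<h$ this follows only from the cross-level form of constraint~(\ref{LP: color coordination with kh across levels}) with $(\tilde h,\tilde h')=(h',h)$: the fractional color-$c_h$ mass inside a level-$h'$ square $Q_{h'}$ is at most $d_h\cdot x(Q_{h'},c_{h'})$, so after $Q_{h'}$ commits to a color its contribution is at most $d_h$ (this is the paper's Case~1; Case~2, $h\leq h'$, sums over the descendant colors and gets the bound $d_{h'}\leq d_h$). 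Your sketch instead proposes to control $\operatorname{Var}[\mu_{h'}\mid\mu_{h'-1}]$ ``in terms of $\mu_{h'-1}$ and $d_{h'}$'', i.e., with per-square range $d_{h'}$. At levels $h'<h$ one has $d_{h'}=d_h\cdot\eta^{h-h'}\geq d_h\cdot 2^{\sqrt{\log n}}$, and a Freedman/Bernstein exponent of order $t^2/(d_{h'}\mu_{h'-1})$ with $t,\mu_{h'-1}=d_h\cdot\operatorname{polylog}(n)$ is then $O\!\left(d_h\operatorname{polylog}(n)/d_{h'}\right)=o(1)$, so the martingale step yields nothing exactly at the levels above $h$. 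With the uniform bound $d_h$ in hand the calibration you worry about is routine (per-level additive slack $\approx d_h\log^2 n$, or the paper's multiplicative $(1+1/\log n)$ slack against a base threshold of $16d_h\log^3 n$, both survive the $\rho\leq\sqrt{\log n}$ levels and a union bound at failure probability $1/(4n^4)$ per level); without it, the proof as proposed does not go through, so you should state and prove the two-case bound explicitly rather than cite it as accounting.
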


Note that there are at most $n$ colors at each level, and $\rho= O(\sqrt{\log n})$ levels. Therefore, using the union bound, we obtain the following corollary.

\begin{corollary}\label{cor: any color fails}
The probability that any color fails is at most $1/(2n)$.
\end{corollary}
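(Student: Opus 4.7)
The plan is straightforward: the corollary follows from Theorem~\ref{thm: stage 2 bad events} by a routine union bound over all levels and all colors. First I would count the total number of (level, color) pairs across the hierarchy. For each level $1 \leq h \leq \rho$, the set $\chi_h$ of level-$h$ colors is in bijection with the level-$h$ intervals of the partition $\jset_h$ of $R^*$, and since $|R^*| = \ell' \leq \sqrt{n}$, we certainly have $|\chi_h| \leq n$ (a very loose bound, which is all we need). Combined with $\rho \leq \sqrt{\log n}$, the total number of color-level pairs is at most $\rho \cdot n \leq n\sqrt{\log n}$.

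Next I would apply the union bound. For each fixed pair $(h, c_h)$, Theorem~\ref{thm: stage 2 bad events} gives $\Pr[c_h \text{ fails}] \leq 1/(2n^3)$. Summing over all at most $n\sqrt{\log n}$ such pairs yields
\[
\Pr[\text{some color fails}] \;\leq\; n\sqrt{\log n} \cdot \frac{1}{2n^3} \;=\; \frac{\sqrt{\log n}}{2n^2}.
\]

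Finally, I would observe that for all $n \geq 2$ we have $\sqrt{\log n} \leq n$, so $\sqrt{\log n}/(2n^2) \leq 1/(2n)$, which gives the claimed bound. There is no real obstacle here — the entire content of the corollary is packaged inside Theorem~\ref{thm: stage 2 bad events}, and the corollary is simply the convenient aggregate statement one uses in the analysis of Stage 2 (to conclude that with probability at least $1 - 1/(2n)$ the rounded solution $U'$ from Stage 1 is ``almost feasible'', in the sense that every color carries at most $64 d_h \log^3 n$ vertices of $U'$, and that the $n$ vertices potentially wasted in the failure case cost only a lower-order additive loss compared to the expected size of $U'$ guaranteed by Theorem~\ref{thm: stage 1 analysis}).
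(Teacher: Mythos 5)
Your proof is correct and follows exactly the paper's argument: a union bound over the at most $n$ colors per level and the $\rho=O(\sqrt{\log n})$ levels, using the per-color bound of $1/(2n^3)$ from Theorem~\ref{thm: stage 2 bad events}. Nothing further is needed.
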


We now turn to prove Theorem~\ref{thm: stage 2 bad events}.

\begin{proofof}{Theorem \ref{thm: stage 2 bad events}}
We assume that the level $h$ and the color $c_h$ are fixed. 
For consistency, we define a fractional solution $(x',y')$ obtained at the end of iteration $\rho$, as follows. Consider a level-$\rho$ square $Q_{\rho}$, and let $c_{\rho}$ be the level-$\rho$ color that was assigned to it at the end of iteration $\rho$. Then we set $x'(Q_{\rho},c_{\rho})=1$, and $y'(Q_{\rho},c_{\rho})=y(Q_{\rho},c_{\rho})/x(Q_{\rho},c_{\rho})$. For every other level-$\rho$ color $c'_{\rho}$, we set $x'(Q_{\rho},c'_{\rho})=0$, and $y'(Q_{\rho},c'_{\rho})=0$.

For simplicity, we denote $\chi'=\tilde{\chi}_{\rho}(c_h)$ - the set of all descendant-colors of $c_h$ at level $\rho$. For every level $1\leq h'\leq \rho$, we define a random variable $N_{h'}$, whose intuitive meaning is the number of color-$c_h$ vertices in the fractional solution, that is obtained at the end of iteration $h'$. 

Formally, let $(x',y')$ be the solution obtained at the end of iteration $h'$. 
Then: $$N_{h'}=\sum_{Q_{\rho} \in \qset_{\rho}}\sum_{c_{\rho}\in \chi'}y'(Q_{\rho},c_{\rho}).$$

For all $1\leq h'\leq \rho$, we define a bad event $\event_{h'}$ to be the event that $N_{h'}>d_h\cdot 16(1+1/\log n)^{h'}\log^3n$. The following lemma is central to the analysis.

\begin{lemma}\label{lem: bounding event rho}
$\prob{\event_{\rho}}\leq \frac 1{4n^3}$.
\end{lemma}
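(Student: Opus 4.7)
}

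The plan is to prove the stronger statement that, for every $1\le h'\le \rho$, $\prob{\event_{h'}}\le \frac{h'}{4n^3\rho}$, by induction on $h'$, and then apply a union bound over the $\rho$ iterations. Let $\mathcal{F}_{h'-1}$ denote the $\sigma$-algebra generated by the coloring choices made during iterations $1,\ldots,h'-1$. The starting point is that $N_0=\sum_{Q_\rho}\sum_{c_\rho\in\chi'}y(Q_\rho,c_\rho) = Y_h(c_h)\le d_h$ by Constraints (\ref{LP: accounting for Yrho}), (\ref{LP: Yh is sum of Yrhos}) and (\ref{LP: bound of Yh by kh}), so $N_0<16 d_h\log^3 n$, i.e.\ $\event_0$ never occurs.

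First I would show a one-step martingale-and-decomposition property. Using Invariant (\ref{invariant}), after iteration $h'$, every level-$\rho$ square $Q_\rho$ contributes $y'(Q_\rho,c_\rho)= y(Q_\rho,c_\rho)/x(Q_{h'},c_{h'})$ for each $c_\rho\in\tilde\chi_\rho(c_{h'})$, where $c_{h'}$ is the color chosen for its level-$h'$ ancestor $Q_{h'}$, and $0$ otherwise. Grouping by $Q_{h'}$ this gives $N_{h'}=\sum_{Q_{h'}\in\qset_{h'}} M(Q_{h'})$, where, conditional on $\mathcal{F}_{h'-1}$, the $M(Q_{h'})$ are mutually independent (each square draws its color independently of the others), and
\[
\expect{M(Q_{h'})\mid \mathcal{F}_{h'-1}} \;=\; \sum_{Q_\rho\in\dset_\rho(Q_{h'})}\sum_{c_\rho\in\chi'} y'_{\mathrm{pre}}(Q_\rho,c_\rho),
\]
so that $\expect{N_{h'}\mid \mathcal{F}_{h'-1}}=N_{h'-1}$.

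The key deterministic bound is that $M(Q_{h'})\le d_h$ almost surely. Only squares $Q_{h'}$ whose chosen color $c_{h'}$ is comparable with $c_h$ in the color hierarchy contribute. If $h'\ge h$, then $c_{h'}\in\tilde\chi_{h'}(c_h)$ and Constraint~(\ref{LP: color coordination with kh across levels}), applied with $h=h'$ and $h'=h'$, gives $M(Q_{h'})\le d_{h'}\le d_h$; if $h'<h$, then $c_{h'}$ must be the unique level-$h'$ ancestor of $c_h$ and Constraint~(\ref{LP: color coordination with kh across levels}), applied with $h=h'$ and $h'=h$, gives exactly $M(Q_{h'})\le d_h$. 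With these facts in place, I would apply Corollary~\ref{cor: Chernoff upper bound} conditional on $\mathcal{F}_{h'-1}$, with $d=d_h$, $\mu=N_{h'-1}$, and a threshold $T_{h'}=16 d_h(1+1/\log n)^{h'}\log^3 n$. Assuming $\neg\event_{h'-1}$, so $\mu\le T_{h'-1}$, I choose $\eps$ so that $(1+\eps)\mu=T_{h'}$, which makes $\eps\ge 1/\log n$ when $\mu\ge T_{h'-1}/2$, and instead uses $\eps\ge T_{h'}/(2\mu)-1$ when $\mu$ is much smaller than $T_{h'}$. In both cases, $\eps^2\mu\ge c\, d_h\log^3 n$ for a sufficiently large constant $c$ (using $T_{h'}\ge 16 d_h\log^3 n$), so Chernoff yields $\prob{N_{h'}>T_{h'}\mid \mathcal{F}_{h'-1},\neg\event_{h'-1}}\le e^{-c'\log^3 n/\log^2 n}\le \frac{1}{4n^3\rho}$ provided the constant $16$ in the threshold is taken large enough. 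Finally, a telescoping union bound,
\[
\prob{\event_{h'}}\;\le\;\prob{\event_{h'-1}}+\prob{\event_{h'}\mid \neg\event_{h'-1}},
\]
iterated $\rho$ times, combined with $\prob{\event_0}=0$, yields $\prob{\event_\rho}\le \rho\cdot\frac{1}{4n^3\rho}=\frac{1}{4n^3}$. Note that $(1+1/\log n)^\rho\le e^{\rho/\log n}=O(1)$ since $\rho\le\sqrt{\log n}$, so $T_\rho\le 64 d_h\log^3 n$, which is consistent with the failure threshold.

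The main obstacle will be the small-$\mu$ regime of Chernoff: because $N_{h'-1}$ could be an arbitrarily small fraction of $T_{h'-1}$, the standard $(1+\eps)\mu$-statement does not directly give a deviation of size $T_{h'}$ with exponential tails in the $\mu$-independent sense. The right way, outlined above, is to choose $\eps$ adaptively depending on $\mu$, using the fact that the threshold $T_{h'}$ is always at least $16 d_h\log^3 n\gg d_h\log n$, so even in the extreme case $\mu\ll T_{h'}$, the exponent $\eps^2\mu/(3d_h)$ is dominated by a term of order $T_{h'}/d_h=\Omega(\log^3 n)$, which is more than enough to beat $\ln(4n^3\rho)=O(\log n)$.
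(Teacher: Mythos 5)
Your proposal is correct and follows essentially the same route as the paper's proof: condition on $\neg\event_{h'-1}$, decompose $N_{h'}$ into independent per-square contributions that are deterministically bounded by $d_h$ via Constraint~(\ref{LP: color coordination with kh across levels}) in exactly the same two cases ($h'\geq h$ and $h'<h$), note the conditional expectation equals $N_{h'-1}$, apply the Chernoff bound for $[0,d_h]$-valued variables, and telescope over the $\rho$ levels. The only divergence is your adaptive-$\eps$ treatment of the small-mean regime: the paper sidesteps this by applying Corollary~\ref{cor: Chernoff upper bound} with $\eps=1/\log n$ and $\mu$ taken to be the threshold quantity $16 d_h(1+1/\log n)^{h'-1}\log^3 n$ (legitimate since the conditional mean is at most this), whereas your large-$\eps$ branch falls outside the stated range $0<\eps<1$ of that corollary and would require the standard $\eps\geq 1$ form of the multiplicative Chernoff bound, which still yields an exponent of order $T_{h'}/d_h=\Omega(\log^3 n)$ as you claim.
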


If $\event_{\rho}$ does not happen, then, if we denote by $(x',y')$ the solution obtained at the end of iteration $\rho$: 

\[N_{\rho}=\sum_{Q_{\rho}\in \qset_{\rho}}\sum_{c_{\rho}\in \chi'}y'(Q_{\rho},c_{\rho})<16d_h\left(1+\frac{1}{\log n}\right)^{\rho}\log^3n<32d_h\log^3n,\] 

since $\rho\leq \sqrt{\log n}$.

Recall that for every square $Q_{\rho}\in \qset_{\rho}$, if $y'(Q_{\rho},c_{\rho})\geq 1$, then we add $\ceil{y'(Q_{\rho},c_{\rho})}$ vertices of $Q_{\rho}$, whose level-$\rho$ color is $c_{\rho}$ to the solution. Otherwise, with probability $y'(Q_{\rho},c_{\rho})$, we add only one such vertex to our solution. Overall, from Theorem~\ref{thm: Chernoff}, it is immediate to verify that, if event $\event_{\rho}$ does not happen, then with probability at least $(1-1/n^4)$, the number of vertices in $U'$, whose level-$h$ color is $c_h$ is at most $2N_{\rho} \leq  64d_h\log^3n$. It now remains to prove Lemma~\ref{lem: bounding event rho}.


\begin{proofof}{Lemma \ref{lem: bounding event rho}}
For convenience, we also define event $\event_0$ that $N_0>d_h$, which, from Constraints~(\ref{LP: Yh is sum of Yrhos}) and (\ref{LP: bound of Yh by kh}) happens with probability $0$. Clearly,

\[\begin{split}
\prob{\event_{\rho}}&\leq\prob{\event_{\rho}\mid\neg \event_{\rho-1}}\cdot \prob{\neg \event_{\rho-1}}+\prob{\event_{\rho-1}}\\
&\leq\prob{\event_{\rho}\mid\neg \event_{\rho-1}}+\prob{\event_{\rho-1}\mid\neg \event_{\rho-2}}\cdot \prob{\neg\event_{\rho-2}}+\prob{\event_{\rho-2}} \\
&\vdots\\
&\leq \sum_{h'=1}^{\rho}\prob{\event_{h'}\mid \neg\event_{h'-1}}.
\end{split}
\]

We now analyze $\prob{\event_{h'}\mid \neg \event_{h'-1}}$. The following claim will finish the proof of the lemma.

\begin{claim}
For all $1\leq h'\leq \rho$, $\prob{\event_{h'}\mid \neg \event_{h'-1}}\leq 1/(4n^4)$.
\end{claim}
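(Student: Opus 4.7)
The plan is to apply the Chernoff bound of Corollary~\ref{cor: Chernoff upper bound} to $N_{h'}$, conditional on the state at the end of iteration $h'-1$. First I would observe that, conditional on this state, the color choices at different level-$h'$ squares are independent, so $N_{h'}=\sum_{Q_{h'}\in\qset_{h'}} X_{Q_{h'}}$ is a sum of independent random variables, where $X_{Q_{h'}}$ denotes the contribution of the descendants of $Q_{h'}$ (under the updated $y''$-values) to $N_{h'}$. A short direct computation then shows $\expect{N_{h'}\mid\text{state}_{h'-1}}=N_{h'-1}$: for each $Q_{h'}$ whose parent has color $c_{h'-1}$, the candidate color $c_{h'}\in\tilde{\chi}_{h'}(c_{h'-1})$ is chosen with probability $x'(Q_{h'},c_{h'})$, and given that choice the contribution equals $\sum_{Q_{\rho}\in\dset_{\rho}(Q_{h'})}\sum_{c_{\rho}\in\tilde{\chi}_{\rho}(c_h)\cap\tilde{\chi}_{\rho}(c_{h'})}y'(Q_{\rho},c_{\rho})/x'(Q_{h'},c_{h'})$; summing against $x'(Q_{h'},c_{h'})$ telescopes the denominator, and by Invariant~\ref{invariant} every $c_{\rho}\in\tilde{\chi}_{\rho}(c_h)$ with $y'(Q_{\rho},c_{\rho})>0$ already has its unique level-$h'$ ancestor in $\tilde{\chi}_{h'}(c_{h'-1})$, so the sum recovers exactly the contribution of $Q_{h'}$ to $N_{h'-1}$.

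Next I would establish the almost-sure bound $X_{Q_{h'}}\leq d_h$ by invoking Constraint~(\ref{LP: color coordination with kh across levels}) of the LP. If $h'\leq h$, the only color choice $c_{h'}$ giving a nonzero contribution is the level-$h'$ ancestor of $c_h$; in that case $\tilde{\chi}_{\rho}(c_h)\subseteq\tilde{\chi}_{\rho}(c_{h'})$, and after using Invariant~\ref{invariant} to rewrite $x'$ and $y'$ in terms of $x$ and $y$ (the common factor $x(Q_{h'-1},c_{h'-1})$ cancels out of the ratio), $X_{Q_{h'}}$ equals $\sum_{Q_{\rho}\in\dset_{\rho}(Q_{h'})}\sum_{c_{\rho}\in\tilde{\chi}_{\rho}(c_h)}y(Q_{\rho},c_{\rho})/x(Q_{h'},c_{h'})$, which is bounded by $d_h$ after applying the constraint with LP-parameters $(h\leftarrow h',\,h'\leftarrow h,\,Q_h\leftarrow Q_{h'},\,c_h\leftarrow c_{h'},\,c_{h'}\leftarrow c_h)$. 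If $h'>h$, nonzero contributions occur only when the chosen $c_{h'}$ is a descendant of $c_h$; then $\tilde{\chi}_{\rho}(c_h)\cap\tilde{\chi}_{\rho}(c_{h'})=\tilde{\chi}_{\rho}(c_{h'})$, and the same constraint applied with both upper indices equal to $h'$ gives the bound $d_{h'}\leq d_h$.

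Finally, I would apply Corollary~\ref{cor: Chernoff upper bound}. Let $\mu^{*}=16\,d_h(1+1/\log n)^{h'-1}\log^3 n$ and $t=(1+1/\log n)\mu^{*}=16\,d_h(1+1/\log n)^{h'}\log^3 n$; under $\neg\event_{h'-1}$, $\mu:=N_{h'-1}\leq\mu^{*}$. In the regime $\mu\leq t/2$, the large-deviation form of Chernoff (applicable since $t\geq 2\mu$) yields $\prob{N_{h'}>t\mid\text{state}_{h'-1}}\leq 2^{-\Omega(t/d_h)}=2^{-\Omega(\log^3 n)}$, which is far below $1/(4n^4)$. In the regime $\mu>t/2$, set $\eps=t/\mu-1\in[1/\log n,1]$; the function $\mu\mapsto(t-\mu)^2/\mu=\eps^2\mu$ is decreasing on $(0,t)$, so its infimum over $\mu\in[t/2,\mu^{*}]$ is attained at $\mu=\mu^{*}$, yielding $\eps^2\mu/(3d_h)\geq(16/3)\log n$. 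Hence $\prob{N_{h'}>t\mid\text{state}_{h'-1}}\leq e^{-(16/3)\log n}=n^{-16/(3\ln 2)}<n^{-7}$, which is at most $1/(4n^4)$ for $n$ large, and averaging over states consistent with $\neg\event_{h'-1}$ preserves the bound. The main obstacle will be the case analysis for $X_{Q_{h'}}$, where one must track how Constraint~(\ref{LP: color coordination with kh across levels}) specializes depending on the relative positions of $h$ and $h'$ and carefully use the invariant; the concentration argument itself is a routine two-regime Chernoff application.
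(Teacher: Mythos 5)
Your proposal is correct and takes essentially the same route as the paper's proof: decompose $N_{h'}$ into independent per-square contributions, bound each contribution by $d_h$ via Constraint~(\ref{LP: color coordination with kh across levels}) combined with Invariant~(\ref{invariant}) (with the same case split depending on whether $h'\leq h$ or $h'>h$), observe that the conditional expectation equals $N_{h'-1}$, and apply the Chernoff bound of Corollary~\ref{cor: Chernoff upper bound} for $[0,d_h]$-valued variables. Your only deviation is the more careful two-regime treatment of the case where the realized mean $N_{h'-1}$ is strictly below the threshold parameter, a point the paper handles by plugging the upper bound directly into the corollary.
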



\begin{proof}
Fix some $1\leq h'\leq \rho$. Let $(x',y')$ be the solution at the beginning of iteration $h'$, and let $(x'',y'')$ be the solution at the end of iteration $h'$. 
We will repeatedly use Constraint~(\ref{LP: color coordination with kh across levels}) that we restate here; we slightly change the indexing to avoid confusion with the variables $h$ and $h'$ that we use here.

The constraint states that for all pairs $1\leq \tilde h\leq \tilde h'\leq \rho$ of levels, for every level-$\tilde h$ square $Q_{\tilde h}\in \qset_{\tilde h}$ and for every pair of a level-$\tilde h$ color $c_{\tilde h}$ and a level-$\tilde h'$ color $c_{\tilde h'}$, that is a descendant-color of $c_{\tilde{h}}$, the following inequality holds: 

\begin{equation}
\sum_{Q_{\rho}\in \dset_{\rho}(Q_{\tilde h})}\sum_{c_{\rho}\in \tilde{\chi}_{\rho}(c_{\tilde h'})}y(Q_{\rho},c_{\rho})\leq d_{\tilde h'}\cdot x(Q_{\tilde h},c_{\tilde h}). \label{LP-constraint restatement}
\end{equation}

Intuitively, if $Q_{\tilde h}$ is assigned the color $c_{\tilde h}$, then at most $d_{\tilde h'}$ vertices of $Q_{\tilde h}$ may be assigned the level-$\tilde h'$ color $c_{\tilde h'}$; otherwise, none of them can. Notice that due to Invariant~(\ref{invariant}), the above constraint remains valid with respect to the solution $(x',y')$, if $\tilde h\geq h'$. 

Consider some level-$h'$ square $Q_{h'}\in \qset_{h'}$. We assume that its parent-square is $Q_{h'-1}$, and that it was assigned some color $c_{h'-1}$. Recall that $(x',y')$ is the LP-solution at the beginning of iteration $h'$, and $(x'',y'')$ is the LP-solution at the end of iteration $h'$.
We define:

\[M'(Q_{h'})=\sum_{Q_{\rho}\in \dset_{\rho}(Q_{h'})}\sum_{c_{\rho}\in \chi'}y'(Q_{\rho},c_{\rho}),\]

and we define $M''(Q_{h'})$ similarly with respect to $y''$. Intuitively, $M'(Q_{h'})$ is the number of vertices of $Q_{h'}$ that are (possibly fractionally) assigned the level-$h$ color $c_h$ at the beginning of iteration $h'$, while $M''(Q_{h'})$ reflects the same quantity at the end of iteration $h'$.
The values of variables $M'(Q_{h'})$ are fixed at the beginning of iteration $h'$, while the values $M''(Q_{h'})$ are random variables.
Then:

\[N_{h'-1}=\sum_{Q_{h'}\in \qset_{h'}}M'(Q_{h'}),\]

and since we assume that event $\event_{h'-1}$ did not happen, $N_{h'-1}\leq d_h\cdot 16(1+1/\log n)^{h'-1}\log^3n$. Similarly,

\[N_{h'}=\sum_{Q_{h'}\in \qset_{h'}}M''(Q_{h'}),\]

and it is enough to prove that the probability that $N_{h'}>16(1+1/\log n)^{h'}\log^3n$ is less than $1/(4n^4)$.

We consider two cases, depending on whether $h>h'$ holds.

\paragraph*{Case 1: $h>h'$.}
Let $c_{h'}$ be the unique ancestor-color of color $c_h$, that belongs to level $h'$.
 Using Constraint~(\ref{LP-constraint restatement}) with $\tilde h= h'$ and $\tilde h'=h$, we get that of each such level-$h'$ square $Q_{h'}$:

\[M'(Q_{h'})\leq d_{h}x'(Q_{h'},c_{h'}).\]

If $Q_{h'}$ is assigned the color $c_{h'}$ (which happens with probability at most $x'(Q_{h'},c_{h'})$), then $M''(Q_{h'})=M'(Q_{h'})/x'(Q_{h'},c_{h'})\leq d_{h}$; otherwise $M''(Q_{h'})=0$. Therefore, $\set{M''(Q_{h'})}_{Q_{h'}\in \qset_{h'}}$ is a collection of independent random variables taking values between $0$ and $d_{h}$, and:

\[\expect{N_{h'}}=\sum_{Q_{h'}\in \qset_{h'}}\expect{M''(Q_{h'})}=\sum_{Q_{h'}\in \qset_{h'}}M'(Q_{h'})=N_{h'-1}\leq d_h\cdot 16(1+1/\log n)^{h'-1}\log^3n.\]


Let $\mu=d_h\cdot 16(1+1/\log n)^{h'-1}\log^3n$.  From Corollary~\ref{cor: Chernoff upper bound}:

\[\prob{N_{h'}\geq (1+1/\log n)\mu}\leq e^{{-\mu}/{(3{d_h}\log^2n)}}\leq e^{-5\log n}\leq 1/(4n^4), \]

as required

 \paragraph*{Case 2: $h\leq h'$.}
Let $Q_{h'}\in \qset_{h'}$ be any level-$h'$ square, and let $c_{h'}\in \tilde{\chi}_{h'}(c_h)$ be any level-$h'$ descendant-color of $c_h$.
 Using Constraint~(\ref{LP-constraint restatement}) with $\tilde h=\tilde h'=h'$, we get that:
 
 \[\sum_{Q_{\rho}\in\dset_{\rho}(Q_{h'})}\sum_{c_{\rho}\in \tilde \chi_{\rho}(c_{h'})}y'(Q_{\rho},c_{\rho})\leq d_{h'}\cdot x'(Q_{h'},c_{h'}).\]
 
 Denote the left-hand-side by this inequality by $M'(Q_{h'},c_{h'})$, so:
 
 \[M'(Q_{h'},c_{h'})=\sum_{Q_{\rho}\in\dset_{\rho}(Q_{h'})}\sum_{c_{\rho}\in \tilde \chi_{\rho}(c_{h'})}y'(Q_{\rho},c_{\rho}),\]
 
 and
 
 \[M'(Q_{h'})=\sum_{c_{h'}\in \tilde{\chi}_{h'}(c_h)} M'(Q_{h'},c_{h'}).\]

For each color $c_{h'}\in \tilde{\chi}_{h'}(c_h)$, if $Q_{h'}$ is assigned the color $c_{h'}$ (which happens with probability $x'(Q_{h'},c_{h'})$), then $M''(Q_{h'})=M'(Q_h',c_{h'})/x'(Q_{h'},c_{h'})\leq d_{h'}\leq d_h$. If none of the colors in $\tilde{\chi}_{h'}(c_h)$ are assigned to $Q_{h'}$, then $M''(Q_{h'})=0$.
Therefore, the expectation of $M''(Q_{h'})$ is:

\[\expect{M''(Q_{h'})}=\sum_{c_{h'}\in \tilde{\chi}_{h'}(c_h)}M'(Q_{h'},c_{h'})=M'(Q_{h'}).\]

Overall, we conclude that variables in set $\set{M''(Q_{h'})}_{Q_{h'}\in \qset_{h'}}$ are independent random variables, taking values in $[0,d_h]$, and that:

\[\expect{N_{h'}}=\sum_{Q_{h'}\in \qset_{h'}}\expect{M''(Q_{h'})}=\sum_{Q_{h'}\in \qset_{h'}}M'(Q_{h'})=N_{h'-1}\leq d_h\cdot 16(1+1/\log n)^{h'-1}\log^3n.\]
Let $\mu=d_h\cdot 16(1+1/\log n)^{h'-1}\log^3n$. From Corollary~\ref{cor: Chernoff upper bound}:

\[\prob{N_{h'}\geq (1+1/\log n)\mu}\leq e^{{-\mu}/{(3{d_h}\log^2n)}}\leq e^{-5\log n}\leq 1/(4n^4). \]
\end{proof} 
\end{proofof} 
\end{proofof} 

\subsection*{Stage 3: Turning the Solution into a Feasible One}

 If any of the colors fail, then we return an empty solution. 
Assume now that no color fails. Let $U'\subseteq U$ be the set of all vertices chosen by the current solution.

\begin{claim}\label{claim: stage 3} There is an efficient algorithm to compute a subset $U''\subseteq U'$ of vertices, with $|U''|\geq \frac{|U'|}{256\log^4n}$, such that $U''$ is a feasible solution to the \HSC problem.
\end{claim}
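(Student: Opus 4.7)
The plan is to obtain $U''$ by independent random subsampling of $U'$. Set $p=1/(128\log^3 n)$ and include each vertex of $U'$ in $U''$ independently with probability $p$. I will show that with high probability the resulting $U''$ is simultaneously feasible and has size at least $|U'|/(256\log^4 n)$; if needed, repeating the experiment $O(\log n)$ times boosts the success probability.

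For feasibility, fix any level $1\leq h\leq \rho$ and any level-$h$ color $c_h$. Because Stage~2 did not declare $c_h$ failed, the number $N(c_h)$ of vertices of $U'$ whose level-$h$ color is $c_h$ satisfies $N(c_h)\leq 64\,d_h\log^3 n$. The number $X(c_h)$ of such vertices surviving in $U''$ is a sum of $N(c_h)$ independent $\mathrm{Bernoulli}(p)$ variables, with $\expect{X(c_h)}\leq d_h/2$. A standard Chernoff bound (Theorem~\ref{thm: Chernoff}) yields $\prob{X(c_h)>d_h}\leq e^{-d_h/6}$. By the parameter setup, $d_h\geq d_\rho=\eta^3=2^{\Omega(\sqrt{\log n})}\gg \log n$, so this probability is at most $n^{-10}$. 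There are at most $\rho\cdot n$ color classes in total, so the union bound implies that with probability at least $1-n^{-8}$ every color class is within its budget, i.e.\ $U''$ is feasible.

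For the size bound, if $|U'|\geq 512\log^4 n$ then $\expect{|U''|}=p|U'|\geq 4\log n$, and another Chernoff bound gives $|U''|\geq \expect{|U''|}/2=|U'|/(256\log^3 n)\geq |U'|/(256\log^4 n)$ with probability at least $1-n^{-1}$. The edge case $|U'|<512\log^4 n$ is handled separately: the target bound $|U'|/(256\log^4 n)<2$ then reduces to requiring $|U''|\geq 1$, which we enforce by keeping an arbitrary single vertex of $U'$ whenever the random draw produced nothing (any single vertex is trivially feasible, since $d_h\geq 1$ for all $h$).

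The main obstacle in this analysis is securing Chernoff concentration at the deepest level $\rho$, where the per-color budget $d_\rho$ is smallest. This is exactly the point at which the careful parameter choice $d_\rho=\eta^3=2^{\Omega(\sqrt{\log n})}$ pays off: it comfortably dominates $\log n$, so that $e^{-d_\rho/6}$ is polynomially small in $n$ and the union bound over all $O(n\sqrt{\log n})$ color classes goes through. The sampling probability $p$ is tuned so that each color's expected post-sampling load is $\leq d_h/2$ (a safety margin of $2$, which is what makes the Chernoff tail $e^{-\Omega(d_h)}$), while simultaneously keeping $|U''|$ within a $\log^4 n$ factor of $|U'|$ after accounting for concentration slack.
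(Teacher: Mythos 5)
Your proposal is correct in substance, but it proves the claim by a genuinely different route than the paper. The paper's proof is a short \emph{deterministic} construction: it orders the vertices of $U'$ consistently with the hierarchical (laminar) color structure, so that for every level $h$ and every level-$h$ color all vertices of that color form a contiguous block (e.g.\ order by the position of the corresponding sources on $R^*$), and then keeps every $\ceil{128\log^4 n}$-th vertex; since each block has size at most $64 d_h\log^3 n$ after Stage~2, each color retains at most $d_h/(2\log n)+1\leq d_h$ vertices, with no probabilistic argument and no requirement that $d_h$ be large. Your independent subsampling with $p=1/(128\log^3 n)$ plus Chernoff and a union bound over all $O(n\sqrt{\log n})$ color classes also works, and even gives the slightly stronger size bound $|U'|/(256\log^3 n)$, but it buys this at the cost of (i) relying crucially on the parameter choice $d_h\geq d_\rho=\eta^3\gg\log n$ to make the per-color tail $e^{-d_h/6}$ polynomially small, and (ii) delivering only a high-probability guarantee, whereas the claim (and the way it is used afterwards, namely the pointwise inequality $N_3\geq (N_1-N_2)/(256\log^4 n)$ inside an expectation computation) is stated deterministically. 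Since both feasibility and the size bound are efficiently checkable, you should wrap your sampler in a verify-and-repeat loop, and note that the downstream analysis then needs a one-line adaptation (or keep the paper's deterministic thinning, which avoids the issue entirely).

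Two small repairs to your write-up. First, the edge case: if $256\log^4 n<|U'|<512\log^4 n$, the target $|U'|/(256\log^4 n)$ exceeds $1$, so keeping a single vertex is not enough; the clean fix is to observe that whenever $|U'|<512\log^4 n\leq d_\rho\leq d_h$ (which holds for large $n$), the set $U''=U'$ itself is already feasible, so you may simply output $U'$ in the entire edge-case range. Second, your application of the stated Chernoff bound uses $\eps=1$ while Theorem~\ref{thm: Chernoff} is stated for $0<\eps<1$, and the lower-tail computation gives success probability $1-n^{-\Theta(1)}$ rather than exactly $1-n^{-1}$; both are cosmetic and fixable by citing the standard bound $\Pr[X\geq 2\mu]\leq e^{-\mu/3}$ (valid when the true mean is at most $\mu$) and by adjusting constants.
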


\begin{proof}
 If, for every level $h$, for every level-$h$ color $c_h$, at most $d_h$ vertices of $U'$ are assigned the level-$h$ color $c_h$, then we return the set $U'$ of vertices, which is a feasible solution to the \HSC problem instance.

Otherwise, denote $r=|U'|$, and consider an ordering $(u_1,u_2,\ldots,u_r)$ of the vertices of $U'$, that has the following property: for every level $1\leq h\leq \rho$, for every level-$h$ color $c_h\in \chi_h$, the vertices of $U'$ whose level-$h$ color is $c_h$ appear consecutively in this ordering. Due to the nested definition of the colors, it is easy to see that such an ordering exists (in particular, we can use the ordering induced by the ordering of the original source vertices on $R^*$). We now let $U''$ contain all vertices $u_i\in U'$, where $i = 1\mod \ceil{128\log^4n}$. It is now immediate to verify that $|U''|\geq \frac{|U'|}{256\log^4n}$, and that it is a feasible solution to the \HSC problem.
\end{proof}
%
%
%

Let $\optLP$ denote the value of the optimal LP solution.
Let $N_1$ be the total number of vertices that belong to the solution at the end of phase $1$, so $\expect{N_1}=\optLP$. Let $N_2$ be the total number of vertices that were deleted during the second phase. Then with probability at least $(1-1/(2n))$, $N_2=0$, and with the remaining probability, $N_2\leq n$. Therefore, $\expect{N_2}\leq 1/2$. Finally, let $N_3$ denote the cardinality of the final set $U''$ of vertices that our algorithm returns. Then:

\[N_3\geq \frac{N_1-N_2}{256\log^4n},\]

and therefore, $\expect{N_3}\geq \frac{\optLP}{512\log^4n}$. 
Since $U''$ is a feasible solution to the \HSC problem, $N_3 \leq \optLP$.

\begin{claim}
    $\prob{N_3 \geq \frac{\optLP}{1024\log^4 n}} \geq \frac1 {1024\log^4 n}$
\end{claim}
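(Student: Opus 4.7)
The plan is to use a standard averaging argument that lower-bounds the probability of a nonnegative bounded random variable attaining a constant fraction of its mean. We already have in hand the two key facts: an expectation lower bound $\expect{N_3}\geq \optLP/(512\log^4 n)$, and a deterministic upper bound $N_3\leq \optLP$ coming from LP feasibility of the integral solution produced by the algorithm.

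Let $\mu=\expect{N_3}$ and write $p=\prob{N_3\geq \mu/2}$. First I would split the expectation according to whether $N_3$ falls above or below $\mu/2$, using $N_3\leq \optLP$ on the upper branch and $N_3< \mu/2$ on the lower branch:
\[
\mu \;=\; \expect{N_3} \;\leq\; \optLP\cdot p + \frac{\mu}{2}\cdot (1-p) \;\leq\; \optLP\cdot p + \frac{\mu}{2}.
\]
Rearranging gives $p\geq \mu/(2\optLP)$. Plugging in the bound $\mu\geq \optLP/(512\log^4 n)$ yields
\[
p \;\geq\; \frac{1}{1024\log^4 n}.
\]
Since on the event $\set{N_3\geq \mu/2}$ we also have $N_3\geq \mu/2\geq \optLP/(1024\log^4 n)$, this event is contained in $\set{N_3\geq \optLP/(1024\log^4 n)}$, which therefore has probability at least $1/(1024\log^4 n)$, as claimed.

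There is no real obstacle here: the only subtlety is making sure the two ingredients we need are legitimately available. The upper bound $N_3\leq \optLP$ is immediate because $U''$ is a feasible integral \HSC{} solution and $\optLP$ is the LP optimum (an upper bound on any integral optimum). The lower bound on $\expect{N_3}$ was established in the preceding paragraph by combining Theorem~\ref{thm: stage 1 analysis} (Stage 1 recovers the LP value in expectation), Corollary~\ref{cor: any color fails} (Stage 2 deletes at most $n$ vertices with probability at most $1/(2n)$, losing at most $1/2$ in expectation), and Claim~\ref{claim: stage 3} (Stage 3 retains a $1/(256\log^4 n)$-fraction). With these in place, the two-line Markov-type computation above suffices. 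Standard probability amplification (repeating the algorithm $O(\log^5 n)$ times and returning the best solution) then boosts the success probability to high, which is exactly what Theorem~\ref{thm: main finding the coloring} requires.
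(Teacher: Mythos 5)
Your argument is correct and is essentially the same reverse-Markov averaging argument as the paper's, which assumes the claim fails and splits $\expect{N_3}$ using the same two ingredients ($N_3\leq \optLP$ and $\expect{N_3}\geq \optLP/(512\log^4 n)$) to reach a contradiction; your direct version with the threshold $\mu/2$ is just a rephrasing of that computation.
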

\begin{proof}
    Assume otherwise. Then:

    \[ \expect{N_3} < \left ( \frac{1}{1024\log^4 n} \cdot \optLP \right ) + \frac{\optLP}{1024\log^4 n} <\left (\frac{\optLP}{512\log^4 n} \right ),\]
    
    a contradiction.
\end{proof}

We run the above algorithm $c\log^5n$ times independently (for some large constant $c$), and return a solution of largest cardinality across all runs. The probability that the solution value is less than $\optLP/(1024\log^4n)$ is then bounded by $1/\poly(n)$. This concludes the proof of Theorem \ref{thm: main finding the coloring}.



\label{-------------------------------------sec: destinations anywhere-----------------------------------}
\section{Approximating \restrictedNDP}\label{sec:destinations anywhere}
In this section we complete the proof of Theorem~\ref{thm: main}. 
Suppose we are given an instance $(G,\mset)$ of \RNDP.
We assume that we know the value $\opt$ of the optimal solution to this instance, by going over all such possible choices, and running the algorithm on each of them.
Let $\Gamma_1,\Gamma_2,\Gamma_3$ and $\Gamma_4$ denote the four boundary edges of the grid $G$, and let $\Gamma=\bigcup_{q=1}^4\Gamma_q$ be its full boundary.
For every destination vertex $t\in T(\mset)$, we denote by $\tilde{t}$ the vertex of $\Gamma$ minimizing the distance $d(t,\tilde t)$, breaking ties arbitrarily. 
Recall that, given any subset $\mset'\subseteq \mset$ of the demand pairs, we denoted by $S(\mset')$ and $T(\mset')$ the sets of all vertices that serve as the sources and the destinations of the demand pairs in $\mset'$, respectively. We denote by $\tilde{T}(\mset')=\set{\tilde t\mid t\in T(\mset')}$.

Given two disjoint sub-paths $\pi,\pi'\subseteq \Gamma$ (that we call intervals), and a subset $\mset'\subseteq \mset$ of demand pairs, we let $\mset'_{\pi,\pi'}\subseteq \mset'$ contain all demand pairs $(s,t)$ with $s\in \pi$ and $\tilde t\in \pi'$. 

\subsection{Special Instances}
It will be convenient for us to define special sub-instances of instance $(G,\mset)$, that have a specific structure. 
We start by defining interesting pairs of intervals.

\begin{definition}
Let $I,I'\subseteq \Gamma$ be two disjoint intervals, and let $d>0$ be an integer. We say that $(I,I')$ is a \emph{$d$-interesting pair of intervals}, iff:

\begin{itemize}
\item interval $I'$ is contained in a single boundary edge of $G$, and every vertex of $I'$ is within distance at least $16d$ from each of the remaining three boundary edges of $G$; 
\item interval $I$ is contained in a single boundary edge of $G$ (possibly the same as $I'$);

\item $d\leq |V(I')|\leq  \sqrt{n}/2$, and $d(I,I')\geq 16 d$.
\end{itemize}
\end{definition}

\begin{definition}
Suppose we are given an integer $d>0$, and a pair $(I,I')$ of disjoint intervals of $\Gamma$.
Assume further that we are given a subset $\mset'\subseteq \mset$ of the demand pairs, and consider the sub-instance $(G,\mset')$ of the problem, defined over the same graph $G$, with the set $\mset'$ of demand pairs. We say that $(G,\mset')$ is a \emph{valid $(I,I',d)$-instance} iff:
(i) $S(\mset')\subseteq I$; 
(ii) $\tilde{T}(\mset')\subseteq I'$; and
(iii) for each $t\in T(\mset')$, $d\leq d(t,\tilde t)<2d$.
If, additionally, $(I,I')$ is a $d$-interesting pair of intervals, then we say that $(G,\mset')$ is a \emph{perfect $(I,I',d)$-instance}.
\end{definition}

The main idea of the algorithm is to compute a partition of the set $\mset$ of the demand pairs into subsets $\mset_1,\ldots,\mset_z$, where each set $\mset_i$ is a perfect $(I,I',d)$-instance for some pair $(I,I')$ of intervals. We will then solve the problem defined by each such sub-instance separately, by first casting it as a special case in which the destinations lie far from the grid boundary, and then employing Theorem~\ref{thm: main w destinations far from boundary}. We now define the modified instances, to which Theorem~\ref{thm: main w destinations far from boundary}  will eventually be applied.

\subsection{Modified Instances}
Assume that we are given an integer $d>0$, and a $d$-interesting pair $(I,I')$ of intervals, together with a perfect $(I,I',d)$-sub-instance $(G,\mset')$ of $(G,\mset)$. We define a corresponding \emph{modified instance} $(G',\mset'')$. The underlying graph $G'$ will be an appropriately chosen sub-grid of $G$, and each demand pair in the new set $\mset''$ will correspond to a unique demand pair in $\mset'$.

In order to define the grid $G'$, we assume without loss of generality that $I'$ is a sub-path of the bottom boundary edge of the grid. Let $\wset''$ be the subset of the columns of $G$ that intersect $I'$, and let $\rset''$ be the set of $4d$ bottommost rows of $G$. Let $G''\subseteq G$ be the sub-grid of $G$ spanned by the set $\wset''$ of columns and the set $\rset''$ of rows. In order to obtain the final graph $G'$, we add to $\wset''$ $4d$ columns lying immediately to its left and $4d$ consecutive columns lying immediately to its right, denoting the resulting set of columns by $\wset'$. We also add to $\rset''$ a set of $|\wset'|-4d$ rows lying immediately above $\rset''$, obtaining a set $\rset'$ of rows of $G$. Graph $G'$ is the sub-grid of $G$ spanned by the columns of $\wset'$ and the rows of $\rset'$. Notice that $G'$ is a square grid.

The set $\mset''$ of demand pairs is constructed as follows. For each demand pair $(s,t)\in \mset'$, we add a demand pair $(s',t')$ to $\mset''$. Vertex $t'$ is mapped to the same location as vertex $t$ in $G'$. Vertex $s'$ is mapped to one of the vertices on the top row of $G'$, using the following procedure. Let $X$ denote the set of $|S(\mset')|$ leftmost vertices on the top row of $G'$. Let $v$ be the bottom left corner of the grid $G'$. The traversal of $\Gamma(G)$ in the clock-wise direction, starting from $v$, defines an ordering $\pi$ of the vertices in $S(\mset')$. Similarly, a traversal of $\Gamma(G')$ in the clock-wise direction, starting from $v$, defines an ordering $\pi'$ of the vertices in $X$. Consider some vertex $s\in S(\mset')$, and assume that it is the $i$th vertex of $S(\mset')$ according to the ordering $\pi$. We map it to the $i$th vertex of $X$, according to the ordering $\pi'$.
This finishes the definition of the modified instance $(G',\mset'')$ corresponding to the valid $(I,I',d)$ sub-instance $(G,\mset')$ of $(G,\mset)$. We need the following simple claim, whose proof is straightforward and is deferred to the Appendix.

\begin{claim}\label{claim: modified instance preserves solutions}
Let $d>0$ be an integer, let $(I,I')$ be a $d$-interesting pair of intervals, and let $(G,\mset')$ be a perfect $(I,I',d)$ sub-instance of $(G,\mset)$. Let $\opt'$ be the value of the optimal solution to instance $(G,\mset')$. Then the value of the optimal solution to the corresponding modified instance $(G',\mset'')$ is at least $\min\set{\opt',d}$.
\end{claim}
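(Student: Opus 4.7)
The plan is to transfer an optimal routing in $(G, \mset')$ to a routing in the modified instance $(G', \mset'')$, losing no more than $\opt' - d$ pairs. Let $\pset^* = \{P_{(s,t)}\}_{(s,t) \in \mset^*}$ be an optimal routing of $\opt'$ pairs in $(G, \mset')$. Since each source $s \in I$ lies at distance at least $16d$ from $G'$ while $t$ lies inside $G'$, every path $P_{(s,t)}$ must cross $\Gamma(G')$; let $u_{(s,t)}$ denote the last vertex of $P_{(s,t)}$ on $\Gamma(G')$ when traversed from $s$ to $t$, and let $P'_{(s,t)}$ be the sub-path from $u_{(s,t)}$ to $t$. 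The family $\{P'_{(s,t)}\}$ is node-disjoint inside $G'$ and gives a routing of $\opt'$ pairs from entry points on $\Gamma(G')$ to the destinations.

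Next, I will use the fact that $G \setminus \mathrm{int}(G')$ is simply connected (since $G'$ is attached to $\Gamma(G)$ along its bottom boundary), together with the non-crossing property of disjoint paths in a simply connected planar region, to argue that the entry points $\{u_{(s,t)}\}$ appear on $\Gamma(G')$ in a cyclic order consistent with the cyclic order of the sources $\{s\}$ on $\Gamma(G)$ (both starting from $v$). By the construction of $\mset''$, the new sources $x(s)$ are placed on the top boundary of $G'$ in the same clockwise order from $v$ as the corresponding sources appear on $\Gamma(G)$, so the sequences $\{x(s)\}$ and $\{u_{(s,t)}\}$ appear in matching cyclic order on $\Gamma(G')$.

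Finally, I will construct, for at least $\min\{\opt', d\}$ selected pairs, node-disjoint connecting paths in $G'$ from each $x(s)$ to $u_{(s,t)}$, internally disjoint from the $\{P'_{(s,t)}\}$'s. I exploit the margin region of $G'$: the topmost $|\wset'| - 2d \geq 7d$ rows and the leftmost and rightmost $4d$ columns contain no destinations and can host the connecting paths freely. Because $\{x(s)\}$ and $\{u_{(s,t)}\}$ are in matching cyclic order on $\Gamma(G')$, a snake-like construction through this margin analogous to Section~\ref{sec: snakes} produces the required connecting paths; concatenating them with the $\{P'_{(s,t)}\}$'s yields the final routing. The main obstacle will be carrying out this construction cleanly when the entry points are distributed across all four sides of $\Gamma(G')$; the width-$\Theta(d)$ margin around the destination region is the natural quantity governing how many connecting paths can be routed simultaneously, which accounts for the $\min\{\opt', d\}$ bound in the claim.
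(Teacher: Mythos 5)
There is a genuine gap in your routing step. You keep, for every routed pair, the entire portion $P'_{(s,t)}$ of the original path from its last crossing of $\Gamma(G')$ to $t$, and then plan to route the connectors from the relocated sources $x(s)$ to the entry points $u_{(s,t)}$ ``through the margin'' of $G'$. But nothing reserves that margin: the truncated paths $P'_{(s,t)}$ live anywhere inside $G'$, and in particular they may repeatedly traverse the leftmost/rightmost $4d$ columns and the top rows of $G'$, so the region you want to use for the snake-like connection is in general already occupied by the very paths you are trying to extend. Moreover, since you truncate \emph{all} $\opt'$ paths (possibly $\opt'\gg d$) rather than first restricting to $\min\{\opt',d\}$ of them, there is no argument identifying which $d$ pairs can actually be connected, nor that any can; the $\min\{\opt',d\}$ bound cannot be recovered ``for free'' at the end from the width of the margin. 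A secondary issue is the claimed cyclic-order consistency of the \emph{last} entry points $u_{(s,t)}$: the sub-path of $P_{(s,t)}$ from $s$ to its last crossing of $\Gamma(G')$ need not stay in $G\setminus \mathrm{int}(G')$ (it can dip into $G'$ and come back out), so the simply-connected/non-crossing argument you invoke does not directly apply to those points as defined.

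The paper avoids both problems by cutting the (already truncated to at most $d$) paths at the boundary of the \emph{inner} grid $G''$ rather than at $\Gamma(G')$: only the segments lying inside $G''$ are kept verbatim, while everything outside $G''$ is discarded and re-routed from scratch in the ring $G'\setminus \mathrm{int}(G'')$, which is therefore guaranteed to be empty. The re-routing has two parts: the ``re-entry'' segments (both endpoints on $\Gamma(G''$), internally disjoint from $G''$) define a nested family of intervals on $\Gamma(G'')$ whose nesting depth is bounded by $O(d)$, so they can be realized level by level using the $4d$-wide side margins and the top margin; and the source-to-$\Gamma(G'')$ connections are obtained by an integral max-flow argument, using the order-preserving placement of the new sources on the top row of $G'$. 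If you want to salvage your plan, you essentially need these two ingredients: first restrict to $\min\{\opt',d\}$ paths, and cut at $\Gamma(G'')$ so that the margin you route through is genuinely free.
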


Let $d>0$ be an integer, let $(I,I')$ be a $d$-interesting pair of intervals, and let $(G,\mset')$ be a valid $(I,I',d)$ sub-instance of $(G,\mset)$. Let $\opt'$ be the value of the optimal solution to this instance.
Consider the corresponding modified instance $(G',\mset'')$ of $(G,\mset')$. Then $(G',\mset'')$, together with the parameter $\opt=\min\set{d,\opt'}$ is a valid input to Theorem~\ref{thm: main w destinations far from boundary}. Even though we do not know the value of this parameter, we can try all $d$ possibilities for it, and apply Theorem~\ref{thm: main w destinations far from boundary} to each one of them, selecting the best resulting solution. We are then guaranteed to compute, with high probability, a solution to instance $(G',\mset'')$, whose value is at least $\frac{\min\set{d,\opt'}}{2^{O(\sqrt{\log n}\log\log n)}}$. We let $\aset(G',\mset'')$ denote the resulting solution, and we let $|\aset(G',\mset'')|$ denote its value.

\subsection{Main Partitioning Theorem}

The following theorem is central to our proof.

 \begin{theorem}\label{thm: partition into intervals}
Suppose we are given two disjoint intervals $\pi,\pi'$ of $\Gamma$, each of which is contained in a single boundary edge of $G$, an integer $d>0$, and a valid $(\pi,\pi',d)$-instance $(G,\mset')$. Assume further that $|\mset'|\geq 1024 d$, and all demand pairs in $\mset'$ can be simultaneously routed via node-disjoint paths. Let $z=\ceil{\frac{|\mset'|}{160d}}-1$. Then there is a collection $\Sigma=\set{\sigma_1,\ldots,\sigma_z}$ of disjoint sub-intervals of $\pi$, and a collection $\Sigma'=\set{\sigma_1',\ldots,\sigma'_z}$ of disjoint sub-intervals of $\pi'$, such that:

 \begin{itemize}
 \item the intervals $\sigma_1,\sigma_2,\ldots,\sigma_z,\sigma'_z,\ldots,\sigma_2',\sigma_1'$ appear on $\Gamma$ in this circular order;
 
  \item for every pair of intervals $\sigma\in \Sigma\cup \Sigma'$  and $\sigma'\in \Sigma'$, at least $16d$ vertices of $\Gamma$ separate the two intervals;

  \item every interval $\sigma'\in \Sigma'$ contains at least $16d$ and at most $\sqrt{n}/2$ vertices, and lies within distance at least $16d$ from every boundary edge of $G$, except for that containing $\pi'$; and
  
  \item for all $1\leq i\leq z$, the value of the optimal solution of the \NDP instance $(G,\mset'_{\sigma_i,\sigma'_i})$ is at least $d$.
  \end{itemize}
  \end{theorem}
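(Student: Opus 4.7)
The plan is to use the existence of a node-disjoint routing $\pset^{*}$ of $\mset'$ together with the planarity of $G$ to impose a consistent cyclic ordering on the demand pairs, and then to partition them into $z$ consecutive groups that directly yield the intervals $\sigma_i$ and $\sigma'_i$. First, I would extend each path $P(s,t) \in \pset^{*}$ by concatenating a shortest path of length less than $2d$ from $t$ to $\tilde t$, thereby obtaining paths whose both endpoints lie on $\Gamma$ (on $\pi$ and $\pi'$, respectively). These extensions may destroy node-disjointness, but a standard planar ``non-crossing extraction'' argument (iteratively discarding a demand pair whenever its extension intersects another extended path, each discard absorbing only $O(1)$ neighbors, since the extensions are short) produces a subset $\hmset \subseteq \mset'$ with $|\hmset| \ge \Omega(|\mset'|)$ whose extended paths are pairwise disjoint. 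Since these disjoint paths lie in a planar graph with all endpoints on $\Gamma$, they are non-crossing, which induces a consistent cyclic ordering of the pairs of $\hmset$ in which the source order along $\pi$ matches (in reverse) the $\tilde t$ order along $\pi'$.

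Next, I would traverse $\hmset$ in this cyclic order and partition it into $z$ consecutive groups $\hmset_1,\dots,\hmset_z$, each of size at least $d$, reserving between consecutive groups a buffer of discarded demand pairs and vertices of $\Gamma$ large enough to guarantee a separation of $16d$ between the intervals we are about to define. The count $|\hmset| \ge \Omega(|\mset'|)$ combined with $z \le |\mset'|/(160d)$ and $|\mset'| \ge 1024 d$ leaves enough slack to accommodate both the per-group size and the buffers. For each group $i$, define $\sigma_i \subseteq \pi$ to be the shortest sub-interval of $\pi$ containing $S(\hmset_i)$, and $\sigma'_i \subseteq \pi'$ to be the shortest sub-interval of $\pi'$ containing $\tilde T(\hmset_i)$; if some $\sigma'_i$ has fewer than $16d$ vertices, extend it inside the surrounding buffer space to exactly $16d$ vertices. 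The required circular order $\sigma_1,\sigma_2,\dots,\sigma_z,\sigma'_z,\dots,\sigma'_1$ on $\Gamma$ is then immediate from the cyclic ordering of $\hmset$, the $16d$-separation holds by construction of the buffers, and $|\sigma'_i| \le \sqrt{n}/2$ follows since $\pi'$ sits inside a single boundary edge of length at most $\sqrt{n}$, with minor adjustments if necessary. The condition that $\sigma'_i$ lies within distance at least $16d$ from every boundary edge other than the one containing $\pi'$ is inherited from the corresponding property of $\pi'$ in the valid $(\pi,\pi',d)$-instance. Finally, the optimal value of $(G,\mset'_{\sigma_i,\sigma'_i})$ is at least $d$ because $\hmset_i \subseteq \mset'_{\sigma_i,\sigma'_i}$ and the restriction of $\pset^{*}$ to $\hmset_i$ already supplies $|\hmset_i| \ge d$ node-disjoint paths routing a subset of $\mset'_{\sigma_i,\sigma'_i}$.

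The main obstacle I anticipate is the non-crossing extraction step, particularly when $\pi$ and $\pi'$ lie on the same boundary edge of $G$: in this regime the $t$-to-$\tilde t$ extensions can wind around each other and a naive ``discard conflicting pairs'' step could throw away too many pairs at once, so a careful charging argument (e.g. based on the bounded length of each extension and a sweep along $\pi'$) will be needed to guarantee the $\Omega(|\mset'|)$ lower bound on $|\hmset|$. A secondary subtlety is that $|\sigma'_i| \ge 16d$ can fail if the $\tilde t$'s of $\hmset_i$ are highly concentrated on $\pi'$ (since many distinct destinations can share the same $\tilde t$), so the buffer regions must be simultaneously large enough to provide the $16d$-separation and any padding required to inflate each $\sigma'_i$ to length $16d$; tracking that the available buffer length on $\pi'$ is always $\Omega(d)$ per group is the delicate part of the counting.
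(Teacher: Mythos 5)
Your first step is where the argument breaks. You claim that after extending each path $P(s,t)$ by the short segment from $t$ to $\tilde t$, a greedy ``non-crossing extraction'' keeps $\Omega(|\mset'|)$ pairs with pairwise \emph{disjoint} extended paths, ``each discard absorbing only $O(1)$ neighbors, since the extensions are short.'' Short extensions do not give bounded conflict degree: a single extension $Q_t$ has up to $2d$ vertices and can be hit by up to $2d$ other (node-disjoint) paths, and, more fatally, extensions of different pairs can overlap each other. Concretely, a column of $G$ may contain $\Theta(d)$ destinations at depths between $d$ and $2d$; all their extensions run down the same column and pairwise intersect, so any subset with pairwise-disjoint extended paths keeps at most one pair per such column. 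An instance with $|\mset'|=1024d$ destinations packed $d$ per column (all routable node-disjointly, since only the extensions conflict) therefore admits at most $|\mset'|/d$ surviving pairs, not $\Omega(|\mset'|)$. Since the cyclic ordering of sources versus destinations, the definition of the groups $\hmset_i$, and the $16d$-separation bookkeeping all sit on top of this extraction, the rest of the proposal does not go through as written. (The difficulty is also not confined to the case $\pi=\pi'$, as you suggest; the example above is already in the opposite-boundary setting.)

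The paper's proof is designed precisely to avoid demanding pairwise disjointness of the extended paths. It first cuts $\pi'$ into consecutive intervals $\mu_1,\ldots,\mu_{2z+2}$, each containing at least $16d$ vertices and between $16d$ and $20d$ demand pairs, and only requires that a path $P(s,t)$ with $\tilde t\in\mu_i$ avoid the \emph{bundle} $U(\mu_j)$ of extensions of non-neighboring intervals $\mu_j$; pairs violating this are ``bad.'' Bounding the bad pairs by $8d$ per interval is not a degree argument: it is a topological counting argument (a bad path together with two extensions and a piece of $\pi'$ bounds a curve $\gamma$ that every path of the adjacent group must cross unless it conflicts with the bad path, forcing at least $12d$ conflicts, while each extension of length at most $2d$ can absorb at most $2d$ paths). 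The circular order of the source intervals is then \emph{proved}, not inherited, via a separation argument using goodness (the curve $P(s',t')\cup Q_{t'}$ would separate $s$ from $t$), and the constraints $|\sigma'_i|\geq 16d$, $|\sigma'_i|\leq\sqrt n/2$ and the $16d$-separation come from taking only the even-indexed intervals $\mu_{2i}$ and arranging the (at most one) overly long interval to be odd-indexed. If you want to salvage your outline, you would need to replace the extraction step by some argument of this grouped, topological type rather than a per-pair disjointness requirement.
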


 Notice that in particular, for each $1\leq i\leq z$, $(\sigma_i,\sigma'_i)$ is a $d$-interesting pair of intervals. Consider the set $\tmset\subseteq \mset$ of all demand pairs, such that for each $(s,t)\in \tmset$, $s\in \pi$, $\tilde t\in \pi'$, and $d\leq d(t,\tilde t)<2d$, and apply Theorem~\ref{thm: partition into intervals} to it. Consider the resulting sets $\Sigma=\set{\sigma_1,\ldots,\sigma_z}$ and $\Sigma'=\set{\sigma'_1,\ldots,\sigma'_z}$ of segments. Then for all $1\leq i\leq z$, $(G,\tmset_{\sigma_i,\sigma'_i})$ is a perfect $(\sigma_i,\sigma'_i,d)$-instance. Let $(G_i,\tmset'_{\sigma_i,\sigma'_i})$ denote the corresponding modified instance for $(G,\tmset_{\sigma_i,\sigma'_i})$.
 Then all resulting sub-graphs $G_1,\ldots,G_z$ of $G$ are disjoint. Moreover, since the length of each interval $\sigma'_i$ is bounded by $\sqrt{n}/2$, if we assume w.l.o.g. that $\pi'$ is contained in the bottom boundary of the grid $G$, then, assuming that $d<\sqrt n/8$, the sub-graphs $G_1,\ldots,G_z$ must be disjoint from the top $\sqrt{n}/2-4d$ rows of $G$. For each $1\leq i\leq z$, with high probability, $|\aset(G_i,\tmset'_{\sigma_i,\sigma'_i})|\geq d/\approxfactor$, and so $\sum_{i=1}^z|\aset(G_i,\tmset'_{\sigma_i,\sigma'_i})|\geq |\tmset|/\approxfactor$ with high probability.
We now turn to prove Theorem~\ref{thm: partition into intervals}.
  
 \begin{proofof}{Theorem \ref{thm: partition into intervals}}
 Let $\pset^*$ be the set of node-disjoint paths routing the demand pairs in $\mset'$.
 For each demand pair $(s,t)\in \mset'$, let $P(s,t)\in \pset^*$ be the path routing it. For convenience, we assume that $\pi'$ is contained in the bottom boundary of $G$.
 
For every interval $\sigma'\subseteq \pi'$, let $\mset'(\sigma')\subseteq \mset'$ be the set of all demand pairs $(s,t)$ with $\tilde t\in \sigma'$. Note that it is possible that for $t,t'\in T(\mset')$ with $t\neq t'$, $\tilde t=\tilde t'$. Given a vertex $v\in \sigma'$, we let $I(v)\subseteq \pi'$ be the shortest interval, whose leftmost vertex is $v$, such that $I(v)$ contains at least $16d$ vertices, and $|\mset'(I(v))|\geq 16d$. If no such interval exists, then $I(v)$ is undefined.

\begin{observation}\label{obs: length of int}
For each vertex $v\in \pi'$, if $I(v)$ is defined, then $|\mset'(I(v))|\leq 20d$.
\end{observation}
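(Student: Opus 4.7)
The plan is to exploit the minimality of $I(v)$ via a case analysis on which of the two defining conditions becomes binding when we shrink $I(v)$ by removing its rightmost vertex. Let $w$ denote the rightmost vertex of $I(v)$, and set $I'(v) = I(v) \setminus \set{w}$. By minimality of $I(v)$, the shorter interval $I'(v)$ must violate at least one of the two defining conditions, so either $|V(I'(v))| < 16d$ (which forces $|V(I(v))| = 16d$) or $|\mset'(I'(v))| < 16d$.

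In the case where $|\mset'(I'(v))| < 16d$, I would use direct counting. The key sub-claim is that a single vertex $w$ on $\pi'$ can serve as $\tilde t$ for at most $d$ distinct destinations $t \in T(\mset')$. Indeed, since $\pi'$ is contained in a single boundary edge of the grid, the condition that $w$ is the boundary vertex closest to $t$ pins $t$ to the axis-parallel line through $w$ perpendicular to that edge; the further constraint $d \leq d(t,w) < 2d$ restricts $t$ to $d$ consecutive positions along this line. Because $T(\mset')$ is a set of distinct vertices (all pairs of $\mset'$ are simultaneously routable via node-disjoint paths), this yields $|\mset'(\set{w})| \leq d$, and hence $|\mset'(I(v))| \leq |\mset'(I'(v))| + |\mset'(\set{w})| < 17d < 20d$.

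In the complementary case where $|V(I(v))| = 16d$, I would use a capacity-style argument based on the given node-disjoint routing $\pset^*$ of $\mset'$. Assuming without loss of generality that $\pi'$ lies on the bottom boundary of $G$, let $B$ be the axis-aligned sub-grid spanned by the $16d$ columns containing $I(v)$ and by the $2d$ rows closest to the bottom boundary of $G$. Every destination $t$ with $\tilde t \in I(v)$ lies in $B$, by the projection argument above together with the constraint $d(t,\tilde t) < 2d$. On the other hand, every source lies on $\pi$, and since $\pi \cap \pi' = \emptyset$ and $I(v) \subseteq \pi'$, no source lies on the portion of $\Gamma$ contained in $B$; generically the rest of the boundary of $B$ sits in the interior of $G$ and is therefore disjoint from $\pi$. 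Consequently each path of $\pset^*$ that routes a demand pair of $\mset'(I(v))$ must enter $B$ through its boundary minus the bottom edge, which contains $16d + 4d = 20d$ grid vertices. Node-disjointness of $\pset^*$ then yields $|\mset'(I(v))| \leq 20d$.

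The main subtlety will be handling the corner configuration, in which $I(v)$ abuts a corner of $G$ and $\pi$ happens to lie on the adjacent boundary edge, which in principle could place a few sources on a vertical side of $B$. I expect this to be absorbed either by a slight enlargement of $B$ (together with a matching re-examination of both bounds) or by invoking the partitioning hypotheses of Theorem~\ref{thm: partition into intervals}, which keep $I(v)$ a safe distance from the grid corners.
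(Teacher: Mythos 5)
Your proof is correct and follows essentially the same route as the paper's: the same minimality-based case split on the interval $I'(v)=I(v)\setminus\set{w}$, the same counting of pairs with $\tilde t$ equal to the removed rightmost vertex (your bound of $d$ is in fact slightly tighter than the paper's $2d$), and the same $(2d\times 16d)$ box whose top, left and right boundary of roughly $20d$ vertices must be hit by every path routing a pair of $\mset'(I(v))$. The corner configuration you flag requires neither enlarging $B$ nor any distance-to-corner hypothesis (Theorem~\ref{thm: partition into intervals} provides none): if a source lies on a vertical side of $B$, its path already contains a vertex of the counted boundary, and a path entering through the bottom row must do so via a bottom corner of $B$, which also lies on a counted side, so the node-disjointness count of $20d$ applies verbatim.
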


\begin{proof}
Let $v'$ be the rightmost vertex of $I(v)$, and let $I'=I(v)\setminus\set{v'}$. Since we have chosen $I(v)$ over the shorter interval $I'$, either $|\mset'(I')|<16d$, or $|V(I')|<16 d$. Assume first that the former is true. Since all demand pairs in $\mset'$ are routable via node-disjoint paths, all vertices in $S(\mset')$ are distinct. As for each terminal $t\in T(\mset')$, $d(t,\tilde t)\leq 2d$, there may be at most $2d$ demand pairs $(s,t)\in \mset'$ with $\tilde t=v'$. But $\mset'(I')$ contains all demand pairs of $\mset'(I(v))$ except for those pairs $(s,t)$ with $\tilde t=v'$. Therefore, $|\mset'(I(v))|\leq |\mset'(I')|+2d\leq 18d$.

Assume now that $|V(I')|<16d$. Let $G'\subseteq G$ be the sub-grid of $G$, whose bottom boundary is $I(v)$, and whose height is $2d$. Then the width of $G'$ is $16d$, and its top, left, and right boundary has total length $20d$. All paths routing the demand pairs in $\mset'(I(v))$ have to cross the boundary of $G'$, and so $|\mset'(I(v))|\leq 20d$.
\end{proof} 
 
  Let $\tilde T=\set{\tilde t\mid t\in T(\mset')}$. We start by defining a sequence $(\mu_1,\ldots, \mu_{2z+2})$ of $2z+2$ disjoint intervals of $\pi'$, as follows. Let $v_1$ be the first vertex of $\pi'$. We then set $\mu_1=I(v_1)$. Assume now that we have defined intervals $\mu_1,\ldots,\mu_i$. Let $v_{i+1}$ be the vertex lying immediately to the right of the right endpoint of $\mu_i$. We then set $\mu_{i+1}=I(v_{i+1})$. Notice that from Observation~\ref{obs: length of int}, for each $i$, $16d\leq |\mset'(\mu_i)|\leq 20d$. Since $z=\ceil{\frac{|\mset'|}{160d}}-1$, all intervals $\mu_1,\ldots,\mu_{2z+2}$ are well-defined.
  For convenience, we denote, for each $1\leq i\leq 2z$, $\mset'(\mu_i)$ by $\mset_i$.
For every vertex $t\in T(\mset')$, let $Q_t$ be the shortest path connecting $t$ to $\tilde t$ in $G$, so $Q_t$ is contained in the column of $G$ which contains $t$. 
We also let  $U(\mu_i)=\bigcup_{(s,t)\in \mset_i}V(Q_t)$. 
   
For all $1\leq i,j\leq 2z$, we say that intervals $\mu_i$ and $\mu_j$ are \emph{neighbors} iff $|i-j|\leq 1$. 
Consider now some interval $\mu_i$, with $1\leq i\leq 2z$ and some demand pair $(s,t)\in \mset_i$. We say that $(s,t)$ is a \emph{bad demand pair} iff path $P(s,t)$ contains a vertex of $U(\mu_j)$ for some interval $\mu_j$ that is not a neighbor of $\mu_i$. Otherwise, we say that $(s,t)$ is a \emph{good demand pair}.

\begin{claim}\label{claim: bound bad pairs}
For each $1\leq i\leq 2z$, at most $8d$ demand pairs of $\mset_i$ are bad.
\end{claim}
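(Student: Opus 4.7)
The plan is to prove the claim via a perimeter-counting argument on a suitable bounding rectangle for each $\mu_j$, combined with the node-disjointness of $\pset^*$. Assume without loss of generality that $\pi'$ lies on the bottom boundary of $G$. For each index $j$, let $B_j \subseteq G$ be the sub-grid whose bottom boundary is $\mu_j$ and whose height is $2d$. Since every destination $t' \in T(\mset_j)$ satisfies $d(t', \tilde{t}') < 2d$ with $\tilde{t}' \in \mu_j$, both $t'$ and the vertical shortest path $Q_{t'}$ lie in $B_j$, so $U(\mu_j) \subseteq B_j$. Write $\partial^{\uparrow} B_j$ for the union of the top row and the left and right columns of $B_j$; this is the only part of $\partial B_j$ not lying on $\Gamma(G)$, so any path entering $B_j$ from outside must use a vertex of $\partial^{\uparrow} B_j$. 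By Observation~\ref{obs: length of int}, $|\mu_j|\le 20d$, hence $|\partial^{\uparrow}B_j|\le |\mu_j|+4d\le 24d$.

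The next step is to show that the $\mset_j$ paths already consume most of $\partial^{\uparrow} B_j$. Each pair $(s',t')\in\mset_j$ gives a path $P(s',t')$ that starts at $s'\in\pi$ (which is outside $B_j$, since $\pi$ is disjoint from $\mu_j$ and lies on $\Gamma(G)$) and ends at $t'\in B_j$. Thus $P(s',t')$ contains at least one vertex of $\partial^{\uparrow}B_j$, and since the paths in $\pset^*$ are node-disjoint, the $\mset_j$-paths together occupy at least $|\mset_j|\ge 16d$ distinct vertices of $\partial^{\uparrow}B_j$. This leaves a \emph{residual} set $R_j\subseteq\partial^{\uparrow}B_j$ with $|R_j|\le 24d-16d=8d$ that is the only part of $\partial^{\uparrow}B_j$ available to any path outside $\mset_j$.

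Now fix $i$ and consider a bad pair $(s,t)\in\mset_i$. Its path $P(s,t)$ hits some vertex of $U(\mu_j)\subseteq B_j$ with $|j-i|\ge 2$, yet has both endpoints outside $B_j$: $s\in\pi$ lies outside $B_j$, and $t\in B_i$ is disjoint from $B_j$ because $\mu_i$ and $\mu_j$ are disjoint intervals on $\pi'$. Hence $P(s,t)$ crosses $\partial^{\uparrow} B_j$ and, by the preceding paragraph, must use a vertex of $R_j$. By node-disjointness, distinct bad pairs of $\mset_i$ that visit the same $B_j$ consume distinct vertices of $R_j$, giving at most $8d$ bad pairs per $j$.

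The main obstacle is that a naive sum across non-neighbor indices is too loose. To obtain a single bound of $8d$ for all bad pairs of $\mset_i$ at once, I plan to charge each bad pair injectively to a single residual vertex by exploiting the planar topology of the grid. Concretely, for each bad pair $(s,t)\in\mset_i$ I would choose $j^*(s,t)$ to be the \emph{first} non-neighbor index whose $U(\mu_{j^*})$ is met while traversing $P(s,t)$ from $s$ to $t$, and charge $(s,t)$ to the first vertex of $P(s,t)$ on $\partial^{\uparrow}B_{j^*}$. Node-disjointness guarantees injectivity within each $B_{j^*}$; to combine across different $j^*$'s one observes that since $t\in B_i$, after visiting $B_{j^*}$ the path must cross an annular ``bottleneck'' around $B_i$ whose residual capacity (after deducting the consumption by $\mset_{i-1}\cup\mset_i\cup\mset_{i+1}$) is at most $8d$, which funnels all bad pairs of $\mset_i$ through a common set of at most $8d$ residual vertices and yields the desired bound. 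Verifying that this bottleneck is correctly defined and that its residual capacity is indeed $8d$ is the most delicate part of the argument.
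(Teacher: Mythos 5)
Your argument hinges on the bound $|\mu_j|\le 20d$ for the \emph{number of vertices} of the interval $\mu_j$, which you derive from Observation~\ref{obs: length of int}. That observation, however, bounds $|\mset'(I(v))|$ --- the number of demand pairs whose projected destinations lie in the interval --- not the number of vertices of the interval. By construction, $\mu_j=I(v_j)$ is the shortest interval containing at least $16d$ vertices \emph{and} at least $16d$ such demand pairs; when the destinations are sparse along $\pi'$, $\mu_j$ can contain up to $\Theta(\sqrt n)$ vertices (the proof of Theorem~\ref{thm: partition into intervals} explicitly allows one interval to exceed $\sqrt{n}/2$, and only caps the others at $\sqrt{n}/2$). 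Consequently the box $B_j$ can be $\Theta(\sqrt n)$ wide, $|\partial^{\uparrow}B_j|$ is not $O(d)$, and the ``residual set'' $R_j$ obtained after deducting the at least $16d$ vertices used by the $\mset_j$-paths is not of size $8d$: the whole perimeter-counting scheme collapses. The same problem defeats the ``annular bottleneck around $B_i$'' in your final step, whose residual capacity of $8d$ is asserted but never derived; the only provably short objects in this configuration are the vertical segments $Q_t$, of length at most $2d$, not any cut around the (possibly very wide) boxes.

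The paper's proof is built precisely to avoid needing any small cut. It splits the bad pairs of $\mset_i$ into those whose paths meet $U(\mu_j)$ for $j>i+1$ and those with $j<i-1$, and for a bad pair $(s,t)$ of the first kind constructs a closed curve $\gamma$ out of $Q_t$, a piece of $\pi'$, a piece of $Q_{t'}$ (where $P(s,t)$ first meets $U(\mu_j)$, on some $Q_{t'}$), and a piece of $P(s,t)$ itself. Since $|Q_t|,|Q_{t'}|\le 2d$ and the paths of $\pset^*$ are node-disjoint, at most $4d$ paths of $\pset^*$ can cross $\gamma$, so $P(s,t)$ must intersect $Q_{\hat t}$ for at least $12d$ pairs $(\hat s,\hat t)\in\mset_{i+1}$; counting these conflicts (each $Q_{\hat t}$ has at most $2d$ vertices, and $|\mset_{i+1}|\le 20d$ by Observation~\ref{obs: length of int}) bounds each of the two kinds by $4d$. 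If you want to salvage a cut-style argument you would need an $O(d)$-size separator between the relevant terminal sets, and no such separator exists in general, exactly because the intervals $\mu_j$ need not be short.
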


We prove the claim below, after we complete the proof of Theorem~\ref{thm: partition into intervals} using it. Note that at most one interval $\mu_j$ may contain more than $\sqrt{n}/2$ vertices; we assume without loss of generality that, if such an interval $\mu_j$ exists, then $j$ is an odd integer.
The set $\Sigma'=\set{\sigma'_1,\ldots,\sigma'_z}$ of intervals of $\pi'$ is defined as follows: for each $1\leq i\leq z$, we let $\sigma'_i=\mu_{2i}$. Notice that every pair of such intervals is separated by at least $16d$ vertices, since each odd-indexed interval $\mu_j$ must contain at least $16d$ vertices by definition. Since we discard the first and the last intervals $\mu_j$, the distance from each resulting interval $\sigma'_i$ to interval $\pi$, and to each of the remaining three boundaries of $G$ is also at least $16d$.

In order to define the set $\Sigma=\set{\sigma_1,\ldots,\sigma_z}$ of intervals, fix some $1\leq i\leq z$. Let $\mset_{i}'\subseteq \mset_{2i}$ be the set of all good demand pairs in $\mset_{2i}$. We let $\sigma_i$ be the smallest sub-interval of $\pi$, containing all vertices of $S(\mset'_{i})$. 

It is immediate to verify that, for each $1\leq i\leq z$, the value of the optimal solution of the \NDP instance $(G,\mset'_{\sigma_i,\sigma'_i})$ is at least $d$, since we can route all demand pairs of $\mset'_i$ in $G$. From our construction, every interval $\sigma'_i\in \Sigma'$ contains at least $16d$ and at most $\sqrt{n}/2$ vertices, and it is separated by at least $16d$ vertices from all other intervals in $\Sigma\cup \Sigma'$,  and from each of the remaining three boundaries of $G$.  It remains to show that the intervals in $\Sigma$ are all disjoint, and that $\sigma_1,\sigma_2,\ldots,\sigma_z,\sigma'_z,\ldots,\sigma_2',\sigma_1'$ appear on $\Gamma$ in this circular order. 

Fix some pair of indices $1\leq i<j\leq z$. If we traverse $\Gamma$ in counter-clock-wise order, starting from the first vertex of $\sigma'_1$, then we will first encounter all vertices of $\sigma_i'$, then all vertices of $\sigma_j'$, and finally all vertices of $\sigma_i\cup \sigma_j$. It is enough to show that we will encounter every vertex of $\sigma_j$ before we encounter any vertex of $\sigma_i$. In particular, it is enough to show that for every pair $(s,t)\in \mset'_j$ and $(s',t')\in \mset'_i$ of demand pairs, we will encounter $s$ before $s'$ in our traversal. 

Assume for contradiction that this is false for some  $(s,t)\in \mset'_j$ and $(s',t')\in \mset'_i$. Then $\tilde{t}',\tilde{t},s',s$ appear on $\Gamma$ in this circular order  (see Figure~\ref{fig: good pairs routing}). Since both demand pairs $(s,t)$ and $(s',t')$ are good, path $P(s,t)$ may not contain a vertex of $Q_{t'}$, and path $P(s',t')$ may not contain a vertex of $Q_t$ (see Figure~\ref{fig: good pairs routing}). But then the union of the path $P(s',t')$ and $Q_{t'}$ separates $s$ from $t$ in $G$, a contradiction. It now remains to prove Claim~\ref{claim: bound bad pairs}.

\begin{figure}[h]
\scalebox{0.5}{\includegraphics{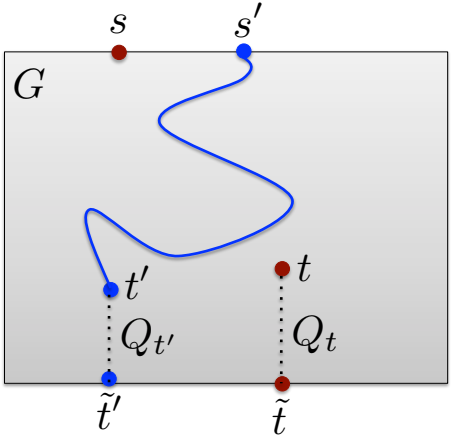}}
\caption{Ordering of the terminals of good demand pairs.\label{fig: good pairs routing}}
\end{figure}

\begin{proofof}{Claim~\ref{claim: bound bad pairs}}
Fix some index $1\leq i\leq 2z$, and let $\nset$ be the set of all bad demand pairs in $\mset_i$. We further partition $\nset$ into two subsets: set $\nset'$ contains all demand pairs $(s,t)$ whose corresponding path $P(s,t)$ contains a vertex in some set $U(\mu_j)$ for $j>i+1$, and set $\nset''$ containing all remaining demand pairs (for each such demand pair $(s,t)$, the corresponding path $P(s,t)$ must contain a vertex in some set $U(\mu_j)$, for some $j<i-1$). It is enough to show that $|\nset'|,|\nset''|\leq 4d$. We show that $|\nset'|\leq 4d$; the bound for $\nset''$ is proved similarly. 
Given a pair $(s,t)\in \nset'$, and a pair $(\hat s,\hat t) \in \mset_{i+1}$, we say that there is a \emph{conflict} between $(s,t)$ and $(\hat s,\hat t)$ iff $P(s,t)$ contains a vertex of $Q_{\hat t}$.

We need the following simple observation.

\begin{observation}\label{obs: containing many U-vertices}
Let $(s,t)$ be any demand pair in $\nset'$. Then there are at least $12d$ demand pairs $(\hat s,\hat t)\in \mset_{i+1}$, such that there is a conflict between $(s,t)$ and $(\hat s,\hat t)$.
\end{observation}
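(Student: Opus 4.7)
The plan is to use a topological argument in the planar embedding of the grid. Intuitively, a bad pair $(s,t)$ has its routing path $P(s,t)$ leave the region of $\mu_i$ and reach the far-away region of $\mu_j$ for some $j>i+1$, and so $P(s,t)$ must ``cross over'' the region of $\mu_{i+1}$, where the $|\mset_{i+1}|\geq 16d$ destinations of $\mset_{i+1}$ live together with their node-disjoint routing paths. Node-disjointness then leaves so few ways for the $P(\hat s,\hat t)$'s to reach $\mset_{i+1}$'s destinations that all but $4d$ of them must be ``trapped'' inside the disc cut out by $P(s,t)$, and for each trapped pair the vertical segment $Q_{\hat t}$ can be shown to meet $P(s,t)$.

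First I would set up the closed curve. Let $v^*$ be any vertex of $P(s,t)\cap U(\mu_j)$, and let $(\hat s_0,t'')\in \mset_j$ be the unique demand pair with $v^*\in Q_{t''}$. Let $P_0$ be the sub-path of $P(s,t)$ from $t$ to $v^*$, and let $\tilde{v^*}$ denote the bottom-row vertex directly below $v^*$ in its column (so $\tilde{v^*}\in \mu_j$). Form the closed Jordan curve
\[
\gamma \;=\; P_0 \;\cup\; Q_t \;\cup\; \sigma \;\cup\; Q_{t''}[\tilde{v^*},v^*],
\]
where $\sigma$ is the sub-path of the bottom boundary of $G$ from $\tilde t$ to $\tilde{v^*}$ (this arc contains all of $\mu_{i+1}$, since $\mu_i,\mu_{i+1},\dots,\mu_j$ appear in this left-to-right order on $\pi'$). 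Let $D$ be the planar disc bounded by $\gamma$.

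Next I would establish the key separation property: every destination $\hat t$ for $(\hat s,\hat t)\in \mset_{i+1}$ lies inside $D$, because $\hat t$ is in the bottom band directly above $\mu_{i+1}\subseteq\sigma\subseteq\partial D$; and every source $\hat s\in\pi$ lies outside $D$, because $\pi$ is disjoint from $\pi'$ (and from $Q_t,Q_{t''}$, which are interior segments). Hence each routing path $P(\hat s,\hat t)$ crosses $\gamma$. By node-disjointness with $P(s,t)$, no such crossing uses a vertex of $P_0$, so every crossing is confined to $Q_t\setminus\{t\}$ (at most $2d$ vertices), $Q_{t''}[\tilde{v^*},v^*]\setminus\{v^*\}$ (at most $2d$ vertices), or the bottom arc $\sigma$.

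Finally, to obtain the $12d$ conflicts, I would assume for contradiction that fewer than $12d$ demand pairs of $\mset_{i+1}$ conflict with $(s,t)$ and look at the $\geq 4d$ non-conflicting pairs $\mset^-$. For each $(\hat s,\hat t)\in\mset^-$ the segment $Q_{\hat t}$ is disjoint from $P(s,t)\supseteq P_0$, lies entirely in $\bar D$, and joins $\hat t\in D$ to $\tilde{\hat t}\in\mu_{i+1}\subseteq\sigma$. The concatenation $P(\hat s,\hat t)\cdot Q_{\hat t}$ is then a path from outside $D$ to $\sigma$ that avoids $P_0$, and by node-disjointness of $\pset^*$ together with the disjointness of the $Q_{\hat t}$'s for distinct columns, these paths consume distinct entry vertices on $\gamma\setminus P_0$. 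The only way such a path can reach $\sigma\cap\mu_{i+1}$ without using $P_0$ is by entering $D$ through $Q_t\setminus\{t\}$, through $Q_{t''}[\tilde{v^*},v^*]\setminus\{v^*\}$, or by traveling along $\sigma$ itself from outside $[\tilde t,\tilde{v^*}]$; a careful accounting of available entry vertices in these three channels, combined with the bound $|Q_t|,|Q_{t''}[\tilde{v^*},v^*]|\leq 2d$ and the fact that $\sigma$-entries into $\mu_{i+1}$ force using a vertex of some $Q_{\hat t}$ (hence a conflict), caps the total number of non-conflict pairs at $4d$.

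The main obstacle is the last step: bounding the ``capacity'' of $\partial D\setminus P_0$ for the concatenated paths $P(\hat s,\hat t)\cdot Q_{\hat t}$ uniformly in the geometry of $P_0$. The bottom arc $\sigma$ is long, so one cannot simply count vertices on it; the subtle point is that any entry into $D$ through $\sigma$ that subsequently reaches $\mu_{i+1}$ must, by the grid structure and the node-disjointness constraints, pass through a vertex of some $Q_{\hat t}$ with $(\hat s,\hat t)\in\mset_{i+1}$, thereby already producing a conflict. Formalizing this ``entries through $\sigma$ do not help avoid conflicts'' step is the technical heart of the proof.
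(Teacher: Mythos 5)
Your setup is exactly the paper's: you build the same closed curve (the paper calls it $\gamma$, formed from $Q_t$, the bottom-boundary arc between $\tilde t$ and $\tilde t'$, the piece of $Q_{t'}$ up to the vertex $v\in P(s,t)\cap U(\mu_j)$, and the piece of $P(s,t)$ from $v$ to $t$), argue that for a non-conflicting pair the curve separates $\hat t$ from $\hat s$, and then count entry points. But the step you yourself flag as ``the technical heart'' -- ruling out crossings of $\gamma$ through the long bottom arc $\sigma$ -- is genuinely missing, and the fix you sketch does not work. You claim that an entry through $\sigma$ that later reaches $\mu_{i+1}$ must use a vertex of some $Q_{\hat t}$ and ``thereby already produce a conflict.'' That is not a conflict in the sense needed: a conflict between $(s,t)$ and $(\hat s,\hat t)$ is, by definition, the event that $P(s,t)$ (not some other routing path) contains a vertex of $Q_{\hat t}$. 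Another path of $\pset^*$ touching $Q_{\hat t}$ is perfectly allowed and gives you nothing, so this accounting cannot cap the number of non-conflicting pairs.

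The correct resolution is much simpler and is what makes the paper's one-line claim ``crossing is only possible through $Q_t\cup Q_{t'}$'' valid: an interior vertex $w$ of $\sigma$ lies on the bottom row of the grid, so its only neighbors are its two bottom-row neighbors (which belong to $\sigma$ unless $w$ is an endpoint) and the vertex directly above it (which, if it is not on $\gamma$, lies strictly inside the disc $D$). Hence if you look at the \emph{first} vertex of $P(\hat s,\hat t)$ that lies in the closed disc, it cannot be an interior vertex of $\sigma$: the only way to step onto $\sigma$ from outside $D$ along the grid is through one of its endpoints $\tilde t$ or $\tilde t'$, and these already belong to $Q_t$ and $Q_{t'}$. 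Combined with node-disjointness (which forbids the first entry vertex from lying on the $P(s,t)$-portion of $\gamma$), every non-conflicting pair consumes a distinct vertex of $Q_t\cup Q_{t'}$, of which there are at most $4d$, giving the $16d-4d=12d$ bound. You should also tighten two smaller points: $\hat t$ lies inside $D$ only \emph{because} $Q_{\hat t}$ is disjoint from $P(s,t)$ (i.e., only for non-conflicting pairs), not merely because it sits above $\mu_{i+1}$; and $\hat s$ lies outside $D$ because it is a boundary vertex of the grid adjacent to the outer face and not on $\gamma$, not because $\pi$ is disjoint from $\pi'$, since a priori the $P(s,t)$-portion of $\gamma$ could wander near $\pi$.
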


Assume first that the observation is correct, and assume for contradiction that $|\nset'|>4d$. Then there are at least $48d^2$ pairs $((s,t),(\hat s,\hat t))$ with $(s,t)\in \nset'$ and 
$(\hat s,\hat t)\in \mset_{i+1}$, such that there is a conflict between $(s,t)$ and $(\hat s,\hat t)$. However, $|\mset_{i+1}|\leq 20d$, and, since, for each pair $(\hat s,\hat t)\in \mset_{i+1}$, $Q_{\hat t}$ contains at most $2d$ vertices, there may be at most $2d$ pairs $(s,t)\in \nset'$ that conflict with $(\hat s,\hat t)$, a contradiction. We now proceed to prove the observation.


Let $v$ be some vertex on path $P(s,t)$, that belongs to a set $U(\mu_j)$, for some $j>i+1$. Then $v$ lies on some path $Q_{t'}$, for some $t'\in T(\mset_j)$ (see Figure~\ref{fig: obs-proof1}). We construct a (not necessarily simple) curve $\gamma$, by concatenating the path $Q_t$; the sub-path of $\pi'$ between $\tilde t$ and $\tilde{t}'$; the sub-path of $Q_{t'}$ between $\tilde t'$ and $v$; and the sub-path of $P(s,t)$ between $v$ and $t$ (see Figure~\ref{fig: obs-proof2}). Consider now some demand pair $(\hat s,\hat t)\in \mset_{i+1}$. If path $P(s,t)$ does not contain a vertex of $Q_{\hat t}$, then curve $\gamma$ separates $\hat t$ from $\hat s$, and so path $P(\hat s,\hat t)$ has to cross the curve $\gamma$. This is only possible if path $P(\hat s,\hat t)$ contains a vertex of $Q_t\cup Q_{t'}$. But since both $Q_t$ and $Q_{t'}$ contain at most $2d$ vertices each, at most $4d$ paths of $\pset^*$ may intersect the curve $\gamma$. Therefore, there are at least $16d-4d= 12 d$ demand pairs in $\mset_{i+1}$, whose corresponding path in $\pset^*$ does not intersect the curve $\gamma$. From the above discussion, for each such pair $(\hat s,\hat t)$, path $P(s,t)$ must contain a vertex of $Q_{\hat t}$, and so $(s,t)$ conflicts with $(\hat s,\hat t)$.  We conclude that $(s,t)$ conflicts with at least  $12d$ pairs of $\mset_{i+1}$.

\begin{figure}[h]
\scalebox{0.4}{\includegraphics{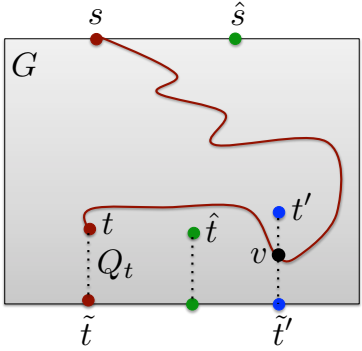}}
\caption{Illustration to the proof of Observation~\ref{obs: containing many U-vertices}. Demand pairs $(s,t)$ and $(\hat s,\hat t)$. Path $P(s,t)$ is shown in red.\label{fig: obs-proof1}}
\end{figure}

\begin{figure}[h]
\scalebox{0.4}{\includegraphics{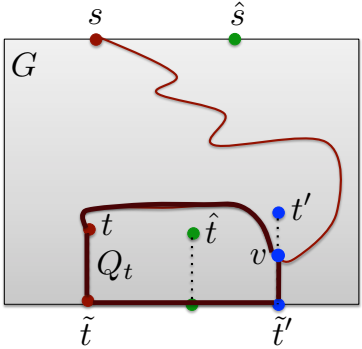}}
\caption{Illustration to the proof of Observation~\ref{obs: containing many U-vertices}. The curve $\gamma$ is shown in brown. 
\label{fig: obs-proof2}}
\end{figure}

\end{proofof}
\end{proofof}

\subsection{The Algorithm} \label{subsec: general case algo}
If all vertices of $T(\mset)$ lie on the boundary $\Gamma$ of $G$, then we can solve the problem efficiently using standard dynamic programming. Therefore, we assume from now on, that for each $t\in T(\mset)$, $d(t,\tilde t)\geq 1$.

Recall that $\Gamma_1,\ldots,\Gamma_4$ denote the four boundary edges of $G$.
We partition the set $\mset$ of demand pairs into $16$ subsets $\mset_{q,q'}$, for $1\leq q,q'\leq 4$: demand pair $(s,t)$ belongs to set $\mset_{q,q'}$ iff $s\in \Gamma_q$ and $\tilde t\in \Gamma_{q'}$. We then solve each of the resulting instances $(G,\mset_{q,q'})$ separately, and return the best of the resulting solutions. Since one of these instances is guaranteed to have a solution of value at least $\opt/16$, it is enough to show a factor-$2^{O(\sqrt{\log n}\log\log n)}$-approximation algorithm for each such instance separately.
 Therefore, from now on we focus on one such instance $(G,\mset_{q,q'})$, and we assume w.l.o.g. that the value of the optimal solution in this instance is at least $\opt/16$.

 We further partition the set $\mset_{q,q'}$ of demand pairs into the following subsets: we let $\mset^0_{q,q'}$ contain all demand pairs $(s,t)$ with $d(t,\tilde t)\geq \opt/\eta$, and for $0<r<\log n$, we let $\mset^r_{q,q'}$ contain all demand pairs $(s,t)$ with $\frac{\opt}{\eta\cdot 2^r} \leq d(t,\tilde t)<\frac{\opt}{\eta\cdot 2^{r-1}}$. Clearly, there is an index $0< r^*< \log n$, for which the value of the optimal solution of instance $(G,\mset^{r^*}_{q,q'})$ is at least $\Omega(\opt/\log n)$. Therefore, it is enough to compute a factor-$2^{O(\sqrt{\log n}\log\log n)}$-approximation for each such instance separately. Notice that we can compute a factor-$2^{O(\sqrt{\log n}\log\log n)}$-approximate solution for instance $\mset^0_{q,q'}$ using Theorem~\ref{thm: main w destinations far from boundary}. Therefore, we now focus on an instance $(G,\mset^{r^*}_{q,q'})$, for some fixed $r^*>0$. 
 
 In order to simplify the notation, we denote by $\mset$ the corresponding set $\mset^{r^*}_{q,q'}$ of demand pairs; we assume that we are given an integer $1\leq d\leq \opt/2\eta$, such that all destination vertices $t\in T(\mset)$ have $d\leq d(t,\tilde{t})< 2d$. We denote by $\opt'$ the value of the optimal solution of the resulting instance $(G,\mset)$, and we assume that $\Theta(\opt/\log n)\leq \opt'\leq \opt$. Moreover, we can assume that $\opt'>2^{13}d$, since otherwise we can directly apply Theorem~\ref{thm: main w destinations far from boundary} to instance $(G,\mset)$ with parameter $\opt=d$ to obtain an $2^{O(\sqrt{\log n}\log\log n)}$-approximation.  We now consider three cases, depending on the location of the boundary edges $\Gamma_q$ and $\Gamma_{q'}$.

 \paragraph*{Case 1: $\Gamma_q$ and $\Gamma_{q'}$ are opposite boundary edges.} Without loss of generality, we assume that $\Gamma_q$ is the top row of the grid $G$, and it contains the vertices of $S(\mset)$, while $\Gamma_{q'}$ is the bottom row of the grid, and it contains the vertices $\set{\tilde t\mid t\in T(\mset)}$.
 
Let $\mset'\subseteq \mset$ be the set of the demand pairs routed by the optimal solution. 
 Let $I$ denote the top boundary of the grid $G$, and let $I'$ denote its bottom boundary.
Theorem~\ref{thm: partition into intervals} guarantees the existence of a set  $\Sigma=\set{\sigma_1,\ldots,\sigma_z}$ of disjoint sub-intervals of $I$, and a set $\Sigma'=\set{\sigma_1',\ldots,\sigma'_z}$ of disjoint sub-intervals of $I'$, for $z=\ceil{\frac{|\mset'|}{160d}}-1$, such that the intervals $\sigma_1,\sigma_2,\ldots,\sigma_z,\sigma'_z,\ldots,\sigma_2',\sigma_1'$ appear on $\Gamma$ in this circular order;
  every pair $\sigma\in \Sigma\cup \Sigma'$, $\sigma'\in \Sigma'$ of  intervals is separated by at least $16d$ vertices of $\Gamma$; the length of each interval $\sigma'\in \Sigma'$ is at least $16d$ and at most $\sqrt{n}/2$, and for each $1\leq i\leq z$, if we consider the perfect $(\sigma_i,\sigma'_i,d)$-instance $(G,\tmset_i)$, where $\tmset_i=\mset_{\sigma_i,\sigma_{i'}}$, and denote by $(G_i,\tmset'_i)$ the corresponding modified instance, then $|\aset(G_i,\tmset'_i)|\geq d/2^{O(\sqrt{\log n}\log\log n)}$. Moreover, for every $1\leq i\leq z$, $(\sigma_i,\sigma_i')$ is a $d$-interesting pair of intervals. Notice, however, that, since the set $\mset'$ of demand pairs routed by the optimal solution is not known to us, we cannot compute the sets $\Sigma$ and $\Sigma'$ of intervals directly. We now show how to overcome this difficulty.
  
For every $d$-interesting pair $(\sigma,\sigma')$ of intervals, with $\sigma\subseteq I$, $\sigma'\subseteq I'$, consider the perfect $(\sigma,\sigma',d)$-instance $(G, \mset_{\sigma,\sigma'})$, and the corresponding modified instance $(G'_{\sigma,\sigma'},\mset'_{\sigma,\sigma'})$. We can then compute an approximate solution $
\aset(G'_{\sigma,\sigma'},\mset'_{\sigma,\sigma'})$ of value $|\aset(G'_{\sigma,\sigma'},\mset'_{\sigma,\sigma'})|$ to this instance, using the algorithm from Theorem~\ref{thm: main w destinations far from boundary}, together with the parameter $\opt=d$. 
We can assume w.l.o.g. that, if $\mset'_{\sigma,\sigma'}\neq\emptyset$, then $|\aset(G'_{\sigma,\sigma'},\mset'_{\sigma,\sigma'})|\geq 1$. Let $c$ be a constant, so that the approximation factor of Algorithm $\aset$ given in Theorem~\ref{thm: main w destinations far from boundary} is $2^{c\sqrt{\log n}\log\log n}$.
Assume now that we have computed the values  $|\aset(G'_{\sigma,\sigma'},\mset'_{\sigma,\sigma'})|$  for all $d$-interesting pairs $\sigma\subseteq I$ and $\sigma'\subseteq I'$ of intervals.
We say that a pair $(\sigma,\sigma')$ of interesting intervals is a \emph{good pair} iff $ |\aset(G'_{\sigma,\sigma'},\mset'_{\sigma,\sigma'})|\geq \ceil{d/2^{c \sqrt{\log n}\log\log n}}$.
It is now enough to compute a collection $\Sigma=\set{\sigma_1,\ldots,\sigma_z}$ of disjoint sub-intervals of $I$, and a collection $\Sigma'=\set{\sigma_1',\ldots,\sigma'_z}$ of disjoint sub-intervals of $I'$, for $z=\ceil{\frac{|\mset'|}{160d}}-1$, such that the intervals $\sigma_1,\sigma_2,\ldots,\sigma_z,\sigma'_z,\ldots,\sigma_2',\sigma_1'$ appear on $\Gamma$ in this circular order; for all $i$, $(\sigma_i,\sigma_i')$ is a good pair of intervals, and $\sigma_i'$ is at a distance at least $16d$ from every interval in $\Sigma\cup \Sigma'$. This can be done by using simple dynamic programming.

Assume now that we have computed the collections $\Sigma$ and $\Sigma'$ of intervals as above. For each $1\leq i\leq z$, let $\pset_i=\aset(G'_{\sigma_i,\sigma'_i},\mset'_{\sigma_i,\sigma'_i})$ be the set of paths routed by the solution of value $|\aset(G'_{\sigma_i,\sigma'_i},\mset'_{\sigma_i,\sigma'_i})|$ that we have computed. Let $\nset'_i\subseteq \mset'(\sigma_i,\sigma'_i)$ be the set of the demand pairs routed by this solution, and let $\nset_i\subseteq \mset(\sigma_i,\sigma'_i)$ be the set of the original demand pairs corresponding to the pairs in $\nset'_i$. For convenience, we denote $G_i=G'_{\sigma_i,\sigma'_i}$. Finally, let $\tilde{\nset}_i$ denote the set of pairs $(s,s')$, where $(s,t)\in \nset_i$, and $(s',t')\in \nset'_i$ is the corresponding demand pair in the modified instance. 

From the above discussion, $\sum_{i=1}^z|\nset_i|\geq \opt'/\approxfactor$.
It is now enough to show that all demand pairs in set $\bigcup_{i=1}^z\nset_i$ can be routed in $G$. In order to do so, it is enough to show that all pairs in $\bigcup_{i=1}^z\tilde{\nset}_i$ can be routed in graph $G$ via paths that are internally disjoint from $\bigcup_{i=1}^zG_i$, as we can exploit the paths in $\bigcup_{i=1}^z\pset_i$ in order to complete the routing. Recall that the graphs $G_i$ are disjoint from the top $\sqrt{n}/4$ rows of the grid, and the source vertices of the demand pairs in  $\bigcup_{i=1}^z\tilde{\nset}_i$ appear in the same left-to-right order as their destination vertices. It is now immediate to complete the routing of the demand pairs in $\bigcup_{i=1}^z\tilde{\nset}_i$.

\paragraph*{Case 2: $\Gamma_q$ and $\Gamma_{q'}$ are neighboring boundary edges.}
Assume w.l.o.g. that $\Gamma_q$ is the left boundary edge of $G$ and $\Gamma_{q'}$ is its bottom boundary edge. This case is dealt with very similarly to Case 1, with minor changes.
As before, we assume that we are given a set $\mset$ of demand pairs and an integer $d$, such that for each pair $(s,t)\in \mset$, $s\in \Gamma_q$, $\tilde t\in \Gamma_{q'}$, and $d\leq d(t,\tilde t)<2d$. We assume that we know the value $\opt'$ of the optimal solution to instance $(G,\mset)$, and we assume w.l.o.g. that $\opt'\geq 2^{13}d$. 
 We let $I$ be the left boundary edge of $G$, and we let $I'$ be the bottom boundary edge of $G$, excluding its $\ceil{\opt'/16}$ leftmost vertices. 
 Let $\mset'\subseteq \mset$ be the subset of all demand pairs $(s,t)$ with $\tilde{t}\in I'$. Let $\opt''$ be the value of the optimal solution to instance $(G,\mset')$.
 
 \begin{observation}
 $\opt''\geq \opt'/2$.
 \end{observation}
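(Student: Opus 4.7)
The plan is to bound the number of demand pairs in $\mset\setminus \mset'$ that any node-disjoint routing can use, and show this number is at most $\opt'/2$. Since any optimal solution for $(G,\mset)$ can be restricted to its pairs that lie in $\mset'$, this will yield $\opt''\geq \opt'/2$ directly.

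First, I would identify the geometric region containing the destinations of the removed pairs. For every $(s,t)\in \mset\setminus \mset'$, $\tilde t$ lies among the leftmost $\ceil{\opt'/16}$ vertices of the bottom boundary, so $\col(\tilde t)\leq \ceil{\opt'/16}$. Since $\tilde t$ is the closest boundary vertex to $t$ and $\tilde t\in \Gamma_{q'}$ is on the bottom row, $\tilde t$ lies in the same column as $t$, and the distance $d(t,\tilde t)<2d$ equals $\sqrt n - \row(t)$. Hence $t$ lies in the box
\[
B=\set{v(i,j)\in V(G)\mid j\leq \ceil{\opt'/16},\ i>\sqrt n - 2d},
\]
a rectangle of width $\ceil{\opt'/16}$ and height $2d$ at the bottom-left corner of $G$.

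Next, I would bound by a cut argument the number of vertex-disjoint paths that connect left-boundary sources to destinations in $B$. The sources lying inside $B$ belong to its left edge and number at most $2d$, contributing at most $2d$ such paths trivially. Any path whose source lies outside $B$ must cross the ``interior boundary'' of $B$ (the set of vertices of $B$ with a neighbor in $G\setminus B$), which is precisely the union of the top edge of $B$ (a horizontal segment of $\ceil{\opt'/16}$ vertices) and the right edge of $B$ (a vertical segment of $2d$ vertices). Node-disjointness then implies that at most $\ceil{\opt'/16}+2d$ such paths exist. Summing these two bounds, the total number of node-disjoint paths from $\Gamma_q$ to $B$ is at most $\ceil{\opt'/16}+4d$.

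Since by assumption $\opt'>2^{13}d$, we have $4d<\opt'/2^{11}$, and so the total count is at most $\opt'/16+\opt'/2^{11}\leq \opt'/8$. In particular, among the $\opt'$ pairs routed by the optimal solution for $(G,\mset)$, at most $\opt'/8$ belong to $\mset\setminus \mset'$, so at least $7\opt'/8\geq \opt'/2$ belong to $\mset'$. The subset of the optimal paths routing these latter demand pairs is a valid node-disjoint routing of at least $\opt'/2$ demand pairs of $\mset'$, proving $\opt''\geq \opt'/2$. The one subtle point to verify carefully is that the interior boundary of $B$ is truly an $S$-$T$ cut for the relevant paths; this uses the fact that the left and bottom edges of $B$ coincide with portions of $\Gamma(G)$, so no path from outside $B$ can enter $B$ through them.
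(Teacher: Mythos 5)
Your proof is correct and follows essentially the same route as the paper: both arguments observe that the destinations of the discarded pairs lie in a small corner sub-grid and then use node-disjointness to bound the number of optimal paths reaching that sub-grid by the size of its boundary. The only difference is cosmetic --- the paper takes a $(\ceil{\opt'/16}\times\ceil{\opt'/16})$ box above $I''$ and counts its full perimeter (at most $\opt'/4+4$), while you take a box of height $2d$ and separately account for sources on its left edge, which even gives the slightly stronger bound $|\hmset|\leq \opt'/8$.
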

 
 \begin{proof}
 Let $\pset^*$ be the optimal solution to instance $(G,\mset)$, and let $\mset^*$ be the set of the demand pairs routed by it. Let $\hmset=\mset^*\setminus \mset'$. Then it is enough to show that $|\hmset|\leq \opt'/2$. Let $R''$ be the bottom row of $G$, and let $I''=R''\setminus I'$. Then for each demand pair $(s,t)\in \hmset$, $\tilde t\in I''$ must hold. Let $Q$ be a sub-grid of $G$, whose bottom boundary is $I''$, and whose height is $\ceil{\opt'/16}>2d$. Then the boundary of $Q$ has length at most $\opt'/4+4$, and every path routing a demand pair in $\hmset$ must cross the boundary of $Q$. Therefore, $|\hmset|\leq \opt'/2$.
 \end{proof}  
 
 The remainder of the algorithm is exactly the same as in Case 2. The only difference is in how the demand pairs in set $\tilde \nset=\bigcup_{i=1}^z\tilde{\nset}_i$ are routed. This is done by utilizing the $\opt'/16$ first columns of $G$ and the $\opt'/2$ top rows of $G$. Since we can assume that $|\tilde \nset|<\opt'/64$,  it is straightforward to find a suitable routing.

\paragraph*{Case 3: $\Gamma_q=\Gamma_{q'}$.}
We assume w.l.o.g. that $\Gamma_q$ is the bottom boundary of $G$. 
We assume that we are given a set $\mset$ of demand pairs and an integer $d$, such that for each pair $(s,t)\in \mset$, $s,\tilde t\in \Gamma_q$, and $d\leq d(t,\tilde t)<2d$. We assume that we know the value $\opt'$ of the optimal solution to instance $(G,\mset)$, and we assume w.l.o.g. that $\opt'\geq 2^{13}d$.

We partition the set $\mset$ of the demand pairs into three subsets: set $\mset^0$ contains all pairs $(s,t)$ with $|\col(s)-\col(t)|\leq 2d$; set $\mset^1$ contains all remaining pairs $(s,t)$ with $\col(s)<\col(t)$; and set $\mset^2$ contains all remaining demand pairs. 
We deal with the demand pairs in $\mset^0$ using the following claim.

\begin{claim}\label{claim: short pairs}
There is an efficient randomized algorithm that computes a factor-$2^{O(\sqrt{\log n}\log\log n)}$ approximation to instance $(G,\mset^0)$ of \NDP.
\end{claim}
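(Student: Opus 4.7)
The plan is a random column-shift reduction that exploits the locality of pairs in $\mset^0$: each such pair $(s,t)$ fits inside a $(4d)\times(2d)$ box hugging the bottom boundary (since $|\col(s)-\col(t)|\le 2d$ and $d\le d(t,\tilde t)<2d$). I will partition $G$ into disjoint square sub-grids of side $L=d\eta$, using a random column offset, and invoke Theorem~\ref{thm: main w destinations far from boundary} on each such sub-instance. Note that $L\le \sqrt{n}/2$ in this setting, since $d\le \opt'/(2\eta)\le \sqrt{n}/(2\eta)$.

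Concretely, I pick $r\in\{0,\ldots,L-1\}$ uniformly at random and partition the columns of $G$ into consecutive blocks $B_1,B_2,\ldots$ of width $L$ starting at offset $r$. For each block $B_j$, let $G_j$ be the square sub-grid of side $L$ spanned by the columns of $B_j$ and the bottommost $L$ rows of $G$. Call a pair $(s,t)\in\mset^0$ \emph{captured} by $B_j$ if $\col(s),\col(t)\in B_j$ and both columns are at distance at least $L/\eta=d$ from the left and right edges of $B_j$, and let $\mset^0_j$ be the set of captured pairs. Since each pair has column-span at most $2d\le L/8$, the random offset makes each pair captured by some block with probability at least $1 - 2d/L - 2/\eta = \Omega(1)$. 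Hence, if $\mset^0_*\subseteq \mset^0$ denotes the pairs routed by the optimum solution $\pset^*$ for $(G,\mset^0)$, we have $\expect{\sum_j |\mset^0_j\cap\mset^0_*|}=\Omega(|\mset^0_*|)$.

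Then I apply Theorem~\ref{thm: main w destinations far from boundary} to each sub-instance $(G_j,\mset^0_j)$, iterating over guesses of $\opt(G_j,\mset^0_j)$. A horizontal cut at height $d/2$ in $G_j$ gives $\opt(G_j,\mset^0_j)\le L=d\eta$, so the theorem's hypothesis $\opt(G_j,\mset^0_j)/\eta\le d$ is satisfied: destinations in $\mset^0_j$ lie at distance $\ge d$ from the bottom of $G_j$, $\ge L-2d\ge L/2$ from the top, and $\ge L/\eta=d$ from left and right by the captureness condition. This produces an $\approxfactor$-approximate routing inside each $G_j$ with high probability; since the $G_j$'s are vertex-disjoint, concatenating the per-block routings gives a valid routing in $G$.

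The main obstacle will be to establish that $\sum_j \opt(G_j,\mset^0_j)=\Omega(|\mset^0_*|)$: one must show that a constant fraction of the captured pairs of $\mset^0_*$ can actually be routed inside their own $G_j$, even though the paths in $\pset^*$ routing them in $G$ may leave $G_j$. My plan is to prove this via a planar re-routing argument that exploits the substantial spatial slack of each $G_j$. Since $G_j$ is $L\times L$ while captured pairs live entirely in its bottom $2d\le L/\eta$ rows and number at most $L$ (again by the same horizontal cut), the top region of $G_j$ of size $L\times (L-2d)\ge L\times L/2$ provides ample detour space to locally redirect any sub-path of $\pset^*$ that leaves $G_j$ back inside, losing only a constant fraction of the captured pairs and thereby yielding the desired bound.
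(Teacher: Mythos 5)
Your random-offset decomposition, the capture probability estimate, and the application of Theorem~\ref{thm: main w destinations far from boundary} inside each block are fine, and the overall template (random boxes along the bottom boundary, then the far-from-boundary algorithm per box) is the same as the paper's. But the step you yourself flag as ``the main obstacle'' is exactly the content of the claim, and your justification for it does not hold up. You need that a constant fraction of the captured pairs of $\mset^0_*$ remain routable \emph{inside} $G_j$, and you argue this by pointing to the area of the top region of $G_j$. Area is not the right currency for node-disjoint re-routing: what matters is the nesting structure of the segments of the optimal paths that cross the boundary of the terminal-containing region, versus the width of the terminal-free margin available to absorb them (this is precisely how the paper's Claim~\ref{claim: modified instance preserves solutions} works -- nested boundary-crossing intervals are re-routed in dedicated rows/columns of a margin whose width matches the number of paths). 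With your parameters this accounting breaks: a block of side $L=d\eta$ can capture up to $d\eta$ routed pairs, but the terminal-free buffer adjacent to the left/right boundaries of $G_j$ has width only $d$ (from the captureness condition), so crossing segments through the low part of the side boundaries cannot all be absorbed there; discarding the pairs whose paths use those $O(d)$ low boundary vertices costs an \emph{additive} $O(d)$ per block, which is not ``a constant fraction'' for blocks capturing few pairs and can sum to far more than $\opt'$ over all $\Theta(\sqrt n/(d\eta))$ blocks. So the per-block constant-fraction claim, as stated and as argued, is unsubstantiated.

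The paper sidesteps this by choosing boxes of side $\Theta(d)$ rather than $d\eta$: then the per-box optimum is automatically $O(d)$ (all its sources lie on a length-$O(d)$ piece of the bottom row), and grouping the boxes into four classes yields disjoint augmented boxes $Q_j^+$ with $\Theta(d)$ margins, so the already-proved re-routing argument of Claim~\ref{claim: modified instance preserves solutions} gives $\opt(Q_j^+,\nset_j)\geq\opt(G,\nset_j)$ directly. Your approach can be repaired along the same lines -- e.g.\ prove the weaker per-block bound $\opt(G_j,\mset^0_j)\geq\min\{k_j,\Theta(d)\}$ by applying the nested re-routing to only $\Theta(d)$ of the captured pairs (for which the $d$-wide side buffers and the tall top region suffice), and observe that $\min\{k_j,\Theta(d)\}\geq\Omega(k_j/\eta)$ since $k_j\leq d\eta$, which still fits inside the $\approxfactor$ budget -- but that re-routing lemma is the missing ingredient, and it has to be proved, not inferred from available area.
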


\begin{proof}
Let $\rho$ be a random integer between $0$ and $4d$. We define a collection $\qset'=\set{Q_1,\ldots,Q_r}$ of square sub-grids of $G$ as follows. Let $Q_1$ be a sub-grid of size $((4d+\rho)\times (4d+\rho))$, containing the bottom left corner of $G$ as the bottom left corner of $Q_1$. Assume now that we have defined $Q_1,\ldots,Q_i$. Let $N$ be the number of columns of $G$ that do not intersect the grids of $Q_1,\ldots,Q_i$. If $N>8d$, then we let $Q_{i+1}$ be the $(4d\times 4d)$ sub-grid of $G$, whose bottom left corner is the vertex serving as the bottom right corner of $Q_i$. Otherwise, we set $r=i+1$, and we let $Q_r$ be the square grid whose bottom left corner is the  vertex serving as the bottom right corner of $Q_i$, and whose bottom right corner is the bottom right corner of the grid $G$. Notice that each grid $Q_i$ has width and height $4d$, except for $Q_1$ and $Q_r$, whose widths and heights may be between $4d$ and $8d$.

For $1\leq i\leq r$, let $\nset_i\subseteq \mset^0$ be the set of all demand pairs $(s,t)$ with $s,t\in V(Q_i)$. Let $\nset=\bigcup_{i=1}^r\nset_i$. Notice that each demand pair $(s,t)\in \mset^0$ belongs to $\nset$ with constant probability, and therefore, with high probability, $\opt(G,\nset)=\Omega(\opt(G,\mset^0))$.

We partition the set $\qset'$ of the squares into four subsets, $\qset^1,\ldots,\qset^4$, where for $1\leq i\leq 4$, $\qset^i$ contains all squares $Q_j$, where $j=i\mod 4$.
For $1\leq i\leq 4$, let $\nset^i\subseteq \nset$ be the subset of all demand pairs $(s,t)$, such that for some $Q\in \qset^i$, $s,t\in V(Q)$. We solve each problem $(G,\nset^i)$ separately and return the best of the resulting four solutions.

We now fix $1\leq i\leq 4$, and solve the problem $(G,\nset^i)$. Consider some square $Q_j\in \qset^i$. Let $Q^+_j$ be obtained by adding a margin of $4d$ columns to the left and to the right of $Q_j$, and $8d$ rows above $Q_j$ (if $j=1$ or $j=r$, then we do not add columns on one of the sides of $Q_j$, and we add only $4d$ rows above $Q_j$, so that $Q_j^+$ is a square grid). Notice that from our construction, if $Q_j,Q_{j'}\in \qset^i$, then $Q^+_j,Q^+_{j'}$ are disjoint. Using the same reasoning as in the proof of Claim~\ref{claim: modified instance preserves solutions}, $\opt(Q^+_j,\nset_j)\geq \opt(G,\nset_j)$. Notice that $\opt(G,\nset_j)\leq 4d$, as all source vertices of the demand pairs routed by the optimal solution must be distinct, and they lie on the bottom boundary of $Q_j$. It is then easy to see that $(Q^+,\nset_j)$ is a special case of \restrictedNDP where all destination vertices lie at a distance at least $\opt/4$ from the grid boundary, and we can find a $2^{O(\sqrt{\log n}\log\log n)}$-approximation for it using Theorem~\ref{thm: main w destinations far from boundary}.
\end{proof}

If the optimal solution value to instance $(G,\mset^0)$ is at least $\opt'/3$, then we obtain a $2^{O(\sqrt{\log n}\log\log n)}$-approximation to this instance, and a $\approxfactor$-approximation overall, using Claim~\ref{claim: short pairs}. Therefore, we assume from now on that any optimal solution must route at least $2\opt'/3$ demand pairs from $\mset^1\cup \mset^2$.
We compute an approximate solution to each of the problems $(G,\mset^1)$ and $(G,\mset^2)$ separately, and take the better of the two solutions. Clearly, it is enough to design a factor-$2^{O(\sqrt{\log n}\log\log n)}$-approximation to each of these two problems separately. We now focus on one of these problem, say $(G,\mset^1)$, and for simplicity denote $\mset^1$ by $\mset$ and the value of the optimal solution to instance $(G,\mset^1)$ by $\opt'$.

Recall that $\ell=\sqrt{n}$ is the length of the grid $G$. We can naturally associate with each demand pair $(s,t)\in \mset$ an interval $I(s,t)\subseteq [\ell]$: the left endpoint of $I(s,t)$ is the index of $\col(s)$, and its right endpoint is the index of $\col(t)$.

Next, we partition the demand pairs in $\mset$ into $h=\log n$ classes $\mset_1,\mset_2,\ldots, \mset_h$, as follows. Pair $(s,t)$ belongs to class $\mset_i$ iff $2^{i-1}\leq |I(s,t)|<2^i$. Note that for $i\leq \log d$, $\mset_i=\emptyset$, as all such pairs belong to $\mset^0$. Thus, we obtain a collection of $h$ instances $(G,\mset_i)$ of the problem. As before, we will compute a factor-$2^{O(\sqrt{\log n}\log\log n)}$-approximation to each of these instances separately, and return the best of the resulting solutions. We now fix some index $\log d\leq i\leq h$, and design a factor-$2^{O(\sqrt{\log n}\log\log n)}$-approximation algorithm to the corresponding instance $(G,\mset_i)$. 


Let $\rho$ be an integer chosen uniformly at random from $[0,2^{i+3})$. Let $Z$ be the set of all integers $1\leq z\leq \ell$, such that $z=\rho+j\cdot 2^{i+3}$ for some integer $j$.
 Let $\mset'\subseteq \mset$ contain all demand pairs $(s,t)$, such that there is some number $z\in Z$ with $\col(s)<z<\col(t)$, and for all $z'\in Z$, $|\col(s)-z'|,|\col(t)-z'|\geq 2^{i-1}/4$. Let $I(s,t)$ be an interval whose left endpoint is $\col(s)$ and right endpoint is $\col(t)$, and let $I'(s,t)\subseteq I(s,t)$ be its sub-interval that excludes the first and the last $2^{i-1}/4$ vertices of $I(s,t)$. Then it is easy to see that $(s,t)\in \mset'$ iff some integer in $I'(s,t)$ belongs to $Z$, and that this happens with probability at least $1/32$ over the choice of $\rho$. We now focus on solving the instance $(G,\mset')$ of the problem, and we denote by $\opt''$ the value of its optimal solution, so $\expect{\opt''}=\Omega(\opt')$. 

Let $Z'$ contain all integers that lie halfway between consecutive pairs of integers of $Z$, that is: $Z'=\set{z+2^{i+2}\mid z\in Z \mbox{ and } z+2^{i+2}\leq \ell}$. We partition the grid $G$ into sub-grids $H_1,\ldots, H_r$, by deleting all columns $W_z$, where $z\in Z'$. 
Then for every demand pair $(s,t)\in \mset'$, there is a unique sub-grid $H_j$ containing both $s$ and $t$. For each $1\leq j\leq r$, let $\hmset_j\subseteq \mset'$ be the set of all demand pairs $(s,t)$ with both $s$ and $t$ contained in $H_j$. For each $1\leq j\leq r$, we then consider the instance $(G,\hmset_j)$. We can further partition $H_j$ into two sub-grids $H'_j$ and $H''_j$ along the unique column $W_z$ that is contained in $H_j$, and for which $z\in Z$. Notice that all source vertices lie on the bottom boundary of $H'_j$ at distance at least $2^{i-3}$ from the left and the right boundaries of $H'_j$, while all destination vertices lie in $H''_j$, at distance at least $2^{i-3}$ from the left and the right boundaries of $H''_j$. Let $I$ and $I'$ denote the bottom boundaries of $H'_j$ and $H''_j$, respectively.

Assume first that the value $\opt_j$ of the optimal solution to instance $(G,\hmset_j)$ is at least $1024d$. Then we employ an algorithm similar to the one for Case 1: we use dynamic programming, together with the algorithm given by Theorem~\ref{thm: main w destinations far from boundary} to compute  a collection $\Sigma=\set{\sigma_1,\ldots,\sigma_z}$ of disjoint sub-intervals of $I$, and a set $\Sigma'=\set{\sigma_1',\ldots,\sigma'_z}$ of disjoint sub-intervals of $I'$, for $z=\ceil{\frac{|\mset'|}{160d}}-1$, such that the intervals $\sigma_1,\sigma_2,\ldots,\sigma_z,\sigma'_z,\ldots,\sigma_2',\sigma_1'$ appear on $\Gamma$ in this circular order, and for all $q$, $(\sigma_q,\sigma_q')$ is a good pair of intervals. For each $1\leq q\leq z$, let $(G'_{\sigma_q,\sigma'_q},\hmset'_{\sigma_q,\sigma'_q})$ denote the corresponding modified instance. The key is to notice that all such graphs $G'_{\sigma_q,\sigma'_q}$ are mutually disjoint and are contained in $H''_j$. We construct the set $\tilde {\nset}=\bigcup_{q=1}^z\tilde{\nset}_q$ of new demand pairs as before, and show that they can be routed in $H_j$ via paths that are internally disjoint from $\bigcup_{q=1}^zG'_{\sigma_q,\sigma'_q}$. The routing is straightforward and exploits the columns of $H'_j$ and the top $\sqrt{n}/4$ rows of $H''_j$.

Finally, assume that the value of the optimal solution $\opt_j$ to instance  $(G,\hmset_j)$ is less than $1024d$. 
Note that $\opt_j<2^{i+2}$ must also hold, as interval $I$ contains at most $2^{i+2}$ vertices.  

Let $\hat I'$ be obtained from $I'$ by discarding its first $2^{i-3}$ and its last $2^{i-3}$ vertices, and recall that $2^i\geq d$. Then $(I,\hat I')$ is a $d$-interesting pair of intervals, and $(G,\hmset_j)$ is a perfect $(I,\hat I',d)$ sub-instance of $(G,\mset)$.
Let $(\tilde{H}_j,\hmset'_j)$ be the corresponding modified instance. Notice that $\tilde{H}_j\subseteq H''_j$. Then, from Claim~\ref{claim: modified instance preserves solutions}, the value of the optimal solution to instance $(\tilde H_j,\hmset'_j)$ is at least $\min\set{\opt_j,d}=\Omega(\opt_j)$. We can use  Theorem~\ref{thm: main w destinations far from boundary} in order to compute a $2^{O(\sqrt{\log n}\log\log n)}$-approximate solution $\pset_j$ to instance $(H'_j,\hmset'_j)$. From the above discussion $|\pset_j|\geq \frac{\opt_j}{\approxfactor}$. It now remains to route the vertices of $S(\hmset_j)$ to the corresponding vertices of $S(\hmset'_j)$ via node-disjoint paths that are internally disjoint from $\tilde H_j$. This is done in exactly the same way as before.

\label{------------------------------------sec: ndp tp edp--------------------------------------}
\section{Approximating NDP/EDP on Walls with Sources on the Boundary}\label{sec: ndp to edp}
In this section we prove Theorem \ref{thm: main edp}, by extending our results for \rNDPgrid to \EDP and \NDP on wall graphs.

\begin{figure}[h]
\center
\scalebox{0.20}{\includegraphics{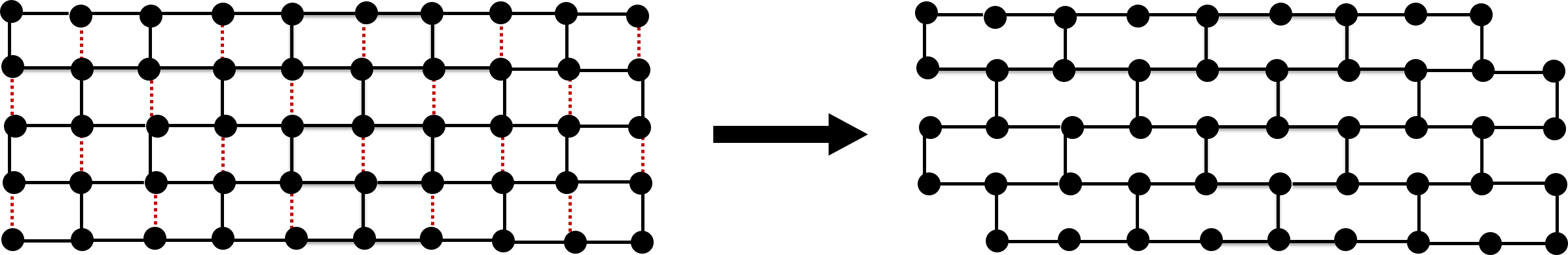}}
\caption{The red dotted edges are deleted to obtain wall $\hat G$ from grid $G$.
\label{fig: dotted wall}}
\end{figure}

Let $G=G^{\ell,h}$ be a grid of length $\ell$ and height $h$.
Assume that $\ell>0$ is an even integer, and that $h>0$.
For every column $W_j$ of the grid, let $e^j_1,\ldots,e^j_{h-1}$ be the edges of $W_j$ indexed in their top-to-bottom order. Let $E^*(G)\subseteq E(G)$ contain all edges $e^j_z$, where $z\neq j \mod 2$, and let $\hat G$ be the graph obtained from $G\setminus E^*(G)$, by deleting all degree-$1$ vertices from it.
Graph $\hat G$ is called a \emph{wall of length $\ell/2$ and height $h$} (see Figure~\ref{fig: dotted wall}).
Consider the subgraph of $\hat G$ induced by all horizontal edges of the grid $G$ that belong to $\hat G$. This graph is a collection of $h$ node-disjoint paths, that we refer to as the \emph{rows} of $\hat G$, and denote them by $R_1,\ldots,R_h$ in this top-to-bottom order; notice that $R_j$ is a sub-path of the $j$th row of $G$ for all $j$. Graph $\hat G$ contains a unique collection $\wset$ of $\ell/2$ node-disjoint paths that connect vertices of $R_1$ to vertices of $R_h$ and are internally disjoint from $R_1$ and $R_h$. We refer to the paths in $\wset$ as the \emph{columns} of $\hat G$, and denote them by $W_1,\ldots,W_{\ell/2}$ in this left-to-right order.
Paths $W_1, W_{\ell/2},R_1$ and $R_h$ are called the left, right, top and bottom boundary edges of $\hat G$, respectively, and their union is the boundary of $\hat G$, that we denote by $\Gamma(\hat G)$.

In both \NDPwall and \EDPwall problems, the input is a wall $\hat G$ of length $\sqrt n/2$ and height $\sqrt n$ (we assume that $\sqrt n/2$ is an integer), and a collection $\mset = \set{(s_1,t_1), \ldots, (s_k, t_k)}$ of $k$ demand pairs. The goal is to route the maximum number of the demand pairs via paths that are node-disjoint (for \NDPwall) or edge-disjoint (for \EDPwall). We will exploit the $(\sqrt n\times \sqrt n)$ grid $G$, and the fact that $\hat G \subseteq G$ can be obtained from $G$ as described above. 

For any two nodes $u,v$ of $\hat G$, we denote the shortest-path distance between them in $\hat G$ by $\hat d(u,v)$. Distances between subsets of vertices, and between a vertex and a subset of vertices, are defined as before, using $\hat d$.

The following simple observation relates the values of the optimal solutions to the \EDPwall and \NDPwall problems, defined over the same wall $\hat G$ and the same set $\mset$ of demand pairs.

\begin{observation}\label{obs: NDP vs EDP in walls}
Let $\hat G$ be a wall of length $\sqrt n/2$ and height $\sqrt n$, and let $\mset = \set{(s_1,t_1), \ldots, (s_k, t_k)}$ be a set of $k$ demand pairs. Let $\optEDP$ be the value of the optimal solution to the \EDPwall problem instance $(\hat G, \mset)$, and let $\optNDP$ be the value of the optimal solution to the \NDPwall problem instance $(\hat G, \mset)$. Then: $\optNDP\leq \optEDP\leq O(\optNDP)$.
\end{observation}
\begin{proof}
Observe first that for any set $\pset$ of paths in $\hat G$, if the paths in $\pset$ are mutually node-disjoint, then they are also mutually edge-disjoint. Therefore, $\optNDP\leq \optEDP$. 

To show the other direction, consider any set $\pset$ of paths, such that the paths in $\pset$ are edge-disjoint. It is enough to show that there is a subset $\pset'\subseteq\pset$ of paths that are node-disjoint, and $|\pset'|\geq \Omega(|\pset|)$. Since the maximum vertex degree in $\hat G$ is $3$, the only way for two paths $P,P'\in \pset$ to share a vertex $x$ is when $x$ is an endpoint of at least one of the two paths.

We construct a directed graph $H$, whose vertex set is $\set{v_P\mid P\in \pset}$, and there is an edge $(v_P,v_{P'})$ iff an endpoint of $P'$ belongs to $P$. It is immediate to verify that the maximum in-degree of any vertex in $H$ is $4$. Therefore, there is a set $U\subseteq V(H)$ of $\Omega(|V_H|)$ vertices, such that no two vertices of $U$ are connected with an edge. We let $\pset'=\set{P\mid v_P\in U}$. Then the paths in $\pset'$ are node-disjoint, and $|\pset'|=\Omega(|\pset|)$.
\end{proof}

Since for any set $\pset$ of node-disjoint paths, the paths in $\pset$ are also mutually edge-disjoint, it is now enough to prove Theorem~\ref{thm: main edp} for the \NDPwall problem, with all source vertices lying on the wall boundary. We do so in the remainder of this section.

Our main approach is to consider the corresponding instance $(G,\mset)$ of the \NDPgrid problem on the underlying grid $G$, and to exploit our algorithm for this problem.
However, we cannot use this approach directly, since the boundary of the wall $\hat G$ is not contained in the boundary of the grid $G$, and so some of the source vertices of  $S(\mset)$ may not lie on $\Gamma(G)$. We overcome this difficulty by mapping each vertex in $S(\mset)$ to its closest vertex lying on $\Gamma(G)$. Formally, we define a new set $\mset'$ of demand pairs, containing, for each pair $(s,t)\in \mset$, a new pair $(s',t')$, with $t'=t$, and $s'$ defined as follows. If $s\in \Gamma(G)$, then $s'=s$; otherwise, either $s$ belongs to the first column of $\hat G$, in which case $s'$ is defined to be the vertex of $\Gamma(G)$ lying immediately to the left of $s$; or $s$ belongs to the last column of $\hat G$, in which case $s'$ is defined to be the vertex of $\Gamma(G)$ lying immediately to its right (see Figure~\ref{fig: dotted wall}). The following simple observation shows that the solution value does not change by much.

\begin{observation} \label{obs: perturbed edp}
Let $\opt$ be the value of the optimal solution to instance $(\hat G,\mset)$ of \NDPwall, and let $\opt'$ be defined in the same way for instance $(\hat G,\mset')$. Then $\opt'=\Omega(\opt)$. Moreover, given any set $\pset'$ of node-disjoint paths routing a subset of the demand pairs in $\mset'$, there is an efficient algorithm to compute a set $\pset$ of $\Omega(|\pset'|)$ node-disjoint paths routing a subsets of the demand pairs in $\mset$.
 \end{observation}
 
 \begin{proof}
 In order to prove the first assertion, let $\pset^*$ be the optimal solution to instance $(\hat G,\mset)$. We show that there is a set $\pset'$ of $\Omega(|\pset^*|)$ paths routing demand pairs in $\mset'$. Let $\pset''$ be obtained from $\pset^*$ as follows. Consider any path $P\in \pset^*$, and assume that it routes some pair $(s,t)\in \mset$. If $(s,t)\in \mset'$, then we add $P$ to $\pset''$; otherwise, we extend $P$ by adding the edge $(s,s')$ to it, and add the resulting path to $\pset''$. Consider the final set $\pset''$ of paths. While $|\pset''|=|\pset^*|$, it is possible that the paths in $\pset''$ are no longer node-disjoint. However, a pair $P_1,P_2\in \pset''$ of paths may share a vertex $x$ iff $x$ is an endpoint of at least one of the two paths. We can now employ the same argument as in the proof of Observation~\ref{obs: NDP vs EDP in walls} to obtain a subset $\pset'\subseteq \pset''$ of node-disjoint paths, with $|\pset'|\geq\Omega(\pset'')$. 
 
 The proof of the second assertion is almost identical.
 \end{proof}

From now on, it is sufficient to design a randomized $\approxfactor$-approximation algorithm to the new instance $(\hat G,\mset')$ of \NDPwall. Therefore, from now on we will assume that all source vertices lie on $\Gamma(\hat G)\cap \Gamma(G)$.


 Our first step is, as before, to consider a special case of the problem, where the destination vertices lie far from the boundary of the wall. We prove the following analogue of Theorem~\ref{thm: main w destinations far from boundary}.

\begin{theorem} \label{thm: far edp}
  There is an efficient randomized algorithm $\hat \aset$, that, given an instance $(\hat G, \mset)$ of the \NDPwall
  problem, with all sources lying on $\Gamma(G)\cap \Gamma(\hat G)$ (where $G$ is the grid corresponding to $\hat G$), and an integer $\opt$, such that the value of the optimal solution to instance $(\hat G,\mset)$ of \NDPwall  is at least $\opt$, and every destination vertex lies at a distance at least $\opt/\eta$ from $\Gamma(\hat G)$ in $\hat G$, returns a solution that routes at least $\opt/\approxfactor$ demand pairs in $\hat G$, with high probability.
\end{theorem}

\begin{proof}
%
 Consider the \rNDPgrid instance defined over the grid $G$, with the set $\mset$ of the demand pairs. Since $\hat G\subseteq G$, the value of the optimal solution to this instance is at least $\opt$. We can apply  Algorithm $\aset$  to instance $(G,\mset)$ of \rNDPgrid, to obtain a set
 $\pset$ of node-disjoint paths routing some subset $\tmset\subseteq \mset$ of demand pairs in graph $G$, such that with high probability, $|\tmset|\geq \opt / \approxfactor$.
We now show that all demand pairs in $\tmset$ can also be routed in the wall $\hat G$.
Indeed, in the analysis in Section \ref{sec: snakes}, we have constructed disjoint level-$\rho$ snakes of width at least $3$, such that each demand pair $(s,t)\in \tmset$ is contained in a distinct snake, that we denote by $\yset(s,t)$.
With a loss of an additive constant in the approximation guarantee, we can ensure that no snake passes through any corner of $G$.

\begin{observation}
  Let $\yset=(\Y_1,\ldots,\Y_z)$ be any snake of width at least $3$ in the grid $G$. Let $H$ be the union of all graphs $\Y_j\cap \hat G$, for $1\leq j\leq z$. Then $H$ is a connected sub-graph of $\hat G$.
\end{observation}

It is now immediate to obtain a routing of the demand pairs in $\tmset$ via node-disjoint paths in $\hat G$. For every snake $\yset(s,t)$, let $H(s,t)$ be the corresponding sub-graph of $\hat G$, given by the above observation. Then all graphs $\set{H(s,t)}_{(s,t)\in \tmset}$ are mutually disjoint, and each such graph is connected. We simply connect $s$ to $t$ by any path contained in $H(s,t)$.\end{proof}

Finally, we complete the proof of Theorem~\ref{thm: main edp}, by removing the assumption that the destination vertices lie far from the wall boundaries. As before, we assume that we are given an instance $(\hat G,\mset)$ of \NDPwall, such that, if $G$ denotes the corresponding grid graph, then all source vertices in $S(\mset)$ lie on $\Gamma(G)\cap \Gamma(\hat G)$. As before, we use the algorithm from Section \ref{sec:destinations anywhere} on the underlying grid $G$ and adapt the resulting solution to the wall $\hat G$.

Recall that for every destination vertex $t \in T(\mset)$, we denoted by $\tilde t$ the vertex of $\Gamma(G)$ minimizing $d(t,\tilde t)$, breaking ties arbitrarily.
For any subset $\mset' \subseteq \mset$ of demand pairs, we denoted by $\tilde T(\mset') = \set{\tilde t \mid t \in T(\mset')}$.
If all vertices of $\tset(\mset)$ lie on the boundary $\Gamma(\hat G)$ of wall $\hat G$, then we can efficiently obtain a constant-factor approximation for the problem using standard dynamic programming techniques.
Thus, we assume from now on that $\hat d(t, \tilde t) \geq 1$ for each $t \in T(\mset)$.

Recall that we showed that it suffices to consider the case where all sources $S(\mset)$ lie on a single boundary edge $\Gamma_q$ of $G$, and all vertices of $\tilde T(\mset)$ lie on a single boundary edge $\Gamma_{q'}$ of $G$, where possibly $\Gamma_q=\Gamma_{q'}$.
We further saw that it suffices to consider only two cases:
(i) $d(t, \tilde t) \geq \opt/\eta$ for all destination nodes $t$; or
(ii) we are given an integer $1 \leq d \leq \opt/2\eta$ such that $d \leq d(t, \tilde t) < 2d$ for all destination nodes $t$, and $\opt > 2^{13}d$.
In case (i), we can compute a factor $\approxfactor$-approximate solution directly using Theorem \ref{thm: far edp}.
In the remainder of this section, we consider case (ii).

Let $\mset'$ be the set of the demand pairs that was chosen to be routed by our algorithm from Section~\ref{sec:destinations anywhere}. Then all vertices of $\tilde T(\mset')$ lie on the same boundary edge of the grid -- we assume w.l.o.g. that it is the bottom boundary. Recall that we have defined a collection $\set{(G_1,\mset_1),\ldots,(G_z,\mset_z)}$ of modified sub-instances, such that all grids $G_1,\ldots,G_z$ are disjoint from each other and from the top $\sqrt n/4$ rows of $G$, and in each of the instances $(G_i,\mset_i)$, the destination vertices lie far enough from the boundary of the grid $G_i$, so that Theorem~\ref{thm: main w destinations far from boundary} could be applied  to each such sub-instance separately. Let $\mset_i'\subseteq \mset'$ be the subset of the demand pairs corresponding to the pairs in $\mset_i$.  For each $1\leq i\leq z$, we have also defined a set $\tilde N_i$ of pairs, connecting the original source vertices in $S(\mset'_i)$ to their corresponding source vertices in $S(\mset_i)$. We have implicitly constructed, for each $1\leq i\leq z$, a snake $\yset_i$, inside which the demand pairs in $\tilde N_i$ are routed. All snakes $\yset_1,\ldots,\yset_z$ are disjoint, and have width at least $3|\nset_i|$ each. We can now translate this routing into a set of node-disjoint paths routing the demand pairs in $\mset'$ in the wall $\hat G$. The routing inside each sub-grid $G_i$ is altered exactly like in the proof of Theorem~\ref{thm: far edp}; the routings of the sets $\tilde{\nset}_i$ of demand pairs exploit the same snakes $\yset_i$ inside the wall $\hat G$.

\section{Approximation Algorithm for the Special Case with Sources Close to the Grid Boundary} \label{sec: sources at dist d}
The goal of this section is to prove Theorem~\ref{thm: sources at dist d}. We assume that we are given an instance $(G,\mset)$ of \NDPgrid and an integer $\delta>0$, such that every source vertex is at a distance at most $\delta$ from the grid boundary. Our goal is to design an efficient randomized factor-$(\delta \cdot \approxfactor)$-approximation algorithm for this special case of the problem. We can assume that $\delta<n^{1/4}$, as there is an $\tilde{O}(n^{1/4})$-approximation algorithm for \NDPgrid~\cite{NDP-grids}. For every source vertex $s\in S(\mset)$, let $\tilde s$ be the vertex lying closest to $s$ on $\Gamma(G)$; for every destination vertex $t\in T(\mset)$, let $\tilde t$ be defined similarly. Recall that for each $s\in S(\mset)$, $d(s,\tilde s)\leq \delta$.
For every subset $\mset'\subseteq \mset$ of the demand pairs, we denote $\tilde S(\mset')=\set{\tilde s\mid s\in S(\mset')}$, and similarly $\tilde T(\mset')=\set{\tilde t\mid t\in T(\mset')}$.
 Using the same arguments as in Section~\ref{sec:destinations anywhere}, at the cost of losing an $O(\log n)$-factor in the approximation ratio, we can assume that all vertices of $\tilde S(\mset)$ are contained in a single boundary edge of the grid $G$, that we denote by $\Gamma$, and that  all vertices of $\tilde T(\mset)$ are contained in a single boundary edge of the grid $G$, that we denote by $\Gamma'$, where possibly $\Gamma=\Gamma'$. Moreover, we can assume that there is some integer $d$, such that for all $t\in T(\mset)$, $d\leq d(t,\tilde t)<2d$.

Let $c$ be a sufficiently large constant. Assume first that $d\leq \delta\cdot 2^{c\sqrt{\log n}\log\log n}$. In this case, we can obtain an $(\delta\cdot\approxfactor)$-approximation using the following claim.

\begin{claim} \label{claim: close pairs}
There is an efficient algorithm to compute a factor $(\delta\cdot\approxfactor)$-approximation for the special case where $d \leq \delta\cdot 2^{c\sqrt{\log n}\log\log n}$.
\end{claim}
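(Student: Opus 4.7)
The plan is to observe that the assumed upper bound on $d$ places the instance squarely in a regime where all terminals lie close to the grid boundary, and then invoke the known $O(\delta')$-approximation algorithm for such instances, with $\delta' = O(\delta \cdot \approxfactor)$.

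More concretely, first I would note the following. By assumption, every source $s \in S(\mset)$ satisfies $d(s, \Gamma(G)) \leq d(s,\tilde s) \leq \delta$. Every destination $t \in T(\mset)$ satisfies $d(t, \Gamma(G)) \leq d(t, \tilde t) < 2d$. Combining these bounds with the assumption $d \leq \delta \cdot 2^{c\sqrt{\log n}\log\log n}$, every terminal (source or destination) lies within distance at most
\[
\delta' \;\;\coloneqq\;\; 2d + \delta \;\leq\; 3\delta \cdot 2^{c\sqrt{\log n}\log\log n} \;=\; O(\delta \cdot \approxfactor)
\]
from the grid boundary $\Gamma(G)$.

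Next I would recall the result noted earlier in the paper (see the remark following Theorem~\ref{thm: sources at dist d}): for any instance of \NDPgrid in which both the source and destination vertices lie within distance at most $\delta'$ from the grid boundary, there is an efficient $O(\delta')$-approximation algorithm (from \cite{NDP-grids}). Applying this algorithm to our instance $(G,\mset)$ with the bound $\delta'$ established above immediately yields a routing of at least
\[
\Omega\!\left(\frac{\opt}{\delta'}\right) \;=\; \Omega\!\left(\frac{\opt}{\delta \cdot \approxfactor}\right)
\]
demand pairs, which is precisely the desired $(\delta \cdot \approxfactor)$-approximation guarantee.

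There is no real obstacle here: the entire content of the claim is that when $d$ is small enough (comparable to $\delta$ times the target approximation factor), the problem has already been reduced to the near-boundary regime that is tractable by prior work, so no new machinery is needed. The only thing to verify carefully is that the constant factor hidden inside $O(\delta \cdot \approxfactor)$ is absorbed into the stated approximation ratio, which is immediate.
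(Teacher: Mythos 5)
Your proposal is correct, but it takes a more black-box route than the paper. You observe that under the hypothesis $d\leq \delta\cdot 2^{c\sqrt{\log n}\log\log n}$ every terminal lies within distance $\delta'=2d+\delta=O(\delta\cdot\approxfactor)$ of $\Gamma(G)$ (and, since $\delta<n^{1/4}$ was assumed earlier in the section, $\delta'\ll\sqrt n$, so the regime is meaningful), and you then invoke the remark from the introduction that instances with all terminals within distance $\delta'$ of the boundary admit an efficient $O(\delta')$-approximation. That inference is valid and gives exactly the claimed $(\delta\cdot\approxfactor)$ factor. The paper instead proves the needed fact from scratch for this setting: it projects every terminal to its nearest boundary vertex, solves the resulting all-on-boundary instance $(G,\hat\mset)$ exactly by dynamic programming, bounds the loss of the projection by building a conflict digraph (an edge when one routed path meets the short projection path $U_s\cup U_t$ of another pair) whose out-degrees are at most $d+\delta$, extracting an independent set of size $\Omega(\opt/(d+\delta))$, and finally converts the non-crossing boundary routing back into a routing of the original pairs using $\delta\le n^{1/4}$ and the smallness of $d$. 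So the two arguments are morally the same reduction to a near-boundary/boundary DP, but yours delegates all of the technical content (the $O(\delta')$-approximation for near-boundary instances, which the paper only asserts with a citation and does not prove) to prior work, whereas the paper's proof is self-contained and is essentially an instantiation of that cited fact; if a referee asked you to justify the black box, you would end up reproducing roughly the projection-plus-independent-set-plus-transfer-back argument the paper gives.
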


\begin{proof}
Let $\hat \mset=\set{(\tilde s,\tilde t)\mid (s,t)\in \mset}$ be a new set of demand pairs. Note that all vertices participating in the demand pairs in $\hat \mset$ lie on the boundary of $G$. Therefore, we can efficiently find an optimal solution to the \NDP problem instance $(G,\hat \mset)$ using standard dynamic programming.  Next, we show that $\opt(G,\hat \mset)=\Omega(\opt(G,\mset)/(\delta\cdot\approxfactor))$. For every vertex $v\in S(\mset)\cup T(\mset)$, let $U_v$ be the shortest path connecting $v$ to $\tilde v$.

Let $\pset^*$ be the optimal solution to instance $(G,\mset)$, and let $\mset^*\subseteq \mset$ be the set of the demand pairs routed by $\pset^*$. For each demand pair $(s,t)\in \mset^*$, let $P(s,t)\in \pset^*$ be the path routing it. 

We construct a directed graph $H$, whose vertex set is $V(H)=\set{v(s,t)\mid (s,t)\in \mset^*}$, and there is a directed edge from $v(s,t)$ to $v(s',t')$ iff the path $P(s',t')$ routing the pair $(s',t')$ in the optimal solution $\pset^*$ contains a vertex of $U_s\cup U_t$. It is easy to verify that the out-degree of every vertex of $H$ is at most $d+\delta$, and so there is an independent set $I$ in $H$ containing at least $\Omega(|V(H)|/(d+\delta))=\Omega(\opt(G,\mset)/(\delta\cdot\approxfactor))$ vertices of $I$. 

For each demand pair $(s,t)\in \mset^*$ with $v(s,t)\in I$, let $P'(s,t)$ be the concatenation of $U_s$, $P(s,t)$ and $U_t$. Then $\set{P'(s,t)\mid v(s,t)\in I}$ is a collection of $\Omega(\opt(G,\mset)/(\delta\cdot\approxfactor))$ node-disjoint paths routing demand pairs in $\hmset$. We conclude that  $\opt(G,\hat \mset)=\Omega(\opt(G,\mset)/(\delta\cdot\approxfactor))$. 

Let $\pset'$ be a solution to instance $(G,\hat \mset)$, obtained by the constant-factor approximation algorithm, so $|\pset'|=\Omega(\opt(G,\mset)/(\delta\cdot\approxfactor))$, and let $\hmset'\subseteq \hmset$ be the set of the demand pairs routed. Since all source and all destination  of $\hmset'$ appear on the boundary of $G$, $|\hmset'|=O(\sqrt n)$. Moreover, the demand pairs in $\hmset'$ must be non-crossing: that is, for every pair $(s,t),(s',t')\in \hmset'$, the circular ordering of the corresponding terminals on the boundary of the grid is either $(s,s',t',t)$ or $(s,t,t',s')$. Since we have assumed that $\delta\leq n^{1/4}$  and $d\leq \delta\cdot 2^{c\cdot\sqrt{\log n}\log\log n}$, it is easy to verify that we can route $|\Omega(\hmset')|$ demand pairs from the original set $\mset$, that correspond to the demand pairs in $\hmset'$.

\end{proof}

From now on we assume that $d> \delta\cdot 2^{c\sqrt{\log n}\log\log n}$. 
Next, we define a new set $\tmset$ of demand pairs, as follows: $\tmset=\set{(\tilde s,t)\mid (s,t)\in \mset}$, so all source vertices of the demand pairs in $\tmset$ lie on the boundary edge $\Gamma$ of $G$. Let $\opt'$ be the value of the optimal solution to problem $(G,\tmset)$. The proof of the following claim is  almost identical to the proof of Claim~\ref{claim: close pairs} and is omitted here.

\begin{claim}
$\opt'\geq \Omega( \opt(G,\mset)/\delta)$.
\end{claim}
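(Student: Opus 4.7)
The plan is to follow the template of Claim \ref{claim: close pairs}: take an optimal solution $\pset^*$ to $(G,\mset)$ routing a set $\mset^*\subseteq\mset$ with $|\mset^*|=\opt(G,\mset)$, and extract a large sub-collection of its paths that, when extended at the source side by the short segments $U_s$ from $\tilde s$ to $s$ (each of length at most $\delta$), remain mutually node-disjoint. Each such extended path routes a pair $(\tilde s,t)\in\tmset$, giving the desired lower bound on $\opt'$.

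First, by keeping the largest of the four subsets of $\mset^*$ grouped by which boundary edge of $G$ contains $\tilde s$, I lose only a factor of $4$ and may assume that all vertices $\tilde s$ lie on a single boundary edge --- say the top row $R_1$. For each remaining $(s,t)\in\mset^*$ I choose $U_s$ to be the canonical vertical path from $s$ to $\tilde s$, which sits entirely inside the column $\col(s)$. The key observation is that any two such canonical paths $U_s,U_{s'}$ either both lie in the same column --- in which case $\tilde s=\tilde s'$ --- or lie in distinct columns and are therefore disjoint, collapsing all potential $U$-$U$ conflicts into $\tilde s=\tilde s'$ conflicts.

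Next, I build a directed conflict graph $H$ on the vertex set $\set{v(s,t):(s,t)\in\mset^*}$, placing an arc $v(s,t)\to v(s',t')$ whenever either $\tilde s=\tilde s'$ or $U_s\cap P(s',t')\neq\emptyset$. The out-degree of any vertex is $O(\delta)$: at most $\delta+1$ sources can map to a fixed top-boundary vertex $\tilde s=(1,\col(s))$ (they must all lie in column $\col(s)$ at rows between $1$ and $\delta+1$), and since the paths in $\pset^*$ are mutually node-disjoint, each of the at most $\delta+1$ vertices of $U_s$ lies on at most one path $P(s',t')$. A standard greedy argument on the underlying undirected graph (whose maximum degree is then $O(\delta)$) then yields an independent set $I\subseteq V(H)$ of size $|I|=\Omega(|\mset^*|/\delta)$.

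Finally, for every $v(s,t)\in I$ I form the concatenation of $U_s$ and $P(s,t)$, obtaining a path $P'(\tilde s,t)$ in $G$ that routes the pair $(\tilde s,t)\in\tmset$. Because no edge of $H$ joins two members of $I$, the vertices $\tilde s$ across pairs in $I$ are pairwise distinct, and the family $\set{P'(\tilde s,t)}_{v(s,t)\in I}$ is node-disjoint because (i) the $P$-portions are pairwise disjoint as paths of $\pset^*$, (ii) no $U_s$ intersects any foreign $P(s',t')$ by the independence of $I$ in $H$, and (iii) the canonical paths $U_s,U_{s'}$ lie in different columns (since $\tilde s\neq\tilde s'$ on $R_1$) and are therefore disjoint. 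This produces a feasible routing in $(G,\tmset)$ of $\Omega(|\mset^*|/\delta)=\Omega(\opt(G,\mset)/\delta)$ pairs, which proves the claim. The main obstacle to be careful about is the control of $U$-$U$ conflicts: without the combination of the boundary reduction and the canonical vertical choice, a single vertex of $U_s$ could a priori lie on as many as $\Theta(\delta)$ foreign $U_{s'}$'s, which would inflate the loss to $\Theta(\delta^2)$ and be too weak for the claimed bound.
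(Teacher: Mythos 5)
Your proof is correct and follows essentially the same route as the paper, which omits this proof precisely because it is the argument of Claim~\ref{claim: close pairs} with only the source-side extensions $U_s$ (of length at most $\delta$) used, giving out-degree $O(\delta)$ in the conflict graph, an independent set of size $\Omega(\opt(G,\mset)/\delta)$, and node-disjoint concatenated paths routing pairs of $\tmset$. Your extra care with the $U$-$U$ conflicts (distinct columns are disjoint, and same-column conflicts are caught because one source lies on the other's canonical segment and hence on its path) is exactly the implicit reason the paper's conflict graph already suffices, so the two arguments coincide.
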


From now on we focus on instance $(G,\tmset)$ of \restrictedNDP. Recall that for each source vertex $s\in S(\mset)$, we denoted by $U_s$ the shortest path connecting $s$ to $\tilde s$. We say that a path $P$ routing a demand pair $(\tilde s,t)\in \tmset$ is \emph{canonical} iff $U_s\subseteq P$. We say that set $\pset$ of node-disjoint paths routing demand pairs in $\tmset$ is canonical iff every path in $\pset$ is canonical. We will show that we can apply our $\approxfactor$-approximation algorithm to instance $(G,\tmset)$ to obtain a routing in which all paths are canonical. 
For convenience, we assume w.l.o.g. that $\Gamma$ is the top boundary edge of the grid. Therefore, a canonical path connecting a demand pair $(\tilde s,t)\in \tmset$ must follow the column $\col(\tilde s)$ of $G$ until it reaches the vertex $s$.

Assume first that $d>\opt'/\eta$, where $\eta$ is defined as before. Then we obtain a special case of the problem where the destination vertices are sufficiently far from the boundary of the grid, and can use the algorithm from Theorem~\ref{thm: main w destinations far from boundary} to find a $\approxfactor$-approximate solution to instance $(G,\tmset)$.
%
However, we slightly modify the routing to ensure that the paths in our solution are canonical. Recall that the routing itself is recursive. Let $\qset_1$ be the set of all level-$1$ squares, and let $\qset_1'\subseteq\qset_1$ be the set of all non-empty level-$1$ squares. Recall that each square $Q\in \qset_1$ has size $(d_1\times d_1)$, where $d_1\leq \opt'/2^{c^*\sqrt{\log n}\log\log n}$. For each square $Q\in \qset_1'$, we have created an extended square $Q^+$, by adding a margin of size $d_1/\eta$ around $Q$. At the highest level of the recursion, the paths depart from the source vertices of the demand pairs we have chosen to route, and then visit the squares in $\set{Q^+\mid Q\in \qset_1'}$ one-by-one, in a snake-like fashion, entering and leaving each such square $Q^+$ on a pre-selected set of vertices. Recall that we have assumed that $d>\delta\cdot 2^{c\sqrt{\log n}\log\log n}$, and $d>\opt'/\eta$.  Then it is easy to see that the top $d/2\gg \delta$ rows of $G$ are disjoint from the squares of $\qset_1'$, each of which must contain a destination vertex. As the algorithm routes at most $d_1$ paths, this routing can be accomplished via canonical paths. The recursively defined routing inside the level-$1$ squares remains unchanged.




From now on, we assume that $d<\opt'/\eta$, and we now follow the algorithm from Section~\ref{sec:destinations anywhere}.

As before, we consider  three sub-cases. Recall that $\Gamma'$ is the boundary edge of $G$ containing the vertices of $\tilde T(\tmset)$.
In the first sub-case, $\Gamma'$ is the the bottom boundary of the grid. The algorithm constructs a number of square grids of height less than $\ell/2$, whose bottom boundaries are contained in the bottom boundary of the grid $G$. Each such sub-grid defines a modified instance, that is solved separately. Eventually, we need to connect the source vertices of the routed demand pairs to their counterparts in the modified instances. As the top $\ell/2$ rows of $G$ are disjoint from the modified instances, it is easy to see that this can be done via paths that are canonical.

The second sub-case is when $\Gamma'$ is either the left or the right boundary edge of $G$; assume w.l.o.g. that it is the left boundary edge. In this case, we have discarded the demand pairs whose destinations lie in the top $\opt'/16$ rows of the grid $G$. As before, we then define a  collection of modified instances, each of which is defined over a square sub-grid of $G$ of width less than $\ell/2$. The left boundary of each sub-grid is contained in the left boundary of $G$, and each such sub-grid is disjoint from the top $\opt'/16$ rows of $G$. As before, we need to connect the source vertices of the demand pairs routed in the modified sub-instances to their counterparts. In order to do so, we utilize the top $\opt'/16$ rows and the rightmost $\ell/2$ columns of $G$.
This provides sufficient space to perform the routing via canonical paths, as $\delta\ll d<\opt'/\eta$.

The final sub-case is when $\Gamma'=\Gamma$.  This case is in turn partitioned into two sub-cases. The first sub-case is when a large number of the demand pairs routed by the optimal solution belong to the set $\mset^0$: the set of all demand pairs $(s,t)$ with $|\col(s)-\col(t)|\leq 2d$.  In this case, we defined a collection of disjoint sub-grids of $G$, of size $(\Theta(d)\times \Theta(d))$, reducing the problem to a number of disjoint sub-instances that are dealt with using Theorem~\ref{thm: main w destinations far from boundary}. In each such sub-instance, we can ensure that the routing is canonical as before, since the value of the optimal solution in each sub-instance is $O(d)$, and $\delta \ll d$. In the second sub-case, we partition the grid $G$ into a number of sub-instances, where each sub-instance is defined by a pair of consecutive vertical strips of $G$. One of the two strips contains all source vertices, and the other all destination vertices of the resulting sub-instance. We then define a modified instance inside the strip containing the destinations, and route the source vertices to the corresponding sources of the routed modified demand pairs. This routing utilizes the vertical strip containing the sources, and the bottom half of the vertical strip containing the destinations. As before, it is easy to ensure that this routing is canonical.

\appendix
\label{------------------------------------Appendix-------------------------------}
\label{------------------------------------sec: proofs from overview-------------------------------}
\section{Proofs Omitted from Section~\ref{sec:alg-overview}}\label{sec: proofs from overview}

\subsection{Proof of Claim~\ref{claim: hierarchical system of squares}}

Before we define a hierarchical partition of $\tG$ into squares, we need to define a hierarchical system of intervals.
\begin{definition} Given an integer $1\leq \rho'\leq \rho$, a \emph{$\rho'$-hierarchical system of intervals} is a sequence $\hset=(\iset_1,\iset_2,\ldots,\iset_{\rho'})$ of sets of intervals, such that:

\begin{itemize}
\item for all $1\leq h\leq \rho'$, $\iset_h$ is a $d_h$-canonical family of intervals; and

\item for all $1<h\leq \rho'$, for every interval $I\in \iset_h$, there is an interval $I'\in \iset_{h-1}$, such that $I\subseteq I'$.
\end{itemize}

We let $U(\hset)=\bigcup_{I\in \iset_{\rho'}}I$, and we say that the integers in $U(\hset)$ belong to the system $\hset$.
\end{definition}

We use the following simple observation.

\begin{observation}\label{obs: canonical intervals}
 There is an efficient algorithm that constructs a collection  $\hset_1,\ldots,\hset_{2^{\rho}}$ of $2^{\rho}$ $\rho$-hierarchical systems  of intervals, such that every integer in $[\ell']$ belongs to exactly one such system.
\end{observation}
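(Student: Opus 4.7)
The plan is to prove the statement by induction on $\rho'$, constructing $2^{\rho'}$ hierarchical systems of intervals whose deepest levels partition $[\ell']$. At each stage, every existing system will spawn exactly two successors, obtained by ``splitting'' the finest level into an odd-indexed and an even-indexed family of sub-intervals, so that the count doubles from one level to the next and ends at $2^{\rho}$.

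For the base case $\rho'=1$, I would use the fact that $\ell'$ is an integral multiple of $d_1=\eta^{\rho+2}$ (fixed earlier in Section~\ref{sec:alg-overview}) to partition $[\ell']$ into $\ell'/d_1$ consecutive intervals of length $d_1$. Indexing them in the natural left-to-right order, the family $\iset_1$ of odd-indexed intervals and the family $\iset'_1$ of even-indexed intervals are each $d_1$-canonical, since any two such intervals are separated by one full ``gap'' interval of length $d_1$. These two singletons $\hset=(\iset_1)$ and $\hset'=(\iset'_1)$ form the required $2$ systems, and every integer of $[\ell']$ lies in exactly one of them.

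For the inductive step, I would assume the claim for $\rho'-1$ and take one of the given $(\rho'-1)$-hierarchical systems $\hset=(\iset_1,\ldots,\iset_{\rho'-1})$. Since $d_{\rho'-1}=\eta\cdot d_{\rho'}$, each interval $I\in\iset_{\rho'-1}$ can be partitioned into exactly $\eta$ consecutive sub-intervals of length $d_{\rho'}$. I would collect the odd-indexed sub-intervals (within each such $I$) into a family $\iset_{\rho'}$ and the even-indexed ones into $\iset'_{\rho'}$, then define the two extensions $\hset=(\iset_1,\ldots,\iset_{\rho'-1},\iset_{\rho'})$ and $\hset''=(\iset_1,\ldots,\iset_{\rho'-1},\iset'_{\rho'})$. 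The main things to verify are: (i) $\iset_{\rho'}$ and $\iset'_{\rho'}$ are $d_{\rho'}$-canonical --- within each parent $I$ this is immediate from the alternating construction, and between different parents it follows because distinct intervals of $\iset_{\rho'-1}$ are already $d_{\rho'-1}$-separated, and $d_{\rho'-1}>d_{\rho'}$; (ii) every sub-interval at level $\rho'$ is contained in a (unique) parent interval at level $\rho'-1$, by construction; and (iii) the odd-indexed and even-indexed families together partition $\bigcup_{I\in\iset_{\rho'-1}}I=U(\hset)$, so every integer that belonged to $\hset$ now belongs to exactly one of the two successor systems.

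Applying this doubling once to each of the $2^{\rho'-1}$ systems produced by induction yields $2^{\rho'}$ new systems whose union of ``belonging sets'' is exactly $\bigcup_i U(\hset_i)=[\ell']$, with every integer appearing in exactly one system. The construction is plainly efficient, since each step only performs linear-time partitioning of intervals. There is no real obstacle here: the only thing to be careful about is the arithmetic ensuring that all interval lengths divide evenly ($\ell'$ is a multiple of $d_1$, and $d_{h-1}=\eta\cdot d_h$ by definition), which is already built into the parameter setup of Section~\ref{sec:alg-overview}, and that the $d_{\rho'}$-separation is preserved across parent boundaries, which follows for free from the stronger $d_{\rho'-1}$-separation at the previous level.
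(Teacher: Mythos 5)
Your proposal is correct and follows essentially the same route as the paper's own proof: induction on the number of levels, with each system at level $\rho'-1$ split into two successors by taking the odd-indexed and even-indexed sub-intervals of length $d_{\rho'}$ inside every finest-level interval, exactly as done there. The points you flag for verification (canonicity across parent boundaries via the stronger $d_{\rho'-1}$-separation, containment in a unique parent, and exact partitioning of $U(\hset)$) are the same ones the paper checks, so there is nothing missing.
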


\begin{proof}
 It is enough to prove that there is an efficient algorithm, that, given an integer $1\leq \rho'\leq \rho$, constructs a collection  $\hset_1,\ldots,\hset_{2^{\rho'}}$ of $2^{\rho'}$ $\rho'$-hierarchical systems  of intervals, such that every integer in $[\ell']$ belongs to exactly one such system.
The proof is by induction on $\rho'$. The base case is when $\rho'=1$. We partition $[\ell']$ into consecutive intervals, where every interval contains exactly $d_1$ integers (recall that $\ell'$ is an integral multiple of $d_1=\eta^{\rho+2}$). Let $(I_1,I_2,\ldots,I_r)$ be the resulting sequence of intervals, where we assume that the intervals appear in the sequence in their natural order. Let $\iset_1$ be the set of all odd-indexed intervals and $\iset_1'$ the set of all even-indexed intervals in the sequence. Clearly, each of $\iset_1$ and $\iset_1'$ is a $d_1$-canonical set of intervals. We define two $1$-hierarchical systems, the first one containing only the set $\iset_1$, and the second one containing only the set $\iset_1'$. Note that every integer in $[\ell']$ belongs to exactly one resulting system.

We assume now that the statement holds for all integers between $1$ and $(\rho'-1)$, for some $\rho'> 1$, and we prove it for $\rho'$. We assume that we are given a collection of $2^{\rho'-1}$ $(\rho'-1)$-hierarchical systems of intervals, such that every integer in $[\ell']$ belongs to exactly one system. Let $\hset$ be one such $(\rho'-1)$-hierarchical system of intervals. We will construct two $\rho'$-systems, $\hset'$ and $\hset''$, such that every integer that belongs to $\hset$ will belong to exactly one of the two systems. This is enough in order to complete the proof of the observation.

Assume that $\hset=(\iset_1,\iset_2,\ldots,\iset_{\rho'-1})$. For simplicity, denote $\iset_{\rho'-1}$ by $\iset$. We now construct two new sets $\iset',\iset''$ of intervals, as follows. Start with $\iset'=\iset''=\emptyset$, and process every interval $I\in \iset$ one-by-one. Consider some interval $I\in \iset$. We partition $I$ into consecutive intervals containing exactly $d_{\rho'}$ integers each. Let $\set{I_1,\ldots,I_r}$ be the resulting partition, where we assume that the intervals are indexed in their natural order. We add to $\iset'$ all resulting odd-indexed intervals, and to $\iset''$ all resulting even-indexed intervals. Once every interval $I\in \iset$ is processed in this manner, we obtain our final sets $\iset',\iset''$ of intervals. It is immediate to verify that each set $\iset',\iset''$ is $d_{\rho'}$-canonical; that $\bigcup_{I'\in \iset'\cup \iset''}I'=\bigcup_{I\in \iset}I$; and that every integer of $\bigcup_{I\in \iset}I$ belongs to exactly one interval of $\iset'\cup \iset''$. We then set $\hset'=(\iset_1,\iset_2,\ldots,\iset_{\rho'-1},\iset')$, and $\hset''=(\iset_1,\iset_2,\ldots,\iset_{\rho'-1},\iset'')$.
\end{proof}

From Observation~\ref{obs: canonical intervals}, we can construct $2^{\rho}$ hierarchical $\rho$-systems $\hset_1,\hset_2,\ldots,\hset_{2^{\rho}}$ of intervals of $[\ell']$. For every pair $1\leq i,j\leq 2^{\rho}$ of integers, we construct a single hierarchical family $\thset_{i,j}$ of squares, such that $V(\thset_{i,j})=\set{v_{x,y}\mid x\in U(\hset_i),y\in U(\hset_j)}$. Since every integer in $[\ell']$ belongs to exactly one set $U(\hset_z)$ for $1\leq z\leq 2^{\rho}$, it is immediate to verify that every vertex of $G'$ belongs to exactly one resulting hierarchical family $\thset_{i,j}$ of squares.

We now define the construction of the system $\thset_{i,j}=(\qset^{i,j}_1,\qset^{i,j}_2,\ldots,\qset^{i,j}_{\rho})$. 
Denote $\hset_i=(I_1,\ldots,I_{\rho})$ and $\hset_j=(I'_1,\ldots,I'_{\rho})$
The construction is simple: for all $1\leq r\leq \rho$, we let $\qset^{i,j}_r=\qset(\iset_r,\iset'_r)$. From the above discussion, since sets $\iset_r,\iset'_r$ are $d_r$-canonical, so is set $\qset^{i,j}_r$. It is also easy to verify that the set of vertices contained in the squares of $\qset^{i,j}_r$ is exactly $\set{v(x,y)\mid x\in \bigcup_{I\in \iset_r}I, y\in \bigcup_{I'\in \iset_r'}I'}$, and that $\thset_{i,j}$ is indeed a hierarchical system of squares of $G$.

\subsection{Proof of Observation~\ref{obs: boosting shadow}}

Assume that $\hmset=\set{(s_1,t_1),\ldots,(s_z,t_z)}$, where the source vertices $s_1,\ldots,s_z$ appear in this left-to-right order on $R^*$. We then let $\hmset'=\set{(s_i,t_i)\mid i\equiv 1\mod 2\ceil{\beta_2/\beta_1}}$. Clearly, $|\hmset'|\geq  \floor{\frac{|\hmset|}{2\ceil{\beta_2/\beta_1}}}\geq \floor{\frac{\beta_1|\hmset|}{4\beta_2}}$.

We claim that every square $Q\in \qset$ has the $\beta_2$-shadow property with respect to $\hmset'$. Indeed, let $Q\in \qset$ be any such square, and assume that its dimensions are $(d\times d)$. Then $J_{\hmset'}(Q)\subseteq J_{\hmset}(Q)$, and $J_{\hmset}(Q)$ contained at most $\beta_2 d$ source vertices of the demand pairs of $\hmset$. From our construction of $\hmset'$, it contains at most $\ceil{\frac{\beta_2 d}{2\ceil{\beta_2/\beta_1}}}\leq \beta_1 d$ demand pairs of $\hmset'$.

\section{Proof of Claim~\ref{claim: partition the forest}}

\begin{proof}
We compute a partition $\yset(\tau)$ for every tree $\tau\in F$ separately. The partition is computed in iterations, where in the $j$th iteration we compute the set $Y_j(\tau)\subseteq V(\tau)$ of vertices, together with the corresponding collection $\pset_j(\tau)$ of paths. For the first iteration, if $\tau$ contains a single vertex $v$, then we add this vertex to $Y_1(\tau)$ and terminate the algorithm. Otherwise, for every leaf $v$ of $\tau$, let $P(v)$ be the longest directed path of $\tau$, starting at $v$, that only contains degree-1 and degree-2 vertices, and does not contain the root of $\tau$. We then add the vertices of $P(v)$ to $Y_1(\tau)$, and the path $P(v)$ to $\pset_1(\tau)$. Once we process all leaf vertices of $\tau$, the first iteration terminates. It is easy to see that all resulting vertices in $Y_1(\tau)$ induce a collection $\pset_1(\tau)$ of disjoint paths in $\tau$, and moreover if $v,v'\in Y_1(\tau)$, and there is a path from $v$ to $v'$ in $\tau$, then $v,v'$ lie on the same path in $\pset_1(\tau)$. We then delete all vertices of $Y_1(\tau)$ from $\tau$.

The subsequent iterations are executed similarly, except that the tree $\tau$ becomes smaller, since we delete all vertices that have been added to the sets $Y_j(\tau)$ from the tree.

It is now enough to show that this process terminates after $\ceil{\log n}$ iterations. In order to do so, we can describe each iteration slightly differently. Before each iteration starts, we gradually contract every edge $e$ of the current tree, such that at least one endpoint of $e$ has degree $2$ in the tree, and $e$ is not incident on the root of $\tau$. We then obtain a tree in which every inner vertex (except possibly the root) has degree at least $3$, and delete all leaves from this tree. The number of vertices remaining in the contracted tree after each such iteration therefore decreases by at least factor $2$. It is easy to see that the number of iteration in this procedure is the same as the number of iterations in our algorithm, and is bounded by $\ceil{\log n}$.
For each $1\leq j\leq \ceil{\log n}$, we then set $Y_j=\bigcup_{\tau\in F}Y_j(\tau)$.
\end{proof}


\section{Proof of Claim~\ref{claim: modified instance preserves solutions}}

Let $\pset^*$ be the optimal solution to instance $(G,\mset)$. If $|\pset^*|>d$, then we discard paths from $\pset^*$ arbitrarily, until $|\pset^*|=d$ holds. 
Recall that $G'$ is a sub-grid of $G$ spanned by some subset $\wset'$ of its columns and some subset $\rset'$ of its rows. Recall also that we have defined a sub-grid $G''\subseteq G'$, obtained from $G'$ by deleting $4d$ of its leftmost columns, $4d$ of its rightmost columns, and $|\wset'|-4d$ of its topmost rows.

Let $\mset^*\subseteq \mset'$ be the set of the demand pairs routed by $\pset^*$. For every path $P\in \pset^*$, we define a collection $\Sigma(P)$ of sub-paths of $P$, as follows. Assume that $P$ routes some demand pair $(s,t)$. Let $x_1,x_2,\ldots,x_r$ be all vertices of $\Gamma(G'')$ that appear on $P$, and assume that they appear on $P$ in this order (where we view $P$ as directed from $s$ to $t$). Denote $x_0=s$ and $x_{r+1}=t$. We then let $\Sigma(P)$ contain, for each $0\leq i\leq r$, the sub-path of $P$ from $x_i$ to $x_{i+1}$. We say that a segment $\sigma\in \Sigma(P)$ is of type $1$ if one of its endpoints is the source $s$; we say that it is of type $2$ if both its endpoints belong to $\Gamma(G'')$, and $\sigma$ is internally disjoint from $G''$; otherwise we say that it is of type $3$. We let $\Sigma_1$ contain all type-1 segments in all sets $\Sigma(P)$ for $P\in \pset^*$, and we define $\Sigma_2$ and $\Sigma_3$ similarly for all type-2 and type-3 segments.

Notice that all type-3 segments are contained in $G''$. Let $\tilde{\mset_2}$ be the set of all pairs $(u,v)$ of vertices, such that some segment $\sigma\in \Sigma_2$ connects $u$ to $v$. We also define a set $\tilde{\mset_1}$ of pairs of vertices, corresponding to the segments of $\Sigma_1$ as follows. Let $\sigma\in \Sigma_1$ be any type-$1$ segment, and assume that its endpoints are $s$ and $v$, with $s\in S(\mset')$ and $v\in \Gamma(G'')$. Let $s'\in \Gamma(G')$ be the new source vertex to which $s$ was mapped. Then we add $(s',v)$ to $\tilde{\mset_1}$. In order to complete the proof of the claim, it is now enough to prove the following observation.

\begin{observation}
There is a set $\pset$ of node-disjoint paths that routes all pairs in $\tmset_1\cup \tmset_2$ in graph $G'$, so that the paths in $\pset$ are internally disjoint from $G''$.
\end{observation}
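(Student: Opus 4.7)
The plan is to exploit the fact that $\Omega':=G'\setminus(V(G'')\setminus\Gamma(G''))$ is a planar sub-grid region with a $4d$-thick ``U-shaped'' frame (left, right, and top margins of width at least $4d$ each, since $|\wset'|-4d\geq 4d$) surrounding $G''$, and that the pairs in $\tmset_1\cup\tmset_2$ form a non-crossing matching on $\partial\Omega'$.

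First, I will establish the non-crossing property. The segments of $\Sigma_1\cup\Sigma_2\cup\Sigma_3$ are sub-paths of the mutually node-disjoint paths $\pset^*$, and the segments of $\Sigma_1\cup\Sigma_2$ are contained in $\Omega:=G\setminus(V(G'')\setminus\Gamma(G''))$, which is topologically a disk (because $G''$ shares its bottom edge with the bottom boundary of $G$). Hence the endpoints of these segments induce a non-crossing matching on $\partial\Omega$. The mapping $s\mapsto s'$ defining $\tmset_1$ was constructed precisely to preserve the clockwise cyclic order of source vertices (via clockwise traversals of $\Gamma(G)$ and $\Gamma(G')$ starting from the bottom-left corner of $G'$), so after the mapping the induced matching on $\partial\Omega'=\Gamma(G')\cup\Gamma(G'')$ remains non-crossing.

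Second, I will bound the number of pairs to route. By node-disjointness of $\pset^*$, every vertex of $\Gamma(G'')$ is used by at most one path, so the total number of segment-endpoints on $\Gamma(G'')$ is at most $|\Gamma(G'')|\leq 16d$; since each type-$2$ segment has two such endpoints, $|\tmset_2|\leq 8d$. Combined with $|\tmset_1|\leq|\mset^*|\leq d$, we get $|\tmset_1\cup\tmset_2|\leq 9d$.

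Third, I will construct the routing inductively, peeling off one pair at a time. At each step, let $(u,v)$ be an innermost pair of the current non-crossing matching on $\partial\Omega'$ (one with no remaining pair nested strictly between its endpoints along $\partial\Omega'$), and route it along the innermost currently-unused ``shell'' hugging the portion of $\partial\Omega'$ between $u$ and $v$, using the appropriate row or column of the frame at distance equal to the current layer index from $\Gamma(G'')$. Non-crossing pairs can be routed nested in this way, each using a single unit-wide layer around $\Gamma(G'')$; after removing the pair, the remaining matching is still non-crossing, and the induction proceeds. The key capacity check is that the total number of layers ever needed along any cut of the frame is at most $|\tmset_1\cup\tmset_2|\leq 9d$ in the top margin (whose height is $|\wset'|-4d\geq|\wset''|+4d$, easily sufficient) and at most $|\tmset_1|+|\text{pairs whose shell crosses this side}|\leq 4d$ along each of the left and right margins (since pairs of $\tmset_2$ can be routed to hug $\Gamma(G'')$ so that all-$\tmset_2$ shells use only one-unit-thick neighborhoods of $\Gamma(G'')$, and only pairs of $\tmset_1$ require traversing the full height of the side margins).

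The main obstacle will be turning the intuitive shell-peeling argument into a rigorous bookkeeping of which layer each pair uses, and in particular verifying that around the two top corners of $G''$ (where the top margin meets the side margins) the nested shells transition cleanly without colliding. I expect this to require a careful case analysis based on which side of $G''$ each endpoint on $\Gamma(G'')$ lies, but once the non-crossing and capacity bounds above are in hand it reduces to a standard planar routing construction.
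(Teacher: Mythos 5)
Your overall strategy (exploit planarity to get a non-crossing matching on the boundary of $G'\setminus G''$, then route it in nested layers inside the margins) is the same in spirit as the paper's argument, which levels the nested intervals of $\tmset_2$ and assigns level-$r$ intervals to the $r$-th column/row of the margin; the paper then handles $\tmset_1$ by a separate flow argument rather than folding it into the shell-peeling. The problem is that your capacity analysis rests on a false bound. You claim $|\Gamma(G'')|\leq 16d$ and hence $|\tmset_2|\leq 8d$. But $G''$ has height $4d$ and width $|\wset''|=|V(I')|$, and for a $d$-interesting pair the interval $I'$ may contain up to $\sqrt n/2$ vertices, so $|\Gamma(G'')|$ can be $\Theta(\sqrt n)\gg d$. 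Correspondingly, a single path of $\pset^*$ may weave in and out of $G''$ through its (long) top boundary many times, so the number of type-$2$ segments is not $O(d)$; the bound $|\tmset_1\cup\tmset_2|\leq 9d$ that drives your ``total number of layers'' estimate is simply wrong.

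The salvageable accounting is by nesting depth, not by the number of pairs: since each side boundary of $G''$ has only $4d$ vertices, nested intervals with both endpoints on a side boundary have depth at most $2d$, and together with the intervals straddling a top corner and the at most $d$ type-$1$ paths they fit in the $4d$-wide side margins; top intervals can be numerous, but their nesting depth is at most about $|\wset''|/2$, which is dominated by the top margin's height $|\wset''|+4d$. Your claim that ``all-$\tmset_2$ shells use only one-unit-thick neighborhoods of $\Gamma(G'')$'' is also inconsistent with the nested structure: two nested type-$2$ pairs whose endpoints all lie on, say, the left boundary of $G''$ must occupy distinct layers of the left margin, so the side-margin load from $\tmset_2$ is the nesting depth (up to $2d$), not one. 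Once you replace the cardinality bound by this depth-based bookkeeping (and treat the corner-straddling intervals separately, as the paper's sets $\iset^0,\iset^1,\iset^2$ do), your construction goes through and essentially reproduces the paper's routing.
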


Indeed, combining the paths in $\pset$ with the segments in $\Sigma_3$ provides a set of node-disjoint paths in graph $G'$ that routes $|\mset^*|$ demand pairs of $\mset''$ -- the demand pairs corresponding to the pairs in $\mset^*$. 

The proof of the above observation is straightforward; we only provide its sketch here. Let $J$ be the sub-path of $\Gamma(G'')$, obtained by deleting all vertices lying on the bottom boundary of $G''$ from it, excluding the two bottom corners of $G''$. For every pair $(u,v)\in \tmset_2$, we can think of the corresponding sub-path of $J$ between $u$ and $v$ as an interval $I(u,v)$. It is immediate to verify that all intervals defined by the pairs in $\tmset_2$ are nested: that is, if $(u,v),(u',v')\in \tmset_2$,  then either intervals $I(u,v)$ and $I(u',v')$ are disjoint, or one of them is contained in the other. 

Let $\iset=\set{I(u,v)\mid (u,v)\in \tmset_2}$ be the corresponding set of intervals.
Notice that no interval in $\iset$ may contain a destination vertex $v$ of any demand pair $(s,v)\in \tmset_1$.
Let $\iset^0\subseteq\iset$ be the set of all intervals containing both the top left and the top right corners of $G''$. 
Let $\iset^1\subseteq \iset\setminus \iset^0$ be the set of all intervals containing the top left corner of $G''$, and similarly, let $\iset^2\subseteq \iset\setminus \iset^0$ be the set of all intervals containing the top right corner of $G''$. 
Observe that for all $0\leq i\leq 2$, for every pair $I,I'\in \iset^i$ of intervals, one of the intervals is contained in the other.
Let $\iset'=\iset\setminus\left(\bigcup_{i=0}^2\iset^i\right )$ be the set of all remaining intervals.

We partition the intervals of $\iset'$ into levels, as follows. We say that $I(u,v)$ is a level-$1$ interval, iff no other interval of $\iset'$ is contained in it. Let $\iset_1$ denote the set of all level-$1$ intervals. Assume now that we have defined the sets $\iset_1,\ldots,\iset_{i}$ of intervals of levels $1,\ldots, i$, respectively. We say that an interval $I\in \iset'$ belongs to level $(i+1)$ iff it does not contain any interval from set $\iset'\setminus\left(\bigcup_{i'=1}^i\iset_{i'}\right)$.

We say that an interval $I(u,v)\in \iset'$ is a left interval, iff $u$ and $v$ belong to the left boundary edge of $G''$. Similarly, we say that $I(u,v)\in\iset'$ is a right interval, iff $u$ and $v$ belong to the right boundary edge of $G''$. Otherwise, we say that it is a top interval. In that case, both $u$ and $v$ must belong to the top boundary of $G''$. It is easy to verify that, if $I$ is a left or a right interval, then it must belong to levels $1,\ldots,2d$, as the height of the grid $G''$ is $4d$. Moreover, if $h$ is the largest level to which a left interval belongs, then $h+|\iset_0|+|\iset_1|\leq 4d$, and if $h'$ is the largest level to which a right interval belongs, then $h+|\iset_0|+|\iset_2|\leq 4d$.

Let $W'_1,W'_2,\ldots,W'_{4d}$ denote the $4d$ columns of $G'$ that lie to the left of $G''$, and assume that they are indexed in the right-to-left order. For each level $1\leq r\leq 4d$, for every level-$r$ interval $I(u,v)$, we route the demand pair $(u,v)$ via the column $W'_r$ in a straightforward manner: we connect $u$ and $v$ to $W'_r$ by horizontal paths, and then complete the routing using the corresponding sub-path of $W'_r$. 

The routing of the right intervals is performed similarly. The top intervals are routed similarly by exploiting the rows of $G'$ that lie above $G''$ -- recall that there are $|\wset''|+4d$ such rows, where $|\wset''|$ is the width of $G''$. We then add paths routing the pairs corresponding to the intervals in $\iset^1,\iset^2$ and $\iset^0$ in a straightforward manner (see Figure~\ref{fig: modified routing}). 

\begin{figure}[h]
\scalebox{0.5}{\includegraphics{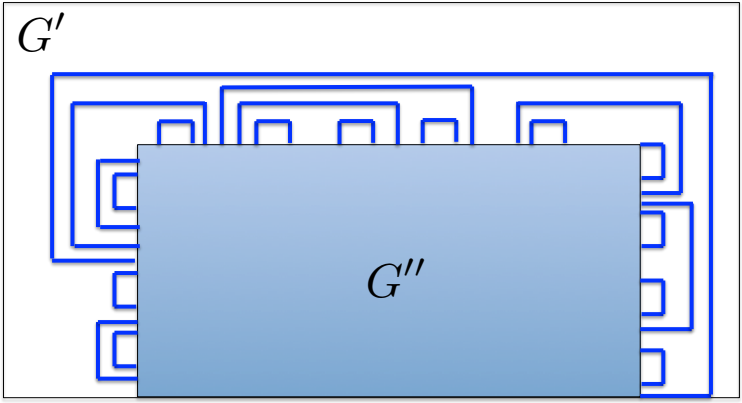}}
\caption{Routing in the modified graph.\label{fig: modified routing}}
\end{figure}

Let $H$ be the graph obtained from $G'$, after we delete all non-boundary vertices of $G''$ from it, and all vertices that participate in the routing of the pairs in $\tmset_2$ that we just defined. It is easy to verify that all demand pairs in $\tmset_1$ can be routed in $H$. In order to do so, we set up a flow network, where we start from the graph $H$, and add two special vertices $s$ and $t$ to it. We connect $s$ to every vertex in $S(\tmset_1)$, and we connect $t$ to every vertex in $T(\tmset_1)$, setting the capacity of every vertex of $H$ to be $1$. It is easy to verify that there is an $s$--$t$ flow of value $|\tmset_1|$ in this network (since every cut separating $s$ from $t$ must contain at least $|\tmset_1|$ vertices). From the integrality of flow, and due to the way in which the mapping between the vertices of $S(\mset')$ and $S(\mset'')$ was defined, we can obtain an integral routing of all demand pairs in $\tmset_1$ via node-disjoint paths in $H$.



\bibliography{NDP-grid-sources-on-top-alg}



\end{document}